\keywords{Transcendental Syntax, Geometry of Interaction, proof-nets, linear logic, proof theory, model of computation, resolution, unification, realisability}
\newcommand{\nat}{\mathbf{N}}
\newcommand{\rel}{\mathbf{Z}}
\DeclarePairedDelimiter\floor{\lfloor}{\rfloor}
\newcommand{\interp}[1]{\llbracket #1 \rrbracket}
\newcommand{\sinterp}[1]{\llangle #1 \rrangle}
\newcommand{\fmll}{\mathcal{F}_{\mathrm{MLL}}}
\newcommand{\stars}[1]{\mathtt{Stars}(#1)}
\newcommand{\terms}[1]{\mathtt{Terms}(#1)}
\newcommand{\idrays}[1]{\mathtt{IdRays}(#1)}
\newcommand{\test}[1]{\mathtt{Test}(#1)}
\newcommand{\tests}[1]{\mathtt{Tests}(#1)}
\newcommand{\axioms}[1]{\mathtt{Ax}(#1)}
\newcommand{\cuts}[1]{\mathtt{Cuts}(#1)}
\newcommand{\atoms}[1]{\mathtt{Atoms}(#1)}
\newcommand{\conclu}[1]{\mathtt{Concl}(#1)}
\newcommand{\idrayspol}[1]{\pm\mathtt{IdRays}(#1)}
\newcommand{\vars}[1]{\mathtt{vars}(#1)}
\newcommand{\freevars}[1]{\mathtt{fv}(#1)}
\newcommand{\colours}[1]{\mathtt{colours}(#1)}
\newcommand{\freerays}[1]{\mathtt{free}(#1)}
\newcommand{\ein}[1]{\ifthenelse{\equal{#1}{}}{\mathtt{in}}{\mathtt{in}(#1)}}
\newcommand{\eout}[1]{\ifthenelse{\equal{#1}{}}{\mathtt{out}}{\mathtt{out}(#1)}}
\newcommand{\diags}{\mathtt{Diags}}
\newcommand{\csatdiags}{\mathtt{CSatDiags}}
\newcommand{\satdiags}{\mathtt{SatDiags}}
\newcommand{\restrees}{\mathtt{RT}}
\newcommand{\alphaeq}{\approx_\alpha}
\newcommand{\eqq}{\stackrel{?}{=}}
\newcommand{\match}{\bowtie}
\newcommand{\lacc}{\mathtt{accept}}
\newcommand{\lrej}{\mathtt{reject}}
\newcommand{\gc}{\mathtt{g}}
\newcommand{\lc}{\mathtt{l}}
\newcommand{\rc}{\mathtt{r}}
\newcommand{\stopc}{\mathtt{s}}
\newcommand{\axl}{\mathrm{ax}}
\newcommand{\cutl}{\mathrm{cut}}
\newcommand{\testl}{\mathrm{test}}
\newcommand{\formatl}{\mathrm{format}}
\newcommand{\emptystar}{[]}
\newcommand{\indax}[1]{\mathtt{Ax}_{#1}}
\newcommand{\indcut}[2]{\mathtt{Cut}^{#1}(#2)}
\newcommand{\indtens}[2]{\mathtt{Tens}^{#1}(#2)}
\newcommand{\indpar}[2]{\mathtt{Par}^{#1}(#2)}
\newcommand{\indunion}[2]{#1 \uplus #2}
\newcommand{\axof}[1]{\Phi_{#1}^\axl}
\newcommand{\cutof}[1]{\Phi_{#1}^\cutl}
\newcommand{\axcutof}[1]{\Phi_{#1}^\axl \uplus \Phi_{#1}^\cutl}
\newcommand{\testof}[2]{\Phi_{#1}^{\testl(#2)}}
\newcommand{\compof}[1]{\Phi_{#1}^\mathtt{comp}}
\newcommand{\formatof}[1]{\Phi_{#1}^\formatl}
\newcommand{\concl}[1]{\overset{\leftarrow}{#1}}
\newcommand{\concr}[1]{\overset{\rightarrow}{#1}}
\newcommand{\str}[1]{\mathtt{str}(#1)}
\newcommand{\undereq}[1]{\ifthenelse{\equal{#1}{}}{\mathtt{eq}}{\mathtt{eq}(#1)}}
\newcommand{\ar}{\mathtt{ar}}
\newcommand{\op}{\mathtt{op}}
\newcommand{\subst}[2]{\{#1:=#2\}}
\newcommand{\substseq}[1]{\{#1\}}
\newcommand{\enat}[1]{\overline{#1}}
\newcommand{\solution}[1]{\mathtt{solution}(#1)}
\newcommand{\dgraph}[2][]{\ifthenelse{\equal{#1}{}}{\mathfrak{D}[#2]}{\mathfrak{D}[#2;#1]}}
\newcommand{\actu}{\mathop{\Downarrow}}
\newcommand{\exec}[1][]{\ifthenelse{\equal{#1}{}}{\mathtt{Ex}}{\mathtt{Ex}^{#1}}}
\newcommand{\roots}[1]{\mathtt{Roots}(#1)}
\newcommand{\wangn}[5]{
    \draw ($(#1)+(0,0)$)--($(#1)+(0.5,0.5)$)--($(#1)+(0,1)$)--cycle;
    \node at ($(#1)+(0.5,0.5)-(0.3, 0)$) {#5};
    \draw ($(#1)+(0,0)$)--($(#1)+(0.5,0.5)$)--($(#1)+(1,0)$)--cycle;
    \node at ($(#1)+(0.5,0.5)-(0, 0.3)$) {#4};
    \draw ($(#1)+(1,1)$)--($(#1)+(0.5,0.5)$)--($(#1)+(1,0)$)--cycle;
    \node at ($(#1)+(0.5,0.5)-(-0.3, 0)$) {#3};
    \draw ($(#1)+(1,1)$)--($(#1)+(0.5,0.5)$)--($(#1)+(0,1)$)--cycle;
    \node at ($(#1)+(0.5,0.5)-(0, -0.3)$) {#2};
}
\newcommand{\addr}{\mathtt{addr}}
\newcommand{\paddr}{\mathtt{pAddr}}
\newcommand{\ucap}{\Cap}
\newcommand{\setofaddresses}[2][x]{\mathrm{Addr}_{#1}(\mathcal{#2})}
\newcommand{\ccol}[1]{{\color{Bittersweet}#1}}
\newcommand{\tcol}[1]{{\color{MidnightBlue}#1}}
\newcommand{\ulocus}[2]{p_{#1}(#2)}
\newcommand{\clocus}[2]{\ccol{+c.\ulocus{#1}{#2}}}
\newcommand{\tlocus}[3][+]{\tcol{#1 t.\ulocus{#2}{#3}}}
\newcommand{\vcut}[2]{\ccol{-c.p_{#1}(#2)}}
\newcommand{\qray}[2][]{\ifthenelse{\equal{#1}{}}{p_{#2}(\gc \cdot X)}{\ccol{#1c.p_{#2}(\gc \cdot X)}}}
\newcommand{\gtensc}[3]{
    [\qray[-]{#1}, \qray[-]{#2}, \qray[+]{#3}]
}
\newcommand{\gparrc}[3]{
    [\qray[-]{#1}] +
    [\qray[-]{#2}, \qray[+]{#3}]
}
\newcommand{\gparlc}[3]{
    [\qray[-]{#1}, \qray[+]{#3}] +
    [\qray[-]{#2}]
}
\newcommand{\gconc}[1]{[\qray[-]{#1}, p_{#1}(X)]}
\newenvironment{Japanese}{%
    \CJKfamily{min}%
    \CJKtilde
    \CJKnospace}{}
\tikzset{
    dot/.style = {fill=black, circle, outer sep=2.5pt, inner sep=1pt},
    every loop/.style={}
}
\theoremstyle{plain}
\crefname{lem}{Lemma}{Lemmas}
\crefname{prop}{Proposition}{Propositions}
\crefname{rem}{Remark}{Remarks}
\crefname{exa}{Example}{Examples}
\crefname{conv}{Convention}{Conventions}
\crefname{defi}{Definition}{Definitions}
\crefname{nota}{Notation}{Notations}
\def\eg{{\em e.g.}\  }
\def\cf{{\em cf.}\  }
\def\ie{{\em i.e.}\  }
\def\wrt{{\em w.r.t.}\  }
\def\resp{{\em resp.}\  }
\begin{document}

\title[MLL from a resolution-based tile system]{Multiplicative linear logic from a resolution-based tile system}
%\titlecomment{{\lsuper*}OPTIONAL comment concerning the title, \eg, if a variant or an extended abstract of the paper has appearìed elsewhere.}

\author[B.~Eng]{Boris Eng}	%required
\address{LIPN - Universit\'{e} Sorbonne Paris Nord}	%required
\email{engboris@hotmail.fr}  %optional
\thanks{The authors would like to thank Valentin Maestracci for his useful comments. }	%optional

\author[T.~Seiller]{Thomas Seiller}	%optional
\address{CNRS / LIPN - Universit\'{e} Sorbonne Paris Nord}	%optional
\email{seiller@lipn.fr}  %optional
%\thanks{thanks 2, optional.}	%optional

%% etc.

%% required for running head on odd and even pages, use suitable
%% abbreviations in case of long titles and many authors:

%%%%%%%%%%%%%%%%%%%%%%%%%%%%%%%%%%%%%%%%%%%%%%%%%%%%%%%%%%%%%%%%%%%%%%%%%%%

%% the abstract has to PRECEDE the command \maketitle:
%% be sure not to issue the \maketitle command twice!

\begin{abstract}
We present the stellar resolution, a ``flexible'' tile system based on Robinson's first-order resolution. After establishing formal definitions and basic properties of the stellar resolution, we show its Turing-completeness and to illustrate the model, we exhibit how it naturally represents computation with Horn clauses and automata as well as non-deterministic tiling constructions used in DNA computing.
In the second and main part, by using the stellar resolution, we formalise and extend ideas of a new alternative to proof-net theory sketched by Girard in his transcendental syntax programme. In particular, we encode both cut-elimination and logical correctness for the multiplicative fragment of linear logic (MLL).
We finally obtain completeness results for both MLL and MLL extended with the so-called MIX rule.
By extending the ideas of Girard's geometry of interaction, this suggests a first step towards a new understanding of the interplay between logic and computation where linear logic is seen as a (constructed) way to format computation.
\end{abstract}

\maketitle

%% start the paper here:
% ==================================================
\section{Introduction}\label{sec:intro}
% ==================================================

\paragraph{The evolution of the notion of proof} Among the different materialisations of logic, some remarkable and standard formalisms are Gentzen's natural deduction and sequent calculus \cite{gentzen1, gentzen2} which are attempts at formally representing mathematical reasoning by means of logical rules one applies successively to construct proofs. These rules are represented by means of \emph{sequents} which are expressions $\Gamma \vdash A$ stating that the conclusion $A$ follows from a set of hypotheses $\Gamma$ (\Cref{fig:inference}). Although intuitive and natural, logical rules seem rather arbitrary and for that reason, attempts at justifying them were made in philosophy \cite[Chapter 8]{dummett1991logical}.

In the end of the 20th century, the so-called Curry-Howard correspondence was first discovered by Curry \cite{curry} but then clearly stated by Howard \cite{howard}, during the rise of computer science. It shows a formal correspondence between proofs and programs but also between formulas and types in programming for some logical systems and some (functional) typed programming systems. A common illustration is the correspondence between natural deduction restricted to the implication (also called \emph{implicative minimal logic}) and the simply typed $\lambda$-calculus \cite{church1940formulation}. The mysterious rules of logic were then given a computational meaning. For instance, the inference rule of \emph{modus ponens} corresponds to the typing of function application (\cf \Cref{fig:implication}).

\begin{figure}
    \begin{prooftree}
        \hypo{\Gamma, A, B \vdash C}
        \infer1[$\land$L]{\Gamma, A \land B \vdash C}
    \end{prooftree}
    \hspace{1cm}
    \begin{prooftree}
        \hypo{\Gamma \vdash A}
        \hypo{\Gamma \vdash B}
        \infer2[$\land$R]{\Gamma \vdash A \land B}
    \end{prooftree}
    \caption{Two inference rules for the conjunction in sequent calculus. The first one states that we can combine two hypotheses $A$ and $B$ to build the hypothesis $A \land B$ and the second one that proving $A$ and $B$ gives a proof of $A \land B$.}
    \label{fig:inference}
\end{figure}

\begin{figure}
    \begin{prooftree}
        \hypo{\Gamma \vdash A}
        \hypo{\Delta \vdash A \Rightarrow B}
        \infer2[MD]{\Gamma, \Delta \vdash B}
    \end{prooftree}
    \hspace{2cm}
    \begin{prooftree}
        \hypo{\Gamma \vdash a : A}
        \hypo{\Delta \vdash f : A \rightarrow B}
        \infer2[app]{\Gamma, \Delta \vdash f(a) : B}
    \end{prooftree}
    \caption{Inference rule for modus ponens and the typing of function application where $\Gamma$ is seen as a typing environment. The upper part corresponds to premises and the bottom to the conclusion. Giving an argument $a$ of type $A$ to a function $f$ turning an element of type $A$ to an element of type $B$ indeed produces an element $f(a)$ of type $B$. }
    \label{fig:implication}
\end{figure}

Although mathematical proofs are naturally thought to be purely static objects, this correspondence between proofs and programs shows that they also have a computational or dynamic aspect. The \emph{cut rule} of sequent calculus, defined as follows\footnote{Notice that the cut rule is very close to modus ponens.}:
\[\begin{prooftree}
\hypo{\Gamma \vdash A}
\hypo{\Delta, A \vdash C}
\infer2[cut]{\Gamma, \Delta \vdash C}
\end{prooftree}\]
represents the use of intermediate lemma in a proof. The expression $\Gamma \vdash A$ states that if a statement $A$ is provable from the hypotheses in $\Gamma$ and that $A$ together with some hypotheses $\Delta$ lead to some conclusion $C$ then we can have a ``shortcut" stating that $C$ is a consequence of both $\Gamma$ and $\Delta$. Although essential in mathematical practice, Gentzen shows that it can, in fact, be removed without any loss of meaning. Similarly to how we can \emph{inline} the code of function calls in the main body of a program, Gentzen's cut-elimination theorem shows that there exists a procedure of explicitation of proofs which inlines lemmas within proofs. By the Curry-Howard correspondence, this corresponds to a logical counterpart of program execution: proofs are dynamic entities.

\paragraph{The implicit operations in reasoning and their structure} Inspired by the semantics of $\lambda$-calculus \cite{girard1988normal} and its role in the Curry-Howard correspondence, linear logic \cite{linearlogic} was introduced by Girard as a refinement of intuitionistic logic (the underlying logic of Gentzen's natural deduction). Linear logic can be simply presented as a sequent calculus with a control and explicitation over the duplication and erasure of formulas, making formulas handled as sort of limited resources. In particular, the famous decomposition of the implication presents implication are a composition of two operations: $A \Rightarrow B$ becomes $!A \multimap B$ where $!A$ allows for an arbitrary use of $A$ and $\multimap$ is a linear implication using its premise exactly once.
On top of that, linear logic enjoys a nice involutive linear negation breaking the separation between hypothesis and conclusion. This allows a nice compact sequent calculus which will be presented in \ref{sec:proofs}.

Apart from defining a sequent calculus for linear logic, Girard was led to introduce an alternative syntax akin to natural deduction for linear logic: \emph{proof-nets} (\Cref{fig:proofnet}), which exhibit a non-sequential structure in proofs \cite{girard1996proof}. Initially seen as a mere syntactic convenience, it led to a new understanding of proof theory with a fine-grained analysis of the mathematical meaning of proofs and of how they relate to computation. Danos \cite[Chapter 9-11]{danosPHD} and Regnier's \cite{regnierPHD} thesis, but also other developments of linear logic that came after \cite{di2003proof, accattoli2018proof} illustrate this analysis. Proof-nets are defined from more general alogical graphs called \emph{proof-structures} which constitute a model of computation by themselves (where the graphs are reduced with cut-elimination). It is then possible to assert whether a proof-structure corresponds to a proof-net, \ie is the translation of a sequent calculus proof, by using what we call a \emph{correctness criterion} (\cf\Cref{subsec:correctness}). 
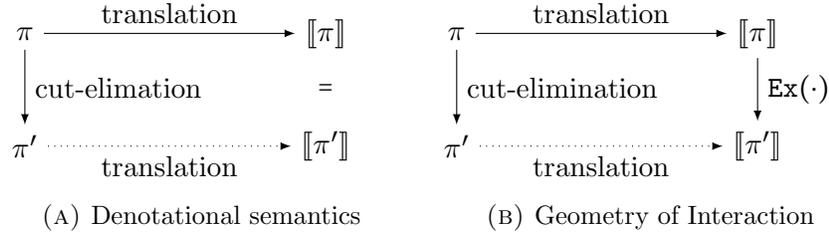
\begin{figure}
	\begin{subfigure}{0.35\textwidth}
		\begin{tikzpicture}
		    \node at (0, 0) (s) {$\pi$};
		    \node at (4, 0) (qs) {$\interp{\pi}$};
		    \node at (0, -1.5) (sp) {$\pi'$};
		    \node at (4, -1.5) (qsp) {$\interp{\pi'}$};
		    \draw[-latex] (s) -- (qs) node[midway, above] {translation};
		    \draw[-latex, dotted] (sp) -- (qsp) node[midway, below] {translation};
		    \draw[-latex] (s) -- (sp) node[midway, right] {cut-elimation};
		    \draw[draw=none] (qs) -- (qsp) node[midway] {$=$};
		\end{tikzpicture}
    	\caption{Denotational semantics}
    \end{subfigure}
    \hspace{0.15cm}
	\begin{subfigure}{0.38\textwidth}
		\begin{tikzpicture}
		    \node at (0, 0) (s) {$\pi$};
		    \node at (4, 0) (qs) {$\interp{\pi}$};
		    \node at (0, -1.5) (sp) {$\pi'$};
		    \node at (4, -1.5) (qsp) {$\interp{\pi'}$};
		    \draw[-latex] (s) -- (qs) node[midway, above] {translation};
		    \draw[-latex, dotted] (sp) -- (qsp) node[midway, below] {translation};
		    \draw[-latex] (s) -- (sp) node[midway, right] {cut-elimination};
		    \draw[-latex] (qs) -- (qsp) node[midway, right] {$\mathtt{Ex(\cdot)}$};
		\end{tikzpicture}
		\caption{Geometry of Interaction}
	\end{subfigure}
	\caption{We associate a proof $\pi$ to a mathematical object $\interp{\pi}$. In denotational semantics, we identify a proof $\pi$ and its cut-elimination $\pi'$ because we consider they have the same meaning, whereas in the GoI, they differ but are linked by computation. In particular, we are interested in simulating the computation linking them.}
	\label{fig:denotational}
\end{figure}

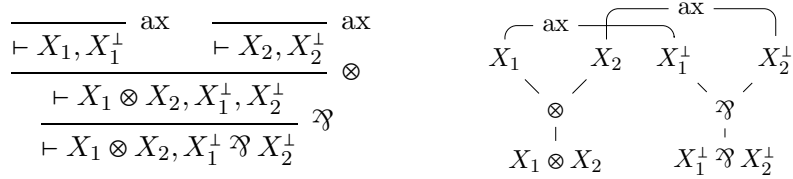
\begin{figure}
	\begin{minipage}{0.4\textwidth}
		\begin{prooftree}
		    \infer0[ax]{\vdash X_1, X_1^\bot}
		    \infer0[ax]{\vdash X_2, X_2^\bot}
		    \infer2[$\otimes$]{\vdash X_1 \otimes X_2, X_1^\bot, X_2^\bot}
		    \infer1[$\parr$]{\vdash X_1 \otimes X_2, X_1^\bot \parr X_2^\bot}
		\end{prooftree}
	\end{minipage}
	\begin{minipage}{0.4\textwidth}
	\scalebox{0.9}{\begin{tikzpicture}
      % tensor
      \node at (-0.75, 0.75) (a) {$X_1$};
      \node at (0.75, 0.75) (b) {$X_2$};
      \node at (0, 0) (tens) {$\otimes$};
      \node at (0, -0.75) (ab) {$X_1 \otimes X_2$};
      \draw[-] (a) -- (tens);
      \draw[-] (b) -- (tens);
      \draw[-] (tens) -- (ab);

      % par
      \node at (1.75, 0.75) (ad) {$X_1^\bot$};
      \node at (3.25, 0.75) (bd) {$X_2^\bot$};
      \node at (2.5, 0) (par) {$\parr$};
      \node at (2.5, -0.75) (adbd) {$X_1^\bot \parr X_2^\bot$};
      \draw[-] (ad) -- (par);
      \draw[-] (bd) -- (par);
      \draw[-] (par) -- (adbd);

      % ax
      \node at (0, 1.25) (ax1) {ax};
      \node at (2, 1.5) (ax2) {ax};
      \draw[-, rounded corners=5pt] (ax1) -| (a);
      \draw[-, rounded corners=5pt] (ax1) -| (ad);
      \draw[-, rounded corners=5pt] (ax2) -| (b);
      \draw[-, rounded corners=5pt] (ax2) -| (bd);
    \end{tikzpicture}}
    \end{minipage}
    \caption{Example of sequent calculus proof and its corresponding proof-net in linear logic. Notice that the order of rules is forgotten and that we have a sort of ``parallel" \cite{girard1996proof} syntax for proofs.}
    \label{fig:proofnet}
  \end{figure}

\paragraph{A semantic-free space for logic} Geometry of Interaction\footnote{Although the expression ``Geometry of Interaction" often refers to methods of static execution of $\lambda$-terms by a token machine \cite{danos1999reversible, asperti1995paths} (inspired by a simplification of Girard's first GoI \cite{danos1995proof}), we only refer to Girard's original programme \cite{goi1, goi2, goi3, goi4, goi5, goi6} in this paper.} (GoI) \cite{goi0, goi1, goi2, goi3, goi4} was originally introduced as a mathematical analysis of cut-elimination for proof-nets. This gave rise to a dynamic semantics which became a major inspiration behind game semantics \cite{blass1992game, abramsky1994games,hyland2000full,abramsky2000full,abramsky2002geometry, haghverdi2006categorical}, distinguishing itself from denotational semantics which usually identify a proof and its reduction by cut-elimination, thus forgetting how they are computationally related, as explained in \Cref{fig:denotational} \cite[Section 1.1]{seiller2017interaction}.

Around the time of the fourth paper on the GoI \cite{goi4}, Girard concurrently introduced ludics \cite{girard2001locus} which instead of proof-nets, starts by forgetting all the inessential parts of sequent calculus in order to obtain very general and alogical objects called \emph{designs}. What the latest works on the GoI and ludics have in common is that they start from alogical computational objects from which proofs and formulas are \emph{defined}, and hence, are no more primitive. In particular, formulas are defined by set of computational objects from which we expect some shared common characteristics, and connectives are ways to construct new sets from other sets depending on how their objects interact with each other\footnote{This way of typing computational entities is called \emph{l'Usage} (the use) in Girard's terminology.}. In particular, the (linear) negation $\mathbf{A}^\bot$ is the set of all objects which interact well with the objects of $\mathbf{A}$, which respects to a binary orthogonality relation $\perp$ opposing objects. Orthogonality relations should be understood as \emph{point of view} on interaction, deciding what a \emph{good} interaction is. Formulas are constructed by \emph{interactive typing} which is reminiscent of realisability interpretations \cite{krivine2009interactive, rieg2014forcing} and Riba's reconstruction of simple types from the untyped $\lambda$-calculus \cite[Section 3]{riba2007strong}.

In some sense, this leads to a sort of \emph{semantic-free} approach to logic since the meaning of formulas is no more defined by an external semantics but rather by computational interaction between objects in a model of computation chosen as a ground for logic. The meaning of objects become their \emph{possible uses}, which is \emph{internal} since related to how they are shaped. This can be illustrated by a comparison between ludics and Schütte's completeness proof \cite{schutte1956system, basaldella2009meaning}: in the latter, we either have a proof (in the syntactic world) of a statement or we can construct a counter-model (in the semantic world) of its negation which refutes it, while in the former, we have a \emph{counter-proof} (in the syntactic world) instead. Girard describes this situation as a \emph{monism}\footnote{In opposition to \emph{dualism}.}, meaning that logic lives in a self-contained homogeneous space where there is nothing but syntactic interaction. In terms of programming, it is like having both a program and its environment expressed as interactive entities of the same kind. This new framework should provide further developments of ideas presented by Curien \cite{curien2003symmetry} and Abramsky \cite{abramsky2008information}, which were already present in linear logic.

This idea of semantic-free typing allows us to define formulas which would be more general than the formulas of usual proof theory in the sense that the space of proofs of the conjunction $A \land B$ would be larger than in usual proof theory. Unusual computational entities may constitute a proof of $A \land B$. Hence, formulas become descriptions of computational behaviours in a chosen computational space.

\paragraph{Sufficient conditions for effective reasoning} This interactive interpretation of logic starts by choosing a model of computation but not all choices are equal. Several GoI models were defined using operator algebras \cite{goi1,goi2,goi4,goi5}, term unification algebras \cite{goi3}, graphs \cite{seiller2012interaction,seiller2016interaction} and graphings \cite{seiller2017interaction,seiller2019interaction,seiller2016interaction2}.
Although all these models did define rich models, that were in particular used to study computational complexity \cite{baillot2001elementary,aubert2016characterizing,aubert2016logarithmic,seiller2018interaction,seiller2020probabilistic}, they had two main drawbacks.
\begin{enumerate}
    \item The objects used to interpret even the most basic proofs were most of the time unnatural and infinite (as in Girard's initial interpretation using operator algebras \cite{goi1}). In particular, finite reasoning is no more possible\footnote{This fact is the starting point of the philosophical motivation of the transcendental syntax: the search for the conditions of possibility of (logical) language.} because in order to assert that an object $\Phi$ is a proof of $\mathbf{A}$ (which is a set), we need to \emph{test} it (by checking orthogonality) against all elements of $\mathbf{A}^\bot$ which may be infinite (because they represent all the potential partners of interaction). A solution would be to take inspiration from the correctness criterion of proof-nets which allows a finite and sufficient checking of logical correctness. Using term unification \cite{goi3}, a simplification with finite combinatorics has been suggested but was limited for a satisfying treatment of correctness criteria and fall into the same problems as for Seiller's interaction graphs \cite{seiller2012interaction,seiller2016interaction} which are less expressive as they could not naturally express the standard Danos-Regnier correctness criterion \cite{danos1989structure}. Therefore, we need to find an appropriate model of computation to start with.
    \item The obtained models did interpret soundly the fragments of linear logic considered, but no completeness results exist\footnote{While this aspect is a failure somehow, it is also a feature as the models are very rich and open other paths of reflection.}.
\end{enumerate}

Recently, Girard published a series of articles \cite{goi6, transyn1, transyn2, transyn3, transyn4} sketching the main lines of a new kind of model called \emph{transcendental syntax}. This kind of model would have the qualities of GoI models with more improvements. In particular, unlike previous models, the transcendental syntax allows for a satisfying treatment of correctness criterion for proof-nets which leads to completeness results by extending the model of flows \cite{goi3}. Inspired by the correctness criterion, the new idea introduced by the transcendental syntax is that in order to check whether a computational object $\Phi$ is a proof of $\mathbf{A}$, we only test it against a primitive finite set $\mathbf{a} \subseteq \mathbf{A}^\bot$ which is sufficient to ensure what we wish for \footnote{This way of typing computational entities is called \emph{l'Usine} (the factory) in Girard's terminology.}. In some sense, it introduces in logic a notion of testing close to the usual program testing of software engineering. Once the objects are certified, we need to mathematically justify that the tests ensure a sound use, a property Girard calls \emph{adequacy} which is similar to the adequacy property in Krivine realisability.

The main problem is that these articles are too inexact in form to serve satisfactorily as the basis of a mathematical theory\footnote{The formulation is borrowed from Church's criticism of von Mises' notion of kollektiv \cite{church1940}.}. The current work is the first step towards a proper formal account of the model.

% --------------------------------------------------
\subsection{Contributions of the paper}
% --------------------------------------------------

The contributions of this paper are the following:
\begin{itemize}
    \item In \Cref{sec:stellar}, we formally describe a model of computation we call ``stellar resolution". It is based on Robinson's first-order resolution \cite{robinson1965machine} and extends/corrects the model of computation vaguely described in Girard's transcendental syntax. In \Cref{subsec:propexec}, we prove the main properties of the model and state its Turing-completeness (Proposition~\ref{prop:turing}). In particular, while Girard claimed the failure of the Church-Rosser property for stellar resolution, we show it holds under some conditions (Theorem~\ref{thm:confluence}). We also relate the dynamics of our model to the construction of tiling-based computation (\cf \Cref{subsec:tiles2prolog}).

    \item In \Cref{sec:illustrations}, we encode few standard models of computation such as logic program and Turing machines. In \Cref{subsec:atam}, we relate our formalism to tiling-based models: Wang tiles \cite{wang} and the abstract tile assembly model \cite{winfree1998algorithmic,patitz2014introduction}, which has applications in DNA computing \cite{seeman,woods2019diverse}. In particular, the stellar resolution is seen as a very general model of computation describing a dynamic exchange of information within a hypergraph-like structure, thus subsuming automata and tile systems.

    \item We explain how our model captures the dynamic of cut-elimination for the multiplicative fragment of linear logic (MLL) in Theorem~\ref{thm:dynamics}, and the correctness criterion for proof-structures in Theorem~\ref{thm:correctness2}. This implicitly relates MLL proofs (hence the linear simply typed $\lambda$-calculus) to the models of computation we encode in \Cref{sec:illustrations}. We also correct some minor technical mistakes appearing in Girard's introducing paper on the transcendental syntax \cite{transyn1}.

    \item In \Cref{sec:formulas}, we show how MLL formulas can be defined only from the execution of the stellar resolution and a binary orthogonality relation $\perp$ opposing constellations (the objects of stellar resolution). This reconstruction of types suggests the possibility of speaking about type systems which could be applied to models such as automata, logic programs and tiling models as especially fine-grained specifications. Two typing methods are presented: types as labels in \Cref{subsec:testing} defining types as set of computational entities passing some finite tests and types as computational behaviours in \Cref{subsec:behaviours} representing types as sort of potentially infinite ideals. The section ends with a short discussion about multiplicative units in \Cref{subsec:units}.

    \item We prove soundness and completeness of the model \wrt both MLL (\cf Theorem~\ref{thm:soundness2} and \ref{thm:completeness2}) and MLL+MIX (\cf Theorem~\ref{thm:soundness} and \ref{thm:completeness}), an extension of MLL with the so-called MIX rule \cite{fleury1994mix}. A comment about Girard's adequacy property is given in \Cref{subsec:adequacy}.
\end{itemize}

% ==================================================
\section{Stellar Resolution}\label{sec:stellar}
% ==================================================

The stellar resolution is a new model of computation introduced in this paper as a computational ground for logic. For pedagogical purposes and for its proximity with these models, we present our model of computation as a tile system which can simulate logic programs by evaluating tilings and comment how it differs from other existing models appearing in the literature of logic programming at the end of \Cref{subsec:propexec}.

We recall definitions of terms, unification and resolution in \Cref{sec:unification} which are essential.

% --------------------------------------------------
\subsection{From tile systems to logic programs}\label{subsec:tiles2prolog}
% --------------------------------------------------

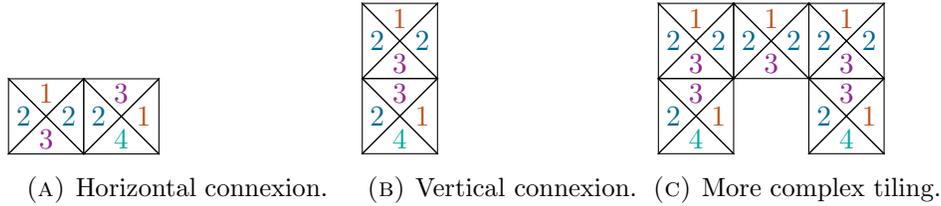
\begin{figure}
    \begin{subfigure}[b]{0.3\textwidth}
        \begin{tikzpicture}
            \wangn{0, 0}{{\color{Bittersweet}1}}{{\color{MidnightBlue}2}}{{\color{Plum}3}}{{\color{MidnightBlue}2}}
            \wangn{1, 0}{{\color{Plum}3}}{{\color{Bittersweet}1}}{{\color{Emerald}4}}{{\color{MidnightBlue}2}}
        \end{tikzpicture}
        \caption{Horizontal connexion.}
        \label{subfib:wang1}
    \end{subfigure}
    \begin{subfigure}[b]{0.25\textwidth}
        \begin{tikzpicture}
            \wangn{0, 0}{{\color{Bittersweet}1}}{{\color{MidnightBlue}2}}{{\color{Plum}3}}{{\color{MidnightBlue}2}}
            \wangn{0, -1}{{\color{Plum}3}}{{\color{Bittersweet}1}}{{\color{Emerald}4}}{{\color{MidnightBlue}2}}
        \end{tikzpicture}
        \caption{Vertical connexion.}
        \label{subfig:wang2}
    \end{subfigure}
    \begin{subfigure}[b]{0.25\textwidth}
        \begin{tikzpicture}
            \wangn{0, 0}{{\color{Bittersweet}1}}{{\color{MidnightBlue}2}}{{\color{Plum}3}}{{\color{MidnightBlue}2}}
            \wangn{0, -1}{{\color{Plum}3}}{{\color{Bittersweet}1}}{{\color{Emerald}4}}{{\color{MidnightBlue}2}}
            \wangn{1, 0}{{\color{Bittersweet}1}}{{\color{MidnightBlue}2}}{{\color{Plum}3}}{{\color{MidnightBlue}2}}
            \wangn{2, 0}{{\color{Bittersweet}1}}{{\color{MidnightBlue}2}}{{\color{Plum}3}}{{\color{MidnightBlue}2}}
            \wangn{2, -1}{{\color{Plum}3}}{{\color{Bittersweet}1}}{{\color{Emerald}4}}{{\color{MidnightBlue}2}}
        \end{tikzpicture}
        \caption{More complex tiling.}
        \label{subfig:wang3}
    \end{subfigure}
    \caption{Tiling with two wang tiles.}
    \label{fig:wang}
\end{figure}

We start with the simple and common Wang tiles \cite{wang} which are very intuitive and present generalisations leading to our model of stellar resolution.

Wang tiles are square bricks with four sides containing a value (usually an integer or a colour when presented graphically). We can construct tilings by placing copies of tiles on the plane $\rel^2$. Two tiles can be connected along two opposite sides when they hold the same value. We give three examples of partial tilings in \Cref{fig:wang}. We usually require the tilings to be \emph{maximal} (it cannot be extended with further copies of a tile) and \emph{connected}. Moreover, the rotation of tiles is usually forbidden (although this condition can be relaxed in some cases).

A possible generalisation is the model of flexible tiles \cite{jonoska2005computational} used as a model for DNA computing with branched junction molecules \cite{chen1991synthesis}. It works with star-like tiles having flexible arms which contain values from a given set $H$. An involution\footnote{A function $f$ is an involution (or is involutive) when $f(f(x)) = x$} $\op : H \rightarrow H$ such that $\op(h) \neq h = \op(\op(h))$ defines complementary values called \emph{Watson-Crick complementaries}. Two flexible arms can be connected if they have complementary values (\cf\Cref{fig:flexible}). Notice that the model is not limited to planarity, unlike Wang tiles. Surprisingly, this model can actually encode Wang tiles or other ``planar" tiling-based models \cite{jonoska2006flexible}.

\begin{figure}
    \begin{tikzpicture}
        \node[circle,draw=black] at (0, 0) (c1) {$t_1$};
        \node[dot, label={180:$h_1$}] at (-1, 0) (b1) {};
        \node[dot, label={90:$h_2$}] at (1, 0) (b2) {};
        \node[dot, label={180:$h_3$}] at (0, -1) (b3) {};
        \draw (c1) to[in=-30, out=150] (b1);
        \draw (c1) to[out=-30, in=150] (b2);
        \draw (c1) to[out=220, in=30] (b3);

        \node[circle,draw=black] at (5, 0) (c2) {$t_2$};
        \node[dot, label={90:$\mathtt{op}(h_2)$}] at (4, 0) (a1) {};
        \node[dot, label={0:$h_4$}] at (6, 0) (a2) {};
        \draw (c2) to[in=-30, out=150] (a1);
        \draw (c2) to[out=-30, in=150] (a2);
        \draw[dashed] (b2) to[out=-30, in=150] (a1);
    \end{tikzpicture}
    \caption{Tiling of flexible tiles connected by two complementary flexible arms.}
    \label{fig:flexible}
\end{figure}
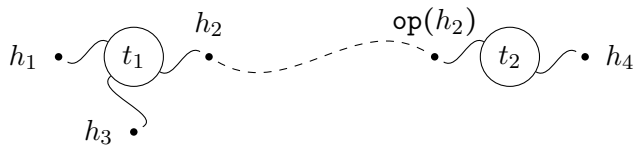

It is possible to generalise even more by considering polarised terms with a head symbol (\eg $+c(X)$ but not $+X$) as values for flexible arms such that two terms can be connected when they are unifiable up to renaming, \ie they can be made equal by a substitution from variables to terms (\cf\Cref{sec:unification}). It is more general than flexible tiles because terms can encode any set and term unification up to renaming potentially involves several possible partners. For instance, $+c(X)$ can be connected with $-c(t)$ for any term $t$ since the substitution $\theta = \{X \mapsto t\}$ is a solution of the equation $X \eqq t$ (with a renaming of $t$ in order to avoid variable conflicts).

These flexible tile sets with terms are equivalent to first-order formulas in prenex conjunctive normal form \cite{robinson1965machine}: flexible arms are first-order atoms $P(t), \lnot P(t)$, tiles are disjunctive clauses $\{A_1, ..., A_n\}$ with bound variables and tile sets are conjunctions of clauses. Robinson's first-order resolution \cite{robinson1965machine} induces a procedure of evaluation of tilings by successive contraction of links. For instance, the two connected clauses of \Cref{fig:resolution} merge and produce the new clause $[g(f(Y)), -b(f(Y)), +c(Y)]$ where the solution $\theta$ of the equation $a(X) \eqq a(f(Y))$ associated to the link is propagated to the neighbours. The evaluation of a whole set of clauses is defined as follows:
\begin{enumerate}
    \item construct all possible connected tilings by connecting clauses along unifiable terms of opposite polarity (non-determinism can happen);
    \item eliminate the unwanted ones (typically, we need maximal tilings which cannot be extended);
    \item try to contract all the tilings by using Robinson's resolution rule:
    \begin{itemize}
        \item if it fails, throw the tiling away;
        \item it if works, it should give you a new star;
    \end{itemize}
    \item at the end, you obtain a new constellation with stars coming from all the contracted tilings.
\end{enumerate}

It corresponds to the computation of all new clauses we can infer from the available ones. At this point, our flexible tile system with terms is indeed a fancy way to present first-order resolution.

\begin{figure}
    \begin{tikzpicture}
        \node at (0, 0) (a1) {$[g(X), {\color{Bittersweet}+a(X)}, -b(X)]$};
        \node at (0, -1.5) (b1) {$[{\color{MidnightBlue}-a(f(Y))}, +c(Y)]$};
        \draw (a1.south) -- (b1.110) node[midway, left] {$\theta = \substseq{X \mapsto {\color{Plum}f(Y)}}$};

        \node at (2.5, -0.75) (arrow) {$\leadsto$};

        \node at (6, -0.75) (result) {$[g({\color{Plum}f(Y)}), -b({\color{Plum}f(Y)}), +c(Y)]$};
    \end{tikzpicture}
    \caption{Robinson's first-order resolution seen as a tiling model. The two terms $+a(X)$ and $-a(f(Y))$ are dual and then can be connected to form a tiling. }
    \label{fig:resolution}
\end{figure}

The model we call stellar resolution is a graph-theoretic variant of the above tile system where we allow additional features such as unpolarised terms and internal polarised terms. It is possible because we are not interested anymore in the logical meaning of our constructions.

The difference between logic programming and our approach is that our model is used with different motivations and with less constraints. In particular, it is alogical so it does not follow logical rules. More details about the stellar resolution and approaches of logic programming are detailed at the end of \Cref{subsec:propexec}.

% --------------------------------------------------
\subsection{Stars and constellations}\label{subsec:stars}
% --------------------------------------------------

We use Girard's terminology of \emph{stars} and \emph{constellations} \cite{transyn1}. The tiles are called \emph{stars} and their flexible arms are \emph{rays}. A tile set is called a \emph{constellation}. Rays can contain special polarised symbols called \emph{colours} which are analogous to the colours of Wang tiles. For instance, if $f$ is a symbol, then $+f$ and $-f$ are two \emph{dual} colours. We then expect terms such as $+c(f(X))$, $f(X)$, $Y$, $+d(X, -e)$ and $+c(f(+f(X), Y))$ to be rays. The point lies in the technical ability to switch colours and play with the potential connexions of constellations, \eg turning all colours $-c$ into $-d$.

\begin{defi}[Coloured signature]
\label{def:colouredsignature}
A \emph{coloured signature} is a tuple $\mathbb{C} = (V, C, F, \ar, \op, \floor{\cdot})$ where $V$ is a countable set of variables, $C$ and $F$ are disjoint countable set of function symbols such that $C$ is called the set of \emph{colours} and $\ar : C \uplus F \rightarrow \nat$ associates an arity to function symbols. Colours in $C$ are new function symbols $+f, -f$ constructed by juxtaposing a polarity in $\{-, +\}$ and a function symbol $f \in F$, and $\op$ is an involution defined by $\op(+f) = -f$. The \emph{underlying symbol} $\floor{c}$ of a colour $c \in C$ such that $c \in \{+f, -f\}$ is defined by $\floor{+f} = \floor{-f} = f$ with $f$.
\end{defi}

We assume the existence of a coloured signature $\mathbb{C} = (V, C, F, \ar, \op, \floor{\cdot})$ unless we explicitly use a specific one.

\begin{figure}
    \begin{tikzpicture}
        \node[circle,draw=black] at (0, 0) (s) {$\phi$};
        \node[MidnightBlue] at (-3, 0.5) (i1) {$-c_1(t_1^1, ..., t_n^1)$};
        \node[MidnightBlue] at (-3, 0) (i2) {$\hdots$};
        \node[MidnightBlue] at (-3, -0.5) (i3) {$-c_k(t_1^k, ..., t_m^k)$};
        \draw[-latex,MidnightBlue] (i1) to[in=170, out=0] (s);
        \draw[-latex,MidnightBlue] (i3) to[in=190, out=0] (s);
        \node[Bittersweet] at (3, 0.5) (o1) {$+d_1(u_1^1, ..., u_{n'}^1)$};
        \node[Bittersweet] at (3, 0) (o2) {$\hdots$};
        \node[Bittersweet] at (3, -0.5) (o3) {$+d_{k'}(u_1^{k'}, ..., u_{m'}^{k'})$};
        \draw[-latex,Bittersweet] (s) to[out=10, in=180] (o1);
        \draw[-latex,Bittersweet] (s) to[out=-10, in=180] (o3);
        \node at (-1, -1.5) (u1) {$v_1$};
        \node at (0, -1.5) (u2) {$\hdots$};
        \node at (1, -1.5) (u3) {$v_l$};
        \draw (u1) to[out=90, in=-100] (s);
        \draw (u3) to[out=90, in=-80] (s);
    \end{tikzpicture}
    \caption{Star with prefixed rays seen as either input, output or unpolarised. For general stars, internal colours are allowed as well.}
    \label{fig:star}
\end{figure}
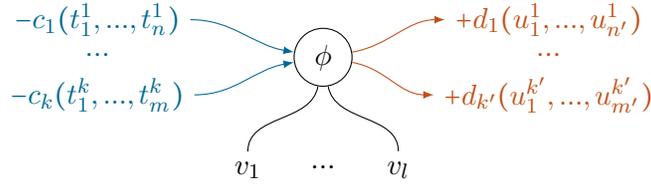

\begin{defi}[Rays]
\label{def:ray}
A \emph{ray} on a signature $\mathbb{C} = (V, C, F, \ar, \op, \floor{\cdot})$ is a term $r \in \terms{\mathbb{C}}$ constructed with variables in $V$ and function symbols in $C \uplus F$ (\cf\Cref{sec:unification}).

A ray $r$ is \emph{coloured} if there is a function symbol $f$ appearing in $r$ such that $f \in C$ and it is \emph{uncoloured} otherwise.

The underlying term of a ray is defined inductively as follows:
\[\floor{x} = x \qquad \floor{c(r_1, ..., r_n)} = \floor{c}(\floor{r_1}, ..., \floor{r_n}) \qquad \floor{f(r_1, ..., r_n)} = f(\floor{r_1}, ..., \floor{r_m})\]
with $x \in V, f \in F$ and $c \in C$.
\end{defi}

Notice that the model is more complex that one would expect from the intuitions given in \Cref{subsec:tiles2prolog} because of the possible presence of internal colours. Actually, in this paper, only rays with colours as prefix will be used, hence rays $c(t_1, ..., t_n)$ such that $c \in C$ and the terms $t_i$ are uncoloured. Such terms looks like atomic first-order formulas $P(t), \lnot P(t)$ where $P$ is a predicate. Although not used here, we choose the keep the more general definition of rays with internal colours in order to anticipate further works and extensions already described by Girard \cite[Section 4.1]{transyn3}.

\begin{defi}[Star, \Cref{fig:star}]
\label{def:star}
A star $\phi$ over a coloured signature $\mathbb{C}$ is a finite indexed family\footnote{Which should be understood as an array indexed by a given set of indexes (typically, natural numbers).} of \emph{rays}, \ie a finite set of indexes $I_\phi$ together with a map $\phi[\cdot]:I_\phi\rightarrow\idrays{\mathbb{C}}$ which given an index gives the corresponding ray. The set of variables appearing in $\phi$ is defined by $\vars{\phi} := \bigcup_{i \in I_\phi} \vars{\phi[i]}$ (\cf\Cref{sec:unification}). For convenience, stars will be written as a clause $[r_1, ..., r_n]$.

The empty star is written $\emptystar$ and is defined by $I_{\emptystar} = \emptyset$.
\end{defi}

\begin{nota}[Substitutions extended to stars]
A \emph{substitution} (\cf\Cref{sec:unification}) is a function replacing variables by terms, within a term.
Given a substitution $\theta$, its action extends to stars by $\theta[r_1, ..., r_n] = [\theta r_1, ..., \theta r_n]$.

A \emph{renaming} is a substitution replacing variables by other variables.
We say that two stars $\phi_1, \phi_2$ are \emph{$\alpha$-equivalent}, written $\phi_1 \alphaeq \phi_2$, when there exists a renaming $\alpha$ such that $\alpha \phi_1 = \phi_2$.
\end{nota}

\begin{conv}
In this paper, stars will be considered up to $\alpha$-equivalence. We therefore define $\stars{\mathbb{C}}$ as the set of all stars over a coloured signature $\mathbb{C}$, quotiented by $\alphaeq$.
\end{conv}

\begin{defi}[Constellation]
\label{def:constellation}
A \emph{constellation} $\Phi$ is a countable indexed family of stars, \ie a countable (possibly infinite) set $I_\Phi$ together with a map $\Phi[\cdot]: I_\Phi \rightarrow \stars{\mathbb{C}}$. For convenience, a finite constellation will be written as a sum of stars $\Phi = \phi_1 + ... + \phi_n$.

We define the set of rays of a constellation $\Phi$ by $\idrays{\Phi} = \{ (i, j) \mid i \in I_\Phi, j \in I_{\Phi[i]}\}$ (we keep track of the star from which rays come) and $\idrayspol{\Phi} := \{r \in \idrays{\Phi} \mid r \text{ is coloured}\}$ by its restriction to coloured rays.

The empty constellation is written $\emptyset$ and is defined by $I_\emptyset = \emptyset$.
\end{defi}

Constellations are meant to be sort of programs. As in logic programming (\eg Prolog) or functional programming (\eg $\lambda$-calculus), variables will be considered bound to their star (which can be seen as sort of declarations), hence the two $x$ in $[+f(x)]+[-f(x), y]$ are unrelated. This is similar to how the two $x$ in the $\lambda$-term $\lambda x. (\lambda x. M)$ are different.

\begin{nota}[Indexed set membership]
We will sometimes write $e \in E$ for an indexed set $E$ (a star or constellation in our case). The intended meaning is that there is an $i \in I_E$ such that $e = E[i]$. 
\end{nota}

Now that all the elementary objects of the stellar resolution are defined, we give a very standard encoding of natural numbers which will be useful through the paper and illustrate the model by some example of constellations.

\begin{defi}[Encoding of natural numbers]
We define the function symbol $\enat{n}$ for a natural number $n \in \nat$ by $\enat{0} = 0$ and $\enat{n+1} = s(\enat{n})$ for a unary symbol $s$ and a constant $0$.
\end{defi}

\begin{exas}
\label{ex:constellation}
We give examples of finite and infinite constellations:
\begin{itemize}
    \item $\Phi_{BH}^c := [-c(X), +c(X)]$;
    \item $\Phi_\nat^+ := [+add(\enat{0}, Y, Y)]+[-add(X, Y, Z), +add(s(X), Y, s(Z))]$ (logic program for addition);
    \item $\Phi_\nat^{n+m} := \Phi_\nat^++[-add(\enat{n}, \enat{m}, R), R]$ (query for the computation of $n+m$);
    \item $\Phi_\nat$ is defined by $I_{\Phi_\nat} = \nat$ and $\Phi_\nat[i] := [-nat(\enat{i}), +nat(\enat{i+1})]$ (infinite chain)
\end{itemize}
over the signature defined by the variables $V = \{X, Y, Z, R\}$, the colours $C = \{\pm add, \pm nat\}$, and $F = \{add, nat, s, 0\}$, $\ar(add) = 3, \ar(nat) = \ar(s) = 1, \ar(0) = 0$. The constellation $\Phi_\nat^{n+m}$ corresponds to the following Horn clauses \cite{tarnlund1977horn} where $Add(X, Y, Z)$ states that $X+Y = Z$: $Add(0, Y, Y)$ and $Add(X, Y, Z) \Rightarrow Add(s(X), Y, s(Z))$.
\end{exas}

\begin{nota}[Disjoint union of constellations]
Let $\Phi_1$ and $\Phi_2$ be two constellations. Their disjoint union $\Phi_1 \uplus \Phi_2$ is a constellation defined by $I_{\Phi_1 \uplus \Phi_2} := I_{\Phi_1} \uplus I_{\Phi_2}$ and the associated copairing $\Phi_1[\cdot] \uplus \Phi_2[\cdot] : I_{\Phi_1} \uplus I_{\Phi_2} \rightarrow \stars{\mathbb{C}}$.
\end{nota}

% --------------------------------------------------
\subsection{Evaluation of diagrams and execution of constellations}\label{subsec:eval}
% --------------------------------------------------

We are now interested in the formation of tilings we call \emph{diagrams}. Unlike tilings with Wang tiles or flexible tiles, it is possible to evaluate these diagrams by contracting them with Robinson's resolution rule \cite{robinson1965machine}.

\begin{figure}
    \begin{tikzpicture}
        \node[state, initial] (q0) {$\mathtt{q_0}$};
        \node[state, right of=q0, xshift=20mm] (q1) {$\mathtt{q_1}$};
        \node[state, accepting, right of=q1, xshift=20mm] (q2) {$\mathtt{q_2}$};
        \draw[-latex]
        (q0) edge[loop above] node{1} (q0)
        (q2) edge[loop above] node{0} (q2)
        (q0) edge[above] node{0} (q1)
        (q1) edge[above, bend right] node{1} (q0)
        (q2) edge[below, bend left] node{1} (q0)
        (q1) edge[above] node{0} (q2);

        \node[state, fill=lightgray, below of=q0, yshift=-20mm] (w1) {$1$};
        \node[state, fill=lightgray, right of=w1, xshift=10mm] (w2) {$0$};
        \node[state, fill=lightgray, right of=w2, xshift=10mm] (w3) {$1$};
        \node[state, fill=lightgray, right of=w3, xshift=10mm] (w4) {$0$};
        \node[state, fill=lightgray, right of=w4, xshift=10mm] (w5) {$0$};
        \draw[-latex]
        (w1) edge node{} (w2)
        (w2) edge node{} (w3)
        (w3) edge node{} (w4)
        (w4) edge node{} (w5);
        \draw[-latex, MidnightBlue]
        (w1) edge node{} (q0)
        (w2) edge node{} (q1)
        (w3) edge node{} (q0)
        (w4) edge node{} (q1)
        (w5) edge node{} (q2);
    \end{tikzpicture}
    \caption{Example of finite deterministic automata with a mapping from a word graph to its state graph.  }
    \label{fig:automatapath}
\end{figure}
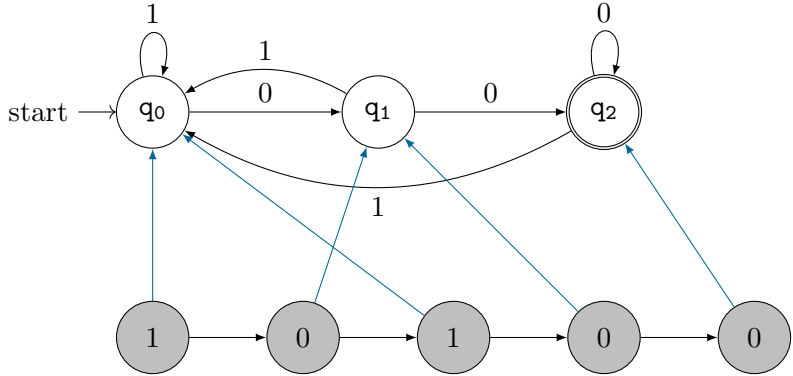

We first define the \emph{dependency graph} of a constellation which defines the allowed connexions between stars along dual rays. A diagram corresponds to an actual plugging of stars along dual rays, following those allowed connexions. The edges linking stars will induce an equation between terms and the whole diagram will induce a unification problem (\cf \Cref{sec:unification}). The \emph{evaluation} of a diagram will correspond to solving its associated unification problem and producing a new star.

In order to approach this idea more intuitively, we explain a common occurrence of it in automata theory. A finite deterministic automaton is a machine reading an input word character by character. It starts from an initial state and moves from a state to another accordingly to the current character it reads. If it ends on the final state, it \emph{accepts} the input word. Otherwise, it rejects the input. An example of automata of final state $q_2$ is given in \Cref{fig:automatapath} (on the top with vertices $q_0, q_1, q_2$). A word can be represented as a linear graph (on the bottom of \Cref{fig:automatapath} with vertices $1, 0, 1, 0$ and $0$) and finally, the reading of a word can be represented as a mapping from characters to states (blue links).

The idea is that the state graph of the automata shows the allowed transitions (where loop can appear) and the word graph is a tiling of states or a traversal of graph which follows those allowed transitions (sometimes by \emph{unfolding} loops).

Our diagrams generalise this idea. The state graph corresponds to a dependency graph and the word graph corresponds to a diagram. The difference is that a dynamics of term unification is present in our dependency graphs and diagrams can be any graph, not necessarily limited to the linear case as for automata. Mathematically, a diagram will be associated to a graph homomorphism between a graph (representing the tiling) and the dependency graph, exactly like how word graphs are related to state graphs in automata theory.

\begin{defi}[Duality between rays]
\label{def:duality}
Let $\op(r)$ the inverse of a ray $r$ defined by inverting polarities, \ie $\op(r)$ is $r$ where all colours $c$ are replaced by $\op(c)$.

Two rays $r$ and $r'$ are \emph{dual} \wrt a set of colours $A \subseteq C$, written $r \match_A r'$, when both have at least one colour $c \in A$ and $\{r \eqq \op(r')\}$ has a solution.
\end{defi}

\begin{prop}
The relation $\match_A$ is symmetric but not reflexive nor transitive.
\end{prop}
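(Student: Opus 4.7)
The plan is to establish the three claims in turn.

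For symmetry, suppose $r \match_A r'$, so each of $r$ and $r'$ carries some colour in $A$ (a condition symmetric in the pair) and there exists a substitution $\theta$ solving $\{r \eqq \op(r')\}$. I would first argue that $\op$ commutes with any substitution: since substitutions replace variables while $\op$ rewrites colour head symbols, the two operations commute by a straightforward induction on term structure. Applying $\op$ to both sides of the equation $\theta r = \theta \op(r')$ and using involutivity $\op \circ \op = \mathrm{id}$ then yields $\theta \op(r) = \theta r'$, so the same $\theta$ solves $\{r' \eqq \op(r)\}$, giving $r' \match_A r$.

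For non-reflexivity, I would exhibit a single coloured ray $r$ such that $r \not\match_A r$. A minimal witness is $r = +f(X)$ with $+f \in A$: one has $\op(r) = -f(X)$, and $\{+f(X) \eqq -f(X)\}$ has no solution since $+f$ and $-f$ are distinct function symbols of the signature, forbidding any unifier at the head. For non-transitivity, taking $A \supseteq \{+f, -f\}$, set $r = r'' = +f(X)$ and $r' = -f(X)$: then $r \match_A r'$ and $r' \match_A r''$ are each witnessed by the trivial identity unifier (since $\op(r') = +f(X) = r$ and $\op(r'') = -f(X) = r'$), while $r \match_A r''$ fails by exactly the reflexivity argument just given.

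The only subtlety, and hence the main (mild) obstacle, is the commutation of $\op$ with substitutions, which requires an induction on ray structure accommodating internal colours (for instance rays of the form $+c(f(-d(X)))$ permitted by \Cref{def:ray}). Once this commutation lemma is in place, the symmetry argument is immediate and the two counterexamples above dispatch the remaining assertions.
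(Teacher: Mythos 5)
Your counterexamples for non-reflexivity and non-transitivity are correct, and they are more explicit than the paper's own proof, which is essentially a one-line remark that duality forces opposite polarities to face each other. The shape of your symmetry argument is also the natural one. However, the commutation lemma you rest it on --- ``$\op$ commutes with any substitution'' --- is false in the generality you need, and the induction you sketch would fail at the variable base case. Take $r = X$ and $\theta = \subst{X}{+c(a)}$: then $\op(\theta(X)) = -c(a)$ while $\theta(\op(X)) = \theta(X) = +c(a)$. The point is that $\op$ flips the colours of the terms that $\theta$ \emph{introduces}, whereas $\theta \circ \op$ leaves them untouched; and unifiers of $\{r \eqq \op(r')\}$ can perfectly well substitute coloured terms (e.g.\ $r = +c(X)$, $r' = -c(+d(a))$, where the unifier sends $X$ to $-d(a)$). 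So the step ``the same $\theta$ solves $\{r' \eqq \op(r)\}$'' is not justified.

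The repair is routine: define $\op(\theta)$ pointwise by $\op(\theta)(X) := \op(\theta(X))$ and prove instead $\op(\theta(t)) = \op(\theta)(\op(t))$ by the same induction on $t$ (the variable case now goes through, and the two function-symbol cases are unchanged). Applying $\op$ to $\theta r = \theta\op(r')$ then gives $\op(\theta)(\op(r)) = \op(\theta)(r')$, so it is $\op(\theta)$ --- not $\theta$ itself --- that witnesses $r' \match_A r$. With that twist the symmetry argument is complete: the side condition that both rays carry a colour in $A$ is manifestly symmetric in the pair, as you note, and the two counterexamples dispatch the remaining assertions exactly as you wrote them.
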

\begin{proof}
Notice that only symmetric relations are used in the definition (equality and unification). Hence, it follows that duality is also symmetric. The failure of reflexivity and transitivity comes from the requirement of opposite colours. A ray cannot be dual to another ray where two identical polarities face each other.
\end{proof}

\begin{exa}
We have $+c(X) \match_{\{c\}} -c(0)$ and $-d(X) \match_{\{d\}} +d(f(X))$ but not $+c(X) \match_A f(Y)$ (presence of unpolarised ray), $+c(X) \match -d(X)$ (different head symbol), $+c(X) \match_A +c(f(Y))$ (polarities are not opposite) nor $+c(f(X)) \match_A -c(g(Y))$ (terms are not $\alpha$-unifiable) for any $A$.
\end{exa}

\begin{figure}
    \begin{subfigure}{0.9\textwidth}
        \centering
        \begin{tikzpicture}[every node/.style={scale=0.8}]
        	\node[circle,draw] (e1) at (0,0) {$\phi_1$};
        	\node[circle,draw] (e2) at (6,0) {$\phi_2$};
        	\node[circle,draw] (e3) at (12,0) {$\phi_3$};
            \node (r) at (12.5,0) {$R$};

        	\draw (e1) -- (e2) node[midway,above,sloped] {$+add(\enat{0}, Y, Y) \match -add(X,Y,Z)$};
            \draw (e2) to[loop] (e2) node[above=15mm] {$-add(X,Y,Z) \match +add(s(X), Y, s(Z))$};
        	\draw (e2) -- (e3) node[midway,above,sloped] {$+add(s(X), Y, s(Z)) \match -add(\enat{n},\enat{m},R)$};
        \end{tikzpicture}
        \subcaption{Dependency graph of $\Phi_\nat^{n+m}$.}
    \end{subfigure}
    \begin{subfigure}{0.8\textwidth}
        \centering
        \begin{tikzpicture}[x=2cm,y=1cm,every node/.style={scale=0.8}]
        	\node[circle,draw] (e1) at (0,0) {$n_1$};
        	\node[circle,draw] (e2) at (2,0) {$n_2$};
        	\node (dots) at (4,0) {$\dots$};

        	\draw (e1) -- (e2) node[midway,above] {$-nat(\enat{0}) \match +nat(\enat{1})$};
        	\draw (e2) -- (dots) node[midway,above] {$-nat(\enat{1}) \match +nat(\enat{2})$};
        \end{tikzpicture}
        \subcaption{Dependency graph of $\Phi_\nat$.}
    \end{subfigure}
    \caption{Examples of dependency graphs for constellations of Example~\ref{ex:constellation}.}
    \label{fig:dgraph}
\end{figure}
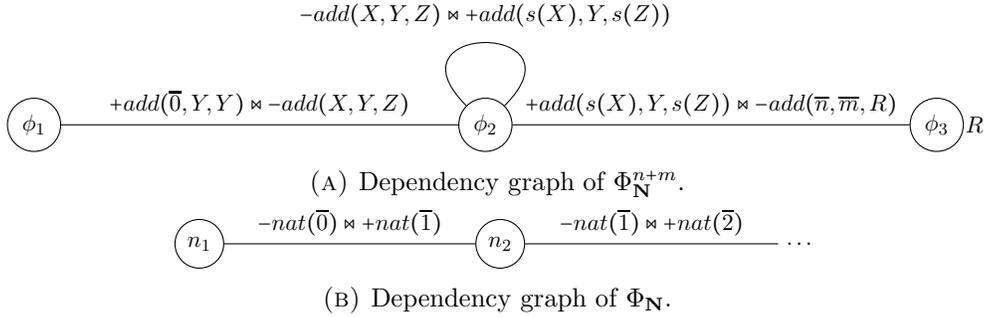

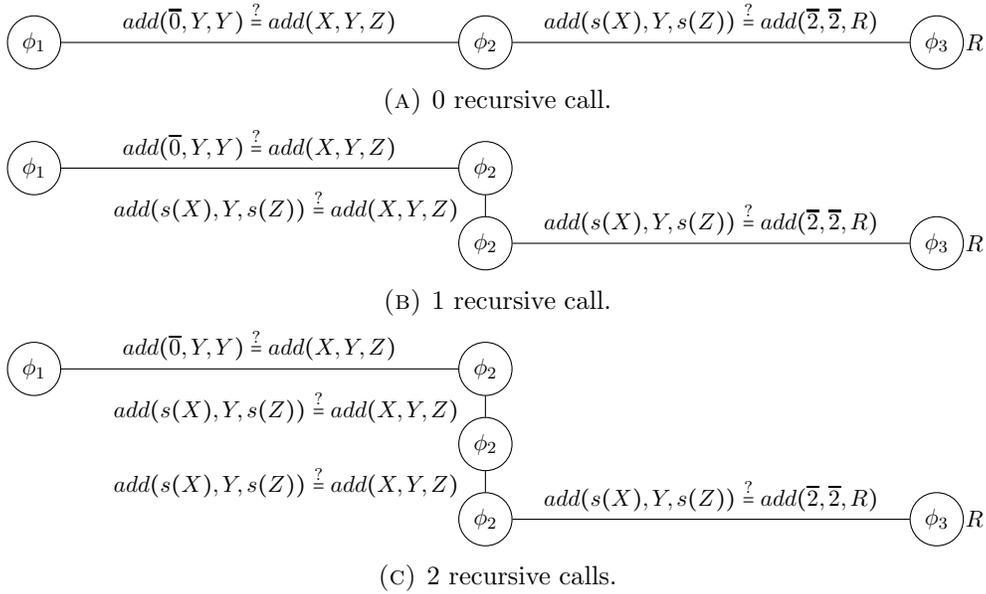
\begin{figure}
    \begin{subfigure}{0.9\textwidth}
        \centering
        \begin{tikzpicture}[every node/.style={scale=0.8}]
        	\node[circle,draw] (e1) at (0,0) {$\phi_1$};
        	\node[circle,draw] (e2) at (6,0) {$\phi_2$};
            \node[circle,draw] (e3) at (12,0) {$\phi_3$};
            \node (r) at (12.5,0) {$R$};

        	\draw (e1) -- (e2) node[midway, above] {$add(\enat{0}, Y, Y) \eqq add(X, Y, Z)$};
        	\draw (e2) -- (e3) node[midway, above] {$add(s(X), Y, s(Z)) \eqq add(\enat{2}, \enat{2}, R)$};
        \end{tikzpicture}
        \subcaption{0 recursive call.}
    \end{subfigure}

    \begin{subfigure}{0.9\textwidth}
        \centering
        \begin{tikzpicture}[every node/.style={scale=0.8}]
        	\node[circle,draw] (e1) at (0,0) {$\phi_1$};
        	\node[circle,draw] (e2) at (6,0) {$\phi_2$};
        	\node[circle,draw] (e2p) at (6,-1) {$\phi_2$};
        	\node[circle,draw] (e3) at (12,-1) {$\phi_3$};
            \node (r) at (12.5,-1) {$R$};

        	\draw (e1) -- (e2) node[midway, above] {$add(\enat{0}, Y, Y) \eqq add(X, Y, Z)$};;
        	\draw (e2) -- (e2p) node [left=3mm,midway] {$add(s(X), Y, s(Z)) \eqq add(X, Y, Z)$};
        	\draw (e2p) -- (e3) node[midway, above] {$add(s(X), Y, s(Z)) \eqq add(\enat{2}, \enat{2}, R)$};
        \end{tikzpicture}
        \subcaption{1 recursive call.}
        \label{subfig:fulldiagram}
    \end{subfigure}

    \begin{subfigure}{0.9\textwidth}
        \centering
        \begin{tikzpicture}[every node/.style={scale=0.8}]
        	\node[circle,draw] (e1) at (0,0) {$\phi_1$};
        	\node[circle,draw] (e2) at (6,0) {$\phi_2$};
        	\node[circle,draw] (e2p) at (6,-1) {$\phi_2$};
            \node[circle,draw] (e2pp) at (6,-2) {$\phi_2$};
        	\node[circle,draw] (e3) at (12,-2) {$\phi_3$};
            \node (r) at (12.5,-2) {$R$};

        	\draw (e1) -- (e2) node[midway, above] {$add(\enat{0}, Y, Y) \eqq add(X, Y, Z)$};;
        	\draw (e2) -- (e2p) node [left=3mm,midway] {$add(s(X), Y, s(Z)) \eqq add(X, Y, Z)$};
            \draw (e2p) -- (e2pp) node [left=3mm,midway] {$add(s(X), Y, s(Z)) \eqq add(X, Y, Z)$};
        	\draw (e2pp) -- (e3) node[midway, above] {$add(s(X), Y, s(Z)) \eqq add(\enat{2}, \enat{2}, R)$};
        \end{tikzpicture}
        \subcaption{2 recursive calls.}
    \end{subfigure}

    \caption{Examples of diagrams for the constellation $\Phi_\nat^{2+2}$. The number of occurrences of $\phi_2$ corresponds to the number of recursive calls. They correspond to unfolding of the loop of \Cref{fig:dgraph} corresponding to the possibility of recursive call. }
    \label{fig:diagrams}
\end{figure}

\begin{defi}[Dependency graph]
\label{def:dgraph}
The \emph{dependency graph} of a constellation $\Phi$ \wrt a set of colours $A \subseteq C$ is the undirected labelled multigraph\footnote{A multigraph is a graph with possibly several edges between two same vertices.} $\dgraph[A]{\Phi} := (V, E, \ell)$ where $V := I_\Phi$ and for each $(i, j), (i', j') \in \idrayspol{\Phi}$ such that $\Phi[i][j] \match_A \Phi'[i'][j']$, we have $\{i, i'\} \in E$ and the edge labelling $\ell(i, i') = (j, j')$. We simply write $\dgraph{\Phi}$ when links for all colours appearing in $\Phi$ are allowed.
\end{defi}

\begin{exa}
Two examples of dependency graphs for the two constellations $\Phi^{n+m}_\nat$ and $\Phi_\nat$ of Example~\ref{ex:constellation} are presented in Figure~\ref{fig:dgraph}.
\end{exa}

\begin{figure}
    \centering
    \begin{tikzpicture}[every node/.style={scale=0.8}]
        \node (r1) at (0,0) {$add(\enat{0}, Y, Y)$};
        \node (r2) at (4,0) {$add(X, Y, Z)$};
        \node (r3) at (8,0) {$add(s(X), Y, s(Z))$};
        \node (r4) at (12,0) {$add(\enat{2}, \enat{2}, R)$};

        \draw (r1) -- (r2);
        \draw (r3) -- (r4);
    \end{tikzpicture}
    \caption{Ray linking graph for the first diagram of \Cref{fig:diagrams} corresponding to the case with no recursive call.}
    \label{fig:rlg}
\end{figure}
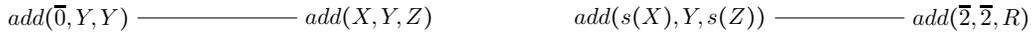

\begin{defi}[Diagram]
\label{def:diagram}
An $A$-\emph{diagram} (or simply \emph{diagram} when working with all colours) $\delta$ on a set of colours $A \subseteq C$ over a constellation $\Phi$ is a graph homomorphism $\delta : G_\delta \rightarrow \dgraph[A]{\Phi}$ from a non-empty finite connected multigraph $G_\delta$.

We define the set of rays of $\delta$ by $\idrays{\delta} := \{(\delta(x), j) \mid x \in V^{G_\delta}, j \in I_{\delta(x)}\}$ (we keep track of the stars from which the rays come) and extend the definition to the set $\idrayspol{\delta}$ of its coloured rays.

The labelling function $\ell$ of $\dgraph[A]{\Phi}$ is extended to a diagram $\delta$ by $\ell(e) := \ell(\delta(e))$ for $e \in E^{\dgraph[A]{\Phi}}$.

The \emph{ray linking graph} $\mathtt{RLG}(\delta)$ of $\delta$ is a graph showing how rays are linked (instead of stars). It is a graph $(V, E)$ defined with $V := \idrays{\delta}$ such that $(j, j') \in E$ when $j$ and $j'$ are linked in $G_\delta$, \ie when there is some $e \in E^{G_\delta}$ of label $(j, j')$.

Finally, we require that a diagram has a ray linking graph which is a simple graph (A graph without loop on vertices and without multiple edges between two vertices).  

The graph $G_\delta$ is considered up to renaming of the vertices and edges and for convenience, we will often have $V \subseteq \nat$ in practice.
\end{defi}

\begin{exa}
An example of three diagrams for the constellation $\Phi_\nat^{2+2}$ (which is an instance of the constellation $\Phi_\nat^{n+m}$ of Example~\ref{ex:constellation}) is given in Figure~\ref{fig:diagrams}. An example of ray linking graph for the first diagram is given in \Cref{fig:rlg}.
\end{exa}

\begin{nota}[Free rays and closed diagrams]
\label{not:diagram}
Given an $A$-diagram $\delta$ of a constellation $\Phi$, we define its set of \emph{free} (unconnected) rays $\freerays{\delta} \subseteq V^{\mathtt{RLG}(\delta)}$ by the set of isolated rays index in $\mathtt{RLG}(\delta)$. If $\freerays{\delta} = \emptyset$, we say that $\delta$ is \emph{closed}.
\end{nota}

We usually would like diagrams to be impossible to extend by connecting more stars, which corresponds to a notion of \emph{saturation}. In terms of tiling it is understood as the construction of the largest constructible tiling with occurrences of tiles from a given tile set and in terms of programming, it corresponds to a complete computation to be done.

\begin{figure}
	\begin{subfigure}{0.4\textwidth}
		\begin{tikzpicture}
			\node at (0, 0) (a) {$\dgraph[A]{\Phi}$};
			\node at (-4, 0.75) (b) {$H \subseteq G_{\delta'}$};
			\node at (-4, -0.75) (c) {$G_\delta$};
			\draw[-latex] (b) -- (a) node[midway, above] {$\delta'$};
			\draw[-latex] (c) -- (a) node[midway, below] {$\delta$};
			\draw (b) -- (c) node[midway, left] {$\overset{\varphi}{\simeq}$};
		\end{tikzpicture}
		\caption{Saturation expressed with graph homomorphisms.}
	\end{subfigure}
	\hspace{1cm}
	\begin{subfigure}{0.4\textwidth}
		\begin{tikzpicture}
			\node[circle,fill,inner sep=1pt] at (0, 0) (a) {};
			\node[circle,fill,inner sep=1pt] at (-0.5, -0.5) (b) {};
			\node[circle,fill,inner sep=1pt] at (0.5, -0.5) (c) {};
			\node[circle,fill,inner sep=1pt] at (-0.25, -1) (d) {};
			\node[circle,fill,inner sep=1pt] at (0.25, -1) (e) {};
			\draw (a) -- (c);
			\draw (b) -- (c);
			\draw (b) -- (d);
			\draw (c) -- (d);
			\draw (c) -- (e);
			
			\node at (1, -0.5) {$\sqsubseteq$};
			
			\node[circle,fill,inner sep=1pt] at (2, 0) (a) {};
			\node[circle,fill,inner sep=1pt] at (1.5, -0.5) (b) {};
			\node[circle,fill,inner sep=1pt] at (2.5, -0.5) (c) {};
			\node[circle,fill,inner sep=1pt] at (1.75, -1) (d) {};
			\node[circle,fill,inner sep=1pt] at (2.25, -1) (e) {};
			\draw (a) -- (c);
			\draw (b) -- (c);
			\draw (b) -- (d);
			\draw (c) -- (d);
			\draw (c) -- (e);
			\node[circle,fill=Emerald,inner sep=1pt] at (3, -0.5) (g) {};
			\node[circle,fill=Emerald,inner sep=1pt] at (1.5, 0) (h) {};
			\node[circle,fill=Emerald,inner sep=1pt] at (2.5, -1.25) (i) {};
			\draw[Emerald] (g) -- (a);
			\draw[Emerald] (g) -- (e);
			\draw[Emerald] (g) -- (c);
			\draw[Emerald] (a) -- (h);
			\draw[Emerald] (h) -- (b);
			\draw[Emerald] (h) -- (c);
			\draw[Emerald] (i) -- (c);
			\draw[Emerald] (i) -- (d);
		\end{tikzpicture}
		\caption{We have $\delta \sqsubseteq \delta'$ when $\delta'$ is an extension of $\delta$ with further links and possibly repetitions of stars. Both follow the connexions of a dependency graph $\dgraph[A]{\Phi}$.}
	\end{subfigure}
	\caption{Order $\sqsubseteq$ on diagrams representing an idea of saturation.}
	\label{fig:saturationorder}
\end{figure}
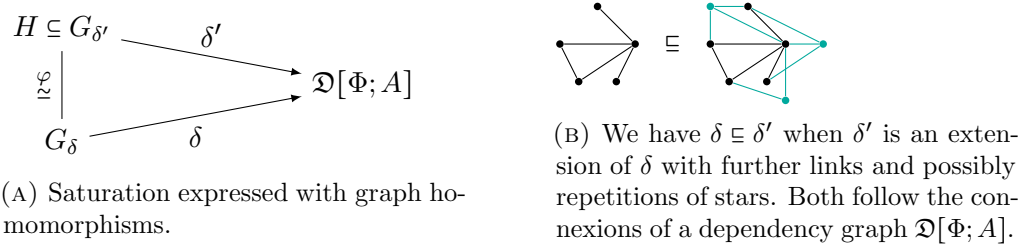

\begin{defi}[Saturated diagram]
\label{def:saturated}
We define a binary relation $\sqsubseteq$ (illustrated in \Cref{fig:saturationorder}) on $A$-diagrams over a constellation $\Phi$ by: $\delta \sqsubseteq \delta'$ if there exists an isomorphism $\varphi$ from a graph $H \subseteq G_{\delta'}$ to $G_\delta$ such that $\delta = \delta'\circ\varphi$. A maximal $A$-diagram \wrt $\sqsubseteq$ is called \emph{saturated}.
\end{defi}

\begin{prop}
The relation $\sqsubseteq$ is a preorder.
\end{prop}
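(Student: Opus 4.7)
The plan is to check the two defining conditions of a preorder---reflexivity and transitivity---directly from the definition of $\sqsubseteq$. Recall that $\delta \sqsubseteq \delta'$ means there is some subgraph $H \subseteq G_{\delta'}$ together with a graph isomorphism $\varphi$ identifying $H$ with $G_{\delta}$ and satisfying $\delta = \delta' \circ \varphi$ (viewing $\varphi$ as an embedding $G_{\delta} \hookrightarrow G_{\delta'}$ through $H$). So the proof is essentially a diagram chase.

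For reflexivity $\delta \sqsubseteq \delta$, I would take $H = G_{\delta}$ itself as a (trivial) subgraph of $G_{\delta}$ and let $\varphi = \mathrm{id}_{G_{\delta}}$. This is clearly a graph isomorphism and we have $\delta \circ \mathrm{id}_{G_{\delta}} = \delta$, so the witness is immediate.

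For transitivity, assume $\delta_1 \sqsubseteq \delta_2$ and $\delta_2 \sqsubseteq \delta_3$, witnessed respectively by an isomorphism $\varphi_1$ between $G_{\delta_1}$ and a subgraph $H_1 \subseteq G_{\delta_2}$ with $\delta_1 = \delta_2 \circ \varphi_1$, and by an isomorphism $\varphi_2$ between $G_{\delta_2}$ and a subgraph $H_2 \subseteq G_{\delta_3}$ with $\delta_2 = \delta_3 \circ \varphi_2$. I would set $\varphi := \varphi_2 \circ \varphi_1$, whose image $\varphi_2(H_1)$ is a subgraph of $H_2 \subseteq G_{\delta_3}$ (subgraphs are preserved by graph isomorphisms) and which is an isomorphism onto that image as a composition of isomorphisms. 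By associativity, $\delta_3 \circ \varphi = \delta_3 \circ \varphi_2 \circ \varphi_1 = \delta_2 \circ \varphi_1 = \delta_1$, so $\delta_1 \sqsubseteq \delta_3$.

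There is no real obstacle: the only thing to keep careful track of is the typing of $\varphi$ and the fact that composing two subgraph inclusions with isomorphisms yields a subgraph inclusion with an isomorphism. No fact is needed beyond the general observation that graph embeddings compose and that composition of maps is associative. (Note that antisymmetry up to isomorphism of $G_\delta$ would also hold, but it is not required for the statement, so I would not pursue it.)
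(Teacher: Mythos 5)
Your proof is correct and follows essentially the same route as the paper: reflexivity via the identity on $G_\delta$, and transitivity by composing the two witnessing isomorphisms and checking $\delta_3 \circ \varphi_2 \circ \varphi_1 = \delta_1$. If anything, your version is slightly more careful than the paper's about the direction of the maps and about why the composite lands in a genuine subgraph of $G_{\delta_3}$ (namely $\varphi_2(H_1)$), which the paper glosses over.
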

\begin{proof}
We have $\delta \sqsubseteq \delta$ by taking the subgraph $G_\delta \subseteq G_\delta$. The isomorphism $\varphi$ is the identity function so we trivially have $\delta = \delta \circ \varphi$.

Assume we have $\delta_1 \sqsubseteq \delta_2$ and $\delta_2 \sqsubseteq \delta_3$. Hence, we have isomorphisms $\varphi_{1,2} : (H_2 \subseteq G_{\delta_2}) \simeq G_{\delta_1}$ and $\varphi_{2,3} : (H_3 \subseteq G_{\delta_3}) \simeq G_{\delta_2}$ such that $\delta_1 = \delta_2\circ\varphi_{1,2}$ and $\delta_2 = \delta_3\circ\varphi_{2,3}$. We can construct an isomorphism $\varphi_{1,2} \circ \varphi_{2,3}$ such that $\delta_1 = \delta_3 \circ \varphi$ and $G_{\delta_3}$ is indeed an extension of $G_{\delta_1}$ following the connexions of the same dependency graph. 
\end{proof}

Links in a diagram have an underlying equation. It follows that a whole diagram is associated to a unification problem (\cf\Cref{sec:unification}). A minor but important technical problem is that variables appearing in a constellation $\Phi$ are meant to be bound to their star. Hence, before evaluating, we must rename variables so to mark their membership to a star of $\Phi$. Fortunately, it is possible to define a canonical renaming by using the star indexes $I_\Omega$.

\begin{defi}[Underlying equation and problem]
\label{def:underproblem}
Let $\delta$ be an $A$-diagram of a constellation $\Phi$. We define a canonical family of renamings for variables defined by $\alpha_v(x) = x_v$ for $v \in V^{G_\delta}$ and any variable $x$.

The \emph{underlying equation} of a link $e = (v, v')$ of label $(j, j')$ in $E^{G_\delta}$ is defined by $\undereq{e} := \alpha_v\floor{\Phi[v][j]} \eqq \alpha_{v'}\floor{\Phi'[v'][j']}$ and the \emph{underlying problem} of $\delta$ is defined by
$\mathcal{P}(\delta) = \{ \undereq{e} \mid e \in E^{G_\delta} \}$.
\end{defi}

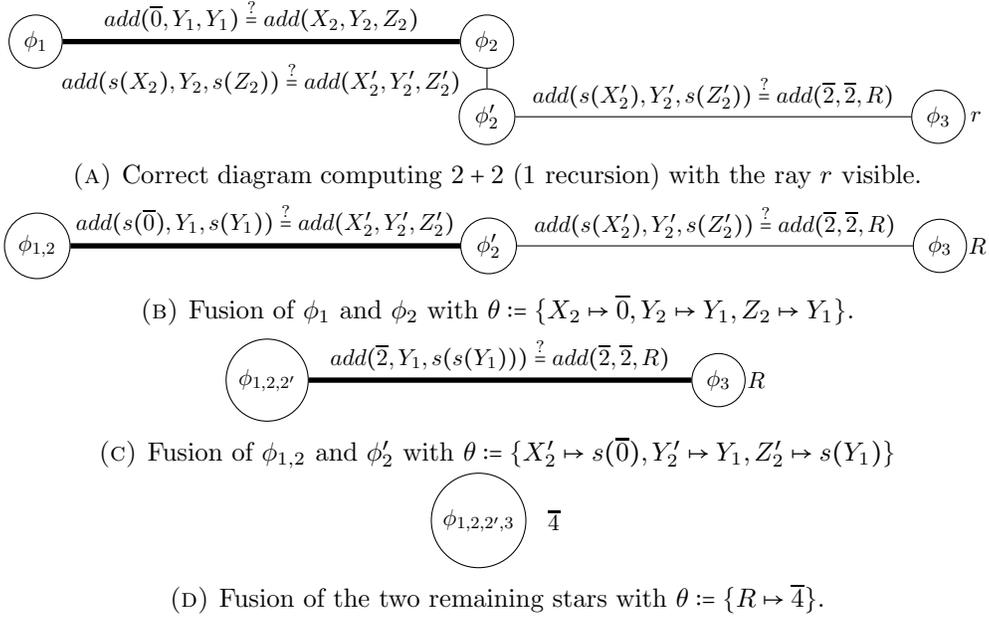
\begin{figure}
    \begin{subfigure}{0.9\textwidth}
        \centering
        \begin{tikzpicture}[every node/.style={scale=0.8}]
        	\node[circle,draw] (e1) at (0,0) {$\phi_1$};
        	\node[circle,draw] (e2) at (6,0) {$\phi_2$};
        	\node[circle,draw] (e2p) at (6,-1) {$\phi_2'$};
        	\node[circle,draw] (e3) at (12,-1) {$\phi_3$};
            \node (r) at (12.5,-1) {$r$};

        	\draw[line width=2pt] (e1) -- (e2) node[midway, above] {$add(\enat{0}, Y_1, Y_1) \eqq add(X_2, Y_2, Z_2)$};;
        	\draw (e2) -- (e2p) node[left=3mm,midway] {$add(s(X_2), Y_2, s(Z_2)) \eqq add(X_2', Y_2', Z_2')$};
        	\draw (e2p) -- (e3) node[midway, above] {$add(s(X_2'), Y_2', s(Z_2')) \eqq add(\enat{2}, \enat{2}, R)$};
        \end{tikzpicture}
        \subcaption{Correct diagram computing $2+2$ (1 recursion) with the ray $r$ visible.}
        \label{subfig:fusion0}
    \end{subfigure}

     \begin{subfigure}{0.9\textwidth}
        \centering
        \begin{tikzpicture}[every node/.style={scale=0.8}]
        	\node[circle,draw] (e1) at (0,0) {$\phi_{1,2}$};
        	\node[circle,draw] (e2p) at (6,0) {$\phi_2'$};
        	\node[circle,draw] (e3) at (12,0) {$\phi_3$};
            \node (r) at (12.5,0) {$R$};

        	\draw[line width=2pt] (e1) -- (e2p) node[midway, above] {$add(s(\enat{0}), Y_1, s(Y_1)) \eqq add(X_2', Y_2', Z_2')$};
        	\draw (e2p) -- (e3) node[midway, above] {$add(s(X_2'), Y_2', s(Z_2')) \eqq add(\enat{2}, \enat{2}, R)$};
        \end{tikzpicture}
        \subcaption{Fusion of $\phi_1$ and $\phi_2$ with $\theta := \substseq{X_2 \mapsto \enat{0}, Y_2 \mapsto Y_1, Z_2 \mapsto Y_1}$.}
        \label{subfig:fusion1}
    \end{subfigure}

     \begin{subfigure}{0.9\textwidth}
        \centering
        \begin{tikzpicture}[every node/.style={scale=0.8}]
        	\node[circle,draw] (e1) at (0,0) {$\phi_{1,2,2'}$};
        	\node[circle,draw] (e3) at (6,0) {$\phi_3$};
            \node (r) at (6.5,0) {$R$};

        	\draw[line width=2pt] (e1) -- (e3) node[midway, above] {$add(\enat{2}, Y_1, s(s(Y_1))) \eqq add(\enat{2}, \enat{2}, R)$};
        \end{tikzpicture}
        \subcaption{Fusion of $\phi_{1,2}$ and $\phi_2'$ with $\theta := \substseq{X_2' \mapsto s(\enat{0}), Y_2' \mapsto Y_1, Z_2' \mapsto s(Y_1)}$}
        \label{subfig:fusion2}
    \end{subfigure}

    \begin{subfigure}{0.9\textwidth}
        \centering
        \begin{tikzpicture}[every node/.style={scale=0.8}]
        	\node[circle,draw] (e1) at (0,0) {$\phi_{1,2,2',3}$};
            \node (r) at (1,0) {$\enat{4}$};
        \end{tikzpicture}
        \subcaption{Fusion of the two remaining stars with $\theta := \substseq{R \mapsto \enat{4}}$.}
        \label{subfig:fusion3}
    \end{subfigure}
    \caption{Fusion of the diagram from \Cref{subfig:fulldiagram}.}
    \label{fig:fusion}
\end{figure}

In Girard's original paper \cite[Section 2.3]{transyn1}, the evaluation of diagrams is defined as an edge contraction we call \emph{fusion}. An edge $e$ between two stars $\phi$ and $\phi'$ contain equations which are resolved and then the associated solution is propagated to both $\phi$ and $\phi'$. The two connected rays associated to $e$ are finally destructed in the process. It reminds of chemical interactions but also of how information is propagated and organised in a network. This process can fail in presence of errors during the execution of a unification algorithm.

We define this step-by-step procedure of fusion but also an alternative and equivalent notion of evaluation we call \emph{actualisation} which evaluates a diagram by solving the whole unification problem associated. It is similar to how small step evaluation differ to big step evaluation in the theory of programming languages \cite[Section 1.1]{amadio2016operational}.

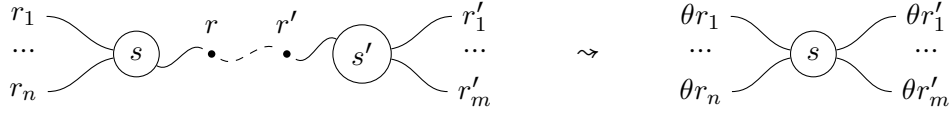
\begin{figure}
    \begin{tikzpicture}
        \node[circle,draw=black] at (0, 0) (c1) {$s$};
        \node[dot, label={90:$r$}] at (1, 0) (b2) {};
        \draw (c1) to[out=-30, in=150] (b2);

        \node[circle,draw=black] at (3, 0) (c2) {$s'$};
        \node[dot, label={90:$r'$}] at (2, 0) (a1) {};
        \draw (c2) to[in=-30, out=150] (a1);
        \draw[dashed] (b2) to[out=-30, in=150] (a1);
        
        \node at (-1.5, 0.5) (i1) {$r_1$};
        \node at (-1.5, 0) (i2) {$\hdots$};
        \node at (-1.5, -0.5) (i3) {$r_n$};
        \draw[-] (i1) to[in=170, out=0] (c1);
        \draw[-] (i3) to[in=190, out=0] (c1);
        \node at (4.5, 0.5) (o1) {$r_1'$};
        \node at (4.5, 0) (o2) {$\hdots$};
        \node at (4.5, -0.5) (o3) {$r_m'$};
        \draw[-] (c2) to[out=10, in=180] (o1);
        \draw[-] (c2) to[out=-10, in=180] (o3);
        
        \node at (6, 0) (arrow) {$\leadsto$};
        
        \node[circle,draw=black] at (9, 0) (c) {$s$};
        
        \node at (7.5, 0.5) (i1) {$\theta r_1$};
        \node at (7.5, 0) (i2) {$\hdots$};
        \node at (7.5, -0.5) (i3) {$\theta r_n$};
        \draw[-] (i1) to[in=170, out=0] (c);
        \draw[-] (i3) to[in=190, out=0] (c);
        \node at (10.5, 0.5) (o1) {$\theta r_1'$};
        \node at (10.5, 0) (o2) {$\hdots$};
        \node at (10.5, -0.5) (o3) {$\theta r_m'$};
        \draw[-] (c) to[out=10, in=180] (o1);
        \draw[-] (c) to[out=-10, in=180] (o3);
    \end{tikzpicture}
    \caption{Illustration of a step of fusion where $\theta$ is the principal unifier of the underlying unification problem of the pair of rays $(r, r')$. The fusion of the two stars $s$ and $s'$ along the rays $r$ and $r'$ produces a new star $s$.}
    \label{fig:fusionstep}
\end{figure}

\begin{defi}[Fusion, Figure~\ref{fig:fusionstep}]
Let $\delta : G_\delta \rightarrow \dgraph[A]{\Phi}$ be an $A$-diagram of a constellation $\Phi$. We define the \emph{fusion} of $\delta$ along a link $e = (v, v')$ in $E^{G_\delta}$ of label $(j, j')$ as a new diagram $\delta' : G_{\delta'} \rightarrow \dgraph[A]{\Phi + \phi}$ where $G_{\delta'}$ is $G_\delta$ such that:
\begin{enumerate}
	\item we compute $\theta := \solution{\{\undereq{e}\}}$;
	\item we define $\phi_1 := \Phi[\delta(v)]$ and $\phi_2 := \Phi[\delta(v')]$;
	\item we define $\phi_1'$ by $I_{\phi_1'} := I_{\phi_1'}\backslash\{j\}$ and $\phi_2'$ by $I_{\phi_2'} := I_{\phi_2'}\backslash\{j'\}$ and $\phi[i]$ behaves like $\phi'[i']$;
	\item we define a new star $\phi := \theta\phi_1' \uplus \theta\phi_2'$;
	\item $v$ and $v'$ merge and are replaced by $\phi$, \ie $\delta(v') = \delta(v) = \phi$ and some $x \in V^{G_\delta}$ is linked with $v$ if and only if it is connected to $v'$.
\end{enumerate}
We use the notation $G_\delta \leadsto_e G_{\delta'}$ for a step of this procedure resulting in $G_{\delta'}$, $G_\delta \leadsto^*_e G_{\delta'}$ for its reflexive transitive closure and $G_\delta \leadsto^n_e G_{\delta'}$ for the reachability of $G_{\delta'}$ from $G_\delta$ in $n \in \nat$ steps. We leave the reduced edge $e$ implicit when obvious or not important.
\end{defi}

\begin{defi}[Correct diagrams and their actualisation]
\label{def:actualisation}
An $A$-diagram $\delta$ of a constellation $\Phi$ is \emph{correct} if $\mathcal{P}(\delta)$ has a solution.

The \emph{actualisation} of a correct diagram $\delta$ is the star $\actu\delta$ defined by $I_{\actu\delta} := \freerays{\delta}$ such that $(\actu\delta)[(i,j)] = (\psi\circ\theta)(\Phi[i][j])$, where $\psi = \solution{\mathcal{P}(\delta)}$ and $\theta := \alpha_{v_1} \circ ... \circ \alpha_{v_n}$ with $V^{G_\delta} = \{v_1, ..., v_n\}$ is the composition of renamings of Definition~\ref{def:underproblem}.
\end{defi}

There are several ways to compute the solution of a unification problem. In this paper we use the Martelli-Montanari algorithm \cite{unifalgo}. We call \emph{partial execution} of a problem $P$ an arbitrary sequence of steps of the algorithm applied on $P$. Further details are given in \Cref{sec:unification}.

In the following proof, we treat $\freerays{\delta}$ as a star made of the free rays of $\delta$ for readability.

\begin{lem}[Equivalence of diagram reduction]
\label{lem:simfusion}
For all diagram $\delta$, there exists $\delta'$ such that $G_\delta \leadsto_e G_{\delta'}$ if and only if
there exists a partial execution from $\mathcal{P}(\delta)$ to $\mathcal{P}(\delta') \cup \{X_1 \eqq t_1, ..., X_k \eqq t_k\}$ with $\{X_1, ..., X_k\} \cap \bigcup_{j=1}^{k} \freevars{t_j} = \emptyset$ and $\freerays{\delta'} = \substseq{X_1 \mapsto t_1, ..., X_k \mapsto t_k}\freerays{\delta}$.
It means that a step of fusion corresponds to some steps of the unification algorithm (\cf\Cref{sec:unification}).
\end{lem}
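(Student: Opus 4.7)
The overall approach is to bridge the single fusion step and the partial unification execution through the Martelli-Montanari algorithm, which is the mechanism by which the unifier $\solution{\cdot}$ used in fusion is actually computed. Both directions rely on the fact that the local reductions of the algorithm can be performed on a single equation in isolation, without disturbing the others.

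For the ``only if'' direction, I will assume $G_\delta \leadsto_e G_{\delta'}$ with fusion-computed unifier $\theta := \solution{\{\undereq{e}\}}$. Correctness of Martelli-Montanari yields a partial execution transforming $\{\undereq{e}\}$ into a solved form $\{X_1 \eqq t_1, \ldots, X_k \eqq t_k\}$, where the disjointness condition on the $X_j$'s and the $t_j$'s is part of the definition of solved form. Running this sub-execution as part of the larger problem $\mathcal{P}(\delta)$ produces the required partial execution. It then suffices to verify that $\mathcal{P}(\delta) \setminus \{\undereq{e}\} = \mathcal{P}(\delta')$ and $\freerays{\delta'} = \substseq{X_1 \mapsto t_1, \ldots, X_k \mapsto t_k}\freerays{\delta}$; both follow from the fusion definition, because the merged star $\phi := \theta\phi_1' \uplus \theta\phi_2'$ has already absorbed $\theta$ into its rays, so the free rays of $\delta'$ are exactly $\theta$ applied to those of $\delta$.

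For the ``if'' direction, I start from a partial execution ending in $\mathcal{P}(\delta') \cup \{X_1 \eqq t_1, \ldots, X_k \eqq t_k\}$ satisfying the disjointness condition. The solved-form component encodes a substitution $\theta := \substseq{X_1 \mapsto t_1, \ldots, X_k \mapsto t_k}$; by tracing which canonical renamings $\alpha_v$ introduced the $X_i$, I recover a unique pair of adjacent stars and hence an edge $e$ whose underlying equation has been fully resolved during the execution. I then perform a fusion along $e$ using this $\theta$, and verify that the resulting diagram agrees with the given $\delta'$ through the same two identifications as in the forward direction.

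The main obstacle I anticipate is the careful bookkeeping of the canonical renamings $\alpha_v$ around the merged vertex of $\delta'$: the equations of $\mathcal{P}(\delta')$ adjacent to the merged vertex must still match the corresponding equations of $\mathcal{P}(\delta)$ without $\theta$ being propagated into them. This should work because the substitutions in $\theta$ only involve variables introduced by $\alpha_v$ and $\alpha_{v'}$, which are precisely the variables appearing in the rays of the merged star $\theta\phi_1' \uplus \theta\phi_2'$; hence $\theta$ is naturally confined to that star while the other equations remain unchanged.
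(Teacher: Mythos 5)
Your proposal is correct and follows essentially the same route as the paper's proof: both directions rest on the confluence of the Martelli--Montanari algorithm to isolate the equation $\undereq{e}$, reduce it to solved form, and identify the resulting substitution with the unifier $\theta$ applied by fusion, with the canonical renamings $\alpha_v, \alpha_{v'}$ confining $\theta$ to the merged star (the paper phrases this as the variables $X_1, \ldots, X_k$ being ``fixed''). The only cosmetic difference is that you recover the edge $e$ in the converse direction by tracing the renamings, where the paper simply invokes confluence to reorder the execution; both amount to the same bookkeeping.
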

\begin{proof}
We show the two implications.
\begin{itemize}
    \item $(\Rightarrow)$ Assume the fusion succeeds and produces a graph $G_{\delta'}$ by using the substitution $\theta := \substseq{X_1 \mapsto t_1, ..., X_k \mapsto t_k}$ corresponding to $\solution{\{\undereq{e}\}}$. We now consider the actualisation of $\delta$ which corresponds to solving the equation associated to $\delta$.
    By confluence of the unification algorithm (\cf \Cref{sec:unification}), we can focus on $e$ and isolate the result in order to obtain $\mathcal{P}(\delta') \cup \{X_1 \eqq t_1, ..., X_k \eqq t_k\}$ with $\{X_1 \eqq t_1, ..., X_k \eqq t_k\}$ in solved form (notice that $\mathcal{P}(\delta') = \mathcal{P}(\delta)\backslash\{e\}$).
    Then we have $\substseq{X_1 \mapsto t_1, ..., X_k \mapsto t_k} \mathcal{P}(\delta') \cup \{X_1 \eqq t_1, ..., X_k \eqq t_k\}$ by application of the unification algorithm (``replace" rule) on the equations $X_i \eqq t_i$ we previously isolated and ``stored".
    We obtain the equations corresponding to a new diagram $\delta''$ such that $\mathcal{P}(\delta'') = \substseq{X_1 \mapsto t_1, ..., X_k \mapsto t_k}\mathcal{P}(\delta')$.
    After application of $\solution{\{\undereq{e}\}}$ on $\mathcal{P}(\delta')$, the variables $X_1, ..., X_k$ are ``fixed", \ie they appear nowhere else, which prevents them to be altered during the execution of the algorithm and hence the substitutions of $\substseq{X_1 \mapsto t_1, ..., X_k \mapsto t_k}$ will appear in the last substitution applied on the free rays. It shows that we will have $\freerays{\delta''} = \substseq{X_1 \mapsto t_1, ..., X_k \mapsto t_k}\freerays{\delta}$ if we consider a notion of partial actualisation. This corresponds to $\delta'$, hence $\delta' = \delta''$.

    \item $(\Leftarrow)$ The previous point defines a correspondence between a step of fusion and some steps of the unification algorithm. By confluence of the unification algorithm, it is always possible to reorganise the order of resolution of equation so that the first step will correspond to a step of fusion, without any effect on the result.
\end{itemize}
\end{proof}

\begin{thm}[Equivalence between fusion and actualisation]
\label{thm:eqfusionactu}
For all diagram $\delta$, we have $G_\delta \leadsto^n G_{\actu \delta}$ for $n = |\mathcal{P}(\delta)|$.
\end{thm}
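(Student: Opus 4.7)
The plan is to induct on $n = |\mathcal{P}(\delta)|$, which by \Cref{def:underproblem} coincides with the number of edges of $G_\delta$, and to use \Cref{lem:simfusion} as the bridge between one fusion step and a block of Martelli--Montanari steps on $\mathcal{P}(\delta)$. Since a fusion along an edge $e$ removes exactly that edge from the source graph, each fusion step decreases $|\mathcal{P}|$ by one; after $n$ fusions the resulting multigraph has no edges, and because fusion only merges vertices from a connected multigraph, what remains is a single vertex whose label I must identify with $\actu\delta$.

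For the base case $n = 0$, connectedness forces $G_\delta$ to consist of a single vertex $v$, the underlying problem is empty, and $\actu\delta$ reduces to $\alpha_v \Phi[\delta(v)]$, so $G_\delta$ already coincides with $G_{\actu\delta}$ and the reflexive step $\leadsto^0$ suffices. For the inductive step, I pick any edge $e \in E^{G_\delta}$, perform a fusion $G_\delta \leadsto_e G_{\delta'}$ with $|\mathcal{P}(\delta')| = n - 1$, and apply the induction hypothesis to obtain $G_{\delta'} \leadsto^{n-1} G_{\actu\delta'}$. Concatenating gives $G_\delta \leadsto^n G_{\actu\delta'}$, so the theorem reduces to the invariance identity $\actu\delta = \actu\delta'$.

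The main obstacle is precisely this invariance of the actualisation under a single fusion. By \Cref{lem:simfusion}, the fusion step along $e$ corresponds to a partial execution of the unification algorithm producing $\mathcal{P}(\delta') \cup \{X_1 \eqq t_1, \ldots, X_k \eqq t_k\}$ in which the equations on the $X_i$ are in solved form and $\freerays{\delta'} = \substseq{X_1 \mapsto t_1, \ldots, X_k \mapsto t_k}\freerays{\delta}$. Confluence of Martelli--Montanari, already invoked in the proof of that lemma, guarantees that $\solution{\mathcal{P}(\delta)}$ can be recovered as the composition of $\substseq{X_i \mapsto t_i}$ with $\solution{\mathcal{P}(\delta')}$ up to renaming of the now-isolated $X_i$. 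Applying this composite substitution to $\freerays{\delta}$ yields $\solution{\mathcal{P}(\delta')}\freerays{\delta'}$, which is exactly $\actu\delta'$ by \Cref{def:actualisation}, and also $\actu\delta$ by direct expansion. The only technical care needed is that the canonical renamings $\alpha_v$ attached to merged vertices remain coherent with the new renamings after fusion; this is immediate because the $\alpha_v$ act on pairwise disjoint sets of fresh variables, so merging two vertices never creates a variable clash, and the isolated variables $X_i$ are shielded from subsequent substitutions.
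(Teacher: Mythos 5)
Your proof is correct and follows essentially the same route as the paper's: induction on the number of equations, with \Cref{lem:simfusion} and the confluence of the Martelli--Montanari algorithm bridging one fusion step and a block of unification steps. The only difference is presentational --- you isolate the key invariance $\actu\delta = \actu\delta'$ explicitly and derive it from the factorisation $\solution{\mathcal{P}(\delta)} = \solution{\mathcal{P}(\delta')} \circ \substseq{X_1 \mapsto t_1, \ldots, X_k \mapsto t_k}$, whereas the paper obtains the same fact by ``reorganising'' the full unification of $\mathcal{P}(\delta)$ via confluence; your version makes the step the paper glosses over more transparent.
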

\begin{proof}
By induction on $n$. Assume we have $0$ links, hence $G_\delta$ does not reduce and has no edges. The only connected graph with no edge is a single vertex. This indeed corresponds to $G_{\actu \delta}$, as expected.
For the inductive case, we show that there exists a diagram $\delta'$ such that $G_\delta \leadsto G_{\delta'} \leadsto^n G_{\actu\delta}$ knowing $G_{\delta'} \leadsto^n G_{\actu\delta}$ (by induction hypothesis). The simulation of fusion (\cf Lemma~\ref{lem:simfusion}) tells us that a step of fusion exactly corresponds to some steps of the unification algorithm.
Consider a full application of the unification algorithm on $\mathcal{P}(\delta)$. By the confluence of the algorithm (\cf \Cref{sec:unification}), we can reorganise the computation of $\actu\delta$ so that some steps correspond to $G_\delta \leadsto G_{\delta'}$ and the remaining ones to $G_{\delta'} \leadsto^n G_{\actu\delta}$. Hence, we necessarily have a step of fusion $G_\delta \leadsto G_{\delta'}$.
\end{proof}

\begin{figure}
    \begin{tikzpicture}
        % Constelation
        \node at (0, 1.75) (c) {$\phi_1+\hdots+\phi_n$};
        \node at (0, -0.5) (cl) {Constellation $\Phi$};
        
        % Dgraph
        \node at (3.5, 2.25) (s1) {$\phi_1$};
        \node at (3, 1.75) (s2) {$\phi_2$};
        \node at (4, 1.75) (s3) {$\phi_3$};
        \node at (3.5, 1.75) (dots) {$\hdots$};
        \node at (3.5, 1.25) (sn) {$\phi_n$};
        \node at (3.5, -0.5) (sl) {$\dgraph{\Phi}$};
        \path (s1) edge[loop above] (s1);
        \draw (s1) to (s2);
        \draw (s1) to (s3);
        \draw (s3) to (sn);

        % Diagram 1
        \node at (6.5, 3.5) (s1) {$\phi_1$};
        \node at (7, 3) (s3) {$\phi_3$};
        \node at (6.5, 3) (dots) {$\hdots$};
        \node at (6.5, 2.5) (sn) {$\phi_n$};
        \draw (s1) to (s3);
        \draw (s3) to (sn);

        \node at (6.5, -0.5) (diags) {$\csatdiags(\Phi)$};
        \node at (6.5, 1.75) (dots3) {$\vdots$};

        % Diagram 2
        \node at (6.5, 1) (t1) {$\phi_1$};
        \node at (6, 0.5) (t1pp) {$\phi_1$};
        \node at (7, 0.5) (t1p) {$\phi_1$};
        \node at (6.5, 0.5) (dots2) {$\hdots$};
        \node at (6.5, 0) (t2) {$\phi_2$};
        \draw (t1) to (t1p);
        \draw (t1) to (t1pp);
        \draw (t1pp) to (t2);

        % Exec
        \node at (10, 3) (d1) {$\psi_1$};
        \node at (10, 1.75) (dots4) {$\vdots$};
        \node at (10, 0.5) (d2) {$\psi_m$};
        \node at (10, -0.5) (ex) {$\exec(\Phi)$};

        % Arrows
        \node[label={[align=center]\footnotesize list\\\footnotesize dependencies}] at ($(cl)!0.5!(sl)+(0,2.5)$) (l1) {};
        \node at ($(cl)!0.5!(sl)+(0,2.25)$) (arr1) {$\leadsto$};
        \node[label={[align=center]\footnotesize list\\\footnotesize diagrams}] at ($(sl)!0.5!(diags)+(0,2.5)$) (l1) {};
        \node at ($(sl)!0.5!(diags)+(0,2.25)$) (arr2) {$\leadsto$};
        \node[label={[align=center]\footnotesize evaluate\\\footnotesize diagrams}] at ($(diags)!0.5!(ex)+(0,2.5)$) (l1) {};
        \node at ($(diags)!0.5!(ex)+(0,2.25)$) (arr3) {$\leadsto$};
    \end{tikzpicture}
    \caption{Illustration of the execution of a finite and strongly normalising constellation. Notice that diagrams can be thought of as sort of runs in the dependency graph seen as a generalised automaton.}
    \label{fig:exec}
\end{figure}
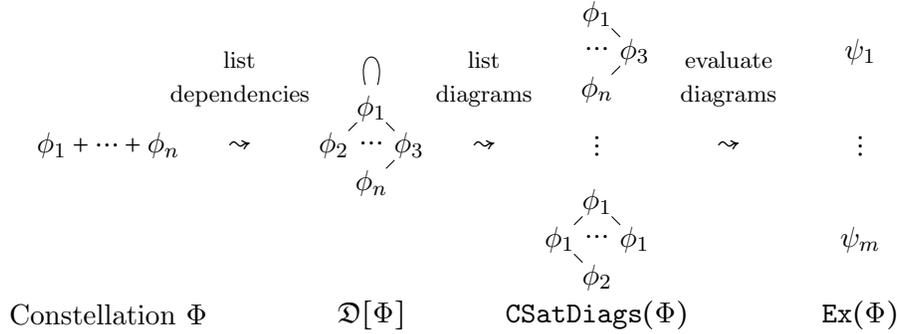

The \emph{execution} of a constellation $\Phi$ (\Cref{fig:exec}) consists in computing all the correct saturated diagrams of $\Phi$ and actualising them. In appearance, it is very similar to the resolution operator \cite[Chapter 3]{leitsch2012resolution} which is analogous to the consequence operator \cite[Section 2.2]{dantsin2001complexity} of logic programming. The difference is that we allow cyclic diagrams which makes our model closer to the construction of tilings in tile systems.

As we later show in \Cref{subsec:logicprograms}, allowing cyclic diagrams still preserves the interpretation of logic programs since cyclic diagrams are often wrong for logic programs: for the constellation $\Phi_\nat^{n+m}$ of Example~\ref{ex:constellation}, a loop can be constructed with
\[[-add(X, Y, Z), +add(s(X), Y, s(Z))],\] leading to the equation $X \eqq s(X)$ which has no solution.

\begin{nota}[Set of correct saturated diagrams]
We write $\satdiags_A(\Phi)$ for the \emph{set of all saturated diagrams} obtained from $\dgraph[A]{\Phi}$ for a constellation $\Phi$ and a set of colours $A \subseteq C$. We omit the set of colours and simply write $\satdiags(\Phi)$ when $A = C$.

We write $\csatdiags_A(\Phi)$ for the set of all diagrams in $\satdiags_A(\Phi)$ which are correct.
\end{nota}

\begin{defi}[Execution and normal form]
The \emph{execution} of a constellation $\Phi$ \wrt a set of colours $A \subseteq C$ is defined by $\exec_A(\Phi) := \actu\csatdiags_A(\Phi)$, where $\actu \csatdiags_A(\Phi) := \{\actu\delta \mid \delta \in \csatdiags_A(\Phi)\}$. We write $\exec(\Phi)$ when all colours in $\Phi$ participate in the execution.
\end{defi}

We discuss some design choices. Notice that the definition of diagram allows duplication of a same star, hence this apparently makes no difference whether or not a constellation is defined as a set or multiset. The purpose of defining constellations as multiset (actually indexed families) is to allow a quantitative analysis in the normal form. For instance, it would be possible to count how many times a given star appeared in the normal form.

Saturated diagrams of a constellation $\Phi$, although impossible to extend, may have free coloured rays which can be connected to the rays of another constellation $\Phi'$ when computing the interaction $\exec(\Phi \uplus \Phi')$. This is necessary in order to consider composition in logic and prove the associativity of execution (\cf Theorem~\ref{thm:assocexec}). However, this definition is different from Girard's original definition \cite[Section 2.3]{transyn1} which erases stars containing free coloured rays for technical reasons explicited in our interpretation of MLL (\cf \Cref{sec:proofs}). We instead split Girard's execution by defining more primitive operations which can be combined to our execution.

We define an operation of \emph{concealing} which violently mutes the constellation by removing stars containing polarised rays, thus forbidding any communication with other stars.

\begin{defi}[Concealing]
\label{def:concealing}
Let $\Phi$ be a constellation. The \emph{concealing} of $\Phi$ is the constellation $\lightning\Phi$ defined by $I_{\lightning\Phi} := \{i \in I_\Phi \mid \phi := \Phi[i], \forall j \in I_\phi, \phi[j] \text{ is uncoloured}\}$.
\end{defi}

We define an operation of \emph{noise filtering} of a constellation which removes the empty stars which are irrelevant since they cannot be connected. However, they still are valuable for quantitative analyses as we will set in the interpretation of MLL (\cf \Cref{sec:proofs}).

\begin{defi}[Noise filtering]
\label{def:filtering}
Let $\Phi$ be a constellation. The \emph{noise filtering} of $\Phi$ is the constellation $\flat\Phi := \{i \in I_\Phi \mid \Phi[i] \neq \emptystar\}$.
\end{defi}

% ==================================================
\section{Computational illustrations and properties of execution}\label{sec:illustrations}
% ==================================================

We illustrate how several common kinds of computation can be implemented in our model as certain classes of constellations. It shows that the stellar resolution is a very general and modular model of computation which expresses computation by transmission of data within a hypergraph structure. In particular, this generalises various classes of automata and Horn clauses used in logic programming and various tile systems. Although not explicitly shown in this paper, the stellar resolution should also represent a computational version of labelled transition systems which are commonly used in model checking \cite[Chapter 2]{baier2008principles}.

% --------------------------------------------------
\subsection{Logic programs}\label{subsec:logicprograms}
% --------------------------------------------------

A natural illustration of the computational power of the stellar resolution is the encoding of logic programs since the stellar resolution directly generalises Robinson's first-order resolution which corresponds to the core of logic programming.

First, it is possible to do programming with predicate calculus \cite{kowalski1974predicate}. It is then known that formulas of predicate calculus can be normalised so that formulas are represented only by conjunction of disjunctions (called clauses) with only universal quantifiers appearing as prefix \cite[Section 3.2]{hedman2004first}. Formulas are then of the shape $\forall x_1, ..., x_n. (A_1^1 \lor ... \lor A^1_n) \land ... \land (A^m_1 \lor ... \lor A^m_k)$ where every $A^x_y$ is an atomic formula. We use those normalised formulas of predicate calculus with at most one positive (without negation) atom in each clause. Such normalised formulas called \emph{Horn clauses} represent sequents $\Gamma \vdash A$ for a set of hypotheses $\Gamma$.

A \emph{fact} is a closed (variable-free) first-order formula. Several facts form a \emph{knowledge base}. We have \emph{rules} which can be used to infer new facts from the available ones and thus expend the knowledge base. Rules are often represented as implications $A_1, ..., A_n \vdash B$ called \emph{Horn clauses} \cite{horn1951sentences, tarnlund1977horn}. A \emph{query} asks if it is possible to infer a given fact from the knowledge base and is itself represented as a fact symbolising a goal. A \emph{logic program} is a multiset of rules and facts.

The translation is direct. We use the use the polarities to distinguish between hypothesis and conclusion (or input and output). The translation of a fact is defined by
\[P(t_1, ..., t_n)^\bigstar := [+P(t_1, ..., t_n)].\]
For a rule, the translation is defined by:
\[\big( \wedge_{i=1}^m P_1(t_1^i, ..., t_n^i) \vdash Q(u_1, ..., u_k)\big)^\bigstar := (\bigcup_{i=1}^m \{-P_1(t_1^i, ..., t_n^i)\}) \cup \{+Q(u_1, ..., u_k)]\}.\]
Finally, for a query, we have:
\[(?P(t_1, ..., t_n))^\bigstar := [-P(t_1, ..., t_n), r_1, ..., r_m]\]
with $\{r_1, ..., r_m\} = \bigcup_{i=1}^n \vars{t_i}$ which represent the information we would like to make visible in the output (see Example~\ref{ex:constellation} where $[-add(\enat{n}, \enat{m}, r), r]$ is the query).
A logic program $P := \bigcup_{i=1}^n \{C_i\}$ becomes $P^\bigstar := \bigcup_{i=1}^n \{C_i^\bigstar\}$.

The set of answers for a query $q$ on a program $P$ is defined by a set of substitutions $A_P^q = \{\theta_1, ..., \theta_k\}$ such that for all $\theta \in A_P^q$, we have $\theta q$ logically satisfied by $P$, written $P \vDash \theta q$.
The answers are usually computed by iteratively applying the resolution rule (\cf Definition~\ref{def:resolution} in \Cref{sec:unification}) between $q$ and all possible $C \in P$ until either no variables remain in $q$ or the resolution rule is no more applicable. We refer to definitions of the SLD-resolution itself derived from Kowalski's SL-resolution \cite{kowalski1974predicate, kowalski1971linear} for more details about the computation of answers.

\begin{thm}[Simulation of logic programs]
Let $P$ be a logic program with query $q$ and $P^\bigstar$ and $q^\bigstar$ be their translation. We have $\flat\exec(P^\bigstar + q^\bigstar) = \{\theta_1 \phi_1, ..., \theta_k \phi_k\}$ if and only if for all $\theta_i$, $P \vDash \theta_i q$.
\end{thm}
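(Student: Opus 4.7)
The plan is to establish a bijective correspondence between correct saturated diagrams involving the query star $q^\bigstar$ in $\dgraph{P^\bigstar+q^\bigstar}$ and SLD-derivations of $q$ from $P$, and then read off the identification of computed answer substitutions with the substitutions produced by actualisation. Concretely, I would first show that every edge in such a diagram corresponds to a resolution step: a link between a negative ray $-P(\vec t)$ of some star and a positive ray $+P(\vec u)$ of another carries the unification equation $P(\vec t)\eqq P(\vec u)$ that the classical SLD-resolution step would solve. By \Cref{thm:eqfusionactu} (equivalence between fusion and actualisation) and the confluence of the Martelli--Montanari algorithm recalled in \Cref{sec:unification}, actualising such a diagram amounts to composing the unifiers along the diagram's edges; by \Cref{lem:simfusion} this is, up to renaming of bound variables, the same substitution computed by iterating Kowalski's SL-resolution along a linear input derivation.

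For the direction $(\Leftarrow)$, I would invoke completeness of SLD-resolution: every $\theta$ with $P\vDash\theta q$ is (an instance of) the computed answer substitution of some successful SLD-derivation $q=q_0\leadsto q_1\leadsto\cdots\leadsto q_n=\square$. I would then \emph{construct} the diagram $\delta_\theta$ whose underlying graph is a tree rooted at $q^\bigstar$, where each node branches to the translations of the rules/facts used at the corresponding resolution step and edges are labelled by the pairs of rays that were resolved. Correctness of $\delta_\theta$ follows because the derivation itself gives a solution of $\mathcal P(\delta_\theta)$, and saturation is forced by the fact that every remaining negative ray would indicate an unresolved atom in some $q_i$, contradicting termination of the derivation at the empty clause. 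Actualisation then yields $\theta\,q^\bigstar$ by \Cref{thm:eqfusionactu}.

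For $(\Rightarrow)$, conversely, I would take a correct saturated diagram $\delta$ involving $q^\bigstar$, read off its sequence of fusions (in any order authorised by confluence of unification), and produce from it a successful SLD-derivation of $q$ from $P$ whose computed answer substitution is precisely the substitution extracted by actualisation on the free rays $r_1,\dots,r_m$ of $q^\bigstar$. Soundness of SLD-resolution then gives $P\vDash (\actu\delta)$, i.e.\ $P\vDash\theta q$. Diagrams that do \emph{not} involve $q^\bigstar$ either consist of a single fact $[+P(\vec a)]$ (which actualises to itself and is removed only by context, not by $\flat$) or are cyclic chains of rules; the latter almost always fail correctness, as illustrated by the equation $X\eqq s(X)$ arising from self-loops in $\Phi^{n+m}_\nat$. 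The role of $\flat$ is then the technical one of discarding the actualisations of diagrams reduced to the empty star (which may arise only from queries without free variables, where $\actu\delta=\emptystar$ signals success).

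The main obstacle is treating saturation faithfully with respect to the standard SLD stopping condition: saturated diagrams are maximal under $\sqsubseteq$, which in principle permits adding further copies of facts or rules that connect to already-resolved positive rays, whereas SLD-resolution halts as soon as the query clause becomes empty. The resolution is to notice that, once the query has no remaining negative ray, further extensions either have nowhere to plug (all positive rays in the diagram are already consumed), or duplicate an isolated fact sub-diagram that contributes nothing to the free rays tracking the query variables; in both cases $\actu\delta$ is unchanged up to the indexing managed by $\flat$. Making this bookkeeping precise, together with verifying the correspondence is insensitive to the non-deterministic choice of selection rule in SLD-resolution, is the delicate part.
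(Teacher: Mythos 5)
Your proposal takes essentially the same route as the paper: a correspondence between correct saturated diagrams and SLD-derivations, with actualisation computing the answer substitution (via Theorem~\ref{thm:eqfusionactu}) and cyclic or query-free diagrams dispatched either by failure of correctness (the $X \eqq f(X)$ case) or by the $\flat$ operator (the empty-star case). You are in fact more explicit than the paper's rather informal argument in two respects — invoking soundness and completeness of SLD-resolution separately for the two directions, and flagging the mismatch between diagram saturation under $\sqsubseteq$ and the SLD stopping condition — both of which the paper glosses over.
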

\begin{proof}
The proof relies on the fact that the execution reproduces the SLD-resolution \cite{kowalski1974predicate, kowalski1971linear}. The satisfiability of a query is linked to the idea of ``proof-search": it is satisfiable when proved by facts, themselves proved by other facts and so on until nothing is left unproven. SLD-resolution tries to satisfy a query by matching it with the available facts and rules. Stars can actually be seen as first-order disjunctive clauses. Looking for justifications of facts corresponds to the construction of diagrams and the fact of leaving nothing unproven corresponds to the saturation of diagrams. In particular, in presence of $k$ possible choices of rules, $k$ answers are computed independently in parallel and we obtain saturated diagrams $\delta_1, ..., \delta_k$. 

The fact of matching a query against available facts and rules can be seen as constructing a saturated and correct diagram. In the absence of error, we indeed obtain an instantiation of the variables $\vars{q}$ through a substitution $\theta_i$. This exactly coincides with the actualisation of correct diagrams. By Lemma~\ref{thm:eqfusionactu} this is equivalent to a full fusion. Having number of correct diagrams corresponds to the number of answers.

Remark that in the case of diagrams, only the free rays survive in the output, hence we have to add uncoloured rays corresponding to $\vars{q}$ to correctly simulate logic programs. We could also add rays $x \cdot X$ where $x$ is a constant representing $X$ in order to keep the name of variable in the output. We would finally obtain a normal form made of stars $\phi_i = \bigcup_{i=1}^k \{x_i \cdot r_i\}$ such that $\phi_i$ corresponds to some $\theta \in A_P^q$.

Additionally, we have to ensure that our relaxation to cyclic diagrams do not cause problems. In logic programming, we usually require that the rays of a star have exactly the same variables (all variables are bound). Because of this restriction, cycles in dependencies graphs of logic programs, when reduced to a loop on a single star, either involve equations of the shape $t \eqq t$ or equations of the type $X \eqq f(X)$. In the former case, if the associated rule is binary, \ie of the shape $A \vdash B$, we obtain the empty star $\emptystar$ which is irrelevant in the computation and removed by the operator $\flat$. If the rule is not binary, \eg of the shape $A_1, A_2, ..., A_n \vdash B$, the equation $t \eqq t$ associated to the loop is erased because it has no effect on the computation. In the latter case of the ill-behaving equation $X \eqq f(X)$, the whole diagram is incorrect and ignored in the output.
\end{proof}

An example of logic program computing unary addition and its evaluation for the case of $2+2$ is illustrated in \Cref{fig:diagrams} and \ref{fig:fusion}.

% --------------------------------------------------
\subsection{Non-deterministic Turing machines}\label{subsec:turing}
% --------------------------------------------------

Definitions of Turing machines are taken for Sipser's introduction to the theory of computation \cite[Section 3.1 in Second Edition]{sipser1996introduction}.

Links between tile systems and automata have already been studied \cite{mazurkiewicz1988basic, thomas1991logics} by considering recognisability on graphs, inducing sort of \emph{asynchronous automata}. The encoding of automata in the stellar resolution follows a similar idea: the construction of diagrams simulates a run on a given word. It is possible to encode directed graphs by translating edges $(e, e')$ with binary stars $[-g(e), +g(e')]$. It is then possible to encode an automata transitions by first encoding their state graph then extending the rays so that the fusion triggers a transmission of information (the remaining characters to be read). The final state will contain a dummy unpolarised ray $\lacc$ so that the existence of a visible output in the normal form will correspond to the acceptation a word.

In this section, we suggest an encoding of non-deterministic Turing machines. We use the fact that Turing machines can be represented with two stacks in order to represent the left and right part of a tape. A move of the head will be represented as a manipulation of stack (push or pop of a symbol).

\begin{defi}[Encoding of words]
If $w = c_1...c_n$ then $w^\bigstar = [+i(c_1 \cdot ... \cdot c_n \cdot \text{\textvisiblespace})]$ with the binary function symbol $\cdot$ which is considered right-associative, \ie $a \cdot b \cdot c = a \cdot (b \cdot c)$ and a constant $\text{\textvisiblespace}$ for the empty character.
\end{defi}

Different encodings of words are possible. For instance, in Aubert and Bagnol's works \cite[Definition 24]{aubert2014unification}\cite[Definition 10]{aubert2016unary}, the characters are encoded with objects called \emph{flows} (which can be seen as binary stars) forming a cyclic chain of $\alpha$-unifiable terms which interact with the encoding of an automaton. This defines a characterisation of logspace computation where the input is explored with pointers.

A non-deterministic Turing machine is a tuple $M = (Q, \Gamma, \delta, q_0, q_a, q_r)$ where $Q$ is the set of states, $\Gamma$ is the alphabet of the tape, $\delta : Q \times \Gamma_\text{\textvisiblespace} \rightarrow \mathcal{P}(Q \times \Gamma_\text{\textvisiblespace} \times \{\lc, \rc, \stopc\})$ is the transition function, $q_0$ is the initial state and finally, $q_a$ and $q_r \neq q_a$ are respectively the state of acceptation and rejection.
We write $\Gamma_\text{\textvisiblespace}$ for $\Gamma \cup \{\text{\textvisiblespace}\}$. 

A configuration is a triple $(l, q, r)$ where $q \in Q$ and $l, r$ are tapes. It represents the position of the head on the tape and the associated state. We say that a configuration $C$ leads to $C'$ when moving the head in $C$ accordingly to $\delta$ leads to $C'$. A word $w = c_1 ... c_n$ is accepted by $M$, written $M(w) = 1$, when there is a sequence of configurations $C_1, ..., C_n$ such that:
\begin{enumerate}
    \item $C_1 = (\text{\textvisiblespace}, q_0, w)$;
    \item $C_i$ leads to $C_{i+1}$;
    \item $C_n = (l, q_a, r)$ for some $l$ and $r$.
\end{enumerate}

If the last configuration has a state $q_r$ instead, we say that $M$ rejects $w$, which is written $M(w) = 0$. When $M$ loops infinitely on $w$, we write $M(w) = \infty$. We require that an NTM necessarily ends on $q_a$ or $q_r$ when it stops.

For the encoding, we use the facts that Turing machines can be represented with two stacks in order to represent the left and right part of a tape. A move of the head will be represented as a manipulation of stack.

We use terms $m(L, Q, X, R)$ where $L$ and $R$ are the left and right part of the tape relatively to the current position of the head. The variables $Q$ and $X$ respectively represent the current state and symbol read by the head. We implicitly consider the symbol $\bullet$ as left-associative (hence $a \bullet b \bullet c = (a \bullet b) \bullet c$) and $\circ$ right-associative (hence $a \circ b \circ c = a \circ (b \circ c)$) so that it looks like we are traversing a tape.

\begin{figure}
    \centering
    \begin{tikzpicture}
        \node[state] (qa) {$\mathtt{q_a}$};
        \node[state, initial above, right of=qa, xshift=20mm] (q0) {$\mathtt{q_0}$};
        \node[state, right of=q0, xshift=40mm] (q1) {$\mathtt{q_1}$};
        \node[state, right of=q1, xshift=30mm] (qr) {$\mathtt{q_r}$};
        \node[state, right of=q0, xshift=15mm, yshift=20mm] (q2) {$\mathtt{q_2}$};
        \node[state, right of=q0, xshift=15mm, yshift=-20mm] (q3) {$\mathtt{q_3}$};
        \draw[-latex]
        (q0) edge[loop below] node{$\$ \rightarrow \$, \rc$} (q0)
        (q2) edge[loop above] node[text width=2cm]{$a \rightarrow a, \rc$\\$\$ \rightarrow \$, \rc$} (q2)
        (q3) edge[loop below] node[text width=2cm]{$b \rightarrow b, \rc$\\$\$ \rightarrow \$, \rc$} (q3)
        (q1) edge[loop right] node[text width=2cm]{$a \rightarrow a, \lc$\\$b \rightarrow b, \lc$\\$\$ \rightarrow \$, \lc$} (q1)
        (q0) edge[above, midway, sloped] node{$a \rightarrow \$, \rc$} (q2)
        (q0) edge[below, midway, sloped] node{$b \rightarrow \$, \rc$} (q3)
        (q2) edge[above, midway, sloped] node{$b \rightarrow \$, \lc$} (q1)
        (q3) edge[below, midway, sloped] node{$a \rightarrow \$, \lc$} (q1)
        (q2) edge[above, bend left, midway, sloped] node{$\text{\textvisiblespace} \rightarrow \text{\textvisiblespace}, \stopc$} (qr)
        (q3) edge[below, bend right, midway, sloped] node{$\text{\textvisiblespace} \rightarrow \text{\textvisiblespace}, \stopc$} (qr)
        (q0) edge[above] node{$\text{\textvisiblespace} \rightarrow \text{\textvisiblespace}, \stopc$} (qa)
        (q1) edge[above] node{$\text{\textvisiblespace} \rightarrow \text{\textvisiblespace}, \rc$} (q0);
    \end{tikzpicture}
    $M^\bigstar = [-i(C \cdot W), +m(\text{\textvisiblespace}, q_0, C, W)]+ [-i(\text{\textvisiblespace}), +m(\text{\textvisiblespace}, q_0, \text{\textvisiblespace}, \text{\textvisiblespace})]+$ \\

    $[-m(L, q_0, \text{\textvisiblespace}, R), +m(L, q_a, \text{\textvisiblespace}, R)]+[-m(L, q_2, \text{\textvisiblespace}, R), +m(L, q_r, \text{\textvisiblespace}, R)]+$ \\

    $[-m(L, q_0, \$, C \circ R), +m(L \bullet \$, q_0, C, R)]+
    [-m(L, q_2, \$, C \circ R), +m(L \bullet \$, q_2, C, R)]+$ \\

    $[-m(L, q_0, a, C \circ R), +m(L \bullet \$, q_2, C, R)]+
    [-m(L, q_2, a, C \circ R), +m(L \bullet a, q_2, C, R)]+$ \\

    $[-m(L, q_0, b, C \circ R), +m(L \bullet \$, q_3, C, R)]+
    [-m(L \bullet C, q_2, b, R), +m(L, q_1, C, \$ \circ R)]+$ \\

    $[-m(L, q_1, \text{\textvisiblespace}, C \circ R), +m(L \bullet \text{\textvisiblespace}, q_0, C, R)]+
    [-m(L, q_3, \text{\textvisiblespace}, R), +m(L, q_r, \text{\textvisiblespace}, R)]+$ \\

    $[-m(L \bullet C, q_1, \$, R), +m(L, q_1, C, \$ \circ R)]
    +[-m(L, q_3, \$, C \circ R), +m(L \bullet \$, q_3, C, R)]+$ \\

    $[-m(L \bullet C, q_1, a, R), +m(L, q_1, C, a \circ R)]+
    [-m(L \bullet C, q_3, a, R), +m(L, q_1, C, \$ \circ R)]+$ \\

    $[-m(L \bullet C, q_1, b, R), +m(L, q_1, C, b \circ R)]+
    [-m(L, q_3, b, C \circ R), +m(L \bullet b, q_3, C, R)]+$ \\

    $[-m(L, q_a, X, R), \lacc]+[-m(L, q_r, X, R), \lrej]+$ \\
    $[-m(\text{\textvisiblespace}, Q, C, R), +m(\text{\textvisiblespace} \bullet \text{\textvisiblespace}, Q, C, R)] + [-m(L, Q, C, \text{\textvisiblespace}), +m(L, Q, C, \text{\textvisiblespace} \circ \text{\textvisiblespace})]$
    \caption{A Turing machine accepting words containing as many symbols $\mathtt{a}$ as symbols $\mathtt{b}$ where $a \rightarrow b, d$ from a state $q$ to $q'$ corresponds to a transition $\delta(q, a) = (q', b, d)$. When computing $\exec(M^\bigstar + a^\bigstar)$, we plug the input with the correct initial star and obtain $[+m(\text{\textvisiblespace}, \mathtt{q_0}, a, \text{\textvisiblespace})]$. No star can be connected, hence we have to connect to the right allocation star and obtain $[+m(\text{\textvisiblespace}, \mathtt{q_0}, a, \text{\textvisiblespace} \circ \text{\textvisiblespace})]$. We can use the star corresponding to $a \rightarrow \$, \rc$ and obtain $[+m(\text{\textvisiblespace} \bullet \$, \mathtt{q_2}, \text{\textvisiblespace}, \text{\textvisiblespace})]$. Since we read $\text{\textvisiblespace}$, we the use star corresponding to the transition $\text{\textvisiblespace} \rightarrow \text{\textvisiblespace}, \stopc$ and obtain $[+m(\text{\textvisiblespace} \bullet \$, \mathtt{q_r}, \text{\textvisiblespace}, \text{\textvisiblespace})]$. We can only use the star corresponding to $q_r$ and obtain $[\lrej]$. If we had a character $b$ next to $a$, we would reach $[\lacc]$. }
    \label{fig:automata}
\end{figure}

\begin{defi}[Encoding of non-deterministic Turing machines]
\label{def:encntm}
The encoding of an NTM $M = (Q, \Gamma, \delta, q_0, q_a, q_r)$ is defined by a constellation $M^\bigstar$ such that:
\begin{itemize}
    \item $q_0$ is translated into $[-i(C \cdot W), +m(\text{\textvisiblespace}, q_0, C, W)]+[-i(\text{\textvisiblespace}), +m(\text{\textvisiblespace}, q_0, \text{\textvisiblespace}, \text{\textvisiblespace})]$;
    \item $q_a$ is translated into $[-m(L, q_a, X, R), \lacc]$;
    \item $q_r$ is translated into $[-m(L, q_r, X, R), \lrej]$;
    \item for each $q \in Q$ and $c \in \Gamma_\text{\textvisiblespace}$ such that $(q', c', d) \in \delta(q, c)$:
    \begin{itemize}
        \item if $d = \lc$ (going left) then we have $[-m(L \bullet X, q, c, R), +m(L, q', X, c' \circ R)]$;
        \item if $d = \rc$ (going right) then we have $[-m(L, q, c, X \circ R), +m(L \bullet c', q', X, R)]$;
        \item if $d = \stopc$ (staying still) then we have $[-m(L, q, c, R), +m(L, q', c', R)]$;
    \end{itemize}
    \item we add two additional ``memory allocation stars'':
    \[[-m(\text{\textvisiblespace}, Q, C, R), +m(\text{\textvisiblespace} \bullet \text{\textvisiblespace}, Q, C, R)] + [-m(L, Q, C, \text{\textvisiblespace}), +m(L, Q, C, \text{\textvisiblespace} \circ \text{\textvisiblespace})].\]
\end{itemize}
\end{defi}

The two last stars are used to dynamically allocate space on the tape when necessary (similarly to $\mathtt{malloc()}$ in the C language). Instead of considering Turing machines as word acceptors, it is also possible to output the content of the tape and hence compute functions by translating $q_a$ into $[-m(L,q_a, X, R), \lacc(L, X, R)]$.

\begin{thm}[Simulation of non-deterministic Turing machines]
\label{thm:simturing}
Let $M$ be an NTM such that $q_a$ and $q_r$ have no outgoing transitions and $w$ a word. We have:
\begin{enumerate}
    \item $M(w) = 1$ if and only if $[\lacc] \in \flat\lightning\exec(M^\bigstar + w^\bigstar)$;
    \item $M(w) = 0$ if and only if $\big([\lacc] \not\in \flat\lightning\exec(M^\bigstar + w^\bigstar)$ and
    $\flat\lightning\exec(M^\bigstar + w^\bigstar) \neq \emptyset\big)$.
\end{enumerate}
\end{thm}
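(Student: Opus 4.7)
The plan is to show by induction on the number of transitions that correct saturated diagrams of $M^\bigstar + w^\bigstar$ stand in bijection with terminating runs of $M$ on $w$, so that the output of $\flat\lightning\exec$ is exactly the collection of acceptance and rejection markers produced by those runs.

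The first step is a simulation lemma: for any prefix of a run $(l_0, q_0, r_0) \to (l_1, q_1, r_1) \to \cdots \to (l_k, q_k, r_k)$ of $M$ on $w$, there exists a partial chain of fusions in $M^\bigstar + w^\bigstar$ starting from $w^\bigstar$ and yielding a star of the form $[+m(\hat l_k, q_k, c_k, \hat r_k)]$ encoding the configuration $(l_k, q_k, c_k \cdot r_k)$. The base case is handled by fusion of $w^\bigstar$ with the initial allocation star $[-i(C \cdot W), +m(\text{\textvisiblespace}, q_0, C, W)]$ (or its empty-word variant). The inductive step uses the fact that each transition $(q', c', d) \in \delta(q, c)$ is encoded by exactly one transition star; the shape of its negative ray forces the correct direction, with the memory-allocation stars firing whenever the encoded tape has to grow on the left or right. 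By \Cref{thm:eqfusionactu}, this step-by-step fusion is equivalent to a single actualisation, so what matters is the existence of a correct diagram of the appropriate shape. Conversely, every correct diagram of $M^\bigstar + w^\bigstar$ reads back to a valid run of $M$, since each transition star and allocation star corresponds to exactly one operation on configurations.

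From this lemma the two equivalences follow. For the forward direction of $(1)$, an accepting run of $M$ yields a saturated diagram whose final fusion is with $[-m(L, q_a, X, R), \lacc]$, actualising to the star $[\lacc]$; this star has no polarised ray and is non-empty, so it survives both $\lightning$ and $\flat$. For the converse, the only star in $M^\bigstar$ contributing an $\lacc$ ray is the $q_a$-absorbing one, so every $[\lacc] \in \flat\lightning\exec(M^\bigstar + w^\bigstar)$ arises from a saturated correct diagram which, by the simulation lemma, encodes an accepting run of $M$ on $w$. Part $(2)$ is handled symmetrically with $[\lrej]$ in place of $[\lacc]$; the non-emptiness clause $\flat\lightning\exec(M^\bigstar + w^\bigstar) \neq \emptyset$ is precisely what distinguishes $M(w) = 0$ (at least one rejecting run exists and produces $[\lrej]$) from $M(w) = \infty$ (no run terminates, so no saturated diagram exists and nothing survives in the normal form).

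The main technical obstacle is to verify that partial, non-terminating computations are correctly erased by $\flat\lightning$. For any intermediate configuration the actualised star carries a free positive ray of shape $+m(\cdot, q, \cdot, \cdot)$ with $q \notin \{q_a, q_r\}$, and the corresponding diagram is not saturated: it can always be extended by one more transition star. The hypothesis that $q_a$ and $q_r$ have no outgoing transitions is crucial here, as it ensures that once a terminal state is reached, the only star available for fusion with a $+m(\cdot, q_a, \cdot, \cdot)$ or $+m(\cdot, q_r, \cdot, \cdot)$ ray is the corresponding absorbing star, closing the computation with a single unpolarised $\lacc$ or $\lrej$ ray. One also has to exclude spurious saturated diagrams induced by the memory-allocation stars, which follows from the observation that each allocation strictly extends the tape encoding while leaving a $+m$-ray at the frontier, so the presence of these stars never blocks further transition-driven extensions.
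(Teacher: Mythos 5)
Your overall strategy coincides with the paper's: establish a correspondence between terminating runs of $M$ on $w$ and correct saturated (linear) diagrams of $M^\bigstar + w^\bigstar$, observe that accepting (resp.\ rejecting) runs actualise to $[\lacc]$ (resp.\ $[\lrej]$) and survive $\flat\lightning$, and argue that partial or infinite computations never yield a saturated diagram, with the hypothesis on $q_a, q_r$ ruling out spurious terminal stars. Your explicit induction on run prefixes and the appeal to \Cref{thm:eqfusionactu} is a slightly more structured packaging of the paper's case analysis on transition directions, and your treatment of the allocation stars matches the paper's.

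There is, however, one step that is false as you state it: ``every correct diagram of $M^\bigstar + w^\bigstar$ reads back to a valid run of $M$.'' The dependency graph $\dgraph{M^\bigstar + w^\bigstar}$ may contain cycles among the transition stars (e.g.\ a star of the form $[-m(L, q, c, R), +m(L, q, c, R)]$, or a cycle through several states), and since all polarised stars in the encoding are binary, such cycles give rise to \emph{closed} cyclic diagrams that are saturated, may well be correct, and do not correspond to any run --- they are not reachable from the input star $w^\bigstar$ at all. Your claimed bijection between correct saturated diagrams and terminating runs therefore does not hold literally, and your discussion of ``spurious saturated diagrams'' only covers the allocation stars, not these closed cycles. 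The theorem is unharmed because every such closed correct diagram actualises to the empty star $\emptystar$, which is precisely what the operator $\flat$ removes (and incorrect cyclic diagrams are discarded outright); the paper devotes the final paragraph of its proof to exactly this point. Your argument needs the same supplementary observation to justify that nothing other than $[\lacc]$ and $[\lrej]$ survives in $\flat\lightning\exec(M^\bigstar + w^\bigstar)$.
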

\begin{proof}
By design, $\dgraph{M^\bigstar + w^\bigstar}$ is isomorphic to the state graph of $M$ (that is, there is a link between two rays if and only if the two corresponding states are adjacent) and each run is isomorphic to some linear correct saturated diagram (because transitions correspond to binary stars).

A major difference with finite automata is the possibility of infinite computation. Such infinite computation made by constantly going from one state to the another (which can be the same one) corresponds to the existence of a diagram which cannot be saturated and hence does not appear in the normal form. Since $q_a$ and $q_r$ have no outgoing edges, it is impossible to have the non-deterministic choice of either stopping on a terminal state or continuing. If it was possible, we would obtain a misleading $[\lacc]$ or $[\lrej]$ in the output.   

We give arguments showing that the dynamics of Turing machine (the transition function) is correctly simulated. We show that the fusion of stars correctly simulates the composition of transitions. Assume we have a star $[+m(l, q, c, r)]$ representing a configuration of the Turing machine. We have three cases depending on the direction: 
\begin{itemize}
	\item if we are going left, we have a transition $[-m(L \bullet X, q, c, R), +m(L, q', X, c' \circ R)]$. The fusion is successful only when $l$ is of the shape $l' \bullet k$. In this case, by unification we have $l' = L$ and $X = c$ which identifies a next symbol on the left. The evaluation produces the star $[+m(L, q', k, c' \circ r)]$ which corresponds to writing $c'$ after reading $c$ and placing it on the right part of the tape to read the symbol $k$ on the left. This indeed corresponds to ``going on the left" in the tape;
	\item if we are going right, the reasoning is similar;
	\item if we stop the head, we have a transition $[-m(L, q, c, R), +m(L, q', c', R)]$ and the fusion only moves from a state $q$ (when reading a symbol $c$) to another state $q'$ (after writing the symbol $c'$).
\end{itemize}

We now check the limit cases when the machine is out of memory (not enough space on the tape to apply a transition). These cases happen because Turing machines have potentially infinite tapes but we only manipulate finite extensible tapes. When going on the left, it happens that the left part of the tape $l$ is not of the shape $l' \bullet k$. The typical case is when we have $l = \text{\textvisiblespace}$. In this case, we can arbitrarily use the left allocation star and freely obtain $\text{\textvisiblespace} \bullet \text{\textvisiblespace}$. Remark that it is impossible to allocate too much space because the allocation stars require that we have a tape equal to $\text{\textvisiblespace}$. Otherwise, we would have infinitely many diagrams for all the possible amount of space allocation and no encoding of Turing machine would be strongly normalising.

Using all the previous arguments, we check the statements (1) and (2) of the simulation theorem for Turing machines.
\begin{itemize}
    \item (1) Assume that $w \in \mathcal{L}(M)$ and there is a non-deterministic run reaching either $q_a$. By correspondence between runs and diagram for Turing machines, we must have a saturated and correct linear diagram reaching the ray $\lacc$. Since stars are binary and $q_a$ is terminal, this ray can only be reached once in a diagram. Hence, such a diagram actualises into $[\lacc]$. Therefore, we have $[\lacc] \in \flat\lightning\exec(M^\bigstar + w^\bigstar)$.
    The converse implication uses the same argument with the remark that uncomplete runs correspond to diagrams which are erased by the operator $\lightning$. Notice that we can reject sometimes but what matters is the existence of at least one non-deterministic run which reach $q_a$.
    \item (2) We show the two implications for the second statement.
    \begin{itemize}
        \item $(\Rightarrow)$ Assume that $w \not\in \mathcal{L}(M)$ and that $M$ terminates, meaning that $M$ rejects $w$. Similarly to the proof of statement (1), we reach $\lrej$ and never $\lacc$. We indeed have $[\lacc] \not\in \flat\lightning\exec(M^\bigstar + w^\bigstar)$ and
        $\flat\lightning\exec(M^\bigstar + w^\bigstar) \neq \emptyset$.
        \item $(\Leftarrow)$ Assume that $[\lacc] \not\in \flat\lightning\exec(M^\bigstar + w^\bigstar)$ and
        $\flat\lightning\exec(M^\bigstar + w^\bigstar) \neq \emptyset$. The only possibility is to have $n$ occurrences of $[\lrej]$ in $\flat\lightning\exec(M^\bigstar + w^\bigstar)$. By correspondence between runs and diagram, we have a run reaching a rejecting state, hence $w \not\in \mathcal{L}(M)$ and $M$ terminates on $w$.
    \end{itemize}
\end{itemize}

We have to check that cyclic diagrams cause no problems. Since Turing machines only use binary polarised stars, all cyclic diagrams must be closed. Consider such a cyclic diagram. If it is incorrect, then it is erased in the normal form. In case it is correct, it represents a trivially infinite loop during the execution of the machine such as $[-m(L, q, c, R), +m(L, q, c, R)]$. This diagram will reduce into the empty star $\emptystar$ which is erased by the operator $\flat$. Hence, it has no effect on the statement.
\end{proof}

Although we have shown an example of \emph{deterministic} Turing machine in \Cref{fig:automata}, it is easy to see how the non-deterministic case works. We can have several choices so that a same transition can match with several transitions. This will necessarily yield several diagrams corresponding to different runs. The whole machine accepts the input when at least one run accepts the word.

\begin{cor}[Halting problem]
The problem of determining if $\flat\lightning\exec(M^\bigstar + w^\bigstar) \neq \emptyset$ is undecidable.
\end{cor}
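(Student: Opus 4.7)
The plan is to reduce the halting problem for non-deterministic Turing machines to the emptiness problem for $\flat\lightning\exec(\cdot)$. Since halting is the standard undecidable problem, this will give us what we need.

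First, I would note that the map $(M,w) \mapsto M^\bigstar \uplus w^\bigstar$ is effectively computable: \Cref{def:encntm} produces a finite constellation as a straightforward syntactic translation of the transition table of $M$ together with the encoding of $w$. So any decision procedure for the emptiness of $\flat\lightning\exec(M^\bigstar + w^\bigstar)$ would immediately yield one for whatever property this emptiness characterises.

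Next, I would invoke the two clauses of the simulation theorem (Theorem~\ref{thm:simturing}) to establish the key equivalence $M$ halts on $w$ iff $\flat\lightning\exec(M^\bigstar + w^\bigstar) \neq \emptyset$. The forward direction is essentially immediate: if $M(w)=1$, clause (1) gives $[\lacc] \in \flat\lightning\exec(M^\bigstar + w^\bigstar)$, and if $M(w)=0$, clause (2) gives that the set is non-empty (it must contain $[\lrej]$ by the proof of (2)). For the converse, I would argue the contrapositive: if $M$ does not halt on $w$, then $M(w)=\infty$, and in particular $M(w) \notin \{0,1\}$. Applying the contrapositives of (1) and (2): since $M(w) \neq 1$ we have $[\lacc] \notin \flat\lightning\exec(M^\bigstar + w^\bigstar)$, and since $M(w) \neq 0$ we cannot have the set non-empty while simultaneously lacking $[\lacc]$, so the set must be empty.

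The potentially delicate point, and probably the one deserving the most care, is checking that a non-halting computation really does yield an empty normal form and not some spurious saturated diagram of, say, cyclic transition stars or unattached allocation stars. This is handled by the observations already used inside the proof of Theorem~\ref{thm:simturing}: cyclic correct diagrams built from binary transition stars actualise to $\emptystar$ and are thus killed by $\flat$; incorrect cyclic diagrams are discarded at the level of $\csatdiags$; and any diagram containing a free coloured ray is killed by $\lightning$. Since the terminal stars for $q_a$ and $q_r$ are the only source of uncoloured rays ($\lacc$ and $\lrej$) in the encoding, a non-halting run produces no $[\lacc]$ or $[\lrej]$ and hence leaves $\flat\lightning\exec(M^\bigstar + w^\bigstar)$ empty. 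Composing with the undecidability of the halting problem for non-deterministic Turing machines gives the corollary.
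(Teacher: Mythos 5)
Your proof is correct and follows essentially the same route as the paper: both reduce the halting problem to the emptiness of $\flat\lightning\exec(M^\bigstar + w^\bigstar)$ via the equivalence supplied by Theorem~\ref{thm:simturing}, with the same care taken that non-halting runs leave nothing visible after $\flat$ and $\lightning$ (unsaturable diagrams vanish, closed cyclic ones actualise to $\emptystar$). Your version merely spells out the two directions of the equivalence and the computability of the translation more explicitly than the paper does.
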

\begin{proof}
By the simulation of non-deterministic Turing machines, a Turing machine $M$ terminates if and only if $\flat\lightning\exec(M^\bigstar + w^\bigstar) \neq \emptyset$ since an infinite computation is either represented as the emptiness of the output (in case we only have diagrams impossible to saturate) or as the production of the empty star (\eg $[-m(L, q, c, R), +m(L, q, x, R)]$) which is considered as a trivial loop. This is a known undecidable problem \cite{davis1958computability}.
\end{proof}

% --------------------------------------------------
\subsection{Abstract tile assembly model}\label{subsec:atam}
% --------------------------------------------------

The abstract tile assembly \cite{winfree1998algorithmic, patitz2014introduction} (aTAM) is a tile system used in DNA computing \cite{seeman} which extends Wang tiles (\cf \Cref{subsec:tiles2prolog}). We present the idea without too much formality and refer to Lathrop et al. \cite{lathrop2009strict} for more details\footnote{However, we use a variant without seed assembly $\sigma$ because it is more natural in our case.}.

\begin{figure}
    \centering
    \scalebox{0.8}{
    \begin{tikzpicture}[every node/.style={scale=0.8}]
        \draw (0,0) -- (0,1.5) -- (1.5,1.5) -- (1.5,0) -- (0,0);
        \node[below] at (0.75,1.5) {$N^2$};
        \node[right] at (0,0.75) {$W^2$};

        \draw (-1.5,0) -- (-1.5,1.5) -- (0,1.5) -- (0,0) -- (-1.5,0);
        \node[below] at (-0.75,1.5) (c) {$O^1$};
        \node[left] at (0,0.75) {$W^2$};
        \node[right] at (-1.5,0.75) {$W^2$};

        \draw (-3,0) -- (-3,1.5) -- (0,1.5) -- (0,0) -- (-3,0);
        \node[below] at (-2.25,1.5) {$O^1$};
        \node[left] at (-1.5,0.75) {$W^2$};
        \node[right] at (-3,0.75) {$W^2$};

        \draw (0,1.5) -- (0,3) -- (1.5,3) -- (1.5,1.5) -- (0,1.5);
        \node[below] at (0.75,3) {$N^2$};
        \node[right] at (0,2.25) (d) {$O^1$};
        \node[above] at (0.75,1.5) {$N^2$};

        \node at (-3.5,0.75) {$\hdots$};
        \node at (0.75,3.5) {$\vdots$};

        \draw (-2,2) -- (-2,3.5) -- (-0.5,3.5) -- (-0.5,2) -- (-2,2);
        \node[below] at (-1.25,3.5) {$Z^1$};
        \node[left] at (-0.5,2.75) (a) {$O^1$};
        \node[right] at (-2,2.75) {$Z^1$};
        \node[above] at (-1.25,2) (b) {$O^1$};

        \draw[-latex] (a) -- (d);
        \draw[-latex] (b) -- (c);

        \node[scale=4, inner sep=2pt] at (-7,1) (o1) {$O^1$};
        \node[scale=1.5] at (-9,2.5) (g) {glue type};
        \node[scale=1.5] at (-5,2.5) (s) {strength};
        \draw[-latex] (g) -- (o1);
        \draw[-latex] (s) -- (o1);
    \end{tikzpicture}}
    \caption{Illustration of an assembly in an aTAM. Assume we are at temperature $\tau = 2$. We can connect a new tile to an assembly because the glue types match and the sum of strengths involved is $1+1 \geq \tau$.}
    \label{fig:atam}
\end{figure}
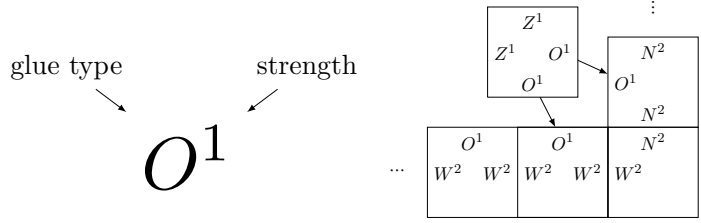

We define a \emph{tile type} by $t_i = (g^i_\mathtt{w}, g^i_\mathtt{e}, g^i_\mathtt{s}, g^i_\mathtt{n})$ for some $i$ in a finite set of indexes $I$ as objects intuitively corresponding to squares with each sides associated to a \emph{glue type} $\mathtt{gl}(g^i_d)$ for a direction $d \in \{\mathtt{w}, \mathtt{e}, \mathtt{s}, \mathtt{n}\}$ (for west, east, south and north) and a natural number $\str{g^i_d}$ called its \emph{strength}. The idea is that we have a global variable $\tau \in \nat$ called the temperature and that a tile can be connected to other ones if the sum of strength involved in the connexion is at least $\tau$. This phenomenon is known as \emph{cooperation}.

A tile assembly system (TAS) is a pair $\mathcal{T} = (T, \tau)$ where $T$ is a set of tile types and $\tau \in \nat$ is the temperature of $\mathcal{T}$.

Given a set of tile types $T$, a $T$-configuration is a partial function $\alpha : \rel^2 \rightarrow T$ pasting tiles to the plane. It is associated to a (connected) grid graph $G_\alpha$ with vertices $V^{G_\alpha} := \mathtt{dom}(\alpha)$ and there is an edge between two vertices representing tiles $t_i, t_j$ with $i \neq j$ when $\mathtt{gl}(g^i_d) = \mathtt{gl}(g^j_{d'})$ for $d = \op(d')$ where $\op$ is the involution defined by $\op(\mathtt{e}) = \mathtt{w}$ and $\op(\mathtt{n}) = \mathtt{s}$.

We say that $\alpha$ is $\tau$-stable if it is impossible to cut $E^{G_\alpha}$ into two parts such that it breaks bonds of total strength at least $\tau$. In other words, it means that a new tile can be added to a $T$-configuration only if the total strength value of its bonding is at least $\tau$.

A $T$-assembly for $\tau$ is a $T$-configuration which is $\tau$-stable. Given a TAS $\mathcal{T} = (T, \tau)$, we write $\mathcal{A}_\square[\mathcal{T}]$ for the set of all $T$-assemblies for $\tau$ which are connected and maximal (impossible to extend with more tiles from a given set of tile types). An example is given in \Cref{fig:atam}.

We suggest an encoding of the aTAM in $\nat^2$ instead of $\rel^2$ which is more natural but not less powerful since it is known that $\nat^2 \simeq \rel^2$ and also because we are able to compute any computable function. Tile types $t_i = (g^i_\mathtt{w}, g^i_\mathtt{e}, g^i_\mathtt{s}, g^i_\mathtt{n})$ are encoded by a star $t_i^\bigstar$:
\[[-\overset{\bullet}{h}(\mathtt{gl}(g^i_\mathtt{w})(X), X, Y),-\overset{\bullet}{v}(\mathtt{gl}(g^i_\mathtt{s})(Y), X, Y),\]
\[+\overset{\circ}{h}(\mathtt{gl}(g^i_\mathtt{e})(s(X)), s(X), Y),+\overset{\circ}{v}(\mathtt{gl}(g^i_\mathtt{n})(s(U)), X, s(Y))]\]
where $gl(g)(X) := g(X) \cdot \enat{str(g)}$ for $str(g) \in \nat$. The symbols $h$ (horizontal) and $v$ (vertical) represent axis of connexion. The key point of the encoding is that because of the dots $\bullet$ and $\circ$, the tiles cannot connect directly but has to use an intermediary star checking that the connexion is possible that we need to define.

The \emph{environment constellation} for a temperature $\tau \in \nat\backslash\{0\}$ is defined by $\Phi_{env}^\tau :=$
\[[+temp(\enat{\tau})]+\left[
\begin{array}{ll}
    +\overset{\bullet}{v}(g_1(X_1) \cdot N_1, X_1, Y_1), &
    -\overset{\circ}{v}(g_2(X_3) \cdot N_2, X_3, Y_3),
    \\
    +\overset{\bullet}{h}(g_3(X_5) \cdot N_3, X_5, Y_5), &
    -\overset{\circ}{h}(g_4(X_7) \cdot N_4, X_7, Y_7),
    \\
    -\overset{\circ}{v}(g_1(X_2) \cdot N_1, X_2, Y_2), &
    +\overset{\bullet}{v}(g_2(X_4) \cdot N_2, X_4, Y_4),
    \\
    -\overset{\circ}{h}(g_3(X_6) \cdot N_3, X_6, Y_6), &
    +\overset{\bullet}{h}(g_4(X_8) \cdot N_4, X_8, Y_8),
    \\
    -add(N_1, N_2, R_1), & -add(N_3, N_4, R_2),
    \\
    \multicolumn{2}{c}{-add(R_1, R_2, R), -geq(R, T, \enat{1}), -temp(T)}
\end{array}
\right]
\]
\[
+
[-\overset{\bullet}{v}(g(X) \cdot \enat{0}, X, Y)]
+
[+\overset{\circ}{v}(g(X) \cdot \enat{0}, X, Y)]
+
[-\overset{\bullet}{h}(g(X) \cdot \enat{0}, X, Y)]
+
[+\overset{\circ}{h}(g(X) \cdot \enat{0}, X, Y)]
\]
\[
+
[+\overset{\circ}{v}(g(X) \cdot \enat{0}, X, Y)]
+
[-\overset{\bullet}{v}(g(X) \cdot \enat{0}, X, Y)]
+
[+\overset{\circ}{h}(g(X) \cdot \enat{0}, X, Y)]
+
[-\overset{\bullet}{h}(g(X) \cdot \enat{0}, X, Y)]
\]
\[+[+geq(\enat{0},\enat{0},\enat{1})]+[+geq(s(X),s(Y),R), -geq(X, Y, R)]+[+geq(s(X), \enat{0}, \enat{0})]+\]
\[[+geq(\enat{0}, s(Y), \enat{0})]+[+add(\enat{0}, Y, Y)]+[-add(X, Y, Z), +add(s(X), Y, s(Z))]\]

We define the translation of a set of tile types $T$ as the constellation $T^\bigstar := \sum_{t_i \in T} t_i^\bigstar$.

\begin{thm}[Simulation of the aTAM]
Let $\mathcal{T} = (T, \tau)$ be a TAS. We have\[\csatdiags(T^\bigstar + \Phi^\tau_{env}) \simeq \mathcal{A}_\square[\mathcal{T}].\]
\end{thm}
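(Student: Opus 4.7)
The plan is to construct an explicit bijection $F : \mathcal{A}_\square[\mathcal{T}] \to \csatdiags(T^\bigstar + \Phi^\tau_{env})$ by translating each $\tau$-stable maximal assembly into a correct saturated diagram tile-by-tile. Given $\alpha \in \mathcal{A}_\square[\mathcal{T}]$, for each occupied position $(x,y)$ with $\alpha(x,y) = t_i$, I would introduce a vertex labelled by a fresh copy of $t_i^\bigstar$ whose variables $X$ and $Y$ will be forced to $\enat{x}$ and $\enat{y}$ through unification. For each pair of horizontally or vertically adjacent occupied positions I insert one copy of the environment's mediating checker star of $\Phi^\tau_{env}$ between the two corresponding tile vertices, since the $\bullet/\circ$ marks prevent tile stars from being connected directly; for each side of an occupied position whose orthogonal neighbour is unoccupied I instead cap the corresponding ray with the appropriate $\enat{0}$-strength boundary star. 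Finally I include the stars $[+temp(\enat{\tau})]$ together with the Horn-clause implementations of $add$ and $geq$, and link them to every instance of the checker to supply the cooperation computation. Because $G_\alpha$ is connected and tile coordinates fix vertex renamings, $F$ is well-defined on $\alphaeq$-classes.

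The proof then has four main verifications. First, $F(\alpha)$ is a diagram of $T^\bigstar + \Phi^\tau_{env}$: every inserted edge is between dual rays, yielding a graph homomorphism into $\dgraph{T^\bigstar + \Phi^\tau_{env}}$. Second, $F(\alpha)$ is correct: a direct inspection of the equations collected around one tile shows that $\mathcal{P}(F(\alpha))$ is satisfiable if and only if the glue types of adjacent tiles agree (equations on the $\mathtt{gl}$ symbols inside the checker) and the four side-strengths sum to some $R$ validated by $geq(R, \enat{\tau}, \enat{1})$, i.e.\ $R \geq \tau$. Third, $F(\alpha)$ is saturated: any strict extension would amount to placing an additional tile or checker-chain at a position $(x',y') \in \nat^2$ compatible with $\alpha$, contradicting maximality of $\alpha$. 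Fourth, I define the inverse map: given $\delta \in \csatdiags(T^\bigstar + \Phi^\tau_{env})$, solving $\mathcal{P}(\delta)$ instantiates the $X,Y$ variables of each tile-vertex to concrete $\enat{x}, \enat{y}$, yielding a partial function $\alpha_\delta : \nat^2 \rightharpoonup T$ that I verify is a maximal $\tau$-stable assembly, and show the two constructions invert each other.

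The main obstacle is bridging the local per-position cooperation check performed by each checker star with the usual global formulation of $\tau$-stability as the absence of a low-weight cut through the assembly. The reduction relies on the standard fact that for a connected assembly, $\tau$-stability is equivalent to requiring that every tile be bound to its already-placed neighbours with total incident strength at least $\tau$; iterating this over the accretive construction of $\alpha$ yields equivalence with the conjunction of all local checks produced by $F(\alpha)$. A secondary subtlety is to rule out pathological diagrams: one must show that using a $\enat{0}$-strength boundary star on a side whose neighbour is in fact occupied, or placing two tile-vertices at the same coordinate $(\enat{x},\enat{y})$, either produces an incorrect diagram (unification failure on the $\mathtt{gl}$ or $X,Y$ arguments) or a non-saturated one, thereby preserving the injectivity of $F$ and hence the claimed bijection.
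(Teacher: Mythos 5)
Your overall strategy is the same as the paper's: a structural bijection sending tiles to stars and sides to rays, with the environment constellation mediating all connexions (since the $\bullet$/$\circ$ marks forbid direct tile--tile links) and the $\enat{0}$-strength stars capping unbonded sides; you are merely much more explicit about constructing $F$ and its inverse than the paper, which argues the isomorphism informally. However, there is a concrete error in how you instantiate the checker. You insert one copy of the large environment star per \emph{adjacent pair} of occupied positions. But that star carries four ray-pairs (two vertical, two horizontal), each pair sharing one glue symbol $g_i$ and one strength variable $N_i$, and its arithmetic tail computes $N_1+N_2+N_3+N_4=R$ and tests $geq(R,T,\enat{1})$ against the temperature: it is designed to be instantiated once per \emph{tile}, gathering the strengths of all four of that tile's incident bonds, with the filler stars contributing $\enat{0}$ on free sides. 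Used once per adjacency, six of its eight side-rays would be capped by fillers and the test would degenerate to requiring each \emph{individual} bond to have strength $\geq\tau$, which destroys cooperation --- and contradicts your own later verification step, where you correctly describe ``the four side-strengths'' of a single tile summing to $R$. The construction of $F$ needs to be repaired so that each tile vertex is attached to exactly one checker vertex, whose remaining side-rays reach the facing sides of the (up to four) neighbours or the fillers.

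A second, smaller gap: the ``standard fact'' you invoke --- that for a connected assembly, cut-based $\tau$-stability is equivalent to every tile having total incident bond strength at least $\tau$ --- is false in general. Two internally rigid blocks joined by a single strength-$1$ bond at $\tau=2$ satisfy the local per-tile condition at every tile yet admit a cut of strength $1$. The encoding only ever performs the local check, so what the diagrams actually characterise is the set of maximal configurations satisfying the per-tile condition. The paper makes the same identification silently (its prose re-reads $\tau$-stability as ``a new tile can be added only if its bonding strength is at least $\tau$''), so your argument is faithful to the paper's intent, but the bridge cannot be justified by the equivalence as you state it; you would either have to adopt the local condition as the definition of stability, as the paper effectively does, or strengthen the encoding.
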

\begin{proof}
It is sufficient to show that the computation of $\csatdiags(T^\bigstar + \Phi^\tau_{env})$ behaves like the construction of tilings in aTAM.

Direct connexions between tiles without using $\Phi_{env}^\tau$ is forbidden because of the symbols $\circ$ and $\bullet$. Notice that the colours $v$ and $h$ force the connexions to be on the same axis in order to follow the geometric restriction of tiling in a plane. The tiles are designed so that a plugging increment a coordinate $x$ or $y$ depending on the position/axis of the side. The purpose of this feature is to simulate a shifting of tile on a plane so that two tiles cannot connect on two sides at the same time.

Because of the symbols $\circ$ and $\bullet$, we have to use the constellation $\Phi_{env}^\tau$ as an intermediate for the connexion of two tile sides. We consider a tile $t_i \in \mathtt{dom}(\alpha)$. We start with $t_i^\bigstar$. Assume $t_i$ can be connected to $k$ other tiles in $\mathtt{dom}(\alpha)$. They can only be connected through $\Phi_{env}^\tau$ by their connectable sides. Their glue type and strength for the connected sides have to match because of the shared variables for opposite sides in $\Phi_{env}^\tau$.
All other unused sides of the connector star will be plugged by the unary stars used as fillers. By using principles of logic programming, the diagram can only be correct and saturated if the sum of connected sides of $t_i$ is greater or equal to $\tau$ (note that the filled unused sides add $0$ to the sum). The stars sing symbols $add$ and $geq$ are common logic programs, hence their correctness is assumed.

Since all $t_i \in \mathtt{dom}(\alpha)$ satisfy the above property, the two operations have the same dynamics. Moreover, each tile corresponds exactly to a star and each of its sides corresponds to a ray and we have a structural isomorphism between tiles and their translation. It follows that we have a bijection between the set of non-empty finite assemblies constructible from $T$ at temperature $\tau$ and $\csatdiags(T^\bigstar + \Phi^\tau_{env})$.
\end{proof}

% --------------------------------------------------
\subsection{Properties of constellations and their execution}\label{subsec:propexec}
% --------------------------------------------------

In this section, few results of the execution are detailed. Firstly, our model is Turing-complete, which is not too surprising since it is very close to logic programming which is itself known to be Turing-complete (especially through Horn clauses \cite{horn1951sentences, tarnlund1977horn}) but also able to simulate the aTAM which is also Turing-complete \cite[Section 3.2.5]{winfree1998algorithmic}\cite[Section 2]{woods2015intrinsic}.

Borrowing terminology from rewriting and the $\lambda$-calculus, we define the strong normalisation which corresponds to termination of the execution and the confluence asserting that it is possible to focus on a specific set of colours during the execution with no impact on the result, \ie order is irrelevant.

\begin{prop}[Turing-completeness]
\label{prop:turing}
The stellar resolution is Turing-complete.
\end{prop}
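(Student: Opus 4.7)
The plan is to read off Turing-completeness directly from the simulation results already proved in \Cref{subsec:turing}. By \Cref{thm:simturing}, for every non-deterministic Turing machine $M$ (and in particular every deterministic one, which is the special case where $\delta$ is single-valued) and every input word $w$, we can construct a constellation $M^\bigstar + w^\bigstar$ whose execution faithfully encodes the acceptance/rejection behaviour of $M$ on $w$: the presence of $[\lacc]$ in $\flat\lightning\exec(M^\bigstar + w^\bigstar)$ is equivalent to $M(w)=1$, and symmetrically for rejection. Since the class of functions/languages decided by Turing machines is exactly the recursively enumerable (or recursive) class, this simulation shows that every Turing-computable behaviour is realised by some constellation.

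Concretely, I would carry out the argument in three short steps. First, fix any Turing machine $M$ computing a given partial recursive function (or accepting a given r.e.\ language), and form $M^\bigstar$ as in \Cref{def:encntm}. Second, observe that the encoding $w \mapsto w^\bigstar$ of input words is effective, so inputs can be supplied uniformly. Third, invoke \Cref{thm:simturing} to transport the computational behaviour of $M$ to the execution of $M^\bigstar + w^\bigstar$, noting that non-termination of $M$ on $w$ corresponds precisely to $\flat\lightning\exec(M^\bigstar + w^\bigstar)$ being empty (either because no diagram can be saturated, or because all saturated diagrams reduce to the empty star and are filtered out by $\flat$).

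There is essentially no obstacle, since the heavy lifting was done in \Cref{thm:simturing}; the only point worth stating explicitly is that the use of the operators $\flat$ and $\lightning$ to extract the observable output does not weaken the simulation, because the $\lacc$/$\lrej$ rays introduced in \Cref{def:encntm} survive both operations. As alternative (or corroborating) routes, one could instead invoke the simulation of logic programs from \Cref{subsec:logicprograms}, since Horn clause programs are already Turing-complete, or the simulation of the aTAM from \Cref{subsec:atam}; each gives the same conclusion. I would mention the NTM route as the main proof and cite the other two as independent confirmations.
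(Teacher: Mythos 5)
Your proposal is correct and follows essentially the same route as the paper, whose proof is simply ``consequence of Theorem~\ref{thm:simturing}'' (plus a side remark that the model is in fact strictly stronger, being able to produce infinite normal forms akin to non-uniform circuit families). Your elaboration of the three steps and the note about $\flat$ and $\lightning$ preserving the $\lacc$/$\lrej$ observables is a faithful unpacking of that one-line argument, and the alternative routes via Horn clauses or the aTAM are exactly the corroborations the paper itself alludes to.
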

\begin{proof}
Consequence of Theorem~\ref{thm:simturing}. Although we can encode Turing machines, the stellar resolution is actually ``stronger'' but for wrong reasons: the ability to compute infinite normal forms. In particular, it is possible to construct infinite non-uniform families of boolean circuits which are known to be theoretically able to decide any language but without concrete implementation of how such families work (for that reason, we usually require families to be \emph{uniform}, \ie that they can be generated by a Turing machine). This is not a problem since we are usually interested in finite constellations and finite normal forms. 
\end{proof}

\begin{defi}[Strong normalisation]
\label{def:strongnormal}
A constellation $\Phi$ is \emph{strongly normalising} \wrt a set of colours $A \subseteq C$ if and only if $\exec_A(\Phi)$ is a finite constellation (or equivalently that $\csatdiags_A(\Phi)$ is finite). We write $|\exec_A(\Phi)| < \infty$ (or $|\csatdiags_A(\Phi)| < \infty$) in this case. When $A = C$, we simply say that $\Phi$ is strongly normalising and omit to write $A$.
\end{defi}

The shape of $\dgraph[A]{\Phi}$ for a constellation $\Phi$ contains a lot of information about $\exec_A(\Phi)$. By observing the shape of constellations in \Cref{sec:illustrations}, we observe that only cycles make iteration possible and that several rays $\alpha$-unifiable with the same single ray are linked to a non-determinism creating diagrams in parallel. However, duplication of stars can still occur without cycle nor non-determinism. Since the relationship between the structure of $\dgraph[A]{\Phi}$ and the computational behaviour of $\exec_A(\Phi)$ is a bit complex, we suggest few structural classes of constellations and establish theorems which will be useful to reason with constellations.

\begin{defi}[Properties of constellation]
\label{def:propconst}
A constellation $\Phi$ is:
\begin{itemize}
    \item \emph{exact} if all equations induced by the edges of $\dgraph{\Phi}$ are of the shape $t \eqq t$;
    \item \emph{acyclic} when $\dgraph{\Phi}$ is acyclic and otherwise it is \emph{cyclic};
    \item \emph{connected} when $\dgraph{\Phi}$ is connected;
    \item \emph{ambivalent} if the ray linking graph $\mathtt{RLG}{\Phi}$ (\cf Definition~\ref{def:diagram}) has \emph{ambivalent links} which are links between several vertices $v_1, ..., v_n$ (with $n>1$) and a same vertex $v$. Otherwise, it is \emph{monovalent}. In case it is ambivalent it is called:
    \begin{itemize}
        \item \emph{replicating} if we can only construct saturated diagram containing $v_1, ..., v_n$, meaning that any duplications of stars remains in the same diagram;
        \item \emph{branching} or \emph{non-deterministic} otherwise, meaning that duplications occur in ``parallel universes'';
    \end{itemize}
    \item \emph{deterministic} when it is either monovalent or replicating.
\end{itemize}
All the definitions can be naturally parametrised with a set of colours $A \subseteq C$.
\end{defi}

\begin{exas}
We illustrate the properties defined above.
\begin{itemize}
    \item The constellation $\Phi_\nat^{n+m}$ of Example~\ref{ex:constellation} is connected, cyclic and non-deterministic. The middle star handles recursion but the construction of diagrams can either continue or exit the loop.
    \item $[+a(X), +a(X)]+[-a(X), -a(X), X]$ is exact, connected, cyclic and ambivalent.
    \item $[X, -c(X)]+[+c(f(Y))]+[+c(g(Y))]$ is acyclic, connected, and non-deterministic. The ray $-c(X)$ has two independent choices and leads to the formation of two diagrams.
    \item $[+a(\lc), +a(\rc)]+[+b(\lc), +b(\rc)]+[-a(X), -b(X)]$ is connected, cyclic and replicating. Two choices are possible for the negative rays but all the stars can appear in the same diagram by duplicating $[-a(X), -b(X)]$ and connecting the $\lc$ (\resp $\rc$) together.
    \item The disjoint union of two above constellations gives a disconnected constellation.
\end{itemize}
\end{exas}

\begin{lem}[Termination of acyclic constellations]
\label{lem:acyclicity}
If a constellation $\Phi$ is acyclic \wrt $A \subseteq C$ then $|\csatdiags_A(\Phi)| < \infty$.
\end{lem}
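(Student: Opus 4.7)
The plan is to show that whenever $\dgraph[A]{\Phi}$ is acyclic (a forest), every saturated diagram $\delta$ must embed injectively into $\dgraph[A]{\Phi}$ and in fact cover exactly one of its connected components, so that $\csatdiags_A(\Phi)$ injects into the set of components of $\dgraph[A]{\Phi}$; finiteness then follows since (as is tacit throughout the applications) $\Phi$ is finite.

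First I would establish that $\delta$ is injective on edges. This uses only the simplicity of the ray linking graph, not acyclicity. Because $\idrays{\delta}$ identifies all vertices of $G_\delta$ sharing the same $\delta$-image into a single pair $(\phi,j)$, two distinct edges of $G_\delta$ both mapping to an edge $(\phi_1,\phi_2)$ of $\dgraph[A]{\Phi}$ of label $(j,j')$ would each contribute an edge between the very same pair of ray-vertices $(\phi_1,j)$ and $(\phi_2,j')$, creating a multi-edge forbidden by simplicity.

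The main step, and the one genuinely requiring acyclicity, is vertex-injectivity. Suppose for contradiction that $v \neq v'$ in $G_\delta$ both satisfy $\delta(v)=\delta(v')=\phi$. By connectedness of $G_\delta$, pick a simple path $v = u_0, u_1, \dots, u_k = v'$; its consecutive edges are pairwise distinct, and by the edge-injectivity just obtained their $\delta$-images are pairwise distinct edges of $\dgraph[A]{\Phi}$ that together form a closed walk from $\phi$ to $\phi$ of length $k \geq 1$. But a closed walk with pairwise distinct edges is a cycle, contradicting acyclicity. Hence $\delta$ is injective on vertices too, so $G_\delta$ is isomorphic via $\delta$ to a connected subgraph of some tree component $C$ of the forest $\dgraph[A]{\Phi}$.

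It remains to invoke saturation to upgrade this embedding to the equality $\delta(G_\delta) = C$: if $\delta(G_\delta)$ were a proper subgraph of $C$, connectedness of $C$ would supply a frontier vertex $\phi'' \in C$ adjacent in $\dgraph[A]{\Phi}$ to some star already in $\delta(G_\delta)$; adjoining to $G_\delta$ a fresh vertex with image $\phi''$ together with the corresponding edge yields a strictly larger diagram (the new ray linking graph edge was not yet present, so simplicity is preserved), contradicting maximality of $\delta$ with respect to $\sqsubseteq$. Each saturated diagram is therefore determined, up to isomorphism, by a connected component of $\dgraph[A]{\Phi}$, of which there are finitely many, and finiteness of $\csatdiags_A(\Phi)$ follows. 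The main obstacle is the vertex-injectivity step: without acyclicity a closed walk with pairwise distinct edges is perfectly allowable, so saturated diagrams could fold $\dgraph[A]{\Phi}$ nontrivially and the structural rigidity exploited here would break down.
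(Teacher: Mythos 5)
Your proposal is correct and follows essentially the same route as the paper: acyclicity forces every diagram to be injective on vertices, after which finiteness is a counting argument over the (finite) dependency graph. You additionally supply the justification the paper merely asserts --- edge-injectivity from simplicity of the ray linking graph, then vertex-injectivity via the closed-walk argument --- and you rightly flag that finiteness of $\Phi$ is tacitly assumed, so the extra step identifying saturated diagrams with connected components, while a nice sharpening, is not needed for the stated conclusion.
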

\begin{proof}
Assume $\dgraph[A]{\Phi}$ is acyclic and consider a diagram $\delta : D_\delta \rightarrow \dgraph[A]{\Phi}$. It must be injective on the vertices, \ie for $v, v' \in D_\delta$ if $v \neq v'$ then $\delta(v) \neq \delta(v')$, meaning that $v$ and $v'$ do not correspond to dupliations of some star in $\dgraph[A]{\Phi}$.
Hence, the vertices of $V^{D_\delta}$ are uniquely taken from $V^{\dgraph[A]{\Phi}}$ and since stars have finitely many rays which must be uniquely connected, there are finitely many edges. There are only finitely many graphs we can construct with finitely many vertices and edges and in particular $\csatdiags_A(\Phi)$ is finite.
\end{proof}

\begin{lem}[Uniqueness]
\label{lem:uniqueness}
Let $\Phi$ be a constellation and $A \subseteq C$ a set of colours. If $\Phi$ is acyclic, connected and deterministic constellations \wrt $A$ then $|\satdiags_A(\Phi)| = 1$ (and $|\exec_A(\Phi)| \leq 1$).
\end{lem}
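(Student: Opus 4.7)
The plan is to show that acyclicity, connectedness, and determinism together pin down a unique saturated diagram, which we identify essentially with $\dgraph[A]{\Phi}$ itself. Along the way, I will reuse the key observation from the proof of Lemma~\ref{lem:acyclicity}: in an acyclic constellation, every diagram $\delta : G_\delta \to \dgraph[A]{\Phi}$ is injective on vertices, so I can safely regard $V^{G_\delta}$ as a subset of $V^{\dgraph[A]{\Phi}}$ via $\delta$.

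Given this injectivity, I will first argue that for any saturated $\delta$, the inclusion $V^{G_\delta} \subseteq V^{\dgraph[A]{\Phi}}$ is in fact an equality. Assuming otherwise, connectedness of $\dgraph[A]{\Phi}$ provides an edge crossing from the image of $\delta$ to its complement: some vertex $v_1 \in V^{G_\delta}$ shares an edge $e$ with a vertex $v_2 \notin V^{G_\delta}$. The natural move is to add $v_2$ and $e$ to $G_\delta$ to produce a strictly larger diagram, contradicting saturation.

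The main technical step, and the only place the determinism hypothesis enters, is to justify that this extension is indeed a well-defined diagram, i.e.\ that its ray linking graph remains simple. Here I appeal to monovalence: the ray of $\Phi[v_1]$ involved in $e$ has at most one dual partner in the whole dependency graph, so it cannot already be linked inside $\delta$. For the replicating subcase of determinism I intend to observe briefly that replicating genuinely relies on duplicating stars inside a single diagram, which acyclicity has just ruled out via the injectivity above; hence on an acyclic constellation ``deterministic'' effectively reduces to ``monovalent''.

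Once $V^{G_\delta} = V^{\dgraph[A]{\Phi}}$, the same monovalence argument forces the edge set of $G_\delta$ to coincide with that of $\dgraph[A]{\Phi}$: any missing edge could again be added without violating simplicity of the ray linking graph, contradicting saturation. Thus $G_\delta$ is isomorphic to $\dgraph[A]{\Phi}$ up to the renaming of vertices and edges built into Definition~\ref{def:diagram}, yielding $|\satdiags_A(\Phi)| = 1$. The bound $|\exec_A(\Phi)| \leq 1$ follows immediately, with strict inequality possible only if that unique saturated diagram fails to be correct. I expect the step demanding most care to be the determinism argument, specifically reconciling the ``replicating'' clause of Definition~\ref{def:propconst} with the injectivity forced by acyclicity.
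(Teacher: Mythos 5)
Your proof is correct and follows essentially the same route as the paper's: acyclicity gives injectivity of diagrams on vertices (via Lemma~\ref{lem:acyclicity}), connectedness plus saturation forces the diagram to cover all of $\dgraph[A]{\Phi}$, and determinism pins down the edges uniquely. You are more explicit than the paper on the extension step (checking that adding a crossing edge yields a well-defined diagram with a simple ray linking graph), which is a welcome tightening rather than a divergence.

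The one genuine difference is the replicating case. The paper handles it directly, asserting that duplications of stars may occur but still assemble into a single saturated diagram; you instead argue that the injectivity forced by acyclicity rules out duplication altogether, so on an acyclic constellation ``deterministic'' collapses to ``monovalent'' and the replicating clause is vacuous. Your version is the more coherent of the two: it is consistent with the injectivity claim the paper itself makes in the proof of Lemma~\ref{lem:acyclicity} (and with the fact that the paper's only replicating example is cyclic), whereas the paper's parenthetical about ``finite duplication because there is no loop'' sits uneasily with that claim. Both arguments share the same implicit assumption that $\dgraph[A]{\Phi}$ is finite (otherwise no saturated diagram exists at all), so no penalty there.
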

\begin{proof}
Since $\Phi$ is acyclic, by Lemma~\ref{lem:acyclicity}, we have $|\csatdiags_A(\Phi)| < \infty$ and there is no loop in $\dgraph[A]{\Phi}$ and since it is connected, it has the shape of a tree.
Because $\dgraph[A]{\Phi}$ is a tree, a saturated diagram for a connected constellation must be maximal and include all vertices and rays of $\dgraph[A]{\Phi}$.
A deterministic constellation is either monovalent or replicating. If it is monovalent then there is at most one choice of connexion for a ray. By choosing all these unique connexions we obtain a unique diagram. If it replicating, there is few duplications of stars but the whole forms a unique diagram as well (it is only finite duplication and the only way to duplicate because there is no loop). Hence, we have $|\satdiags_A(\Phi)| = 1$. Depending on if this diagram obtained in both cases is correct or not there is at most one correct saturated diagram. Hence $|\csatdiags_A(\Phi)| = 1$ and $|\exec(\Phi)| \leq 1$.
\end{proof}

\begin{lem}[Exactness]
\label{lem:exactness}
Let $\Phi$ be a constellation which is exact \wrt a set of colours $A \subseteq C$. We have $\diags_A(\Phi) = \csatdiags_A(\Phi)$.
\end{lem}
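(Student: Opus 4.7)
The plan is to establish the set equality by proving both inclusions. The easy direction $\csatdiags_A(\Phi) \subseteq \diags_A(\Phi)$ is immediate from the definitions: a correct saturated $A$-diagram is in particular an $A$-diagram. The content of the lemma lies in the reverse inclusion, which requires showing that any $A$-diagram $\delta$ of an exact constellation $\Phi$ is both correct and saturated.

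For correctness, I work directly from the definition of the underlying problem $\mathcal{P}(\delta)$. Exactness means that every edge of $\dgraph[A]{\Phi}$ between rays $\Phi[i][j]$ and $\Phi[i'][j']$ induces an equation of the shape $t \eqq t$, with $t = \floor{\Phi[i][j]} = \floor{\Phi[i'][j']}$. Consequently, for each edge $e = (v, v') \in E^{G_\delta}$ with label $(j, j')$, the associated underlying equation $\undereq{e} = \alpha_v t \eqq \alpha_{v'} t$ compares a single term under two distinct canonical renamings. To produce a simultaneous solution for the full system $\mathcal{P}(\delta)$, I form the equivalence relation $\equiv$ on the set of tagged variables $x_v$ (with $v$ ranging over the vertices of $G_\delta$ and $x$ a variable appearing in $\Phi$) generated by $x_v \equiv x_{v'}$ for every edge $(v, v') \in E^{G_\delta}$ and every variable $x$ occurring in the corresponding term, then define a substitution $\sigma$ sending each tagged variable to a fixed representative of its $\equiv$-class. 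By construction $\sigma(\alpha_v t) = \sigma(\alpha_{v'} t)$ for every edge, so $\sigma$ solves $\mathcal{P}(\delta)$ and $\delta$ is correct.

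The saturation step is where I expect the main difficulty to lie. The plan is to show that any $A$-diagram $\delta$ is $\sqsubseteq$-maximal by analysing the possible shapes of a putative strict extension $\delta \sqsubseteq \delta'$. A strict extension introduces either new vertices or new edges to $G_\delta$, and in either case the simple ray linking graph requirement imposes strong combinatorial restrictions: the set of possible ray vertices is bounded by $\idrayspol{\Phi}$ and the ray linking graph is forbidden from containing loops or parallel edges. Under exactness, every connection realisable in $\dgraph[A]{\Phi}$ contributes only the trivial equation $t \eqq t$, so the obstruction to forming extensions is structural rather than unification-based. The argument then hinges on a case analysis showing that any candidate enlargement either duplicates an existing ray-to-ray link (violating simplicity) or reduces, modulo the renaming of vertices and edges permitted in the diagram definition, to $\delta$ itself. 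Making this case analysis rigorous---in particular ruling out that fresh copies of stars already reached by $\delta$ can genuinely enlarge the ray linking graph---will be the delicate point of the proof.
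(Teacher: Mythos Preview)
Your correctness argument is essentially the paper's: since every edge of $\dgraph[A]{\Phi}$ carries an equation $t \eqq t$, each underlying equation of a diagram is $\alpha_v t \eqq \alpha_{v'} t$, and any renaming identifying the tagged variables solves the whole system. The paper phrases this more tersely (``$\solution{\mathcal{P}(\delta)}$ must always be a renaming'') but the content is the same.

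The saturation part, however, is a dead end. What you are trying to prove there is simply false: take the exact constellation $[+c(X)] + [-c(X), +c(X)] + [-c(X)]$; the diagram linking only the first two stars is a perfectly good $A$-diagram but is not saturated, since the third star can be attached. Your proposed case analysis cannot succeed, because fresh copies of stars \emph{can} genuinely enlarge the ray linking graph without creating loops or parallel edges.

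The resolution is that the statement as printed contains a typo: $\diags_A(\Phi)$ should read $\satdiags_A(\Phi)$. The paper's own proof makes no attempt whatsoever to establish saturation; it shows only that every diagram is correct and then concludes the inclusion, which only makes sense if both sides already carry the ``saturated'' qualifier. This reading is confirmed by how the lemma is used downstream (e.g.\ combined with Lemma~\ref{lem:uniqueness}, which produces a unique \emph{saturated} diagram, to conclude there is a unique \emph{correct} saturated diagram). So drop the saturation step entirely; the intended lemma is that for an exact constellation every saturated diagram is correct, and your correctness argument already proves it.
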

\begin{proof}
Let $\Phi$ be an exact constellation. By the definition of diagram (\cf Definition~\ref{def:diagram}), all equations $t \eqq u$ induced by diagrams are renamings of equations induced by the edges $e \in V^{\dgraph[A]{\Phi}}$, \ie they are of the shape $\alpha_v t' \eqq \alpha_{v'} u'$ for $v, v' \in V^{G_\delta}$. Assume $t' = u'$.
For any renamings $\alpha_1, \alpha_2$ we have $\alpha_1 t$ which is $\alpha$-unifiable with $\alpha_2 t$. We remark that $\solution{\mathcal{P}(\delta)}$ must always be a renaming. This makes $\delta$ correct. Hence, we have $\diags_A(\Phi) \subseteq \csatdiags_A(\Phi)$. By definition, we also have $\csatdiags_A(\Phi) \subseteq \diags_A(\Phi)$.
\end{proof}

\begin{lem}[Independence of connected components]
\label{lem:independence}
Let $\Phi$ be a constellation. If $\dgraph[A]{\Phi}$ has $n$ connected component corresponding to the subconstellations $\Phi_1, ..., \Phi_n \subseteq \Phi$, then $\exec_A(\Phi) = \bigcup_{i=1}^n \exec_A(\Phi_i)$.
\end{lem}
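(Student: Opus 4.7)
The plan is to exhibit a bijection between $\csatdiags_A(\Phi)$ and $\bigsqcup_{i=1}^{n}\csatdiags_A(\Phi_i)$ that commutes with actualisation. Taking unions of actualised sets and quotienting by $\alphaeq$ as in \Cref{def:actualisation} then yields the equality of executions.

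First I would use the connectedness requirement from \Cref{def:diagram}: the source graph $G_\delta$ of any $A$-diagram $\delta : G_\delta \to \dgraph[A]{\Phi}$ is a non-empty finite \emph{connected} multigraph. Since $\delta$ is a graph homomorphism and the image of a connected graph under a graph homomorphism is connected, $\delta(G_\delta)$ is contained in a unique connected component of $\dgraph[A]{\Phi}$, say the one corresponding to $\Phi_i$. Then $\delta$ factors as $G_\delta \to \dgraph[A]{\Phi_i} \hookrightarrow \dgraph[A]{\Phi}$ and so $\delta$ can be regarded canonically as an $A$-diagram $\delta_i$ of $\Phi_i$. Conversely, any $A$-diagram of some $\Phi_i$ yields an $A$-diagram of $\Phi$ by post-composition with the inclusion $\dgraph[A]{\Phi_i} \hookrightarrow \dgraph[A]{\Phi}$. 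These two operations are mutually inverse and give the desired bijection $\diags_A(\Phi) \simeq \bigsqcup_{i=1}^{n}\diags_A(\Phi_i)$.

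Next I would check that this bijection preserves both saturation and correctness. For correctness: the underlying problem $\mathcal{P}(\delta)$ depends only on $G_\delta$ and on the labels read through $\delta$, which are unchanged by the inclusion, so $\mathcal{P}(\delta) = \mathcal{P}(\delta_i)$ and solvability is equivalent on both sides. For saturation (\Cref{def:saturated}): since distinct connected components of $\dgraph[A]{\Phi}$ share no vertex and no edge, any extension $\delta \sqsubseteq \delta'$ in $\Phi$ must remain inside the same component of $\dgraph[A]{\Phi}$ (otherwise $G_{\delta'}$ would fail to be connected, contradicting \Cref{def:diagram}). Thus $\delta$ is maximal in $\diags_A(\Phi)$ if and only if $\delta_i$ is maximal in $\diags_A(\Phi_i)$.

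Finally, $\mathcal{P}(\delta) = \mathcal{P}(\delta_i)$ together with the equality of the renaming families $\alpha_v$ (which depend only on vertices of $G_\delta$) and of the free rays gives $\actu\delta = \actu\delta_i$ by \Cref{def:actualisation}. Combining all of the above, $\actu\csatdiags_A(\Phi) = \bigcup_{i=1}^{n}\actu\csatdiags_A(\Phi_i)$, i.e.\ $\exec_A(\Phi) = \bigcup_{i=1}^{n}\exec_A(\Phi_i)$. I expect the only subtle step to be the saturation argument, which is really a short observation: the connectedness constraint built into the definition of diagrams is what prevents a saturated diagram from ever straddling two components, and this is precisely what makes the lemma work.
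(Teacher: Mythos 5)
Your proof is correct and follows essentially the same route as the paper's: both arguments hinge on the fact that a diagram's source graph is connected, so its image under the homomorphism lies in a single connected component of $\dgraph[A]{\Phi}$, which yields the decomposition $\csatdiags_A(\Phi) = \bigcup_{i=1}^n \csatdiags_A(\Phi_i)$ and hence the equality of executions. You simply spell out in more detail the bijection and the preservation of saturation, correctness, and actualisation, which the paper leaves implicit.
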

\begin{proof}
Each connected component $G_i \subseteq \dgraph[A]{\Phi}$ constitutes a constellation $\Phi_i$. When we execute $\Phi$, we form diagrams following the connexions of $\dgraph[A]{\Phi}$.
Since a diagram has to be connected and that no edge link the $G_i$ in $\dgraph[A]{\Phi}$, we necessarily have $\csatdiags_A(\Phi) = \csatdiags_A(\Phi_1) \cup ... \cup \csatdiags_A(\Phi_n)$, hence $\exec_A(\Phi) = \bigcup_{i=1}^n \exec_A(\Phi_i)$.
\end{proof}

\definecolor{BlueConstellation}{RGB}{206,241,253}
\definecolor{RedConstellation}{RGB}{255,172,172}
\begin{figure}
    \begin{center}
    \begin{minipage}{0.3\textwidth}
        \begin{tikzpicture}
            \node[dot, inner sep=0.5em, fill=MidnightBlue] at (0, -0.25) (a) {};
            
            \node[dot, inner sep=0.5em, fill=MidnightBlue] at (2, 0.5) (f) {};
            
            \node[dot] at (-0.5, 1) (k) {};
            \node[dot] at (0.5, 0.5) (l) {};
            \node[dot] at (0.75, 1.25) (m) {};
            \node[dot] at (1.25, -0.75) (n) {};
            \node[dot] at (2.25, -1.25) (o) {};
            \node[dot] at (-1, -1.25) (p) {};
            \node[dot] at (2.75, -0.75) (q) {};
            \node[dot] at (-1, 0.15) (r) {};
            \node[dot] at (1, -1.25) (s) {};
            
            \draw[Bittersweet] (k) -- (l);
            \draw[Bittersweet] (m) -- (l);
            \draw[Bittersweet] (f) -- (l);
            \draw[Bittersweet] (p) -- (a);
            \draw[Bittersweet] (r) edge[bend left] (a);
            \draw[Bittersweet] (r) edge[bend right] (a);
            \draw[Bittersweet] (n) -- (a);
            \draw[Bittersweet] (n) -- (l);
            \draw[Bittersweet] (o) -- (q);
            \draw[Bittersweet] (q) -- (f);
            \draw[Bittersweet] (s) -- (a);
            \draw[Bittersweet] (s) -- (n);
            \draw[Bittersweet] (m) -- (f);
            \draw[Bittersweet] (n) -- (o);
        \end{tikzpicture}
    \end{minipage}
    ${\overset{\exec_B(\Phi)}{\Longleftarrow}}\qquad$
    \begin{minipage}{0.3\textwidth}
        \begin{tikzpicture}
            \node[dot] at (0, 0) (a) {};
            \node[dot] at (-0.5, -0.5) (b) {};
            \node[dot] at (0.5, -0.5) (c) {};
            \node[dot] at (-0.25, -1) (d) {};
            \node[dot] at (0.25, -1) (e) {};
            
            \node[dot] at (2, 1) (f) {};
            \node[dot] at (1.5, 0.5) (g) {};
            \node[dot] at (2.5, 0.5) (h) {};
            \node[dot] at (1.75, 0) (i) {};
            \node[dot] at (2.25, 0) (j) {};
            
            \node[dot] at (-0.5, 1) (k) {};
            \node[dot] at (0.5, 0.5) (l) {};
            \node[dot] at (0.75, 1.25) (m) {};
            \node[dot] at (1.25, -0.75) (n) {};
            \node[dot] at (2.25, -1.25) (o) {};
            \node[dot] at (-1, -1.25) (p) {};
            \node[dot] at (2.75, -0.75) (q) {};
            \node[dot] at (-1, 0.15) (r) {};
            \node[dot] at (1, -1.25) (s) {};
            
            \draw[MidnightBlue] (a) -- (b);
            \draw[MidnightBlue] (a) -- (c);
            \draw[MidnightBlue] (a) -- (d);
            \draw[MidnightBlue] (a) -- (e);
            
            \draw[MidnightBlue] (f) -- (g);
            \draw[MidnightBlue] (f) -- (h);
            \draw[MidnightBlue] (g) -- (i);
            \draw[MidnightBlue] (f) -- (j);
            
            \draw[Bittersweet] (k) -- (l);
            \draw[Bittersweet] (m) -- (l);
            \draw[Bittersweet] (g) -- (l);
            \draw[Bittersweet] (p) -- (b);
            \draw[Bittersweet] (r) -- (b);
            \draw[Bittersweet] (r) -- (a);
            \draw[Bittersweet] (n) -- (e);
            \draw[Bittersweet] (n) -- (l);
            \draw[Bittersweet] (o) -- (q);
            \draw[Bittersweet] (q) -- (j);
            \draw[Bittersweet] (s) -- (e);
            \draw[Bittersweet] (s) -- (n);
            \draw[Bittersweet] (m) -- (f);
            \draw[Bittersweet] (n) -- (o);
        \end{tikzpicture}
    \end{minipage}
    \end{center}
    \caption{Partial execution acts as a partial diagram contraction, which is only possible when $\Phi$ and $\Phi'$ do not act on a same variable. The ``blow-up" obtained by inverting the execution preserves the connexions between rays.}
    \label{fig:ppe}
\end{figure}
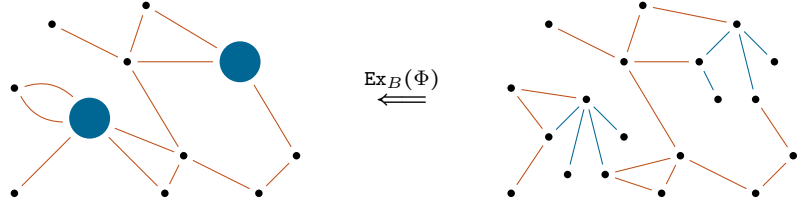

\begin{figure}
    \begin{tikzpicture}
        \node at (0, 0) (c1) {$\Phi = [X, +c(X)]+[-c(\lc \cdot X)]$};
        \node at (6, 0) (c3) {$[-c(\rc \cdot X)] = \Phi'$};
        \draw[dotted] (c1.120) edge[bend left] (c1.12);
        \draw[dotted] (c1.120) edge[bend left] (c3.110);
    \end{tikzpicture}
    \caption{Counter-example for partial pre-execution. We have $\exec_{\{c\}}(\Phi) = [\lc \cdot X]$ and $\exec_{\{c\}}(\exec_{\{c\}}(\Phi) \uplus \Phi') = [-c(\rc \cdot X)]+[\lc \cdot X]$, but $\exec_{\{c\}}(\Phi \uplus \Phi') = [\lc \cdot X]+[\rc \cdot X]$ which is different. Notice that both $[-c(\lc \cdot X)]$ and $[-c(\rc \cdot X)]$ needs $[X, +c(X)]$ but when executing $\Phi$, $\Phi'$ cannot be connected to it anymore.}
    \label{fig:ceppe}
\end{figure}

An important result is the possibility of executing only some colours on some constellations first then the others without any effect on the normal form, that is $\exec_{A \cup B}(\exec_B(\Phi) \uplus \Phi') = \exec_{A \cup B}(\Phi \uplus \Phi')$ for some set of colours $A$ and $B$. However, this is not valid in general as presented in Figure~\ref{fig:ceppe}. The problem is that stars from two disjoint constellations can alter a same variable and a partial execution will erase some potential connexions which were present. This problem is reminiscent of the idea of \emph{mutual exclusion} in concurrent programming \cite{dijkstra2001solution}: the constellations $\Phi$ and $\Phi'$ can modify a same variable $x$ but when executing $\Phi$ and accessing $x$, we may lose an access to $x$ which is still required by $\Phi'$.

This property of partial pre-execution is necessary in order to obtain a result of confluence and associativity of execution, hence a model of linear logic. We need to design a precondition for which these properties are valid and from which it is possible to express logic.

There are several possible choices. A simple choice is to reason on the accessibility of variables in a dependency graph. We do not want a variable to be accessible from two different constellations such that one is pre-executed before the other. For instance, in Figure~\ref{fig:ceppe}, the variable $X$ of $[X, +c(X)]$ is accessible both from $\Phi$ and $\Phi'$.

\begin{defi}[Shared variables]
\label{def:sharedvar}
Variables are written $X^i_j$ where $i$ is an index of star and $j$ an index of ray within that star. We write $\mathtt{acc}^A_{\Phi_\omega}(X^i_j, \Phi)$ for a constellation $\Phi \subseteq \Phi_\omega$ and a set of colours $A \subseteq C$ when there is an edge path $e_1, ..., e_n$ in $\dgraph[A]{\Phi_\omega}$ from some $\phi \in \Phi$ to $\Phi_\omega[i]$ such that $\ell(e_n) = r \match r'$ with $r' \in \Phi_\omega[i]$ and $X \in \vars{r'}$.

We define the \emph{set of variables shared by two constellations} $\Phi_1$ and $\Phi_2$ \wrt a set of colours $A \subseteq C$ as the set $\Phi_1 \ucap_A \Phi_2$ such that we have $X^i_j \in \Phi_1 \ucap_A \Phi_2$ when $\mathtt{acc}_{\Phi_1 \uplus \Phi_2}^A(X^i_j, \Phi_1)$ and $\mathtt{acc}_{\Phi_1 \uplus \Phi_2}^A(X^i_j, \Phi_2)$. We generalise the notation to the set of variables shared by $n$ constellations with the associative notation $\Phi_1 \ucap_A \hdots \ucap_A \Phi_n := \bigcap_{1 \leq i,j \leq n} \Phi_i \ucap_A \Phi_j$.
\end{defi}

\begin{prop}[Commutativity and associativity of shared variables]
\label{prop:sharedvar}
For any constellations $\Phi_1, \Phi_2$ and $\Phi_3$ and a set of colours $A \subseteq C$, we have $\Phi_1 \ucap_A \Phi_2 = \Phi_2 \ucap_A \Phi_1$ and if $\sigma$ is any permutation on $\{1, 2, 3\}$, we have $\Phi_1 \ucap_A \Phi_2 \ucap_A \Phi_3 = \Phi_{\sigma(1)} \ucap_A \Phi_{\sigma(2)} \ucap_A \Phi_{\sigma(3)}$.
\end{prop}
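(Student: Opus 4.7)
The plan is to reduce both statements to the observation that the definition of $\ucap_A$ is manifestly symmetric in its arguments, so that all further claims follow from standard facts about intersections of symmetric relations.

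First I would address commutativity. The disjoint union of constellations is symmetric up to canonical renaming of indices: $\Phi_1 \uplus \Phi_2$ and $\Phi_2 \uplus \Phi_1$ induce isomorphic labelled multigraphs $\dgraph[A]{\Phi_1 \uplus \Phi_2} \simeq \dgraph[A]{\Phi_2 \uplus \Phi_1}$ (the isomorphism simply swaps the two halves of the vertex set). Since duality $\match_A$ on rays is symmetric (the $\match$ relation is symmetric by definition), edge paths in one graph correspond bijectively to edge paths in the other, so $\mathtt{acc}^A_{\Phi_1 \uplus \Phi_2}(X^i_j, \Phi)$ holds if and only if $\mathtt{acc}^A_{\Phi_2 \uplus \Phi_1}(X^i_j, \Phi)$ does. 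Now $X^i_j \in \Phi_1 \ucap_A \Phi_2$ is defined as the conjunction of $\mathtt{acc}^A_{\Phi_1 \uplus \Phi_2}(X^i_j, \Phi_1)$ and $\mathtt{acc}^A_{\Phi_1 \uplus \Phi_2}(X^i_j, \Phi_2)$, which is a symmetric condition in $\Phi_1, \Phi_2$. Commutativity follows immediately.

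For the three-constellation statement, I would unfold the generalised notation: by definition,
\[
\Phi_1 \ucap_A \Phi_2 \ucap_A \Phi_3 \;=\; \bigcap_{1 \leq i, j \leq 3} \Phi_i \ucap_A \Phi_j.
\]
This is an intersection indexed by unordered pairs of elements of $\{1,2,3\}$, where each summand is itself symmetric in its two arguments by the commutativity case above. Given any permutation $\sigma$ on $\{1,2,3\}$, the substitution $i \mapsto \sigma(i)$ is a bijection on index pairs, so
\[
\bigcap_{1 \leq i, j \leq 3} \Phi_{\sigma(i)} \ucap_A \Phi_{\sigma(j)} \;=\; \bigcap_{1 \leq i, j \leq 3} \Phi_i \ucap_A \Phi_j,
\]
which is exactly the required identity. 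The same argument generalises to any $n$.

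There is essentially no hard step: the whole content is that (i) $\match_A$ is symmetric (already established as a proposition in the text), (ii) disjoint union of indexed families is symmetric up to canonical index relabelling, and (iii) a finite intersection of symmetric binary conditions is invariant under any permutation of its arguments. The only mild care needed is to verify that the graph isomorphism between $\dgraph[A]{\Phi_1 \uplus \Phi_2}$ and $\dgraph[A]{\Phi_2 \uplus \Phi_1}$ identifies the two notions of accessibility used in the definition of $\mathtt{acc}$; this is routine once one notes that $\mathtt{acc}$ only refers to the existence of an edge path ending in a ray that contains the given variable, a property preserved by graph isomorphism.
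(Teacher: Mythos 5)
Your proof is correct and follows essentially the same route as the paper: the paper's own argument is a one-line observation that the defining condition ($X$ accessible from both constellations in the dependency graph of their disjoint union) is manifestly symmetric, and that the same reasoning covers the $n$-ary case. Your version merely spells out the details the paper leaves implicit (the isomorphism $\dgraph[A]{\Phi_1 \uplus \Phi_2} \simeq \dgraph[A]{\Phi_2 \uplus \Phi_1}$ and the permutation of index pairs in the intersection), which is a reasonable elaboration rather than a different approach.
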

\begin{proof}
We obviously have $X \in \Phi_1 \ucap_A \Phi_2$ if and only if $X \in \Phi_2 \ucap_A \Phi_1$ because in both cases, $X$ is still accessible from both $\Phi_1$ and $\Phi_2$. The same reasoning holds for the associativity.
\end{proof}

\begin{lem}[Partial pre-execution]
\label{lem:ppe}
Let $\Phi$ and $\Phi'$ be constellations and $A, B \subseteq C$ be sets of colours such that $\Phi \ucap_{A \cup B} \Phi' = \emptyset$. We have $\exec_{A \cup B}(\exec_B(\Phi) \uplus \Phi') = \exec_{A \cup B}(\Phi \uplus \Phi')$.
\end{lem}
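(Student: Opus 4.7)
The plan is to establish a bijection between $\csatdiags_{A \cup B}(\Phi \uplus \Phi')$ and $\csatdiags_{A \cup B}(\exec_B(\Phi) \uplus \Phi')$ that commutes with actualisation, which reduces the lemma to reorganising the order in which fusion steps are performed. The central tools are Theorem~\ref{thm:eqfusionactu} (which lets one compute $\actu\delta$ by any sequence of fusions) and the confluence of the Martelli--Montanari algorithm (\cf\Cref{sec:unification}) that allows reordering those steps freely.

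Given $\delta \in \csatdiags_{A \cup B}(\Phi \uplus \Phi')$, I would first partition $E^{G_\delta}$ into the $B$-edges whose endpoints both lie in $\Phi$-stars and the remaining edges. The first group decomposes into connected sub-$B$-diagrams $\delta_1, \ldots, \delta_k$ of $\Phi$; because $B \subseteq A \cup B$, any $B$-extension of a $\delta_i$ would extend $\delta$ itself, so each $\delta_i$ is $B$-saturated, and the correctness of $\delta$ transfers to each $\delta_i$ since $\mathcal{P}(\delta_i) \subseteq \mathcal{P}(\delta)$. Hence every $\actu\delta_i$ is a star of $\exec_B(\Phi)$. Contracting each $\delta_i$ inside $\delta$ into a single vertex labelled by $\actu\delta_i$ and retaining all remaining edges yields a diagram $\tilde\delta$ on $\exec_B(\Phi) \uplus \Phi'$. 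The converse ``unfolding'' operation expands each $\exec_B(\Phi)$-vertex of a diagram $\tilde\delta \in \csatdiags_{A \cup B}(\exec_B(\Phi) \uplus \Phi')$ back into the saturated $B$-diagram that produced it, providing the inverse map.

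The heart of the argument is showing that this correspondence preserves correctness, saturation and actualisation. By confluence of unification, $\mathcal{P}(\delta)$ can be solved by first resolving $\bigsqcup_i \mathcal{P}(\delta_i)$ (producing the substitutions involved in the stars $\actu\delta_i$) and then the remaining equations, which are precisely $\mathcal{P}(\tilde\delta)$ modulo the canonical renamings of Definition~\ref{def:underproblem}. This is where the hypothesis $\Phi \ucap_{A \cup B} \Phi' = \emptyset$ plays its role: a variable $X^i_j$ whose binding is fixed by solving some $\mathcal{P}(\delta_i)$ is by construction accessible from $\Phi$ in $\dgraph[A \cup B]{\Phi \uplus \Phi'}$, and the non-sharing condition forbids $X^i_j$ from being accessible from $\Phi'$. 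Consequently, the substitutions arising from the $B$-fusions inside $\Phi$ touch no variable that occurs on the $\Phi'$-side of any remaining edge, so the two stages of unification do not interfere, $\tilde\delta$ is correct, and $\actu\delta = \actu\tilde\delta$. Saturation of $\tilde\delta$ follows from saturation of $\delta$: any $(A\cup B)$-extension of $\tilde\delta$ would lift through unfolding to an extension of $\delta$, contradicting the latter's maximality.

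The main obstacle I expect is the bookkeeping of the canonical renamings $\alpha_v$ through contraction and unfolding, together with translating Definition~\ref{def:sharedvar} into the concrete statement that the set of variables instantiated by $\solution{\mathcal{P}(\delta_i)}$ is disjoint from $\vars{r'}$ for every ray $r'$ occurring on the $\Phi'$-side of a remaining edge. Once this dictionary is in place, the lemma follows by combining Lemma~\ref{lem:simfusion}, Theorem~\ref{thm:eqfusionactu} and a routine case analysis on the two ways of grouping the equations of $\mathcal{P}(\delta)$.
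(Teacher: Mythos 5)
Your proof is correct and takes essentially the same route as the paper's: both establish the contraction/blow-up bijection between $\csatdiags_{A\cup B}(\Phi\uplus\Phi')$ and $\csatdiags_{A\cup B}(\exec_B(\Phi)\uplus\Phi')$, invoke confluence of fusion/unification (via Lemma~\ref{lem:simfusion} and Theorem~\ref{thm:eqfusionactu}) to perform the $B$-steps inside $\Phi$ first, and use the hypothesis $\Phi\ucap_{A\cup B}\Phi'=\emptyset$ to guarantee that the substitutions produced by that $B$-stage do not destroy any connexion with $\Phi'$. The paper merely builds the map in the opposite direction (blow-up of $\exec_B(\Phi)$ first, contraction as its inverse), and your write-up is if anything more explicit about why the sub-$B$-diagrams are saturated and correct.
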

\begin{proof}
Assume we have a diagram $\delta^{A \cup B}_i \in \csatdiags_{A \cup B}(\exec_B(\Phi) \uplus \Phi')$. It is constructed by connecting the stars $\phi'_j$ of $\Phi'$ with stars $\phi^B_k$ of $\exec_B(\Phi)$.
These stars $\phi^B_k$ of $\exec_B(\Phi)$ come from diagrams $\delta^B_k \in \csatdiags_{B}(\Phi)$. We can perform a ``blow-up`` (\cf\Cref{fig:ppe}) on $\exec_B(\Phi)$ by replacing the stars $\phi^B_k$ by their corresponding diagram $\delta^B_k$. In some sense, we reversed the execution from $\exec_B(\Phi)$ to $\csatdiags_B(\Phi)$.
This is only possible because we have $\Phi \ucap_{A \cup B} \Phi' = \emptyset$, meaning that the stars of $\Phi$ and of $\Phi'$ cannot interfere by acting on a same variable in a same star. Hence, the execution of $\Phi$ makes diagrams in which variables are independent of the ones of $\Phi'$. Otherwise, some connexions could disappear (as in Figure~\ref{fig:ceppe}) and we would not preserve all connexions allowing this inversion of execution.
We obtain diagrams $\varphi(\delta^B_k)$ corresponding to diagrams $\delta^B_k$ extended with stars of $\Phi'$ in exactly the same way as how $\phi^B_k$ can be connected with $\Phi'$.
We have $\varphi(\delta^B_k) \in \csatdiags_{A \cup B}(\Phi \uplus \Phi')$ since it connects stars of both $\Phi$ and $\Phi'$.

It remains to show that $\varphi$ is invertible so that we have an isomorphism between $\csatdiags_{A \cup B}(\exec_B(\Phi) \uplus \Phi')$ and $\csatdiags_{A \cup B}(\Phi \uplus \Phi')$. Assume we have
\[\delta^{A \cup B} \in \csatdiags_{A \cup B}(\Phi \uplus \Phi').\] We would like to define $\varphi^{-1}(\delta^{A \cup B})$.
By the confluence of fusion (which is a consequence of the correspondence between fusion and actualisation, \cf Theorem~\ref{thm:eqfusionactu}), we can contract first the stars coming from $\Phi$ using colours in $B$ and we obtain a diagram $\varphi^{-1}(\delta^{A \cup B})$. We have $\varphi^{-1}(\delta^{A \cup B}) \in \csatdiags_{A \cup B}(\exec_B(\Phi) \uplus \Phi')$.

It is obvious that $\varphi(\varphi^{-1}(\delta)) = \delta$ and $\varphi^{-1}(\varphi(\delta)) = \delta$ because $\varphi$ is defined from the diagrams from which the stars of a normal form come from (star expansion), which is exactly the reverse operation of contracting stars by fusion.
\end{proof}

\begin{thm}[Confluence]
\label{thm:confluence}
For any constellation $\Phi$, and $A, B \subseteq C$ two disjoint sets of colours such that $\Phi \ucap_{A \cup B} \Phi' = \emptyset$, we have $\exec_B(\exec_A(\Phi)) = \exec_{A\cup B}(\Phi) = \exec_A(\exec_B(\Phi))$.
\end{thm}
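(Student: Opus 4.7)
The plan is to derive both equalities from Lemma~\ref{lem:ppe} combined with a saturation argument ensuring that colours already consumed cannot be reactivated. By the symmetry between $A$ and $B$ in the statement, it suffices to establish $\exec_B(\exec_A(\Phi)) = \exec_{A \cup B}(\Phi)$; the other identity follows by swapping the roles of $A$ and $B$.

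I first instantiate Lemma~\ref{lem:ppe} with the second constellation taken to be $\emptyset$, the lemma's inner colour set set to $A$, and the lemma's outer colour set set to $A \cup B$. The disjointness hypothesis $\Phi \ucap_{A \cup B} \emptyset = \emptyset$ is trivially satisfied and $\Phi \uplus \emptyset = \Phi$, so the lemma directly yields $\exec_{A \cup B}(\exec_A(\Phi)) = \exec_{A \cup B}(\Phi)$.

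The second and more delicate step is to show $\exec_{A \cup B}(\exec_A(\Phi)) = \exec_B(\exec_A(\Phi))$. For this I argue that $\dgraph[A]{\exec_A(\Phi)}$ contains no edges, so that by disjointness of $A$ and $B$ one has $\dgraph[A \cup B]{\exec_A(\Phi)} = \dgraph[B]{\exec_A(\Phi)}$ and hence $\csatdiags_{A \cup B}(\exec_A(\Phi)) = \csatdiags_B(\exec_A(\Phi))$. Indeed, every free $A$-ray of a star $\actu\delta_1 \in \exec_A(\Phi)$ arises by applying the substitution $\solution{\mathcal{P}(\delta_1)}$, together with the canonical renamings of Definition~\ref{def:underproblem}, to a free $A$-ray of some star occurring in the saturated diagram $\delta_1$. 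If two such free rays, in $\actu\delta_1$ and $\actu\delta_2$, were dual in $\exec_A(\Phi)$, then, as applying a substitution cannot turn non-unifiable terms into unifiable ones, their pre-images in $\Phi$ would already be dual. This would yield an edge of $\dgraph[A]{\Phi}$ available for extending $\delta_1$, either by glueing $\delta_2$ to it if $\delta_1 \neq \delta_2$ or by adding a new loop otherwise, contradicting the $\sqsubseteq$-maximality of the saturated diagram.

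Chaining the two equalities proves $\exec_B(\exec_A(\Phi)) = \exec_{A \cup B}(\Phi)$; repeating the same argument with $A$ and $B$ swapped yields $\exec_A(\exec_B(\Phi)) = \exec_{A \cup B}(\Phi)$. The main obstacle is the saturation argument above, which rests on two facts: free rays of an actualised star descend from free rays in the underlying diagram, and unifiability is preserved downward under substitution. Together they force pre-executing the colours in $A$ to exhaust all $A$-dualities, which is precisely what is needed to collapse the outer $\exec_{A \cup B}$ onto $\exec_B$.
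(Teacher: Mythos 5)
Your proof takes essentially the same route as the paper's: instantiate Lemma~\ref{lem:ppe} with $\Phi' = \emptyset$ to obtain $\exec_{A\cup B}(\exec_A(\Phi)) = \exec_{A\cup B}(\Phi)$, argue that the outer execution can be restricted to $B$ because the $A$-colours are already exhausted in $\exec_A(\Phi)$, and conclude by symmetry in $A$ and $B$. The only difference is that where the paper simply asserts that $\exec_A(\Phi)$ ``already uses all colours in $A$'', you supply the missing justification via the saturation argument (free rays of an actualised star are substitution instances of free rays of the underlying diagram, and substitution cannot create unifiability, so a residual $A$-duality would contradict $\sqsubseteq$-maximality), which is a welcome sharpening of the step the paper leaves informal.
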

\begin{proof}
By Lemma~\ref{lem:ppe} with $\Phi' := \emptyset$ (in this case we trivially have $\Phi \ucap_{A \cup B} \emptyset = \emptyset$ which is the required precondition) we have $\exec_{A \cup B}(\exec_B(\Phi)) = \exec_{A\cup B}(\Phi)$.
Since $\exec_B(\Phi)$ already uses all colours in $B$, we have $\exec_{A \cup B}(\exec_B(\Phi)) = \exec_A(\exec_B(\Phi))$, hence $\exec_{A}(\exec_B(\Phi)) = \exec_{A\cup B}(\Phi)$.
Since $A \cup B = B \cup A$, we also have $\exec_{A\cup B}(\Phi) = \exec_{B\cup A}(\Phi)$. By using again Lemma~\ref{lem:ppe}, we finally obtain $\exec_{B\cup A}(\Phi) = \exec_B(\exec_A(\Phi))$.
\end{proof}

\begin{rem}
In Girard's first paper on Transcendental Syntax \cite[Section 2.4]{transyn1}, the constellation $\Phi = [+a(X), -a(X), +b(X)]$ is mentioned as a counter-example for the confluence of $\exec$ (which only considers tree-shaped diagrams). Here, we have $\exec_{\{a\}}(\exec_{\{b\}}(\Phi)) = \exec_{\{b\}}(\exec_{\{a\}}(\Phi)) = \emptyset$ (because no saturated diagram on $a$ nor on $b$ can be constructed).
Our understanding of Girard's failure comes from his limitation to strongly normalising constellations, so that $\exec_{\{a\}}(\Phi)$ was not defined because of the cyclic dependence between $+a(X)$ and $-a(X)$.

Also remark that $\exec$ is analogous to the computation of all answers we can infer from a logic program, meaning that all possible paths of computation are considered, hence naturally leading to confluence.
\end{rem}

Before ending our computational journey, let us stress (again) the fact that there are several differences between the stellar resolution and approaches in logic programming (although identical objects are used). Our approach is indeed a liberalised variant of first-order resolution but we are not aware of any similar uses of resolution. We suggest some comparisons with other approaches in the literature:
\begin{description}
    \item[Original first-order resolution] It is almost identical. We add unpolarised rays which cannot be connected (it can still be simulated in resolution by using special unused predicates). In resolution, we are usually interested in the reachability of the empty clause ($\emptystar$ in our case) representing a contradiction. In the stellar resolution, it does not have any meaning and we use objects as query-free logic programs. Usual resolution is limited to tree derivations (corresponding to tree-like diagrams) whereas stellar resolution allows cyclic diagrams in order to interpret tiling-based computation. There are graph-based models \cite{sickel1976search, kowalski1975proof, eisinger1991deduction} which are very similar to stellar resolution but they are still different for the reasons mentioned above.
    \item[Horn clauses and logic programming] By logic programming, we mean that we are interested in answering a query represented by a first-order atom (such as in \textsc{Prolog} for instance). In order to answer the query, logic programming use a backward reasoning by going up from the unique conclusion to the premises. The stellar resolution is naturally query-free (although queries can be simulated, there is no such distinguished objects). In particular, we can have several outputs and we do not distinguish between input and output. For instance, if we have a star representing an implication $A \Rightarrow B$, then we can connect a star to the output and only the input will survive. This does not make sense in logic programming because a direction output$\rightarrow$inputs is imposed in the inference.
    \item[Stable model semantics] There are several languages based on stable model semantics such as disjunctive logic programming \cite{minker1994overview, lobo1991semantics} itself based on a subset of \textsc{Prolog} called \textsc{Datalog}. The notion of stable model is also the basis of answer set programming (ASP) \cite{gelfond2008answer, eiter2009answer}. In these languages, a primitive handling of logical negation is used whereas we want our model to be purely computational, without any reference to logic.
\end{description}

% ==================================================
\section{Emergence of proofs}\label{sec:proofs}
% ==================================================

In order to reconstruct logic, it is natural to start from linear logic \cite{linearlogic} which decomposes both classical and intuitionistic logic. We choose to work with Girard's representation of proofs called ``proof-nets"\footnote{Our definitions differ from the usual definitions of the literature but are more convenient for the results presented in this paper. }. We begin by defining the fragment of linear logic we work with. Useful definitions about hypergraphs are recalled in \Cref{sec:hypergraphs}.

% --------------------------------------------------
\subsection{Multiplicative proofs}\label{subsec:mll}
% --------------------------------------------------

\begin{figure}[t]
    \centering
    %%% Formulas
    \begin{subfigure}{0.8\textwidth}
        \begin{equation*}\tag{$\fmll$}
            A, B = X_i \mid X_i^\bot \mid A \otimes B \mid A \parr B \qquad i \in \nat
        \end{equation*}
        \caption{MLL Formulas.}
        \label{subfig:fmll}
    \end{subfigure}

    %%% Sequent calculus
    \bigskip
    \begin{subfigure}{.8\textwidth}
        \scalebox{0.9}{
        \begin{prooftree}
            \infer0[ax]{ \vdash A, A^\bot }
        \end{prooftree}
        \qquad
        \begin{prooftree}
            \hypo{\vdash \Gamma, A \quad \vdash \Delta, A^\bot}
            \infer1[cut]{\vdash \Gamma, \Delta}
        \end{prooftree}
        \qquad
        \begin{prooftree}
            \hypo{ \vdash \Gamma, A \quad \vdash \Delta, B }
            \infer1[$\otimes$]{ \vdash \Gamma, \Delta, A \otimes B }
        \end{prooftree}
        \qquad
        \begin{prooftree}
            \hypo{\vdash \Gamma, A, B}
            \infer1[$\parr$]{\vdash \Gamma, A \parr B}
        \end{prooftree}}
        \caption{MLL sequent calculus rules.}
        \label{subfig:mllsequent}
    \end{subfigure}

    \bigskip
    %%% Proof-structures
    \begin{subfigure}{.6\textwidth}
        \centering
        \scalebox{0.8}{
        \begin{tikzpicture}
          \node[dot] at (0, 0) (a) {};
          \node[dot] at (2, 0) (ad) {};
          \node at (1, 0.75) (ax) {ax};
          \draw[-latex, rounded corners=5pt] (ax) -| (a);
          \draw[-latex, rounded corners=5pt] (ax) -| (ad);
          \node at (1, -1) (label) {Axiom};
        \end{tikzpicture}
        \qquad
        \begin{tikzpicture}
          \node[dot] at (0, 0) (a) {};
          \node[dot] at (2, 0) (ad) {};
          \node at (1, -0.75) (cut) {cut};
          \draw[-latex, rounded corners=5pt] (a) |- (cut);
          \draw[-latex, rounded corners=5pt] (ad) |- (cut);
          \node at (1, -1.5) (label) {Cut};
        \end{tikzpicture}
        \qquad
        \begin{tikzpicture}
          \node[dot] at (-0.75, 0.75) (a) {};
          \node[dot] at (0.75, 0.75) (b) {};
          \node at (0, 0) (tens) {$\otimes$};
          \node[dot] at (0, -0.75) (ab) {};
          \draw[-stealth] (a) -- (tens);
          \draw[-stealth] (b) -- (tens);
          \draw[-stealth] (tens) -- (ab);
          \node at (0, -1.5) (label) {Tensor};
        \end{tikzpicture}
        \qquad
        \begin{tikzpicture}
          \node[dot] at (-0.75, 0.75) (a) {};
          \node[dot] at (0.75, 0.75) (b) {};
          \node at (0, 0) (par) {$\parr$};
          \node[dot] at (0, -0.75) (ab) {};
          \draw[-stealth] (a) -- (par);
          \draw[-stealth] (b) -- (par);
          \draw[-stealth] (par) -- (ab);
          \node at (0, -1.5) (label) {Par};
        \end{tikzpicture}}
        \caption{Links/constructors of proof-structures as hyperedges.}
        \label{subfig:proofstructures}
    \end{subfigure}

    \bigskip
    %% Cut-elimination
    \begin{subfigure}{1\textwidth}
        \centering
        \begin{minipage}{0.175\textwidth}
            \begin{tikzpicture}
                \node[dot] at (0, 0) (a) {};
                \node[dot] at (1, 0) (ad) {};
                \node at (0.5, 0.75) (ax) {ax};
                \node at (1.75, -0.75) (cut) {cut};
                \node[dot] at (2.5, 0) (a2) {};
                \draw[-latex, rounded corners=5pt] (ax) -| (a);
                \draw[-latex, rounded corners=5pt] (ax) -| (ad);
                \draw[-latex, rounded corners=5pt] (ad) |- (cut);
                \draw[-latex, rounded corners=5pt] (a2) |- (cut);
            \end{tikzpicture}
        \end{minipage}
        $\quad\overset{\axl/\cutl}{\leadsto}\quad$
        \begin{minipage}{0.05\textwidth}
            \begin{tikzpicture}
                \node[dot] at (0, 0) (a) {};
            \end{tikzpicture}
        \end{minipage}
        
        \begin{minipage}{0.3\textwidth}
            \begin{tikzpicture}
                \node[dot] at (-0.75, 0.75) (a) {};
                \node[dot] at (0.75, 0.75) (b) {};
                \node at (0, 0) (tens) {$\otimes$};
                \node[dot] at (0, -0.75) (ab) {};
                \draw[-stealth] (a) -- (tens);
                \draw[-stealth] (b) -- (tens);
                \draw[-stealth] (tens) -- (ab);

                \node[dot] at (2, 0.75) (c) {};
                \node[dot] at (3.5, 0.75) (d) {};
                \node at (2.75, 0) (par) {$\parr$};
                \node[dot] at (2.75, -0.75) (cd) {};
                \draw[-stealth] (c) -- (par);
                \draw[-stealth] (d) -- (par);
                \draw[-stealth] (par) -- (cd);

                \node at (1.25, -1.25) (cut) {cut};
                \draw[-latex, rounded corners=5pt] (ab) |- (cut);
                \draw[-latex, rounded corners=5pt] (cd) |- (cut);
            \end{tikzpicture}
        \end{minipage}
         $\quad\overset{\otimes/\parr}{\leadsto}\qquad$
         \begin{minipage}{0.5\textwidth}
            \begin{tikzpicture}
              \node[dot] at (0, 0) (a) {};
              \node[dot] at (1.5, 0) (b) {};
              \node[dot] at (3, 0) (c) {};
              \node[dot] at (4.5, 0) (d) {};
              \node at (0.75, -0.5) (cut1) {cut};
              \node at (3, -1) (cut2) {cut};
              \draw[-latex, rounded corners=5pt] (a) |- (cut1);
              \draw[-latex, rounded corners=5pt] (c) |- (cut1);
              \draw[-latex, rounded corners=5pt] (b) |- (cut2);
              \draw[-latex, rounded corners=5pt] (d) |- (cut2);
           \end{tikzpicture}
        \end{minipage}
        \caption{Cut-elimination reductions. The $\axl/\cutl$ case is a graph contraction and $\otimes/\parr$ is a rewiring.}
        \label{subfig:cutelimination}
    \end{subfigure}
\caption{Syntax of Multiplicative Linear Logic (MLL).}
\label{fig:mll}
\end{figure}

Multiplicative linear logic (MLL) is a fragment of linear logic \cite{linearlogic} restricted to the tensor $\otimes$ and par $\parr$ connectives which are respectively a sort of conjunction and disjunction. The set $\fmll$ of MLL formulas is defined by the grammar of \Cref{subfig:fmll}. Linear negation $(\cdot)^\bot$ is extended to formulas by involution and De Morgan laws: $X_i^{\bot\bot} = X_i$, $(A \otimes B)^\bot = A^\bot \otimes B^\bot$, and $(A \parr B)^\bot = A^\bot \parr B^\bot$.

MLL proofs can be written in the traditional sequent calculus fashion by constructing trees using the set of rules shown in \Cref{subfig:mllsequent}. These rules use sequents $\vdash \Gamma$ stating the provability of a set of formulas $\Gamma \subseteq \fmll$.
Instead of the MLL sequent calculus, we choose to work with Girard's proof-nets, a ``parallel" syntax for proofs akin to Gentzen's natural deduction which captures the essence of proofs by forgetting the order of rules. In order to define proof-nets, we first define \emph{proof-structures} which are purely computational and structural objects with no logical meaning. They represent skeletons for proofs. In the same spirit as Girard's ludics \cite{girard2001locus}, ``only location matters" at this point. The idea is that when considering the structure of proofs, formulas are nothing more than decorative labels which can be forgotten.
In this syntax, we consider directed hypergraphs constructed with the hyperedges of \Cref{subfig:proofstructures}.

\begin{defi}[Proof-structure, \Cref{fig:proofstructure}]
A \emph{proof-structure} is defined by a tuple $\mathcal{S} = (V,E,\ein{},\eout{},\ell_E)$ where $(V,E,\ein{},\eout{})$ is a directed hypergraph and $\ell_E: E\rightarrow\{\otimes,\parr,\axl,\cutl\}$ is a labelling map on hyperedges.
A proof-structure is subject to these additional constraints:
\begin{itemize}[label=$\triangleright$]
    \item the hyperedges satisfy the arities and labelling constraints shown in \Cref{subfig:proofstructures};
    \item each vertex must be the target of exactly one hyperedge, and the source of at most one hyperedge;
    \item cut hyperedges must connect either:
    \begin{itemize}
    	\item the conclusion of a $\parr$ hyperedge with the conclusion of a $\otimes$ hyperedge, or
    	\item two atoms.
    \end{itemize}
\end{itemize}
\end{defi}

\begin{conv}[Left and right sources]
For practical purposes, the sources of hyperedges are ordered, and we will talk about the ``left" and ``right" sources since there are never more than two; illustrations in \Cref{subfig:proofstructures} implicitly represent the left (\resp right) source on the left (\resp right).
\end{conv}

\begin{nota}[Axioms and cuts]
Let $\mathcal{S}$ be a proof-structure. We write $\axioms{\mathcal{S}}$ (\resp $\cuts{\mathcal{S}}$) the set of axioms (\resp cut) hyperedges in $\mathcal{S}$.
Given $e\in\axioms{\mathcal{S}}$ ($e\in\cuts{\mathcal{S}})$), we write $\concl{e}$ and $\concr{e}$ the left and right conclusion (\resp sources) of $e$ respectively.
\end{nota}

\begin{nota}[Conclusions and atoms]
The \emph{conclusions} of $\mathcal{S}$ are defined by the set $\conclu{\mathcal{S}} = \{v \in V \mid \text{there is no } e \in E \text{ such that } v \in \ein{e}\}$.
Similarly, the \emph{atoms} of $\mathcal{S}$ are defined by the set $\atoms{\mathcal{S}} = \{v \in V \mid \exists e \in \axioms{\mathcal{S}} \text{ such that } v \in \eout{e}\}$. They are conclusions of axiom hyperedges.
\end{nota}

The hyperedges of \Cref{subfig:proofstructures} are the elementary bricks for proof-structures. Notice that the $\otimes$ and $\parr$ hyperedge are structurally identical and that their label is irrelevant. As a first step, we will consider them identical. It is only later, in \Cref{subsec:correctness}, that these two constructions will be distinguished by the logical meaning we associate to them.

\begin{figure}
    \scalebox{0.75}{
    \begin{tikzpicture}
        \node at (-0.75, 0.75) (a) {1};
        \node at (0.75, 0.75) (b) {2};
        \node at (0, 0) (par) {$\parr$};
        \node at (0, -1) (ab) {7};
        \draw[-stealth] (a) -- (par);
        \draw[-stealth] (b) -- (par);
        \draw[-stealth] (par) -- (ab);

        \node at (1.75, 0.75) (f1) {3};
        \node at (6.25, 0.75) (f2) {6};

        \node at (3.25, 0.75) (c) {4};
        \node at (4.75, 0.75) (d) {5};
        \node at (4, 0) (tens) {$\otimes$};
        \node at (4, -1) (cd) {8};
        \draw[-stealth] (c) -- (tens);
        \draw[-stealth] (d) -- (tens);
        \draw[-stealth] (tens) -- (cd);

        \node at (1.75, -1.75) (cut) {cut};
        \draw[-latex, rounded corners=5pt] (ab) |- (cut);
        \draw[-latex, rounded corners=5pt] (cd) |- (cut);
        \node at (0, 1.5) (ax) {ax};
        \draw[-latex, rounded corners=5pt] (ax) -| (a);
        \draw[-latex, rounded corners=5pt] (ax) -| (b);
        \node at (2.5, 1.5) (ax) {ax};
        \draw[-latex, rounded corners=5pt] (ax) -| (f1);
        \draw[-latex, rounded corners=5pt] (ax) -| (c);
        \node at (5.5, 1.5) (ax) {ax};
        \draw[-latex, rounded corners=5pt] (ax) -| (f2);
        \draw[-latex, rounded corners=5pt] (ax) -| (d);
    \end{tikzpicture}}
    \caption{Example of unlabelled proof-structure with vertices in $\nat$.}
    \label{fig:proofstructure}
\end{figure}
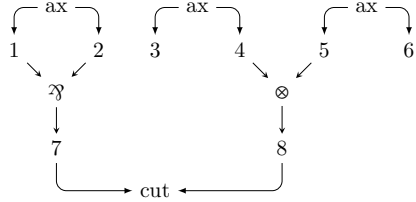

The cut-elimination procedure on proof-structures (corresponding to program execution) is defined as a graph-rewriting system on proof-structures, defined by the two rewriting rules in \Cref{subfig:cutelimination}.

\begin{figure}
    \begin{minipage}{0.25\textwidth}
        \begin{prooftree}
            \hypo{\vdash \Gamma}
            \hypo{\vdash \Delta}
            \infer2[mix]{\vdash \Gamma, \Delta}
        \end{prooftree}
    \end{minipage}
    \begin{minipage}{0.4\textwidth}
        \begin{prooftree}
            \infer0[ax]{\vdash X_1, X_1^\bot}
            \infer0[ax]{\vdash X_2, X_2^\bot}
            \infer2[mix]{\vdash X_1^\bot, X_2^\bot, X_1, X_2}
        \end{prooftree}
    \end{minipage}
    \caption{The MIX rule and a of sequent calculus proof of $X_1 \otimes X_2 \multimap X_1 \parr X_2$ using the MIX rule.}
    \label{fig:mix}
\end{figure}

There exists a remarkable extension of MLL with a rule called MIX (\cf\Cref{fig:mix}), initially studied by Fleury and Rétoré \cite{fleury1994mix}. This rule corresponds to the axiom scheme $A\otimes B\multimap A\parr B$ and constitutes, together with the other rules of MLL, a new proof system called MLL+MIX. Beside this new rule, MLL+MIX works with the same formulas as MLL. In particular, all MLL sequent calculus proofs are MLL+MIX sequent calculus proofs as well.

We now would like to define the underlying proof-structure of an MLL+MIX sequent calculus proof. In order to do so, we define a labelling of the vertices of proof-structures by formulas in order to make proof-structures look like actual proofs. By doing so, we already give a little bit of meaning to the purely computational proof-structures but which is only superficial for the moment.

\begin{defi}[Labelled proof-structure]
A \emph{labelled proof-structure} is a tuple \[\mathcal{S} = (V,E,\ein{},\eout{},\ell_V,\ell_E)\] where $(V,E,\ein{},\eout{},\ell_E)$ is a proof-structure and $\ell_V : V \rightarrow \fmll$ is a function labelling vertices of $V$ by formulas.

We write $\vdash \mathcal{S} : \Gamma$ for a set of formula $\Gamma := \{\ell_V(v) \mid v \in \conclu{\mathcal{S}}\}$ in order to specify the formulas associated to the conclusions of $\mathcal{S}$.
\end{defi}

\begin{figure}
\scalebox{0.9}{
\begin{minipage}{0.15\textwidth}
    \begin{prooftree}
        \infer0[ax]{\vdash A, A^\bot}
    \end{prooftree}
\end{minipage}}
$\rightarrow^{\interp{\cdot}}$
\scalebox{0.75}{
\begin{minipage}{0.15\textwidth}
    \begin{tikzpicture}
      \node at (0, 0) (a) {$A$};
      \node at (1, 0) (ad) {$A^\bot$};
      \node at (0.5, 0.75) (ax) {$ax$};
      \draw[-latex, rounded corners=5pt] (ax) -| (a);
      \draw[-latex, rounded corners=5pt] (ax) -| (ad);
    \end{tikzpicture}
\end{minipage}}
\qquad
\scalebox{0.9}{
\begin{minipage}{0.25\textwidth}
    \begin{prooftree}
        \hypo{\pi_1}
        \ellipsis{}{\vdash \Gamma, A}
        \hypo{\pi_2}
        \ellipsis{}{\vdash \Delta, A^\bot}
        \infer2[cut]{\vdash \Gamma, \Delta}
    \end{prooftree}
\end{minipage}}
$\rightarrow^{\interp{\cdot}}$
\scalebox{0.75}{
\begin{minipage}{0.4\textwidth}
    \begin{tikzpicture}
      \draw (-0.75,0.5) rectangle (0.5,1.25);
      \node at (-0.15, 0.9) (label) {$\interp{\pi_1}$};
      \draw (1.5,0.5) rectangle (2.75,1.25);
      \node at (2.15, 0.9) (label) {$\interp{\pi_2}$};
      \node at (-0.5, 0) (gamma) {$\Gamma$};
      \node at (2.5, 0) (delta) {$\Delta$};
      \node at (0, 0) (a) {$A$};
      \node at (2, 0) (ad) {$A^\bot$};
      \draw[-latex] (-0.75, 0.5) -| (a);
      \draw[-latex] (-0.75, 0.5) -| (gamma);
      \draw[-latex] (1.5, 0.5) -| (ad);
      \draw[-latex] (1.5, 0.5) -| (delta);
      \node at (1, -0.75) (cut) {cut};
      \draw[-latex, rounded corners=5pt] (a) |- (cut);
      \draw[-latex, rounded corners=5pt] (ad) |- (cut);
    \end{tikzpicture}
\end{minipage}}

\bigskip
\scalebox{0.9}{
\begin{minipage}{0.18\textwidth}
    \begin{prooftree}
        \hypo{\pi}
        \ellipsis{}{\vdash \Gamma, A, B}
        \infer1[$\parr$]{\vdash \Gamma, A \parr B}
    \end{prooftree}
\end{minipage}}
$\rightarrow^{\interp{\cdot}}$
\scalebox{0.75}{
\begin{minipage}{0.25\textwidth}
    \begin{tikzpicture}
      \draw (-1,1.25) rectangle (1.75,2);
      \node at (0.4, 1.6) (label) {$\interp{\pi}$};
      \node at (1.5, 0.75) (gamma) {$\Gamma$};
      \node at (-0.75, 0.75) (a) {$A$};
      \node at (0.75, 0.75) (b) {$B$};
      \draw[-latex] (-1, 1.25) -| (gamma);
      \draw[-latex] (-1, 1.25) -| (a);
      \draw[-latex] (-1, 1.25) -| (b);
      \node at (0, 0) (par) {$\parr$};
      \node at (0, -0.75) (ab) {$A \parr B$};
      \draw[-stealth] (a) -- (par);
      \draw[-stealth] (b) -- (par);
      \draw[-stealth] (par) -- (ab);
    \end{tikzpicture}
\end{minipage}}
\qquad
\scalebox{0.9}{
\begin{minipage}{0.23\textwidth}
    \begin{prooftree}
        \hypo{\pi_1}
        \ellipsis{}{\vdash \Gamma, A}
        \hypo{\pi_2}
        \ellipsis{}{\vdash \Delta, B}
        \infer2[$\otimes$]{\vdash \Gamma, \Delta, A \otimes B}
    \end{prooftree}
\end{minipage}}
$\rightarrow^{\interp{\cdot}}$
\scalebox{0.75}{
\begin{minipage}{0.25\textwidth}
    \begin{tikzpicture}
      \draw (-1.5,1.25) rectangle (-0.5,2);
      \draw (0.5,1.25) rectangle (1.5,2);
      \node at (-1, 1.65) (label) {$\interp{\pi_1}$};
      \node at (1, 1.65) (label) {$\interp{\pi_2}$};
      \node at (-1.25, 0.75) (gamma) {$\Gamma$};
      \node at (1.25, 0.75) (delta) {$\Delta$};
      \node at (-0.75, 0.75) (a) {$A$};
      \node at (0.75, 0.75) (b) {$B$};
      \draw[-latex] (-1.5, 1.25) -| (gamma);
      \draw[-latex] (-1.5, 1.25) -| (a);
      \draw[-latex] (1.5, 1.25) -| (delta);
      \draw[-latex] (1.5, 1.25) -| (b);
      \node at (0, 0) (par) {$\otimes$};
      \node at (0, -0.75) (ab) {$A \otimes B$};
      \draw[-stealth] (a) -- (par);
      \draw[-stealth] (b) -- (par);
      \draw[-stealth] (par) -- (ab);
    \end{tikzpicture}
\end{minipage}}

\bigskip
\scalebox{0.9}{
\begin{minipage}{0.2\textwidth}
    \begin{prooftree}
        \hypo{\pi_1}
        \ellipsis{}{\vdash \Gamma}
        \hypo{\pi_2}
        \ellipsis{}{\vdash \Delta}
        \infer2[mix]{\vdash \Gamma, \Delta}
    \end{prooftree}
\end{minipage}}
$\rightarrow^{\interp{\cdot}}$
\scalebox{0.75}{
\begin{minipage}{0.3\textwidth}
    \begin{tikzpicture}
      \draw (-1.5,1.25) rectangle (-0.5,2);
      \draw (0.5,1.25) rectangle (1.5,2);
      \node at (-1, 1.65) (label) {$\interp{\pi_1}$};
      \node at (1, 1.65) (label) {$\interp{\pi_2}$};
      \node at (-1.25, 0.75) (gamma) {$\Gamma$};
      \node at (1.25, 0.75) (delta) {$\Delta$};
      \draw[-latex] (-1.5, 1.25) -| (gamma);
      \draw[-latex] (1.5, 1.25) -| (delta);
    \end{tikzpicture}
\end{minipage}}
\caption{Translation of MLL+MIX sequent calculus proofs into labelled proof-structures.}
\label{fig:sequent2structure}
\end{figure}
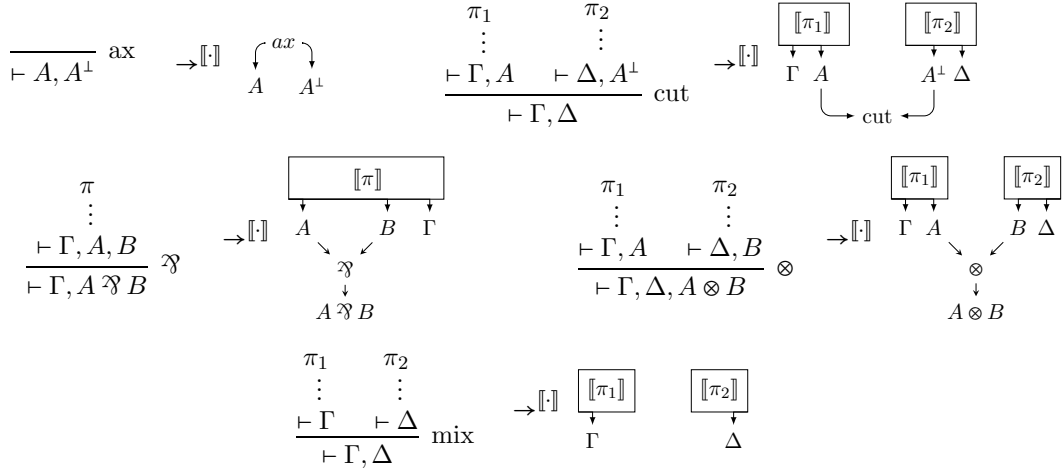

In \Cref{fig:sequent2structure}, we define a translation $\interp{\cdot}$ from MLL+MIX sequent calculus derivations to labelled proof-structures. Notice that this translation is not surjective, and that some proof-structures do not represent sequent calculus proofs. This is tackled by the \emph{correctness criterion}, which characterises those proof-structures that do translate sequent calculus proofs through topological properties and which are considered ``correct". This is discussed in \Cref{subsec:correctness} but for the time being, we give a preliminary definition of proof-net, the proof-structures coming from sequent calculus proofs.

Also notice that the MIX rule corresponds to allowing disjoint union of proof-structures as being ``correct". Although not ``logical" (\ie not coming from MLL sequent calculus which decomposes intuitionistic and classical logic), MLL+MIX proofs keep interesting computational properties which naturally appear in various models of linear logic such as coherence spaces \cite[Chapter 4]{linearlogic}.

\begin{defi}[MLL and MLL+MIX proof-nets]
\label{def:proofnets}
An MLL (\resp MLL+MIX) \emph{proof-net} is a proof-structure $\mathcal{S}$ such that there exists an MLL (\resp MLL+MIX) sequent calculus proof $\pi$ such that $\mathcal{S}=\interp{\pi}$.
\end{defi}

In this paper, we show that both MLL and MLL+MIX can be interpreted in the stellar resolution.

% --------------------------------------------------
\subsection{Simulation of cut-elimination}\label{subsec:cutelim}
% --------------------------------------------------

Early investigations on the cut-elimination \cite{goi1}, simplified with Seiller's interaction graphs \cite{seiller2012interaction}, show that cut-elimination can be considered much simpler than the standard graph rewriting of \Cref{subfig:cutelimination}.
The $\otimes/\parr$ cut-elimination rule pushes cuts to the top of the proof-structure (axioms) and the $\axl/\cutl$ rule identifies some atoms by contraction. It shows that we can see a proof-structure as a connexion between a permutation of atoms representing axioms and a partial permutation on atoms representing cuts (\cf\Cref{fig:goi}). The cut-elimination procedure is then seen as a computation of maximal alternating paths between the graph of these two permutations or equivalently as the complete edge contraction of a bipartite graph.

When considering the computational content of proofs, the connectives $\otimes$ and $\parr$ are then irrelevant since no reference to logic exists at this point and that the $\otimes/\parr$ cut-elimination is a simple rewiring. For that reason, the simulation of cut-elimination in the stellar resolution only deals with the translation of axioms and cuts as binary stars with rays representing the \emph{address} of atoms. The interpretation is the same for both MLL and MLL+MIX since they have the same cut-elimination.

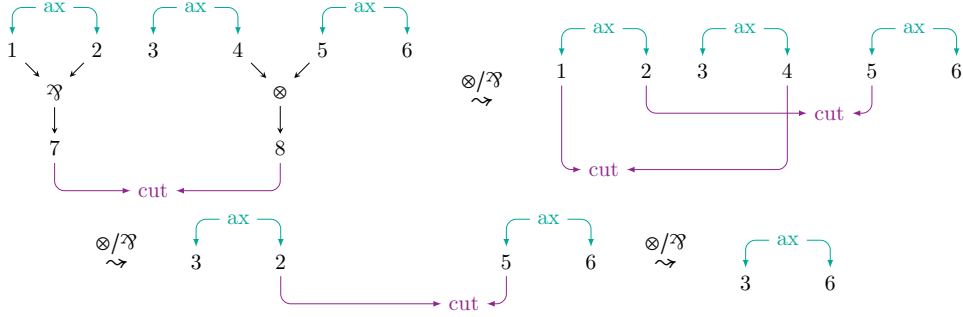
\begin{figure}
    \begin{minipage}{0.4\textwidth}
        \scalebox{0.75}{
        \begin{tikzpicture}
            \node at (-0.75, 0.75) (a) {1};
            \node at (0.75, 0.75) (b) {2};
            \node at (0, 0) (par) {$\parr$};
            \node at (0, -1) (ab) {7};
            \draw[-stealth] (a) -- (par);
            \draw[-stealth] (b) -- (par);
            \draw[-stealth] (par) -- (ab);

            \node at (1.75, 0.75) (f1) {3};
            \node at (6.25, 0.75) (f2) {6};

            \node at (3.25, 0.75) (c) {4};
            \node at (4.75, 0.75) (d) {5};
            \node at (4, 0) (tens) {$\otimes$};
            \node at (4, -1) (cd) {8};
            \draw[-stealth] (c) -- (tens);
            \draw[-stealth] (d) -- (tens);
            \draw[-stealth] (tens) -- (cd);

            \node[Plum] at (1.75, -1.75) (cut) {cut};
            \draw[Plum, -latex, rounded corners=5pt] (ab) |- (cut);
            \draw[Plum, -latex, rounded corners=5pt] (cd) |- (cut);
            \node[JungleGreen] at (0, 1.5) (ax) {ax};
            \draw[JungleGreen, -latex, rounded corners=5pt] (ax) -| (a);
            \draw[JungleGreen, -latex, rounded corners=5pt] (ax) -| (b);
            \node[JungleGreen] at (2.5, 1.5) (ax) {ax};
            \draw[JungleGreen, -latex, rounded corners=5pt] (ax) -| (f1);
            \draw[JungleGreen, -latex, rounded corners=5pt] (ax) -| (c);
            \node[JungleGreen] at (5.5, 1.5) (ax) {ax};
            \draw[JungleGreen, -latex, rounded corners=5pt] (ax) -| (f2);
            \draw[JungleGreen, -latex, rounded corners=5pt] (ax) -| (d);
        \end{tikzpicture}}
    \end{minipage}
    $\overset{\otimes/\parr}{\leadsto}\quad$
    \begin{minipage}{0.45\textwidth}
        \scalebox{0.75}{
        \begin{tikzpicture}
            \node at (-0.75, 0.75) (a) {1};
            \node at (0.75, 0.75) (b) {2};

            \node at (1.75, 0.75) (f1) {3};
            \node at (6.25, 0.75) (f2) {6};

            \node at (3.25, 0.75) (c) {4};
            \node at (4.75, 0.75) (d) {5};

            \node[Plum] at (0, -1) (cut) {cut};
            \draw[Plum, -latex, rounded corners=5pt] (a) |- (cut);
            \draw[Plum, -latex, rounded corners=5pt] (c) |- (cut);

            \node[Plum] at (4, 0) (cut) {cut};
            \draw[Plum, -latex, rounded corners=5pt] (b) |- (cut);
            \draw[Plum, -latex, rounded corners=5pt] (d) |- (cut);

            \node[JungleGreen] at (0, 1.5) (ax) {ax};
            \draw[JungleGreen, -latex, rounded corners=5pt] (ax) -| (a);
            \draw[JungleGreen, -latex, rounded corners=5pt] (ax) -| (b);
            \node[JungleGreen] at (2.5, 1.5) (ax) {ax};
            \draw[JungleGreen, -latex, rounded corners=5pt] (ax) -| (f1);
            \draw[JungleGreen, -latex, rounded corners=5pt] (ax) -| (c);
            \node[JungleGreen] at (5.5, 1.5) (ax) {ax};
            \draw[JungleGreen, -latex, rounded corners=5pt] (ax) -| (f2);
            \draw[JungleGreen, -latex, rounded corners=5pt] (ax) -| (d);
        \end{tikzpicture}}
    \end{minipage}

    $\overset{\otimes/\parr}{\leadsto}\quad$
    \begin{minipage}{0.4\textwidth}
        \scalebox{0.75}{
        \begin{tikzpicture}
            \node at (-0.75, 0.75) (a) {3};
            \node at (0.75, 0.75) (b) {2};

            \node at (6.25, 0.75) (f2) {6};

            \node at (4.75, 0.75) (d) {5};
            \node[Plum] at (4, 0) (cut) {cut};
            \draw[Plum, -latex, rounded corners=5pt] (b) |- (cut);
            \draw[Plum, -latex, rounded corners=5pt] (d) |- (cut);

            \node[JungleGreen] at (0, 1.5) (ax) {ax};
            \draw[JungleGreen, -latex, rounded corners=5pt] (ax) -| (a);
            \draw[JungleGreen, -latex, rounded corners=5pt] (ax) -| (b);
            \node[JungleGreen] at (5.5, 1.5) (ax) {ax};
            \draw[JungleGreen, -latex, rounded corners=5pt] (ax) -| (f2);
            \draw[JungleGreen, -latex, rounded corners=5pt] (ax) -| (d);
        \end{tikzpicture}}
    \end{minipage}
    $\overset{\otimes/\parr}{\leadsto}\quad$
    \begin{minipage}{0.2\textwidth}
        \scalebox{0.75}{
        \begin{tikzpicture}
            \node at (-0.75, 0.75) (a) {3};
            \node at (0.75, 0.75) (b) {6};

            \node[JungleGreen] at (0, 1.5) (ax) {ax};
            \draw[JungleGreen, -latex, rounded corners=5pt] (ax) -| (a);
            \draw[JungleGreen, -latex, rounded corners=5pt] (ax) -| (b);
        \end{tikzpicture}}
    \end{minipage}
    \caption{Cut-elimination for the proof-structure of \Cref{fig:proofstructure} represented as the juxtaposition of two partial permutations or graphs on atoms as suggested in the GoI.}
    \label{fig:goi}
\end{figure}

In order to encode proof-structures, we fix a \emph{basis of representation} $\mathcal{B}$ with variables $V=\{X\}$, colours $C=\{+c, -c, +t, -t\}$, function symbols $F=\{c, t, \lc, \rc, \gc, \cdot, p_A, q_A\}$ for $A \in \fmll$ such that $c, t, p_A$ and $q_A$ are unary, $\cdot$ is binary and the symbols $\lc, \rc$ and $\gc$ are constants. We define $\op(+c) = -c$, $\op(+t) = -t$, $\floor{\pm c} = c$ and $\floor{\pm t} = t$ for $\pm \in \{+, -\}$.

Similarly to unlabelled proof-structures, constellations are purely locative: only the locations appearing in a proof-structure $\mathcal{S}$ are translated, without regard to labels. We would like to associate a unique address in $\terms{\mathbb{B}}$ to the atoms $v \in \atoms{\mathcal{S}}$ of a proof-structure $\mathcal{S}$.
The address of $v$ will be a term $\ulocus{v'}{t}$ where $t$ is a path encoded as a sequence of $\lc$ (left) and $\rc$ (right) symbols representing the direction to follow in $\mathcal{S}$ to get from the conclusion $v' \in \conclu{\mathcal{S}}$ to the atom $v$.

For convenience, we suggest an inductive definition of proof-structures based on their underlying hypergraph.

\begin{rem}[Inductive definition of proof-structures]
\label{rem:inductiveps}
A proof-structure with only one hyperedge is necessarily an axiom with two conclusions, written $\indax{u,v}$. Then a proof-structure $\mathcal{S}$ with $n$ hyperedges is either built from the union of two proof-structures, with respectively $k$ and $n-k$ hyperedges (written $\indunion{\mathcal{S}_1}{\mathcal{S}_2}$), or from a proof-structure with $n-1$ hyperedges extended by either a $\otimes$, $\parr$, or cut hyperedge on two of its conclusions $u$ (left) and $v$ (right). This is written $\indtens{u,v}{\mathcal{S}'}$, $\indpar{u,v}{\mathcal{S}'}$ and $\indcut{u,v}{\mathcal{S}'}$.
\end{rem}

We use this inductive definition to define the \emph{address} of atoms in a proof-structure.

\begin{defi}[Vertex above another one]
A vertex $v$ is \emph{above} another vertex $u$, written in a proof-structure if there exists a directed path (\cf \Cref{sec:hypergraphs}) from $v$ to $u$ going through only $\otimes$ and $\parr$ hyperedges.
\end{defi}

\begin{figure}
    \scalebox{0.85}{
    \begin{tikzpicture}
        \node at (-0.75, 0.75) (a) {$\hdots$};
        \node at (0.75, 0.75) (b) {$\hdots$};
        \node at (0, 0) (par) {$\parr$};
        \node at (0, -1) (ab) {0};
        \draw[-stealth] (a) -- (par);
        \draw[-stealth, Emerald] (b) -- (par) node[midway, right, xshift=2pt, yshift=-2pt] {$\rc$};
        \draw[-stealth, Emerald] (par) -- (ab);
        \node at (2.25, 2.5) (dots) {$\hdots$};
        \node[Bittersweet] at (3.25, 2.5) (oa) {3};
        \node at (4.25, 2.5) (dots) {$\hdots$};

        \node[Emerald] at (0, 2.5) (c) {1};
        \node at (1.25, 2.5) (d) {2};
        \node at (0.75, 1.75) (tens) {$\otimes$};
        \draw[-stealth, Emerald] (c) -- (tens) node[midway, left, yshift=-4pt] {$\lc$};
        \draw[-stealth] (d) -- (tens);
        \draw[-stealth, Emerald] (tens) -- (b);
        
        \node[Emerald] at ($(ab)+(-4,2)$) (l1) {$p_0(\rc \cdot \lc \cdot X)$};
        \draw[-stealth, Emerald] (l1) edge[bend left] (c);
        
        \node[Bittersweet] at ($(ab)+(5,1)$) (l2) {$p_3(X)$};
        \draw[-stealth, Bittersweet] (l2) edge[bend left] (oa);
    \end{tikzpicture}}
    \caption{Addressing of the atoms $1$ and $3$ in a proof-structure relatively to the conclusion they come from.}
    \label{fig:addresses}
\end{figure}
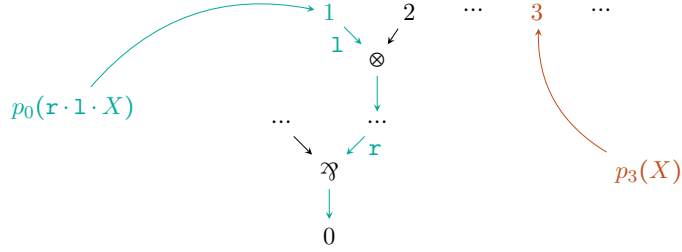

\begin{defi}[Address of an atom]
\label{def:addr}
We define the \emph{path address} $\paddr_\mathcal{S}(v,X)$ to an atom $v$ in a proof-structure $\mathcal{S}$ \wrt the variable $X$ inductively (\cf Remark~\ref{rem:inductiveps}):
\begin{itemize}[label=$\triangleright$]
    \item $\paddr_\mathcal{S}(v,X)=X$ when $\mathcal{S} = \indax{v,\ast}$ or $\mathcal{S} = \indax{\ast,v}$;
    \item $\paddr_\mathcal{S}(v,X)=\paddr_{\mathcal{S}_i}(v,X)$ if $\mathcal{S} = \mathcal{S}_1 \uplus \mathcal{S}_2$ and $v \in V^{\mathcal{S}_i}$;
    \item $\paddr_\mathcal{S}(v,X)= \lc \cdot \paddr_{\mathcal{S}'}(v,X)$ if $\mathcal{S} = \indpar{v,\ast}{\mathcal{S}'}$ or $\mathcal{S} = \indtens{v,\ast}{\mathcal{S}'}$;
    \item $\paddr_\mathcal{S}(v,X)= \rc \cdot \paddr_{\mathcal{S}'}(v,X)$ if $\mathcal{S} = \indpar{\ast,v}{\mathcal{S}'}$ or $\mathcal{S} = \indtens{\ast,v}{\mathcal{S}'}$ and $\paddr_\mathcal{S}(v,X)=\paddr_{\mathcal{S}'}(v,X)$ otherwise.
\end{itemize}
The path address to $v$ is uniquely defined \wrt to a conclusion $c \in \conclu{\mathcal{S'}}$ where $\mathcal{S}'$ is $\mathcal{S}$ without cuts, \ie $E^{\mathcal{S}'} = E^\mathcal{S}\backslash\cuts{\mathcal{S}}$ and the rest of $\mathcal{S}$ is defined as in $\mathcal{S}'$.

The \emph{address} of $v$ is then defined as the term $\addr_\mathcal{S}(v,X) := \ulocus{c}{\paddr_\mathcal{S}(v,X)}$.
\end{defi}

\begin{exa}
\Cref{fig:addresses} illustrates the idea of addressing of atoms. The address of the atom $1$ in \Cref{fig:proofstructure} is $\ulocus{7}{\lc \cdot X}$ because it is reachable from the conclusion $7$ by going to the left premise and the address of the atom $3$ is $\ulocus{3}{X}$ because it is directly reachable.
\end{exa}

\begin{defi}[Set of addresses]
\label{def:setofaddresses}
We define $\setofaddresses{S}$ as the \emph{set of addresses} of the shape $\addr_\mathcal{S}(\_,X)$, \ie the countable set of all terms of the form $\ulocus{c}{f_1 \cdot ... \cdot f_n \cdot X}$ where $c$ ranges over conclusions of $\mathcal{S}$ and $f_i \in \{\lc, \rc\}$.
\end{defi}

\begin{nota}[Unary colour]
We will often write $\ccol{+c.t}$ instead of $\ccol{+c(t)}$ for rays having a unary colour as prefix.
\end{nota}

\begin{defi}[Colour change]
Let $\mathbb{B} = (V, C, F, \op, \floor{\cdot})$ and $\mathbb{B}' = (V', C', F', \op', \floor{\cdot}')$ be two signatures. A \emph{colour change} from $\mathbf{B}$ to $\mathbf{B}'$ is a total injective function $\mu : C \rightarrow C'$.

A colour change of a constellation $\Phi$ is a constellation $\mu(\Phi)$ over $\mathbb{B}'$ where the function $\mu : \colours{\Phi} \rightarrow C'$ is a colour change such that $\colours{\Phi}$ is the set of colours in $\Phi$ and $\mu(\Phi)$ is defined by replacing the colours $c$ by $\mu(c)$ in $\Phi$. 
\end{defi}

\begin{defi}[Colour shift]
\label{def:colourshift}
A \emph{colour shift} from a signature $\mathbb{B} = (V, C, F, \op, \floor{\cdot})$ to $\mathbb{B}' = (V', C', F', \op', \floor{\cdot}')$ is a colour change $\mu$ from $\mathbb{B}$ to $\mathbb{B}'$ such that $\floor{\mu(c)} = \op(\floor{\mu(\op(c)))})$.
\end{defi}

\begin{prop}
\label{prop:colourshift}
Let $\Phi$ be a constellation, $\mu$ a colour shift and $A \subseteq C$ a set of colours. We have $\dgraph[A]{\Phi} \simeq \dgraph[A]{\mu(\Phi)}$.
\end{prop}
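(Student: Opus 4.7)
The plan is to exhibit the required isomorphism of labelled multigraphs as the identity on index sets, which are untouched by $\mu$, and then to verify that this identification preserves and reflects edges and their labels.

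First, since a colour shift relabels only the colour symbols occurring inside rays, without modifying the indexing structure of a constellation or of its stars, we have $I_{\mu(\Phi)} = I_\Phi$ and $I_{\mu(\Phi)[i]} = I_{\Phi[i]}$ for every star index $i$. This gives a canonical identification of vertex sets $V^{\dgraph[A]{\Phi}} = V^{\dgraph[A]{\mu(\Phi)}}$, with ``$A$'' on the right read as the image $\mu(A)$ in the target signature when $\mathbb{B} \neq \mathbb{B}'$.

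The substantive step is to show the edge relation is preserved. By Definition~\ref{def:duality}, two rays $r, r'$ are $A$-dual iff both carry a colour in $A$ and the equation $\{r \eqq \op(r')\}$ admits a unifier. The colour-shift condition of Definition~\ref{def:colourshift} is precisely what is needed to obtain $\op \circ \mu = \mu \circ \op$ on colours, whence $\op(\mu(r')) = \mu(\op(r'))$ for every ray $r'$. Unifiability is preserved by any injective relabelling of function symbols, since the Martelli-Montanari rules (\cf\Cref{sec:unification}) act purely structurally: a unifier $\theta$ of $\{r \eqq \op(r')\}$ transports to $\mu \circ \theta \circ \mu^{-1}$ as a unifier of $\{\mu(r) \eqq \op(\mu(r'))\}$, and the inverse relabelling $\mu^{-1}$ (defined on the image of $\mu$) provides the converse direction. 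Combined with the injectivity of $\mu$ on colours, which guarantees that $r$ carries a colour in $A$ iff $\mu(r)$ carries one in $\mu(A)$, this gives the desired bijection on edges. Finally, the edge labelling $\ell(i,i') = (j,j')$ records only ray indices within stars, so it is preserved verbatim.

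The main obstacle is not mathematical but a matter of bookkeeping: one must check that the condition in Definition~\ref{def:colourshift} indeed enforces the commutation $\mu \circ \op = \op \circ \mu$ throughout $C$, and reconcile the notational ambiguity arising from writing the same symbol ``$A$'' on both sides when $\mu$ moves between distinct signatures. Once those points are fixed, the proposition is an immediate consequence of the structural invariance of first-order unification under injective relabelling of function symbols.
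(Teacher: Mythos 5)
Your proof is correct and follows essentially the same route as the paper's: both reduce the claim to showing that the duality relation $\match_A$ is preserved and reflected by $\mu$, using injectivity and totality to guarantee that unifiability and non-unifiability of rays are both invariant under the relabelling. The only differences are cosmetic --- the paper establishes unification-invariance by structural induction on the ray whereas you transport the unifier globally via $\mu\circ\theta\circ\mu^{-1}$, and you are more explicit than the paper about two points it glosses over, namely that Definition~\ref{def:colourshift} is what yields $\op\circ\mu=\mu\circ\op$ (needed to rewrite $\op(\mu(r'))$ as $\mu(\op(r'))$) and that the colour set on the right-hand side should be read as $\mu(A)$ when the signatures differ.
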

\begin{proof}
We show that the matchability is preserved. Let $r$ and $r'$ two dual rays. By induction on $r$. If $r$ is a variable, then even if we change the symbols of $r'$, $x$ is still $\alpha$-unifiable with $r'$. If $r$ is $f(r_1, ..., r_n)$, then we must have $r' := f(r_1', ..., r_n')$. If $f$ is a colour for which $\mu$ is defined the two $f$ of $r$ and $r'$ become $\mu(f)$ which preserves the $\alpha$-unifiability. We conclude with the induction hypothesis for the rays $r_i$ and $r_i'$. Conversely, if $r$ and $r'$ are not matchable, they must be two terms with a mismatch of function symbol between $f$ and $g$ such that $f \neq g$. In this case, since $\mu$ is injective and total, we cannot have $\mu(f) = \mu(g)$. Hence, the non $\alpha$-unifiability is preserved as well.
\end{proof}

\begin{defi}[Full colouration of constellation]
\label{def:colouration}
The \emph{full colouration} $\ccol{c}.\Phi$ of a constellation $\Phi$ is defined by a constellation $\mu(\Phi)$ where $\mu$ is a colour shift such that $\mu(x) = c$.
\end{defi}

\begin{defi}[Translation of the computational content of proof]
\label{def:vehicle}
The \emph{vehicle} and the \emph{cuts} of a proof-structure $\mathcal{S}$ are respectively defined by the following constellations: \[\axof{\mathcal{S}} := \sum_{e \in \axioms{\mathcal{S}}} [\addr_\mathcal{S}(\concl{e},X)), \addr_\mathcal{S}(\concr{e},X)],
\qquad
\cutof{\mathcal{S}} := \sum_{e \in \mathrm{Cut}(\mathcal{S})}[\ulocus{\concl{e}}{X}, \ulocus{\concr{e}}{X}].\]
We define the \emph{computational content} of $\mathcal{S}$ as the constellation $\compof{\mathcal{S}} := \ccol{+c.\axof{\mathcal{S}}} \uplus \ccol{-c.\cutof{\mathcal{S}}}$.
\end{defi}

We now show that the execution $\exec(\compof{\mathcal{S}})$, which can be understood as an interaction between $\ccol{+c.\axof{\mathcal{S}}}$ and $\ccol{-c.\cutof{\mathcal{S}}}$, has the same behaviour as the cut-elimination on $\mathcal{S}$. This shows that cut-elimination for proof-structures can be simulated in the stellar resolution.

\begin{lem}[Simulation of cut-elimination]
\label{lem:cutelim}
Let $\mathcal{R}$ be a proof-structure such that $\mathcal{R} \leadsto \mathcal{S}$. We have $\exec(\compof{\mathcal{R}}) = \exec(\compof{\mathcal{S}})$.
\end{lem}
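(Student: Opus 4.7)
The plan is to proceed by case analysis on the cut-elimination step $\mathcal{R} \leadsto \mathcal{S}$, using the two local rules of \Cref{subfig:cutelimination}. In both cases the goal will be to exhibit a correspondence between saturated correct diagrams of $\compof{\mathcal{R}}$ and $\compof{\mathcal{S}}$ that preserves actualisation, and to invoke Theorem~\ref{thm:eqfusionactu} or Lemma~\ref{lem:simfusion} to conclude.

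For the $\axl/\cutl$ case, the reduction affects two axioms with conclusions $\{v_1,v_2\}$ and $\{v_3,v_4\}$ together with a cut between the atoms $v_2$ and $v_3$. In $\compof{\mathcal{R}}$ this corresponds to the three stars
\[
s_1=[\ccol{+c.p_{v_1}(X)},\ccol{+c.p_{v_2}(X)}],\quad s_2=[\ccol{+c.p_{v_3}(X)},\ccol{+c.p_{v_4}(X)}],\quad k=[\ccol{-c.p_{v_2}(X)},\ccol{-c.p_{v_3}(X)}],
\]
while in $\compof{\mathcal{S}}$ they are replaced by the single star $s'=[\ccol{+c.p_{v_1}(X)},\ccol{+c.p_{v_4}(X)}]$. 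I would observe that the rays of $k$ can only match rays in $s_1$ and $s_2$ (the locations $p_{v_2}$ and $p_{v_3}$ occur nowhere else in the constellation), so in any saturated diagram of $\compof{\mathcal{R}}$ that uses $k$, the triple $(s_1,k,s_2)$ must appear connected. Applying Lemma~\ref{lem:simfusion} twice, once for each of the two links inside this triple, yields precisely $s'$ (the unifications are trivial since the terms inside each $p_{v_i}$ are bare variables). This induces a bijection $\csatdiags(\compof{\mathcal{R}})\simeq\csatdiags(\compof{\mathcal{S}})$ that commutes with actualisation, establishing the claim in this case.

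For the $\otimes/\parr$ case, no fusion step occurs at the constellation level; instead, the rewiring corresponds to a reindexing of addresses. Writing $u$ (with premises $u_l,u_r$) for the $\otimes$ conclusion and $v$ (with premises $v_l,v_r$) for the $\parr$ conclusion, the cut $[\ccol{-c.p_u(X)},\ccol{-c.p_v(X)}]$ in $\compof{\mathcal{R}}$ is replaced in $\compof{\mathcal{S}}$ by the two cuts $[\ccol{-c.p_{u_l}(X)},\ccol{-c.p_{v_l}(X)}]$ and $[\ccol{-c.p_{u_r}(X)},\ccol{-c.p_{v_r}(X)}]$. Correspondingly, every axiom ray of the form $\ccol{+c.p_u(\lc\cdot t)}$ (respectively $\ccol{+c.p_u(\rc\cdot t)}$, $\ccol{+c.p_v(\lc\cdot t)}$, $\ccol{+c.p_v(\rc\cdot t)}$) becomes $\ccol{+c.p_{u_l}(t)}$ (respectively $\ccol{+c.p_{u_r}(t)}$, $\ccol{+c.p_{v_l}(t)}$, $\ccol{+c.p_{v_r}(t)}$). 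I would define a bijection between diagrams by sending each edge through the $\mathcal{R}$-cut into the corresponding edge through the $l$- or $r$-cut in $\mathcal{S}$, according to whether the matching unifier specialises $X$ to $\lc\cdot Y$ or $\rc\cdot Y$. The key point is that in $\compof{\mathcal{R}}$ the solution of the local unification sub-problem at the cut forces a split into a ``left part'' and a ``right part'', and this split is exactly what the two cuts in $\compof{\mathcal{S}}$ encode directly, with the trivial unifier replacing the $\lc/\rc$ peel-off.

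The main obstacle is the rigorous verification that the address renaming in the $\otimes/\parr$ case preserves actualisation: one has to check that the principal unifier of the underlying problem of a diagram in $\compof{\mathcal{R}}$, when restricted to the free rays, agrees with that of the translated diagram in $\compof{\mathcal{S}}$. This follows because the substitutions produced on the $\mathcal{R}$-side are always of the form $\{X\mapsto\lc\cdot Y\}$ or $\{X\mapsto\rc\cdot Y\}$ and act only on the cut variable, whose specialisation coincides with the propagation obtained by unifying the corresponding atomic rays through the two replacement cuts in $\mathcal{S}$. Once both cases are established, the equality $\exec(\compof{\mathcal{R}})=\exec(\compof{\mathcal{S}})$ follows by collecting actualisations through the exhibited bijection.
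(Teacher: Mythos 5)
Your proof follows essentially the same route as the paper's: a case analysis on the two reduction rules, a local fusion argument exploiting the determinism of the cut star's possible connexions for the $\axl/\cutl$ case, and a correspondence of diagrams via address relocation (the $\lc/\rc$ peel-off) for the $\otimes/\parr$ case, concluded through Lemma~\ref{lem:simfusion} and the partial pre-execution Lemma~\ref{lem:ppe}. The only minor imprecision is that in the $\axl/\cutl$ case you write the surviving axiom conclusions $v_1, v_4$ with bare addresses $p_{v_1}(X)$, $p_{v_4}(X)$, whereas in general these atoms may sit below $\otimes/\parr$ links or be involved in further cuts and thus carry addresses of the form $\varphi(p_w(t))$ for a nontrivial path $t$ and colouring; the paper keeps these generic, but the argument is unaffected.
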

\begin{proof}
For convenience, we will sometimes use the notations of Definition~\ref{rem:inductiveps}. By case analysis on a reducible cut selected in $\mathcal{R}$.
\begin{itemize}[label=$\blacktriangleright$]
    \item Assume we have an $\axl/\cutl$ cut on two vertices $v_1$ and $v_2$ with $v_1$ conclusions of an axiom $\indax{v_0,v_1}$ and $v_0 \neq v_2$ such that $v_2$ is conclusion of some $e \in E^{\mathcal{R}_\Gamma}$ (\ie $v_2 \in \eout{e}$) where $\mathcal{R}_\Gamma$ is the rest of the proof-structure.
    Then we have an $\axl/\cutl$ cut-elimination which removes $v_1$ and $v_2$ then updates the targets $\eout{e}$ of $e$ with $\eout{e} := \eout{e}\subst{v_2}{v_0}$ ($v_2$ and $v_0$ now refer to the same location by replacing $v_2$ by $v_0$ in the targets of $e$). This produces a new proof-structure $\mathcal{S}$ where the only remaining cuts are the ones of $\mathcal{R}_\Gamma$.
    We have $\compof{\mathcal{R}} = \compof{\mathcal{R}_\Gamma} + [\vcut{v_1}{X}, \vcut{v_2}{X}]$.
    By the definition of vehicle (\cf Definition~\ref{def:vehicle}) and addresses (\cf Definition~\ref{def:addr}), we necessarily have a star $[\varphi(\clocus{v_0}{t}), \clocus{v_1}{X}] \in \axof{\mathcal{R}_\Gamma}$ for some $t$ (with a colouring $\varphi$ depending on whether the ray is related to a cut or not and $t=X$ if $v_0$ is not source of another hyperedge) and $[\clocus{v_2}{X}, r] \in \axof{\mathcal{R}_\Gamma}$ for some ray $r$.
    By fusion, the cut star will merge these two stars and form $\phi := [\varphi(\clocus{v_0}{t}), r]$. We finally obtain the constellation $\compof{\Gamma} + \phi$.
    The translation of $\mathcal{S}$ coincides with the previous constellation obtained by fusion because what what connected to $v_2$ ($r$) by an axiom is now connected $v_0$ in $\phi$, hence $\compof{\mathcal{S}} = \compof{\Gamma} + \phi$. It is indeed a relocation of atom. By partial pre-execution (\cf Lemma~\ref{lem:ppe}), we can focus on this step of fusion and preserve the correct diagrams without adding more diagrams because there is only one choice of connexion (this subgraph of $\dgraph{\compof{\mathcal{R}}}$ corresponds to a deterministic constellation).
    Therefore, $\exec(\compof{\mathcal{R}}) = \exec(\compof{\mathcal{R}_\Gamma} + [\vcut{v_1}{X}, \vcut{v_2}{X}]) = \exec(\compof{\Gamma} + \phi) = \exec(\compof{\mathcal{S}})$.

    \item Assume we have a $\otimes/\parr$ cut between two vertices: $v_1$ conclusion of a $\parr$ hyperedge of inputs $\concl{v_1},\concr{v_1}$ and $v_2$ conclusion of a $\otimes$ hyperedge of inputs $\concl{v_2},\concr{v_2}$. We call $\mathcal{R}_\Gamma$ the rest of the hypergraph.
    We have $\compof{\mathcal{R}} = \compof{\mathcal{R}_\Gamma} + [\vcut{v_1}{X}, \vcut{v_2}{X}]$.
    By the definition of vehicle (\cf Definition~\ref{def:vehicle}) and addresses (\cf Definition~\ref{def:addr}), we necessarily have rays of the shape $\clocus{v_1}{\lc \cdot t_1}, \clocus{v_1}{\rc \cdot t_2}, \clocus{v_2}{\lc \cdot t_3}, \clocus{v_2}{\rc \cdot t_4}$ for some $t_1, t_2, t_3, t_4$.
    Since the lower part of $\mathcal{R}$ is organised as a tree, the path address $t_1$ (\resp $t_2$) is designed to be equal to $t_3$ (\resp $t_4$) when they have the same position relatively to $v_1$ and $v_2$ or only $\alpha$-unifiable when one is a subpath of the other.
    A cut star between the locations $v_1$ and $v_2$ uses the same variable for its two rays, hence it forces a connexion between two rays $\clocus{v_1}{t}$ and $\clocus{v_2}{u}$ where $t$ and $u$ are $\alpha$-unifiable. There is only one possible such connexion by duplicating the cut and forming one diagram. Hence, the corresponding subgraph of $\dgraph{\compof{\mathcal{R}}}$ corresponds to a deterministic constellation.
    By partial pre-execution (\cf Lemma~\ref{lem:ppe}), we can focus on some steps of fusion while preserving correct diagrams.
    By fusion and by duplicating the cut star for each pair of rays $\clocus{v_1}{t}$ and $\clocus{v_2}{u}$ where $t$ and $u$ are $\alpha$-unifiable, we can merge stars of the shape $[r, \clocus{v_1}{\lc \cdot t_i}]$ and $[r', \clocus{v_2}{\lc \cdot u_i}]$ and produce $[r, r']$ (same idea for $\rc$ instead of $\lc$).
    This has exactly the same effect as relocating the atoms $\concl{v_1},\concr{v_1}, \concl{v_2}$ and $\concr{v_2}$ in order to get rays of the shape $\clocus{\concl{v_1}}{t_1}, \clocus{\concr{v_1}}{t_2}, \clocus{\concl{v_2}}{t_3}, \clocus{\concr{v_2}}{t_4}$ for the previous path addresses $t_1, t_2, t_3$ and $t_4$ with the cuts $[\vcut{\concl{v_1}}{X}, \vcut{\concl{v_2}}{X}]$ and $[\vcut{\concr{v_1}}{X}, \vcut{\concr{v_2}}{X}]$.
    This preserves the $\alpha$-unifiability between rays and exactly coincides with $\compof{\mathcal{S}}$ which removes the conclusions $v_1$ and $v_2$, then relocates their sources which become conclusions. Therefore, the translation of $\mathcal{R}$ and $\mathcal{S}$ both have the same execution because of their structural equivalence which has no impact on the normal form.
\end{itemize}
\end{proof}

\begin{thm}[Simulation of reduction for proof-nets]
\label{thm:dynamics}
For an MLL+MIX proof-net $\mathcal{R}$ of normal form $\mathcal{S}$, we have $\exec(\compof{\mathcal{R}}) = \axof{\mathcal{S}}$.
\end{thm}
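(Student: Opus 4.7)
The plan is to iterate the single-step simulation Lemma~\ref{lem:cutelim} along a full cut-elimination sequence, then explicitly compute the execution of a cut-free constellation.

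First, I would invoke strong normalisation of MLL+MIX proof-net reduction (a classical result, provable by a straightforward weight-based termination argument on the number and size of multiplicative cuts). This guarantees a finite reduction sequence $\mathcal{R} = \mathcal{R}_0 \leadsto \mathcal{R}_1 \leadsto \cdots \leadsto \mathcal{R}_n = \mathcal{S}$, where $\mathcal{S}$ is cut-free. A routine induction on $n$, applying Lemma~\ref{lem:cutelim} at each step to get $\exec(\compof{\mathcal{R}_i}) = \exec(\compof{\mathcal{R}_{i+1}})$, then yields $\exec(\compof{\mathcal{R}}) = \exec(\compof{\mathcal{S}})$.

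It then remains to evaluate $\exec(\compof{\mathcal{S}})$ in the cut-free case. Since $\cuts{\mathcal{S}} = \emptyset$ we have $\cutof{\mathcal{S}} = \emptyset$, so $\compof{\mathcal{S}} = \ccol{+c.\axof{\mathcal{S}}}$. In this constellation every coloured ray carries the same polarity $+c$, so no pair of rays satisfies the opposite-polarity requirement of Definition~\ref{def:duality}. Consequently the dependency graph $\dgraph{\compof{\mathcal{S}}}$ has no edges: its only saturated diagrams are the single-vertex ones (one per axiom star), each trivially correct and actualising to the star it contains. Hence $\exec(\compof{\mathcal{S}})$ is exactly the family of axiom stars of $\mathcal{S}$, namely $\axof{\mathcal{S}}$, the uniform $+c$ decoration being computationally inert in the absence of any $-c$ partner.

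The main obstacle is not the induction itself, which is essentially bookkeeping on top of Lemma~\ref{lem:cutelim}, but the importing of strong normalisation for MLL+MIX and the careful treatment of the trivial identification between $\ccol{+c.\axof{\mathcal{S}}}$ and $\axof{\mathcal{S}}$ once no dual $-c$ rays remain. Both are standard and the argument then reduces to a clean application of the key lemma.
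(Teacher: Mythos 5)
Your proposal is correct and follows essentially the same route as the paper: induct on the cut-elimination sequence using Lemma~\ref{lem:cutelim}, then observe that the cut-free translation executes to itself because no dual rays remain. You are in fact slightly more explicit than the paper on two points it leaves implicit (termination of MLL+MIX reduction, and the identification of $\ccol{+c.\axof{\mathcal{S}}}$ with $\axof{\mathcal{S}}$ once the $+c$ decoration has no $-c$ partner), but the argument is the same.
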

\begin{proof}
This result is a consequence of Lemma~\ref{lem:cutelim} by induction of the number of cut-elimination steps from $\mathcal{R}$ to $\mathcal{S}$, as well as the fact that $\exec(\compof{\mathcal{S}}) = \exec(\axof{\mathcal{R}}) = \axof{\mathcal{R}}$ since $\mathcal{S}$ does not contain cuts.
\end{proof}

\begin{exa}[Correct cut-elimination]
\label{ex:mllcorrect}
We have the following reduction $\mathcal{S} \leadsto^* \mathcal{S}'$ of proof-structure:
\begin{center}
\scalebox{0.8}{
\begin{minipage}{0.55\textwidth}
\begin{tikzpicture}
    \node at (-0.75, 0.75) (a) {1};
    \node at (0.75, 0.75) (b) {2};
    \node at (0, 0) (par) {$\parr$};
    \node at (0, -1) (ab) {7};
    \draw[-stealth] (a) -- (par);
    \draw[-stealth] (b) -- (par);
    \draw[-stealth] (par) -- (ab);

    \node at (1.75, 0.75) (f1) {3};
    \node at (6.25, 0.75) (f2) {6};

    \node at (3.25, 0.75) (c) {4};
    \node at (4.75, 0.75) (d) {5};
    \node at (4, 0) (tens) {$\otimes$};
    \node at (4, -1) (cd) {8};
    \draw[-stealth] (c) -- (tens);
    \draw[-stealth] (d) -- (tens);
    \draw[-stealth] (tens) -- (cd);

    \node at (1.75, -1.75) (cut) {cut};
    \draw[-latex, rounded corners=5pt] (ab) |- (cut);
    \draw[-latex, rounded corners=5pt] (cd) |- (cut);
    \node at (0, 1.5) (ax) {ax};
    \draw[-latex, rounded corners=5pt] (ax) -| (a);
    \draw[-latex, rounded corners=5pt] (ax) -| (b);
    \node at (2.5, 1.5) (ax) {ax};
    \draw[-latex, rounded corners=5pt] (ax) -| (f1);
    \draw[-latex, rounded corners=5pt] (ax) -| (c);
    \node at (5.5, 1.5) (ax) {ax};
    \draw[-latex, rounded corners=5pt] (ax) -| (f2);
    \draw[-latex, rounded corners=5pt] (ax) -| (d);
\end{tikzpicture}
\end{minipage}
$\leadsto^*\quad$
\begin{minipage}{0.15\textwidth}
\begin{tikzpicture}
    \node at (-0.75, 0.75) (a) {3};
    \node at (0.75, 0.75) (b) {6};
    \node at (0, 1.5) (ax) {ax};
    \draw[-latex, rounded corners=5pt] (ax) -| (a);
    \draw[-latex, rounded corners=5pt] (ax) -| (b);
\end{tikzpicture}
\end{minipage}
} \end{center}

The proof-structure $\mathcal{S}$ is translated into $\compof{\mathcal{S}} =$
\[[\clocus{7}{\lc \cdot X}, \clocus{7}{\rc \cdot X}] + [\clocus{3}{X}, \clocus{8}{\lc \cdot X}] + [\clocus{8}{\rc \cdot X}, \clocus{6}{X}] + \]
\[[\vcut{7}{X}, \vcut{8}{X}].\]
The ray $\vcut{7}{X}$ can match either $\clocus{7}{\lc \cdot X}$ or $\clocus{7}{\rc \cdot X}$ and the same occurs for $\vcut{8}{X}$. In order to satisfy these $\alpha$-unifications, the cut star must be duplicated and each occurrence of cut must connect rays with the same address, \ie the path addresses $\lc \cdot X$ together and not $\lc \cdot X$ with $\rc \cdot X$. We obtain the following diagram:

\begin{center}
\begin{tikzpicture}[every node/.style={scale=0.8}]
    \node[circle,draw] at (0, 0) (a1) {ax1};
    \node[circle,draw] at (6, 0) (a2) {ax2};
    \node[circle,draw] at (12, 0) (a3) {ax3};
    \node[circle,draw] at (3, -2) (c1) {cut1};
    \node[circle,draw] at (9, -2) (c2) {cut2};
    \node at (7, 0) (f1) {$\clocus{3}{X}$};
    \node at (13, 0) (f2) {$\clocus{6}{X}$};

    \draw (c1) edge node[midway,sloped,below=3.5mm] {$c.p_7(\lc \cdot X_1) \eqq c.p_7(X_4)$} (a1);
    \draw (c2) edge node[near end,sloped,above=2mm] {$c.p_7(\rc \cdot X_1) \eqq c.p_7(X_5)$} (a1);
    \draw (c2) edge node[midway,sloped,above=5mm] {$c.p_8(X_5) \eqq c.p_8(\rc \cdot X_3)$} (a3);
    \draw (c1) edge node[midway,sloped,below=5.5mm,fill=white,inner sep=0] {$c.p_8(X_4) \eqq c.p_8(\lc \cdot X_2)$} (a2);
\end{tikzpicture}
\end{center}

By case analysis, it is easy to check that it is the only possible diagram. Since the $\alpha$-unification is exact (\cf \Cref{sec:unification}), it is simply a graph contraction doing no more than renamings and we get $\exec(\compof{\mathcal{S}}) = [\clocus{3}{X}, \clocus{6}{X}] = \exec(\compof{\mathcal{S'}})$.
\end{exa}

\begin{exa}[Incorrect cut-elimination]
\label{ex:mllwrong}
We have the following reduction $\mathcal{S} \leadsto^* \mathcal{S}'$ of proof-structure:

\begin{center}
\scalebox{0.8}{
\begin{minipage}{0.35\textwidth}
\begin{tikzpicture}
    \node at (-0.75, 0.75) (a) {1};
    \node at (0.75, 0.75) (b) {2};
    \node at (0, 0) (par) {$\parr$};
    \node at (0, -1) (ab) {5};
    \draw[-stealth] (a) -- (par);
    \draw[-stealth] (b) -- (par);
    \draw[-stealth] (par) -- (ab);

    \node at (1.75, 0.75) (c) {3};
    \node at (3.25, 0.75) (d) {4};
    \node at (2.5, 0) (tens) {$\otimes$};
    \node at (2.5, -1) (cd) {6};
    \draw[-stealth] (c) -- (tens);
    \draw[-stealth] (d) -- (tens);
    \draw[-stealth] (tens) -- (cd);

    \node at (1.25, -1.75) (cut) {cut};
    \draw[-latex, rounded corners=5pt] (ab) |- (cut);
    \draw[-latex, rounded corners=5pt] (cd) |- (cut);
    \node at (0, 1.5) (ax) {ax};
    \draw[-latex, rounded corners=5pt] (ax) -| (a);
    \draw[-latex, rounded corners=5pt] (ax) -| (c);
    \node at (2, 1.75) (ax) {ax};
    \draw[-latex, rounded corners=5pt] (ax) -| (b);
    \draw[-latex, rounded corners=5pt] (ax) -| (d);
\end{tikzpicture}
\end{minipage}
$\leadsto^*\qquad$
\begin{minipage}{0.3\textwidth}
\begin{tikzpicture}
    \node at (-0.75, 0.75) (a) {1};
    \node at (0.75, 0.75) (b) {2};

    \node at (1.75, 0.75) (c) {3};
    \node at (3.25, 0.75) (d) {4};

    \node at (0, 0) (cut) {cut};
    \draw[-latex, rounded corners=5pt] (a) |- (cut);
    \draw[-latex, rounded corners=5pt] (c) |- (cut);
    \node at (2, -0.5) (cut) {cut};
    \draw[-latex, rounded corners=5pt] (b) |- (cut);
    \draw[-latex, rounded corners=5pt] (d) |- (cut);
    \node at (0, 1.5) (ax) {ax};
    \draw[-latex, rounded corners=5pt] (ax) -| (a);
    \draw[-latex, rounded corners=5pt] (ax) -| (c);
    \node at (2, 1.75) (ax) {ax};
    \draw[-latex, rounded corners=5pt] (ax) -| (b);
    \draw[-latex, rounded corners=5pt] (ax) -| (d);
\end{tikzpicture}
\end{minipage}}
\end{center}

The proof-structure $\mathcal{S}$ is translated into $\compof{\mathcal{S}} =$
\[[\clocus{5}{\lc \cdot X}, \clocus{6}{\lc \cdot X}] + [\clocus{5}{\rc \cdot X}, \clocus{6}{\rc \cdot X}] + [\vcut{5}{X}, \vcut{6}{X}]\]
with the following dependency graph $\dgraph{\compof{S}}$:
\begin{center}
\begin{tikzpicture}[every node/.style={scale=0.8}]
    \node[circle,draw] at (0, 0) (a1) {ax1};
    \node[circle,draw] at (6, 0) (a2) {ax2};
    \node[circle,draw] at (0, -2) (c1) {cut1};
    \node[circle,draw] at (6, -2) (c2) {cut2};

    \draw (c1) edge[bend left] node[midway,left=3.5mm] {$\vcut{5}{X} \match \clocus{5}{\lc \cdot X}$} (a1);
    \draw (c1) edge[bend right] node[near start,right=3mm] {$\clocus{6}{\lc \cdot X} \match \vcut{6}{X}$} (a1);

    \draw (c2) edge[bend left] node[near end,left=3.5mm] {$\vcut{5}{X} \match \clocus{5}{\rc \cdot X}$} (a2);
    \draw (c2) edge[bend right] node[midway,right=3.5mm] {$\clocus{6}{\rc \cdot X} \match \vcut{6}{X}$} (a2);
\end{tikzpicture}
\end{center}
The cycles in $\dgraph{\compof{S}}$ can be unfolded and yield infinitely many saturated correct diagrams, all actualising into $\emptystar$. We have $\exec(\compof{\mathcal{S}}) = \sum_{i=1}^\infty \emptystar = \exec(\compof{\mathcal{S'}})$.
\end{exa}

% --------------------------------------------------
\subsection{Simulation of logical correctness}\label{subsec:correctness}
% --------------------------------------------------

Since proof-structures are more general than proof-nets (\cf Definition~\ref{def:proofnets}), we have to check which proof-structures are ``logically correct". The idea of logical correctness traditionally corresponds to the fact of coming from a sequent calculus proof (\cf \Cref{subfig:mllsequent}), which is taken as the natural understanding of what ``being an actual proof" means.

A beautiful result of Girard, analysed by many subsequent works \cite{danos1989structure,danosPHD,lafont1995proof,murawski2000dominator,de2011correctness,retore2003handsome,bagnol2015dependencies}, is that the proof-structures that are proof-nets can be characterised by a topological/combinatorial property called a \emph{correctness criterion}. While Girard's original criterion, called the long-trip criterion \cite[Section III.2]{linearlogic}, is about the set of walks in a proof-structure, we will here work with Danos and Regnier's simplified criterion  \cite[Section 3.2]{danos1989structure} which is the most standard and which could not be treated by previous GoI models. Similarly to how a product has to pass several tests in order to be certified, this criterion defines tests to pass in order to be logically correct.

\begin{figure}
    \begin{tabular}{|c|c|c|c|}
        \hline
        Structure & Axioms & Test 1 & Test 2 \\
        \hline
        \scalebox{0.75}{\begin{tikzpicture}
          % tensor
          \node at (-0.75, 0.75) (a) {1};
          \node at (0.75, 0.75) (b) {2};
          \node at (0, 0) (tens) {$\otimes$};
          \node at (0, -0.75) (ab) {5};
          \draw[-] (a) -- (tens);
          \draw[-] (b) -- (tens);
          \draw[-] (tens) -- (ab);

          % par
          \node at (1.25, 0.75) (ad) {3};
          \node at (2.75, 0.75) (bd) {4};
          \node at (2, 0) (par) {$\parr$};
          \node at (2, -0.75) (adbd) {6};
          \draw[-] (ad) -- (par);
          \draw[-] (bd) -- (par);
          \draw[-] (par) -- (adbd);

          % ax
          \node at (0, 1.25) (ax1) {ax};
          \node at (2, 1.5) (ax2) {ax};
          \draw[-, rounded corners=5pt] (ax1) -| (a);
          \draw[-, rounded corners=5pt] (ax1) -| (ad);
          \draw[-, rounded corners=5pt] (ax2) -| (b);
          \draw[-, rounded corners=5pt] (ax2) -| (bd);
        \end{tikzpicture}}
        &
        \scalebox{0.75}{\begin{tikzpicture}
          % tensor
          \node at (-0.75, 0.75) (a) {1};
          \node at (0.75, 0.75) (b) {2};

          % par
          \node at (1.25, 0.75) (ad) {3};
          \node at (2.75, 0.75) (bd) {4};

          % ax
          \node at (0, 1.25) (ax1) {ax};
          \node at (2, 1.5) (ax2) {ax};
          \draw[-, rounded corners=5pt] (ax1) -| (a);
          \draw[-, rounded corners=5pt] (ax1) -| (ad);
          \draw[-, rounded corners=5pt] (ax2) -| (b);
          \draw[-, rounded corners=5pt] (ax2) -| (bd);
        \end{tikzpicture}}
        &
        \scalebox{0.75}{\begin{tikzpicture}
          % tensor
          \node at (-0.75, 0.75) (a) {1};
          \node at (0.75, 0.75) (b) {2};
          \node at (0, 0) (tens) {$\otimes$};
          \node at (0, -0.75) (ab) {5};
          \draw[-] (a) -- (tens);
          \draw[-] (b) -- (tens);
          \draw[-] (tens) -- (ab);

          % par
          \node at (1.25, 0.75) (ad) {3};
          \node at (2.75, 0.75) (bd) {4};
          \node at (2, 0) (par) {$\parr_L$};
          \node at (2, -0.75) (adbd) {6};
          \draw[-] (ad) -- (par);
          \draw[-] (par) -- (adbd);
        \end{tikzpicture}}
        &
        \scalebox{0.75}{\begin{tikzpicture}
          % tensor
          \node at (-0.75, 0.75) (a) {1};
          \node at (0.75, 0.75) (b) {2};
          \node at (0, 0) (tens) {$\otimes$};
          \node at (0, -0.75) (ab) {5};
          \draw[-] (a) -- (tens);
          \draw[-] (b) -- (tens);
          \draw[-] (tens) -- (ab);

          % par
          \node at (1.25, 0.75) (ad) {3};
          \node at (2.75, 0.75) (bd) {4};
          \node at (2, 0) (par) {$\parr_R$};
          \node at (2, -0.75) (adbd) {6};
          \draw[-] (bd) -- (par);
          \draw[-] (par) -- (adbd);
        \end{tikzpicture}}
        \\
        \hline
    \end{tabular}
    \caption{The axioms and tests of a proof-structure. The combination of axioms and a test corresponds to a correctness hypergraph representing a testing of the proof-structure.}
    \label{fig:tests}
\end{figure}
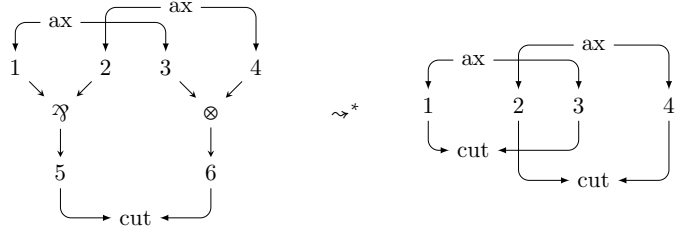

We define the \emph{correctness hypergraphs} associated to a proof-structure $\mathcal{S}$ as undirected copies of $\mathcal{S}$ with one source of each $\parr$-labelled hyperedge removed. They correspond to a testing between the upper part made of axioms, the \emph{tested} (corresponding to the vehicle in the stellar resolution\footnote{It also gives meaning to Girard's terminology of vehicle \cite{transyn1} which can be understood from its abstract definition, \eg language as the vehicle of thought, or by its concrete definition, \eg a car in an industry which is tested in order to be certified.}), and the lower part which is the \emph{test}. This decomposition of a proof-structure into axioms and tests is illustrated in \Cref{fig:tests}. The Danos-Regnier criterion states that a proof-structure is an MLL proof-net if and only if all its correctness hypergraphs are all connected and acyclic (\cf \Cref{sec:hypergraphs}).

Since the idea of logical correctness is purely structural as well, our definitions still deals with unlabelled proof-structures.

\begin{nota}
Given a proof-structure $\mathcal{S} = (V,E,\ein{},\eout{},\ell_E)$, we write $\parr(\mathcal{S})$ the subset $P\subseteq E$ of $\parr$-labelled edges, \ie $\parr(\mathcal{S}) = \{e\in E\mid \ell_E(e) = \parr \}$.
\end{nota}

\begin{defi}[Correctness hypergraph]
\label{def:correctnessgraph}
Let $\mathcal{S} = (V,E,\ein{},\eout{},\ell_E)$ be a proof-structure. A \emph{switching} is a map $\varphi : \parr(\mathcal{S}) \rightarrow \{\parr_L, \parr_R\}$.
Its associated \emph{correctness hypergraph} is the undirected hypergraph with labelled hyperedges $\mathcal{S}^\varphi = (V,E,\mathtt{in'}, \eout{}, \ell_E')$ induced by the switching $\varphi$ which is defined with
\begin{itemize}
	\item $\mathtt{in'}(e) = \{u\}$ where $u$ is the left premise of $e$ when $e \in \parr(\mathcal{S})$ and $\varphi(e) = \parr_L$;
	\item $\mathtt{in'}(e) = \{u\}$ where $u$ is the right premise of $e$ when $e \in \parr(\mathcal{S})$ and $\varphi(e) = \parr_R$;
	\item $\mathtt{in'}(e) = \ein{e} \cup \eout{e}$ in all other cases.
\end{itemize}
The labelling $\ell_E'$ is defined by $\ell_E'(e) = \varphi(e)$ when $e \in \parr(\mathcal{S})$ and $\ell_E'(e) = \ell_E(e)$ otherwise.
\end{defi}

If testing proof-structures is seen as certifying products in a factory, the correctness criterion also gives them a label/certificate: the sequent they prove. This is represented as the possibility of labelling a proof-structure so that it corresponds to a proof-net. Hence, a proof-structure can actually correspond to several sequent calculus proofs depending on the labelling we choose.

\begin{defi}[MLL-certification and MLL+MIX-certification]
A proof-structure $\mathcal{S} = (V,E,\ein{},\eout{},\ell_E)$ is \emph{MLL-certifiable} (\resp \emph{MLL+MIX-certifiable}) with $\vdash A_1, ..., A_n$ when there exists a vertex-labelling function $\ell_V$ such that $(V,E,\ein{},\eout{},\ell_V,\ell_E)$ is an MLL (\resp MLL+MIX) proof-net.

When there exists $\vdash A_1, ..., A_n$ such that $\mathcal{S}$ is MLL(+MIX)-certifiable with $\vdash A_1, ..., A_n$ then we simply say that $\mathcal{S}$ is MLL(+MIX)-certifiable.
\end{defi}

\begin{thm}[MLL+MIX correctness]
\label{thm:mixcorrectness}
A proof-structure $\mathcal{S}$ is MLL+MIX-certifiable if and only if $\mathcal{S}^\varphi$ is acyclic for all switching $\varphi$.
\end{thm}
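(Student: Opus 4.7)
The plan is to prove both directions by induction, using the translation $\interp{\cdot}$ of Figure~\ref{fig:sequent2structure} and the inductive decomposition of proof-structures from Remark~\ref{rem:inductiveps}. The hard direction is sequentialization ($\Leftarrow$); the main obstacle there is a splitting lemma.

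For the easy direction ($\Rightarrow$), I would induct on an MLL+MIX sequent calculus derivation $\pi$ such that $\interp{\pi} = \mathcal{S}$, showing that the translation preserves the property ``every switching is acyclic''. The axiom rule produces a single hyperedge between two vertices, trivially acyclic for its unique switching. The cut and $\otimes$ rules glue two previously constructed proof-structures by a single new hyperedge whose two sources lie in distinct components before gluing, so no switching can produce a cycle. The $\parr$ rule adds a $\parr$-edge whose switching always removes one of the two premises as input, so every $\mathcal{S}^\varphi$ extends a previous acyclic switching graph by a ``dangling'' edge, and remains acyclic. The mix rule corresponds to disjoint union, which trivially preserves acyclicity.

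For the reverse direction ($\Leftarrow$), I would proceed by strong induction on the number of hyperedges of $\mathcal{S}$, simultaneously building the derivation $\pi$ and the vertex-labelling $\ell_V$ (fresh $X_i, X_i^\bot$ at each axiom, propagated upwards through $\otimes,\parr$ and dualised across cuts). The base case is a single axiom, certified by $\vdash X, X^\bot$. In the inductive step, if $\mathcal{S}$ is disconnected (as an undirected hypergraph), each connected component $\mathcal{S}_i$ inherits acyclic switchings from $\mathcal{S}$ (a switching of $\mathcal{S}_i$ lifts to one of $\mathcal{S}$ on the other components arbitrarily), so the induction hypothesis applies to each $\mathcal{S}_i$ and the derivations are combined via iterated MIX rules. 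Otherwise $\mathcal{S}$ is connected, and I would invoke a splitting lemma to remove a ``last-rule'' hyperedge among the hyperedges having a conclusion of $\mathcal{S}$ as target: either a terminal $\parr$ hyperedge (whose removal preserves acyclicity of all switchings), or a splitting $\otimes$ or cut hyperedge (one whose removal disconnects $\mathcal{S}$ into exactly the two sub-structures corresponding to its premises).

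The hard part is the splitting lemma. The classical argument (Danos--Regnier, adapted to the MIX-setting where connectedness is dropped) goes as follows: pick any conclusion-$\otimes$ or conclusion-$\cutl$ hyperedge $e$; if it is splitting we are done, otherwise some switching graph contains a path between its two premises avoiding $e$, and an analysis of empires (maximal subgraphs stable under all switchings) produces either another candidate $\otimes$/$\cutl$ or a terminal $\parr$. A minimality argument on the chosen empire guarantees termination. In the MLL+MIX case, the absence of the connectedness requirement slightly simplifies the analysis, because failure of a splitting $\otimes$ no longer threatens to produce a counter-example: the ``halves'' obtained after removing a candidate $\otimes$ either remain connected to each other (contradicting acyclicity by the bypass path) or separate cleanly, in which case they are handled by the induction hypothesis. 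This yields the desired terminal hyperedge, completing the inductive step and thus the sequentialization.
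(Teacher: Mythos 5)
First, a point of comparison: the paper does not actually prove this statement --- its ``proof'' is a citation to Fleury and R\'etor\'e \cite{fleury1994mix}, Theorems 4.7 and 4.8. You are therefore attempting something the paper deliberately outsources, and the bar is whether your sketch would go through as a self-contained argument. Your forward direction does: the case analysis on the last rule is correct (a new $\otimes$ or $\cutl$ hyperedge bridges two disjoint acyclic switching graphs through vertices lying in distinct components, a switched $\parr$ contributes only a pendant edge to a fresh conclusion, and MIX is disjoint union), and the treatment of disconnectedness via iterated MIX and of terminal $\parr$ removal in the converse direction is also sound.

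The gap is in the splitting lemma, which you correctly identify as the crux but then dispose of with an invalid dichotomy. You claim that if a candidate terminal $\otimes$ (or cut) is not splitting, then the two ``halves'' above its premises ``remain connected to each other, contradicting acyclicity by the bypass path''. This is false as stated: a path in the proof-structure between the two premises of the tensor that avoids the tensor itself yields a cycle in some $\mathcal{S}^\varphi$ only if that path survives a \emph{single} switching, and it need not --- it may enter and leave one or more $\parr$ hyperedges through both of their premises, so that every switching breaks it somewhere. One therefore cannot conclude ``not splitting $\Rightarrow$ cycle in some switching''; this is precisely the situation that forces the empire machinery (or Danos's contractibility argument, or a counting argument on components and cycles) in the first place. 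Moreover, the standard properties of empires --- that the empire of a premise of a maximal terminal tensor is bordered by another terminal link, and the ordering argument guaranteeing termination --- are usually established under the full Danos--Regnier criterion, i.e.\ assuming connectedness of every switching; since here you only have acyclicity, each of those properties must be re-proved or replaced, which is exactly the content of \cite{fleury1994mix}. As written, the hard direction restates the difficulty rather than resolving it: either carry out the empire analysis in the acyclic-only setting in full, or do what the paper does and cite Fleury--R\'etor\'e.
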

\begin{proof}
Proven in \cite[Theorem 4.7 and 4.8]{fleury1994mix}.
\end{proof}

\begin{thm}[Danos-Regnier correctness]
\label{thm:correctness}
A proof-structure $\mathcal{S}$ is MLL-certifiable if and only if it is MLL+MIX-certifiable and $\mathcal{S}^\varphi$ is connected for all switching $\varphi$.
\end{thm}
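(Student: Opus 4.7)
The plan is to prove the two implications separately, leveraging the already-stated MLL+MIX correctness result (Theorem~\ref{thm:mixcorrectness}). For the $(\Rightarrow)$ direction, suppose $\mathcal{S}$ is MLL-certifiable, so $\mathcal{S}=\interp{\pi}$ for some MLL sequent calculus proof $\pi$. MLL+MIX-certifiability is immediate since every rule of MLL sequent calculus is also a rule of MLL+MIX. For connectedness of all switchings, I would proceed by induction on $\pi$. The axiom base case yields a single hyperedge with two conclusions, trivially connected. The cut and $\otimes$ rules glue two correctness hypergraphs at a newly shared point (the cut hyperedge, or the tensor vertex), preserving connectedness of the disjoint union plus identification. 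The $\parr$ case is where the switching matters: the new $\parr$ hyperedge, under either choice $\parr_L$ or $\parr_R$, attaches its conclusion to exactly one of its premises, which is enough to keep the correctness hypergraph connected given that the underlying proof-net's switchings were already connected by induction hypothesis.

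For the $(\Leftarrow)$ direction, suppose $\mathcal{S}$ is MLL+MIX-certifiable (hence by Theorem~\ref{thm:mixcorrectness} all switchings $\mathcal{S}^\varphi$ are acyclic) and additionally all switchings are connected. By Theorem~\ref{thm:mixcorrectness}, there exists an MLL+MIX sequent calculus proof $\pi$ with $\interp{\pi}=\mathcal{S}$. I need to show $\pi$ can be chosen to avoid the MIX rule. The strategy is to exhibit a \emph{splitting} argument analogous to Danos--Regnier's original sequentialization: I claim that in any MLL+MIX-certifiable $\mathcal{S}$ with all switchings connected, either (i) there is a bottom-most $\parr$ whose removal preserves the criterion, or (ii) there is a \emph{splitting tensor}, i.e.\ a $\otimes$-hyperedge whose removal separates $\mathcal{S}$ into two MLL-certifiable sub-proof-structures, or (iii) there is a splitting cut with the same property. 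Given such a splitting, I rebuild $\pi$ inductively using only MLL rules.

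The key obstacle, and the technical heart of the proof, is establishing the splitting property in case (ii)-(iii). The observation is this: if no bottom-most $\parr$ or axiom is available, the conclusions of $\mathcal{S}$ (modulo cuts) are all either atoms or $\otimes$-outputs. Pick an $\otimes$-hyperedge $e$ maximal for the partial order ``$e_1 \leq e_2$ iff $e_1$ is above $e_2$'' restricted to $\otimes$ at the conclusion frontier. Removing $e$ partitions $\mathcal{S}\setminus\{e\}$ into at most two parts along its two premises. Using acyclicity of all switchings we show the partition is nontrivial; using connectedness of all switchings we show the partition is exactly into two parts each of which remains acyclic and connected for every switching. If $e$ is not splitting, one derives a contradiction either by exhibiting a cycle (using a back-path through some $\parr$ whose switching can be toggled) or by exhibiting a disconnection (by switching every $\parr$ on the suspected shared part toward the opposite side). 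This is a standard, well-known combinatorial argument due to Danos and Regnier \cite{danos1989structure}, which I would formulate here by direct case analysis on the correctness hypergraphs rather than on the long-trip criterion.

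Finally, once splitting is available, the inductive step is routine: a splitting $\otimes$ (resp.\ cut) corresponds to applying the $\otimes$-rule (resp.\ cut-rule) to the two sub-proofs obtained by the induction hypothesis, while a bottom-most $\parr$ corresponds to the $\parr$-rule. No MIX is ever needed, so $\pi$ is an MLL proof and $\mathcal{S}$ is MLL-certifiable. The connectedness hypothesis is used only in the $(\Leftarrow)$ direction and only at the splitting step, confirming that it is precisely the extra ingredient distinguishing MLL from MLL+MIX.
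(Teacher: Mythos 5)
The paper does not actually prove this theorem: it is imported wholesale by citing Danos and Regnier \cite[Theorem 4]{danos1989structure}, so any self-contained argument is necessarily a different route. Your forward direction (induction on the sequent calculus proof, checking that each rule preserves connectedness and acyclicity of every switching) is fine and standard. The problem is in the backward direction, at exactly the point you identify as the technical heart.

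The gap is in the splitting step. You propose to pick an arbitrary $\otimes$-hyperedge that is maximal for the ``above'' order among terminal tensors --- but that order is trivial on the conclusion frontier, so this is just an arbitrary terminal tensor --- and then to derive a contradiction (a cycle or a disconnection in some switching) if it fails to split. This cannot work, because a correct MLL proof-net can have a terminal tensor that is \emph{not} splitting. Concretely, take the cut-free proof-net with conclusions $A\otimes(B\parr C)$, $B^\bot\otimes C^\bot$, $A^\bot$ built from the three axioms $(A,A^\bot)$, $(B,B^\bot)$, $(C,C^\bot)$: every switching is a tree, yet removing the terminal tensor $B^\bot\otimes C^\bot$ leaves its two premises connected through the path $B^\bot$--$B$--$(B\parr C)$--$C$--$C^\bot$, which uses \emph{both} premises of the $\parr$ and hence survives in the full structure while appearing in no single switching. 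So for this tensor there is no cycle and no disconnection to exhibit, and your contradiction argument stalls; only the \emph{other} terminal tensor ($A\otimes(B\parr C)$) is splitting. What the Danos--Regnier sequentialization actually establishes is the \emph{existence} of some splitting terminal tensor, and proving that existence requires a genuinely finer device --- Girard's empires (choosing a tensor whose empire is maximal), Danos's contractibility/parsing argument, or an equivalent --- not a case analysis on a single arbitrarily chosen tensor. As written, your argument would prove the false statement that every terminal tensor splits. Either carry out the empire argument in full, or do as the paper does and delegate the theorem to \cite{danos1989structure}.
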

\begin{proof}
Proven in \cite[Theorem 4]{danos1989structure}.
\end{proof}

The previous section handled cut-elimination by means of binary stars translating axiom and cut hyperedges. We now need to translate tests which contain $\parr$ and $\otimes$ hyperedges. Binary stars are no longer sufficient for a natural and satisfactory treatment of logical correctness since we now have to deal with a ternary hyperedge\footnote{As previous GoI models used binary objects (\eg edge of a graph), ternary hyperedges could not be encoded.}: the tensor link.

For minor technical reasons, instead of directly translating hyperedges (which would be more natural), the vertices are translated\footnote{Although they could have been added without problem, conclusion hyperedge were omitted for convenience in Definition~\ref{subfig:proofstructures}. However, since we will need to translate conclusions as rays, a translation of hyperedge would lack this information.}.

For our encoding, we use two colours: $\ccol{c}$ (for computation) and $\tcol{t}$ (for testing). A vehicle will be coloured with the colour $\ccol{c}$ when we want its execution by connecting it with cuts and and with $\tcol{t}$ when being subject to an interaction against tests.

\begin{defi}[MLL test]
\label{def:test}
Let $\mathcal{S}$ be a proof-structure and $\varphi$ one of its switchings. The \emph{test} associated to $\mathcal{S}^\varphi$ is the constellation defined by $\testof{\mathcal{S}}{\varphi} := \sum_{v \in V^{\mathcal{S}^\varphi}} v^\bigstar$.
We define the translation $v^\bigstar$ of a vertex $v$ conclusion of an hyperedge $e$ as follows:
\begin{itemize}
    %% Ax
    \item if $\ell_E(e) = \axl$ then $v^\bigstar = [\tcol{-\addr_\mathcal{S}(v, X)}, \qray[+]{v}]$;
    %% Par
    \item if $\ell_E(e) = \parr_L$ and $\eout{e} = \{u, w\}$ then $v^\bigstar = \gparlc{u}{w}{v}$;
    \item if $\ell_E(e) = \parr_R$ and $\eout{e} = \{u, w\}$ then $v^\bigstar = \gparrc{u}{w}{v}$;
    %% Tensor
    \item if $\ell_E(e) = \otimes$ and $\eout{e} = \{u, w\}$ then $v^\bigstar = \gtensc{u}{w}{v}$;
    %% Conclusion
    \item if $v \in \conclu{\mathcal{S}}$ then $v^\bigstar = \gconc{v}$;
    %% Cut
    \item for each cut (hyperedge $e$ such that $\ell_E(e) = \cutl$), we add a star $[\vcut{\concl{e}}{X}, \vcut{\concr{e}}{X}]$.
\end{itemize}
\end{defi}

The technical purpose of the constant $\gc$ is to make the terms of the tests distinct from the ones of the vehicle so that the cuts can be applied to both the vehicle and the tests by simply changing colours ($\vcut{v}{X}$ matches both $\clocus{v}{X}$ and $\qray[+]{v}$). This allows more flexibility in the definitions.

\begin{figure}
\begin{tikzpicture}[every node/.style={scale=0.75}]
    \node[circle,draw] at (0, 0) (a1) {$\phi_1$};
    \node[circle,draw] at (3, 0) (a2) {$\phi_2$};
    \node[circle,draw] at (6, 0) (a3) {$\phi_3$};
    \node[circle,draw] at (9, 0) (a4) {$\phi_4$};

    \node at (0, .6) (b1) {$\tlocus[-]{5}{\lc\cdot X}$};
    \node at (3, .6) (b2) {$\tlocus[-]{5}{\rc\cdot X}$};
    \node at (6, .6) (b3) {$\tlocus[-]{6}{\lc\cdot X}$};
    \node at (9, .6) (b4) {$\tlocus[-]{6}{\rc\cdot X}$};

    \node[circle,draw] at (1.5, -2) (t) {$\phi_\otimes$};
    \node[circle,draw] at (7, -2) (p1) {$\phi_\emptyset$};
    \node[circle,draw] at (8, -2) (p2) {$\phi_{\parr_R}$};

    \node[circle,draw] at (1.5, -4) (c1) {$\phi_{c_1}$};
    \node[circle,draw] at (8, -4) (c2) {$\phi_{c_2}$};
    \node at (1.5, -4.6) (r1) {$\ulocus{5}{X}$};
    \node at (8, -4.6) (r2) {$\ulocus{6}{X}$};

    \draw (a1) edge node[midway,sloped,below=6mm] {$\qray[+]{1} \match \qray[-]{1}$} (t);
    \draw (a2) edge node[midway,sloped,below=6mm] {$\qray[-]{2} \match \qray[+]{2}$} (t);
    \draw (a3) edge node[midway,sloped,below=6mm] {$\qray[+]{3} \match \qray[-]{3}$} (p1);
    \draw (a4) edge node[midway,sloped,below=6mm] {$\qray[-]{4} \match \qray[+]{4}$} (p2);
    \draw (t) edge node[midway,left=6mm] {$\qray[+]{5} \match \qray[-]{5}$} (c1);
    \draw (p2) edge node[midway,right=6mm] {$\qray[+]{6} \match \qray[-]{6}$} (c2);
\end{tikzpicture}
    \caption{Dependency graph of the constellation corresponding to Test 2 in \Cref{fig:tests}.}
    \label{fig:structural}
\end{figure}
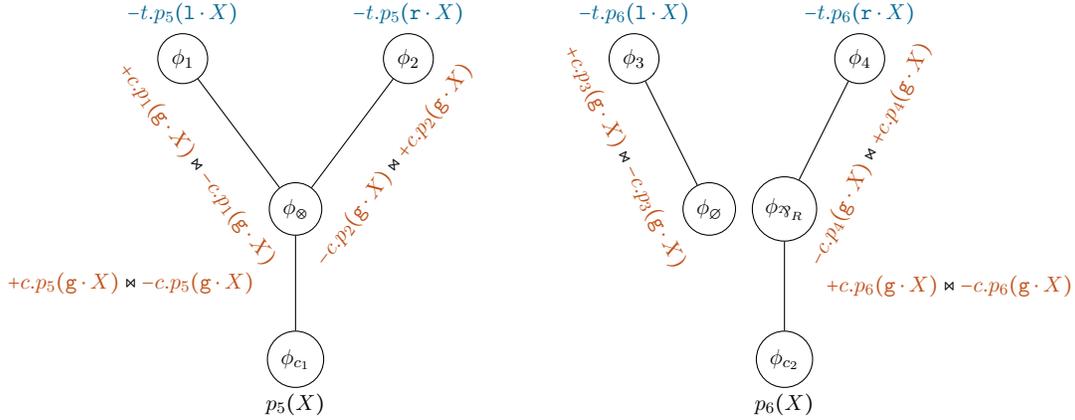

Tests for a proof-structure $\mathcal{S}$ are actually designed so that $\dgraph{\tcol{+t.\axof{\mathcal{S}}} \uplus \testof{\mathcal{S}}{\varphi}}$ is structurally equivalent to $\mathcal{S}^\varphi$, as illustrated in \Cref{fig:structural}. We make this idea explicit by the following lemma and show that it leads to a simulation of logical correctness in the stellar resolution.

\begin{lem}[Structural realisation]
\label{lem:structural}
Let $\mathcal{S}$ be a proof-structure and $\varphi$ one of its switchings. We have $\dgraph{\tcol{+t.\axof{\mathcal{S}}} \uplus \testof{\mathcal{S}}{\varphi}} \simeq \mathcal{S}^\varphi$.
\end{lem}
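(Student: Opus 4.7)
The plan is to prove the isomorphism by exhibiting an explicit correspondence between the stars (vertices of $\dgraph{\cdot}$) and dual ray pairs (edges of $\dgraph{\cdot}$) on one side, and the hyperedges and vertex-incidences of $\mathcal{S}^\varphi$ on the other, then verifying this correspondence by case analysis on hyperedge labels. Alternatively, one could proceed by induction on the construction of $\mathcal{S}$ given in Remark~\ref{rem:inductiveps}, but the direct construction makes the geometric content clearer.

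The correspondence is as follows: axiom hyperedges of $\mathcal{S}^\varphi$ correspond to vehicle stars in $\tcol{+t.\axof{\mathcal{S}}}$ together with the two test axiom ``bridge'' stars (one per atom), which realise the two endpoints of the axiom; $\otimes$ hyperedges correspond to their ternary tensor star $\gtensc{u}{w}{v}$, whose three rays encode its three endpoints in $\mathcal{S}^\varphi$; $\parr$ hyperedges (after switching) correspond to the active binary star, while the disconnected premise is realised by the dangling star $[\qray[-]{w}]$; cut hyperedges correspond to cut stars; and proof-structure conclusions correspond to their conclusion stars $\gconc{v}$. Each vertex-hyperedge incidence of $\mathcal{S}^\varphi$ thus corresponds to exactly one pair of dual rays in the constellation.

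The key technical observation is that the duality of rays (Definition~\ref{def:duality}) precisely captures the vertex-hyperedge incidences of $\mathcal{S}^\varphi$. For each vertex $v \in V^\mathcal{S}$, the rays $\qray[+]{v}$ and $\qray[-]{v}$ each appear in exactly one star and are $\alpha$-dual only to each other, because the vertex indices act as distinct function symbols $p_v$; similarly, the $\tcol{\pm t.\addr(v)}$ rays of the axiom layer match uniquely, since Definition~\ref{def:addr} assigns a distinct path-address to each atom. The uncoloured rays $p_v(X)$ in conclusion stars cannot participate in any duality, as Definition~\ref{def:duality} requires at least one matching colour, and therefore contribute no spurious edges in $\dgraph{\cdot}$.

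The main obstacle is the $\parr$ case under switching. One must check that the decomposition $v^\bigstar = [\qray[-]{u}, \qray[+]{v}] + [\qray[-]{w}]$ for $\varphi(e) = \parr_L$ (and symmetrically for $\parr_R$) faithfully realises the restriction $\mathtt{in}'(e) = \{u\}$ of Definition~\ref{def:correctnessgraph}: the active binary star encodes the $\parr$-hyperedge as a link between $u$ and $v$ in $\mathcal{S}^\varphi$, while the dangling star $[\qray[-]{w}]$ keeps $w$ present in $\dgraph{\cdot}$ but isolated from the $\parr$-star itself, mirroring its disconnection from the $\parr$-hyperedge in $\mathcal{S}^\varphi$. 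A secondary subtlety is the bookkeeping of the two-layer colour structure: one must verify that the $\tcol{t}$-coloured vehicle rays pair exclusively with $\tcol{t}$-coloured axiom test rays sharing the same address (not with any $\ccol{c}$-coloured rays), so that the axiom layer and the downward connective layer combine into the correct incidence structure of $\mathcal{S}^\varphi$ without interference.
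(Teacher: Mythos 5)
Your proof is correct and follows essentially the same route as the paper's: both establish a hyperedge-by-hyperedge correspondence between the stars of the translated constellation and the hyperedges of $\mathcal{S}^\varphi$, and both rest on the observation that two hyperedges share a vertex $v$ exactly when the corresponding rays (built on the distinct head symbol $p_v$, with opposite colours) are dual. Your version is somewhat more explicit about the switched-$\parr$ case, the role of the dangling star $[\qray[-]{w}]$, and the separation of the $\tcol{t}$- and $\ccol{c}$-coloured layers, but these are elaborations of the same argument rather than a different one.
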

\begin{proof}
Remark that a test (\cf Definition~\ref{def:test}) could alternatively be defined by a translation of hyperedges instead of vertices. Let $v$ be a vertex of $\mathcal{S}^\varphi$. It is target of an hyperedge $e$ with sources $\ein{e} = \{u_1, ..., u_k\}$ with $k \leq 2$. We define $e^\bigstar := v^\bigstar$.
By case analysis on $e$ and by Definition~\ref{def:test}, there is a correspondence between $v, u_1, ..., u_k$ and rays of $e^\bigstar$. We write $v^\ast_e$ for the ray corresponding to $v$ in $e$.
The tests are designed so that two hyperedges $e_1, e_2 \in E^{\mathcal{S}^\varphi}$ share a vertex $v$ if and only if $v_{e_1}^\ast$ is $\alpha$-unifiable with $v_{e_2}^\ast$. Notice that they both correspond to the same formula seen as input or output depending on the $e$ in $v_e^\ast$. These two points of view correspond to matchable rays of opposite colours.
In conclusion, $\dgraph{\tcol{+t.\axof{\mathcal{S}}} \uplus \testof{\mathcal{S}}{\varphi}}$ preserves the structure of $\mathcal{S}^\varphi$ as hypergraph and links vertices in the same way.
\end{proof}

A technical corollary is that the translation of correctness hypergraph is deterministic and exact (\cf Definition~\ref{def:propconst}). It ensures that all diagrams are correct but also that if its dependency graph is connected and acyclic, there is no branching leading to non-deterministic choices for a ray.

\begin{cor}
\label{cor:deterministicexact}
Let $\mathcal{S}$ be a proof-structure and $\varphi$ one of its switchings. Then $\tcol{+t.\axof{\mathcal{S}}} \uplus \testof{\mathcal{S}}{\varphi}$ is deterministic and exact.
\end{cor}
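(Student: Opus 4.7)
The plan is to verify the two properties separately by inspecting which rays can be dual in $\dgraph{\tcol{+t.\axof{\mathcal{S}}} \uplus \testof{\mathcal{S}}{\varphi}}$, using the structural isomorphism with $\mathcal{S}^\varphi$ established by Lemma~\ref{lem:structural}.

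For exactness, I would enumerate the rays introduced by Definitions~\ref{def:vehicle} and~\ref{def:test}, which fall into three families: address rays $\tcol{\pm t.p_c(p\cdot X)}$ (vehicle axioms and axiom test stars); linking rays $\qray[+]{v}$ and $\qray[-]{v}$ of the form $\ccol{\pm c.p_v(\gc\cdot X)}$; and cut rays $\vcut{v}{X}$. For each edge of the dependency graph, the two matched rays share the same head location: $p_c$ with the same path $p$ for atoms (the address of an atom relative to its conclusion is uniquely determined by Definition~\ref{def:addr}), and $p_v$ with the same inner content $\gc\cdot X$ for linking rays between test stars. Hence the corresponding equation at each edge is of the form $t \eqq t$ up to $\alpha$-renaming. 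The one subtlety I expect here is the treatment of cut-related matchings, where $\vcut{v}{X}$ and $\qray[+]{v}$ have underlying terms $p_v(X)$ and $p_v(\gc\cdot X)$ respectively; these are not literally equal, but their unification produces only a structural binding of the cut variable without any clash, which should suffice for the intended use of exactness in the subsequent correctness argument.

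For determinism, I would argue that the ray linking graph $\mathtt{RLG}$ is monovalent. Each vertex $v$ of the proof-structure is target of exactly one hyperedge and source of at most one hyperedge; therefore the location $p_v$ appears positively in exactly one test star (the one translating the hyperedge of which $v$ is target) and negatively in at most one other test or cut star (the unique hyperedge, if any, of which $v$ is source). An analogous uniqueness statement holds for the $\tcol{t}$-coloured interaction between the vehicle and the axiom test stars, thanks to the injectivity of $\addr_\mathcal{S}$ on atoms (Definition~\ref{def:addr}). Thus every coloured ray has at most one dual partner, $\mathtt{RLG}$ contains no ambivalent links, and the constellation is monovalent, hence deterministic.
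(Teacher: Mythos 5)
Your proof is correct and follows essentially the same route as the paper's: determinism comes from the fact that, via Lemma~\ref{lem:structural}, each location $p_v$ occurs positively in exactly one star and negatively in at most one (and distinct atoms have non-unifiable addresses), so the constellation is monovalent; exactness comes from inspecting the ray forms produced by Definitions~\ref{def:vehicle} and~\ref{def:test}. The subtlety you flag about cut stars is real and is in fact glossed over by the paper's one-line justification: the equation induced by a link between $\vcut{v}{X}$ and $\ccol{+c.p_v(\gc\cdot X)}$ is $p_v(X_1)\eqq p_v(\gc\cdot X_2)$, which is not of the shape $t\eqq t$, so exactness in the literal sense of Definition~\ref{def:propconst} holds only when $\mathcal{S}$ is cut-free; for the way the corollary is used afterwards (via Lemma~\ref{lem:exactness}, to conclude that every diagram is correct) the weaker observation that these equations are always solvable by a variable-to-term binding with no clash suffices, exactly as you note.
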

\begin{proof}
By Lemma~\ref{lem:structural}, $\dgraph{\tcol{+t.\axof{\mathcal{S}}} \uplus \testof{\mathcal{S}}{\varphi}} \simeq \mathcal{S}^\varphi$.
The hypergraph $\mathcal{S}^\varphi$ has vertices which are uniquely connected, \ie for each $v \in V^{\mathcal{S}^\varphi}$, there is only one $e \in E^{\mathcal{S}^\varphi}$ such that $v \in \ein{e}$ or $v \in \eout{e}$.
Therefore, rays are uniquely connected in $\dgraph{\tcol{+t.\axof{\mathcal{S}}} \uplus \testof{\mathcal{S}}{\varphi}}$ and $\tcol{+t.\axof{\mathcal{S}}} \uplus \testof{\mathcal{S}}{\varphi}$ is deterministic.
A simple analysis on Definition~\ref{def:test} shows that it is also exact.
\end{proof}

The idea for MLL correctness is that we would like to have $\exec(\tcol{+t.\axof{\mathcal{S}}} \uplus \testof{\mathcal{S}}{\varphi}) = [\ulocus{v_1}{X}, ..., \ulocus{v_n}{X}]$ with $\conclu{\mathcal{S}} = \{v_1, ..., v_n\}$, which would imply that $\mathcal{S}$ has an arborescent shape. In his original paper \cite[Section 2.3]{transyn1}, Girard only considers the case of cut-free proofs by forbidding cyclic diagrams and the empty star. However, it seems that these definitions lead to a technical mistakes in case of a cyclic correctness hypergraph. We describe the problem and explain why it does not occur in our case.

\begin{figure}
    \begin{minipage}{0.15\textwidth}
        \scalebox{0.75}{\begin{tikzpicture}
        % tensor
        \node at (-0.75, 0.75) (a) {1};
        \node at (0.75, 0.75) (b) {2};
        \node at (0, 0) (tens) {$\otimes$};
        \node at (0, -0.75) (ab) {3};
        \draw[-] (a) -- (tens);
        \draw[-] (b) -- (tens);
        \draw[-] (tens) -- (ab);
        \node at (0, 1.5) (ax) {ax};
        \draw[-, rounded corners=5pt] (ax) -| (a);
        \draw[-, rounded corners=5pt] (ax) -| (b);
        \end{tikzpicture}}
    \end{minipage}
    \begin{minipage}{0.6\textwidth}
        $[\tlocus{3}{\lc\cdot X}, \tlocus{3}{\rc\cdot X}]+$ \\ $[\tlocus[-]{3}{\lc\cdot X}, \qray[+]{1}]+[\tlocus[+]{3}{\rc\cdot X}, \qray[+]{2}]$ \\
        $\gtensc{1}{2}{3}+$ \\
        $\gconc{3}$
    \end{minipage}
    \caption{Incorrect correctness hypergraph for a proof-structure $\mathcal{S}$ and its translation. Notice that the cycle is turned into a computational cycle (a loop in a program).}
    \label{fig:incorrectps}
\end{figure}
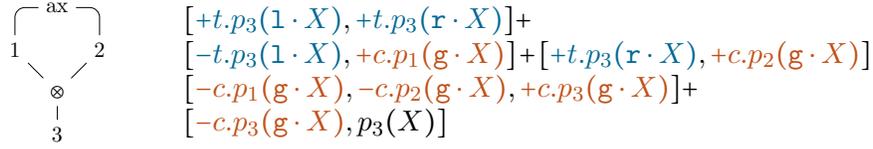

\begin{figure}
\begin{tikzpicture}[every node/.style={scale=0.75}]
    \node[circle,draw] at (0, 0) (a1) {ax};
    \node[circle,draw] at (4, 0) (e2) {$\phi_2$};
    \node[circle,draw] at (8, 0) (t1) {$\phi_\otimes$};
    \node[circle,draw] at (12, 0) (e1) {$\phi_1$};

    \node[circle,draw] at (12, -2) (a2) {ax};
    \node[circle,draw] at (8, -2) (e2p) {$\phi_2$};
    \node[circle,draw] at (4, -2) (t2) {$\phi_\otimes$};
    \node[circle,draw] at (0, -2) (e1p) {$\phi_1$};

    \node[circle,draw] at (8, 2) (c1) {$\phi_{c_1}$};
    \node[circle,draw] at (4, -4) (c2) {$\phi_{c_2}$};
    \node at (9, 2) (r1) {$\ulocus{3}{X_9}$};
    \node at (3, -4) (r2) {$\ulocus{3}{X_{10}}$};

    \draw (a1) edge node[midway,above] {$p_3(\rc\cdot X_1) \eqq p_3(\rc\cdot X_2)$} (e2);
    \draw (e2) edge node[midway,above] {$p_2(\gc\cdot X_2) \eqq p_2(\gc\cdot X_3)$} (t1);
    \draw (t1) edge node[midway,above] {$p_1(\gc\cdot X_3) \eqq p_1(\gc\cdot X_4)$} (e1);
    \draw (e1) edge node[midway,left=3mm] {$p_3(\lc\cdot X_3) \eqq p_3(\lc\cdot X_5)$} (a2);
    \draw (a2) edge node[midway,below] {$p_2(\rc\cdot X_6) \eqq p_2(\rc\cdot X_5)$} (e2p);
    \draw (e2p) edge node[midway,below] {$p_2(\gc\cdot X_7) \eqq p_2(\gc\cdot X_6)$} (t2);
    \draw (t2) edge node[midway,below] {$p_1(\gc\cdot X_8) \eqq p_1(\gc\cdot X_7)$} (e1p);
    \draw (e1p) edge node[midway,right=3mm] {$p_3(\lc\cdot X_9) \eqq p_3(\lc\cdot X_8)$} (a1);
    \draw (t2) edge node[near end,right=3mm] {$p_3(\gc\cdot X_7) \eqq p_3(\gc\cdot X_{10})$} (c2);
    \draw (t1) edge node[near end,left=3mm] {$p_3(\gc\cdot X_9) \eqq p_3(\gc\cdot X_3)$} (c1);
\end{tikzpicture}
    \caption{Example of a correct and saturated cyclic diagram for the constellation from \Cref{fig:incorrectps} actualising into $[\ulocus{3}{X_9}, \ulocus{3}{X_{10}}]$. The cycle can extended infinitely many times by adding copies of three stars of the constellation.}
    \label{fig:cycleunfolding}
\end{figure}
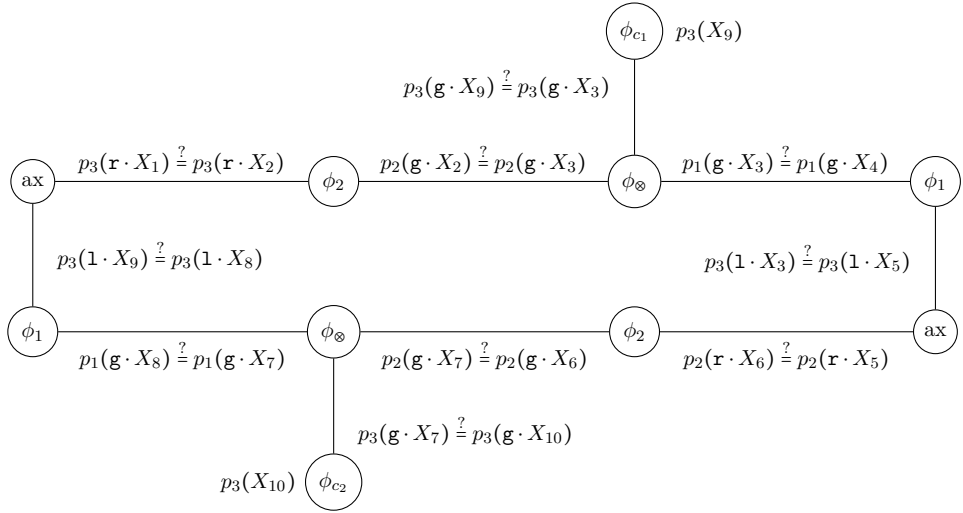

We write $\exec[\restrees]$ for the execution restricted to tree-shaped diagrams (execution used by Girard) which applies the operator $\lightning$ and $\flat$. If we consider the incorrect proof-structure $\mathcal{S}'$ in Example~\ref{ex:mllwrong}, we have $\exec[\restrees](\tcol{+t.\axof{\mathcal{S}'}} \uplus \testof{\mathcal{S}'}{\varphi}) = \emptyset$ making this incorrect proof-structure invisible in the normal form. Assume we have a correct proof-structure $\mathcal{R}$.
By Lemma~\ref{lem:independence}, the union of $\mathcal{R}$ and $\mathcal{S}'$ is translated into a correct constellation although it should not (because $\mathcal{S}'$ is incorrect). This is due to the absence of conclusion caused by cuts.
However, even without cuts, the same phenomena can occur. For instance, if we consider the correctness hypergraph of \Cref{fig:incorrectps}, infinitely many diagrams can be constructed because of the loop of dependencies but all the corresponding diagrams have free polarised rays. Such diagrams are erased by the operator $\lightning$. Therefore, $\exec[\restrees](\tcol{+t.\axof{\mathcal{S}}} \uplus \testof{\mathcal{S}}{\varphi}) = \emptyset$.

A solution is to make incorrect proofs visible in the normal form. Our definition of execution makes this possible. In case of closed cycles, we can construct closed cyclic diagrams (which are accepted). Since proof-structures are always translated into exact constellations, such diagrams will always actualise into the empty star. As for cyclic diagrams with free rays, in the case of proof-structure, they will yield infinitely many stars by unfolding the loop infinitely many times. This makes incorrectness visible in the output of execution.

This problem is actually not new and already existed in previous GoI models. For instance, in Seiller's works, it was necessary to be able to detect cycles. The problem has been solved with a notion of \emph{wager} which is a value associated to proofs indicating the presence of cycles but we were able to simulate this idea by modifying the notion of execution instead.

We can now state the Danos-Regnier correctness criterion in the stellar resolution.

\begin{prop}
\label{prop:cycleconc}
If a connected multiplicative correctness hypergraph $\mathcal{S}^\varphi$ has no conclusion then it is cyclic.
\end{prop}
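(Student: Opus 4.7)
My plan is to prove the contrapositive by an Euler-characteristic count on $\mathcal{S}^\varphi$ viewed as a graph. Writing $a,c,t,p$ for the numbers of axiom, cut, tensor, and par hyperedges of $\mathcal{S}$, the vertex count is $|V| = 2a + t + p$ (each axiom provides two target vertices, each tensor and each par provides one). I would view the correctness hypergraph as a graph by expanding every size-$3$ tensor hyperedge into a triangle on its three incident vertices, keeping each size-$2$ hyperedge (axiom, cut, switched par) as a single edge, so that $|E| = a + c + 3t + p$.

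Since every vertex is the source of at most one hyperedge, the number of conclusions of $\mathcal{S}$ equals $|V| - (2c + 2t + 2p) = 2a - 2c - t - p$, so the no-conclusion hypothesis translates to $2a = 2c + t + p$. Substituting this identity into the Euler formula, the cyclomatic number of the connected graph $\mathcal{S}^\varphi$ is
\[ |E| - |V| + 1 \;=\; (c + 2t - a) + 1 \;=\; \tfrac{3t - p}{2} + 1, \]
so it suffices to show $p \leq 3t + 2$ in order to conclude cyclicity.

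When $t \geq 1$ the result is immediate: the three vertices incident to any single tensor hyperedge already form a triangle in $\mathcal{S}^\varphi$, giving a cycle. The hard case is $t = 0$, where the count forces $p \in \{0,2\}$. If $p = 0$ one has $|V| = |E|$ in a connected graph, already guaranteeing a cycle. The remaining subcase $p = 2$ is where I expect the main obstacle, because the edge counting alone produces zero cyclomatic number; I would dispatch it by a structural argument using the cut constraint. Without any tensor hyperedge, every cut must be atom-atom, so neither of the two par conclusions can be consumed by a cut; combined with the requirement that each vertex is the source of at most one hyperedge, this forces the two pars to feed their conclusions into each other's premises, producing a cyclic dependency which manifests as a cycle in $\mathcal{S}^\varphi$.
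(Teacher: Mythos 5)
Your argument has a genuine flaw at its core: replacing each ternary $\otimes$ hyperedge by a \emph{triangle} on its three incident vertices is not a faithful encoding of the cycle structure of $\mathcal{S}^\varphi$. A cycle of the correctness hypergraph is an alternating sequence of vertices and pairwise distinct hyperedges; the triangle you create is a cycle of the expanded graph that reuses the single tensor hyperedge three times and corresponds to no cycle of $\mathcal{S}^\varphi$. Your claim that ``the three vertices incident to any single tensor hyperedge already form a triangle, giving a cycle'' would, if correct, make every proof-structure containing a $\otimes$ fail the Danos--Regnier acyclicity condition, contradicting Theorem~\ref{thm:correctness} (the proof-net for $\vdash X_1^\bot, X_2^\bot, X_1\otimes X_2$ is correct, hence all its switchings are acyclic, yet it contains a tensor). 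The faithful graph encoding is the incidence (star) expansion, which adds one new node per hyperedge; redoing the Euler count with that expansion changes both $|V|$ and $|E|$ and invalidates the rest of the computation. There are secondary problems even inside your own model: a connected graph is cyclic iff $|E|-|V|+1\geq 1$, which with your numbers requires $p\leq 3t$, not $p\leq 3t+2$; and the identity $2a=2c+t+p$ only forces $p$ to have the parity of $t$, so ``the count forces $p\in\{0,2\}$'' when $t=0$ does not follow (your structural remark that, absent tensors, no cut can consume a par conclusion is the right idea, and in fact it forces $p=0$ outright, since a nonempty forest of pars feeding only into one another has no root).

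For comparison, the paper's proof avoids all counting: since $\mathcal{S}^\varphi$ has no conclusion, every vertex is the source of exactly one hyperedge, and by the definition of proof-structure it is also the target of exactly one hyperedge; these two hyperedges are distinct (no hyperedge of \Cref{subfig:proofstructures} has a vertex as both source and target), so every vertex is incident to at least two hyperedges, and a finite hypergraph with minimum degree $2$ (all hyperedges having at least two incident vertices) contains a cycle. If you want to salvage a counting proof, run your Euler argument on the incidence graph instead of the triangle expansion; but the degree argument is both shorter and free of case analysis.
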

\begin{proof}
Since $\mathcal{S}^\varphi$ has no conclusion, all vertices are source of exactly one hyperedge. By the definition of proof-structure, all vertices are target of exactly one hyperedge. Hence, all vertices have a degree at least $2$. Therefore, by basic properties of cycles in graph theory, $\mathcal{S}^\varphi$ must be cyclic.
\end{proof}

\begin{thm}[Stellar correctness criterion]
\label{thm:correctness2}
A proof-structure $\mathcal{S}$ such that $\conclu{\mathcal{S}} = \{v_1, ..., v_n\}$ is MLL-certifiable if and only if for all switchings $\varphi$, we have: \[\exec(\tcol{+t.\axof{\mathcal{S}}} \uplus \testof{\mathcal{S}}{\varphi}) = [\ulocus{v_1}{X}, ..., \ulocus{v_n}{X}].\]
\end{thm}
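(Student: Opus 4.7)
The plan is to combine the Danos--Regnier characterisation (Theorem~\ref{thm:correctness}) with the structural realisation Lemma~\ref{lem:structural} and the combinatorial properties of the associated constellation given by Corollary~\ref{cor:deterministicexact}. Throughout, let $\Phi_\varphi := \tcol{+t.\axof{\mathcal{S}}} \uplus \testof{\mathcal{S}}{\varphi}$.

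For the forward direction, if $\mathcal{S}$ is MLL-certifiable then for every switching $\varphi$, $\mathcal{S}^\varphi$ is connected and acyclic. Via Lemma~\ref{lem:structural} these topological properties transfer to $\dgraph{\Phi_\varphi}$, and by Corollary~\ref{cor:deterministicexact} the constellation $\Phi_\varphi$ is deterministic and exact. Lemma~\ref{lem:uniqueness} then yields a unique saturated diagram $\delta$, and Lemma~\ref{lem:exactness} ensures $\delta$ is correct, so $\csatdiags(\Phi_\varphi) = \{\delta\}$. Since $\dgraph{\Phi_\varphi}$ is a tree and $\delta$ is saturated, $\delta$ is forced to realise every vertex exactly once (no loop admits duplication, no choice admits branching). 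I then compute $\actu\delta$: by Definition~\ref{def:test} the only uncoloured rays are the $p_{v_i}(X)$ occurring inside the conclusion stars $\gconc{v_i}$, so $\freerays{\delta} = \{p_{v_1}(X),\dots,p_{v_n}(X)\}$; by exactness $\mathcal{P}(\delta)$ consists entirely of equations $t \eqq t$ up to renaming, so $\solution{\mathcal{P}(\delta)}$ is itself a renaming and $\actu\delta = [\ulocus{v_1}{X}, \ldots, \ulocus{v_n}{X}]$ up to $\alphaeq$, as required.

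For the converse I argue by contraposition. Suppose $\mathcal{S}$ is not MLL-certifiable; by Theorem~\ref{thm:correctness} either some $\mathcal{S}^\varphi$ is cyclic (i.e.\ $\mathcal{S}$ fails MLL+MIX correctness by Theorem~\ref{thm:mixcorrectness}) or some $\mathcal{S}^\varphi$ is disconnected. In the cyclic case, Lemma~\ref{lem:structural} transports the cycle to $\dgraph{\Phi_\varphi}$, and I construct, for each $k \in \nat$, a correct saturated diagram by unfolding the cycle $k$ times in the spirit of Figure~\ref{fig:cycleunfolding}; these diagrams are pairwise non-isomorphic, so $\exec(\Phi_\varphi)$ is infinite and in particular differs from the right-hand side. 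In the disconnected case, $\mathcal{S}^\varphi$ splits into $k>1$ connected components $C_1,\ldots,C_k$; since $\mathcal{S}^\varphi$ is still acyclic (otherwise we are in the previous case), Proposition~\ref{prop:cycleconc} forces each $C_i$ to carry at least one conclusion. Lemma~\ref{lem:independence} then expresses $\exec(\Phi_\varphi)$ as the union of the executions on each component; applying the forward direction componentwise gives $k$ distinct stars in the output, each containing only the conclusion rays of its own component, contradicting the single-star right-hand side.

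The main obstacle is the cyclic case. I need to verify that unfolding a cycle $k$ times really does yield a correct saturated diagram for every $k$, and that these diagrams are genuinely distinct elements of $\csatdiags(\Phi_\varphi)$. Exactness (Corollary~\ref{cor:deterministicexact}) guarantees that the underlying unification problem of each unfolded diagram is solvable (only renamings are involved), but one must carefully exhibit a family of graph homomorphisms $\delta_k : G_{\delta_k} \to \dgraph{\Phi_\varphi}$ that are saturated with respect to $\sqsubseteq$ — this amounts to checking that every ray, coloured or not, created by the duplication is either consumed by an internal link of the unfolding or corresponds to a conclusion of $\mathcal{S}$, so that no further extension is possible. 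Once this family is in hand, counting indexed occurrences (as in Definition~\ref{def:constellation}) immediately gives the desired infinite output.
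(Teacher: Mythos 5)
Your proposal is correct and follows essentially the same route as the paper's proof: the forward direction combines Lemma~\ref{lem:structural}, Corollary~\ref{cor:deterministicexact}, and Lemmas~\ref{lem:uniqueness} and~\ref{lem:exactness} to get a unique correct saturated tree-shaped diagram actualising into the conclusion star, and the converse reduces failure of correctness to either a cycle (yielding infinitely many saturated diagrams, as in Figure~\ref{fig:cycleunfolding} and Example~\ref{ex:mllwrong}) or a disconnection (yielding several stars via Proposition~\ref{prop:cycleconc} and Lemma~\ref{lem:independence}). Your phrasing of the converse as a contraposition rather than the paper's double proof by contradiction, and your explicit flagging of the saturation check for the unfolded cyclic diagrams, are only presentational differences.
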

\begin{proof}
Let $\varphi$ be a switching of $\mathcal{S}$. By Lemma~\ref{lem:structural}, we know that $\dgraph{\tcol{+t.\axof{\mathcal{S}}} \uplus \testof{\mathcal{S}}{\varphi}}$ is structurally equivalent to $\mathcal{S}^\varphi$.
\begin{itemize}
    \item $(\Rightarrow)$ Assume $\mathcal{S}^\varphi$ is connected and acyclic and so is $\dgraph{\tcol{+t.\axof{\mathcal{S}}} \uplus \testof{\mathcal{S}}{\varphi}}$.
    By Lemma~\ref{lem:uniqueness} and \ref{lem:exactness}, $\dgraph{\tcol{+t.\axof{\mathcal{S}}} \uplus \testof{\mathcal{S}}{\varphi}}$ has a unique correct diagram.
    We can construct a diagram $\delta$ by following the links of $\mathcal{S}^\varphi$. We have that $G_\delta$ is a tree containing all conclusions $v_1, ..., v_n$. Hence $\actu\delta$ is the star $[\ulocus{v_1}{X}, ..., \ulocus{v_n}{X}]$.

    \item $(\Leftarrow)$ Assume that $\exec(\tcol{+t.\axof{\mathcal{S}}} \uplus \testof{\mathcal{S}}{\varphi}) = [\ulocus{v_1}{X}, ..., \ulocus{v_n}{X}]$.
    Assume by contradiction that $\mathcal{S}^\varphi$ has at least two connected components. Assume that a component has no conclusion (because of cuts). Then, by Proposition~\ref{prop:cycleconc}, there is a cycle yielding infinitely many closed diagrams normalising into the empty star $\emptystar$. Hence, all connected components must have free rays corresponding to conclusion. By the independence of connected component (\cf Lemma~\ref{lem:independence}), we can independently execute each connected component. Since they correspond to deterministic and exact subconstellation, the normalisation produces the constellation $\phi_1 + ... + \phi_k$ for the $k$ connected component, contradicting the hypothesis that we normalise into a single star. Therefore, $\mathcal{S}^\varphi$ must be connected. Now, assume by contradiction that $\mathcal{S}^\varphi$ is cyclic.
    The cycle can either yield a closed diagram actualising into the empty star $\emptystar$ or pass through a conclusion and produce infinitely many stars containing conclusion rays. In both case, the normalisation is different from $[\ulocus{v_1}{X}, ..., \ulocus{v_n}{X}]$, contradicting the hypothesis. Therefore, $\mathcal{S}^\varphi$ must also be acyclic. This proves that $\mathcal{S}^\varphi$ must be a tree for any switching $\varphi$, \ie $\mathcal{S}$ is MLL-certifiable.
\end{itemize}
\end{proof}

The following corollary finally extends the logical correctness to MLL+MIX and suggest a more general variant which also captures MLL.

\begin{cor}
\label{cor:correctness}
Let $\mathcal{S}$ be a proof-structure and $\Phi := \tcol{+t.\axof{\mathcal{S}}} \uplus \testof{\mathcal{S}}{\varphi}$ be the constellation corresponding to the correctness hypergraph $\mathcal{S}^\varphi$ for some switching $\varphi$. We have:
\begin{itemize}
    \item $\mathcal{S}^\varphi$ is acyclic $\Leftrightarrow$ $\dgraph{\Phi}$ is acyclic $\Leftrightarrow$ $|\exec_A(\Phi)| < \infty$;
    \item $\mathcal{S}^\varphi$ is connected and acyclic $\Leftrightarrow$ $\dgraph{\Phi}$ is a deterministic tree $\Leftrightarrow$ $|\exec(\Phi)| = 1$;
    \item $\mathcal{S}^\varphi$ is connected and acyclic $\Leftrightarrow$ $\Phi$ normalises into the star of its uncoloured rays.
\end{itemize}
\end{cor}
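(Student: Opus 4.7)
The plan is to reduce each bullet to a property of the dependency graph via Lemma~\ref{lem:structural} (structural realisation) and Corollary~\ref{cor:deterministicexact} ($\Phi$ is deterministic and exact), then invoke the termination/uniqueness results of \Cref{subsec:propexec}. The leftmost equivalence of each bullet is a direct consequence of $\dgraph{\Phi}\simeq\mathcal{S}^\varphi$: acyclicity (resp.\ connectedness) transfers between hypergraph and dependency graph, and determinism in the second bullet is automatic by Corollary~\ref{cor:deterministicexact}.

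For the first bullet, the direction $\dgraph{\Phi}$ acyclic $\Rightarrow$ $|\exec(\Phi)|<\infty$ follows from Lemma~\ref{lem:acyclicity} (together with Lemma~\ref{lem:exactness}, which ensures $\csatdiags(\Phi)=\diags(\Phi)$ so that all saturated diagrams count). For the converse I would argue contrapositively: if $\dgraph{\Phi}$ contains a cycle, then it can be unfolded an arbitrary number of times inside a saturated diagram (repeated copies of the stars appearing along the cycle), yielding an infinite family of pairwise non-isomorphic saturated diagrams; since by Lemma~\ref{lem:exactness} exactness makes each of them correct, $|\csatdiags(\Phi)|=\infty$, hence $|\exec(\Phi)|=\infty$.

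For the second bullet, the forward implication uses Lemma~\ref{lem:uniqueness} directly: an acyclic, connected, deterministic constellation has exactly one (saturated, correct) diagram, which actualises into a single star. The converse splits into two cases handled separately. If $\dgraph{\Phi}$ is disconnected, Lemma~\ref{lem:independence} gives $\exec(\Phi)=\bigcup_i\exec(\Phi_i)$ over the components $\Phi_i$, each contributing at least one normal-form star (by determinism and exactness each component alone normalises into one star), so $|\exec(\Phi)|\geq 2$. If $\dgraph{\Phi}$ is cyclic, the argument from the first bullet shows $|\exec(\Phi)|$ is either infinite (cycle unfolding) or contains the empty star $\emptystar$ coming from closed cyclic diagrams that actualise into $\emptystar$ (since $\Phi$ is exact all equations along the cycle are renamings, so the actualisation collapses all rays, leaving no free rays). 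In either situation $|\exec(\Phi)|\neq 1$.

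The third bullet is essentially a restatement of Theorem~\ref{thm:correctness2}: when $\mathcal{S}^\varphi$ is connected and acyclic, the unique saturated correct diagram is a spanning tree of $\dgraph{\Phi}$, so every star of $\Phi$ appears exactly once and all coloured rays get paired off by the matchings; the only rays left free are the uncoloured rays contributed by the conclusion stars $\gconc{v_i}$, and the actualisation is an exact renaming (Lemma~\ref{lem:exactness}), hence $\exec(\Phi)=[\ulocus{v_1}{X},\dots,\ulocus{v_n}{X}]$. The converse is the contrapositive via the previous two bullets: disconnection produces several stars, cycles produce either $\emptystar$ or infinitely many stars, and in no such case does the output equal the single ``conclusion star''. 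The main obstacle will be the careful case analysis for cycles, distinguishing those that yield closed diagrams (collapsing to $\emptystar$) from those whose unfolding generates genuinely infinite families of saturated diagrams; here the exactness of $\Phi$ (Corollary~\ref{cor:deterministicexact}) is essential because it rules out the possibility that a cycle produces a failed unification and thus disappears from the normal form.
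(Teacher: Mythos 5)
Your proof is correct and follows essentially the same route as the paper's: both reduce each bullet to properties of $\dgraph{\Phi}$ via Lemma~\ref{lem:structural} and Corollary~\ref{cor:deterministicexact}, then invoke Lemmas~\ref{lem:acyclicity}, \ref{lem:uniqueness} and \ref{lem:independence} together with the cycle-unfolding argument (which the paper delegates to the proof of Theorem~\ref{thm:correctness2}). The only cosmetic difference is that your first bullet already shows cyclicity forces infinitely many saturated correct diagrams, which makes the ``$\emptystar$ from closed cycles'' branch of your second bullet redundant --- and on its own that branch would not suffice, since an output merely containing $\emptystar$ could still have cardinality $1$.
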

\begin{proof}
The first equivalence of each point are direct consequences of Lemma~\ref{lem:structural}. It only remains to show the last equivalences.
\begin{itemize}
    \item If $\dgraph{\Phi}$ is acyclic, then by Lemma~\ref{lem:acyclicity}, $|\exec_A(\Phi)| < \infty$ because $\Phi$ is exact and deterministic (\cf Corollary~\ref{cor:deterministicexact}). Now assume that $|\exec_A(\Phi)| < \infty$. The proof of Theorem~\ref{thm:correctness2} shows that the presence of cycles in correctness hypergraphs is linked to the generation of infinitely many correct saturated diagrams. Hence it cannot be both cyclic and strongly normalising and has to be acyclic.
    \item We start from the previous point. If $\dgraph{\Phi}$ is also connected, then by Lemma~\ref{lem:uniqueness} and \ref{lem:exactness} and the fact that $\Phi$ is deterministic and exact (\cf \ref{cor:deterministicexact}), there is exactly a unique correct saturated diagram, hence a single star in the normal form. Conversely, if $\Phi$ normalises into a single star, its dependency graph must be both connected and acyclic, otherwise we would end up with either several stars (\cf proof of Theorem~\ref{thm:correctness2}) or infinitely many correct saturated diagrams (since cycles are related to non-termination as stated in the previous point).
    \item The third case corresponds to an alternative characterisation of correct proof-structures. Assume $\mathcal{S}^\varphi$ is connected and acyclic. Then $\dgraph{\Phi}$ is a deterministic tree by the previous point. By definition, uncoloured rays are the only free rays in $\Phi$. Since $\Phi$ is exact, it must produce a unique diagram corresponding to the cover tree of $\dgraph{\Phi}$. By definition, such a diagram reduces into the star of its free rays, hence the star of its uncoloured rays. Now, assume $\Phi$ normalises into the star of its uncoloured rays. The reasoning is the same as for the previous point.
\end{itemize}
\end{proof}

The analysis of the computational and logical content of proofs in the Transcendental Syntax leads to a decomposition of proof-structures and give a new outlook on what being a ``correct" proof means in proof theory.

\begin{defi}[Translation of a proof-structure]
\label{def:transps}
The translation of a proof-structure $\mathcal{S}$ is defined as the constellation $\mathcal{S}^\bigstar = (\axcutof{\mathcal{S}}, \formatof{\mathcal{S}})$ where $\formatof{\mathcal{S}}$ is called \emph{format}\footnote{\emph{Gabarit} in Girard's original papers.
We choose the term ``format" because it is less awkward in English and reminds of file formats in a computer.} and is defined by $\formatof{\mathcal{S}} := \{\testof{\mathcal{S}}{\varphi} \mid \varphi \text{ is a switching of $\mathcal{S}$}\}$.
\end{defi}

As shown in Theorem~\ref{thm:correctness2}, $\mathcal{S}^\bigstar$ corresponds to a proof-net if and only if it passes all the tests $\Phi \in \formatof{\mathcal{S}}$. In particular, any proof-structure can be seen as a program (its set of axioms) already coming with some implicit constraining tests. This corresponds to a sort of hidden pre-typing. Proof-nets are programs coming with tests it can passes. Hence tests corresponds to a \emph{certification} for programs. This demonstrates what Girard means by ``making the hidden assumptions of logic explicit" (\cf\Cref{sec:intro}).

Notice that these tests are entirely definable by MLL formulas (and thus, dependent of them) because only vertices of the lower part of $\mathcal{S}^\varphi$ are used in the translation of Definition~\ref{def:test}. We obtain a more general meaning of the idea of proof: a proof is a computational entity passing the tests corresponding to a certain notion of formula/specification yet to be defined.

% ==================================================
\section{Emergence of formulas}\label{sec:formulas}
% ==================================================

Generalising the correctness criterion of proof-nets actually gives rise to a notion of type (or formula). We need to fix a symmetric binary relation between constellations formalising what we mean by ``correctly passing a test". For instance, Corollary~\ref{cor:correctness} suggests three such relations we call $\perp_\mathrm{fin}$, $\perp^1$ and $\perp^R$ but others can be designed depending on what we want. The intention behind orthogonality relations is that they define linear negations for linear logic.

\begin{defi}[Orthogonality]
\label{def:ortho}
We define binary relations of \emph{orthogonality} between two constellations $\Phi_1$ and $\Phi_2$ \wrt a set of colours $A \subseteq C$:
\begin{itemize}
    \item $\Phi_1 \perp^\mathrm{fin}_A \Phi_2$ when $|\exec_A(\Phi_1 \uplus \Phi_2)| < \infty$;
    \item $\Phi_1 \perp^1_A \Phi_2$ when $|\exec_A(\Phi_1 \uplus \Phi_2)| = 1$;
    \item $\Phi_1 \perp^R_A \Phi_2$ when $\exec(\Phi_1 \uplus \Phi_2) = \{\roots{\Phi_1 \uplus \Phi_2}\}$ where $\roots{\Phi}$ is the star of uncoloured rays in $\Phi$.
\end{itemize}
The orthogonal of a set of constellations $\mathbf{A}$ is defined by $\mathbf{A}^{\bot_A} := \{\Phi \mid \forall \Phi' \in \mathbf{A}, \Phi \perp_A \Phi'\}$ for a relation of orthogonality $\bot$.
\end{defi}

In order to allow typing for partial evaluations, the orthogonality relation $\bot_A$ has to be parametrised by a set of colours $A$ but we omit this parameter when considering all colours in $C$.

The orthogonal $\mathbf{A}^{\bot_A}$ corresponds to the set of all constellations passing the tests of $\mathbf{A}$. But since test and tested are both constellations and that orthogonality relations are symmetric, they have interchangeable roles, hence $\mathbf{A}$ is also the set of constellations passing the tests of $\mathbf{A}^\bot$.

The orthogonality $\perp^\mathrm{fin}$ will define a fully complete model of MLL+MIX, while $\perp^1$ and $\perp^R$ (which captures more directly the correctness criterion for MLL) will define a fully complete model of MLL. However, those notions of orthogonality share most of the properties needed, and we therefore use the generic notation $\perp$ in the following to state results valid for all of them.

\begin{lem}[Invariance of orthogonality under execution]
\label{lem:invortho}
Let $\Phi$ and $\Phi'$ be constellations such that $\Phi \ucap_{A \cup B} \Phi' = \emptyset$. We have $\Phi \perp \Phi'$ if and only if $\exec(\Phi) \perp \Phi'$ for $\perp \in \{\perp^1, \perp^\mathrm{fin}, \perp^R\}$.
\end{lem}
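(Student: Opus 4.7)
The proof reduces to Lemma \ref{lem:ppe} on partial pre-execution, which is the only substantive tool required. My plan is to observe that each of the three orthogonality relations is defined entirely through a property of the constellation $\exec(\Phi_1 \uplus \Phi_2)$: cardinality being finite for $\perp^{\mathrm{fin}}$, cardinality equal to one for $\perp^1$, and equality with the star $\roots{\Phi_1 \uplus \Phi_2}$ for $\perp^R$. Thus, if we can show that the value of $\exec(\cdot)$ does not depend on whether $\Phi$ has been pre-executed, the equivalence will transfer to each relation.

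First, I would instantiate Lemma \ref{lem:ppe} by taking $B$ to be the set of colours appearing in $\Phi$, so that $\exec_B(\Phi) = \exec(\Phi)$, and $A$ to be any further colours occurring in $\Phi'$ so that $\exec_{A \cup B}$ coincides with $\exec$ on $\Phi \uplus \Phi'$. The precondition $\Phi \ucap_{A \cup B} \Phi' = \emptyset$ is exactly the hypothesis of the current lemma, giving the key identity
\[
\exec(\exec(\Phi) \uplus \Phi') \;=\; \exec(\Phi \uplus \Phi').
\]
From this single identity, the cases $\perp^{\mathrm{fin}}$ and $\perp^1$ follow at once, since finiteness and being-a-singleton are invariants of equality of constellations.

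For $\perp^R$ I would need, in addition, that $\roots{\exec(\Phi) \uplus \Phi'} = \roots{\Phi \uplus \Phi'}$, i.e.\ that the star of uncoloured rays on both sides of the orthogonality test agrees. The rays coming from $\Phi'$ contribute identically on both sides, so the work lies in comparing the uncoloured rays of $\Phi$ with those of $\exec(\Phi)$. Uncoloured rays cannot themselves participate in fusion, hence every uncoloured ray of $\Phi$ survives into $\exec(\Phi)$; the only effect execution can have is to propagate a substitution (stemming from fusions on coloured rays in the same star) into those uncoloured rays. Because the precondition ensures that no variable of $\Phi$ is accessible from $\Phi'$, these propagated substitutions are internal to $\Phi$ and, up to the $\alpha$-equivalence under which stars are already considered (cf.\ the convention in \Cref{subsec:stars}), they amount to renamings that do not alter the $\roots{\cdot}$ data used in the orthogonality test.

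The main obstacle, as just sketched, is this last bookkeeping step for $\perp^R$: keeping careful track of how the unifying substitutions associated to the saturated diagrams of $\Phi$ act on its uncoloured rays, and arguing that whenever the orthogonality condition is meaningful (i.e.\ when one of $\Phi \perp^R \Phi'$ or $\exec(\Phi) \perp^R \Phi'$ actually holds), these substitutions collapse to $\alpha$-renamings on the free uncoloured rays, so that $\roots{\exec(\Phi) \uplus \Phi'}$ and $\roots{\Phi \uplus \Phi'}$ coincide modulo $\alphaeq$. The cases $\perp^{\mathrm{fin}}$ and $\perp^1$ require no such care and are essentially one-line consequences of \Cref{lem:ppe}.
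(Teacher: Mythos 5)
Your proof is correct and follows the same route as the paper's: both reduce immediately to Lemma~\ref{lem:ppe} (with the lemma's hypothesis matching the shared-variable precondition) to obtain $\exec(\exec(\Phi)\uplus\Phi') = \exec(\Phi\uplus\Phi')$, and then observe that each orthogonality relation is determined by that normal form. The only difference is that you additionally check $\roots{\exec(\Phi)\uplus\Phi'} = \roots{\Phi\uplus\Phi'}$ in the $\perp^R$ case --- a detail the paper's one-line argument silently folds into ``some property $P$ of $\exec(\Phi\uplus\Phi')$'' even though $\roots{\cdot}$ depends on the input constellations and not only on the executed union --- so your extra bookkeeping is a legitimate refinement of the same argument rather than a different approach.
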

\begin{proof}
These relation are satisfied when $P(\exec(\Phi \uplus \Phi'))$ is satisfied for some property $P$. By the lemma of partial pre-execution (\cf Lemma~\ref{lem:ppe}), we have $\exec(\exec(\Phi) \uplus \Phi') = \exec(\Phi \uplus \Phi')$.
Hence we have $P(\exec(\Phi \uplus \Phi'))$ if and only if $P(\exec(\exec(\Phi) \uplus \Phi'))$, meaning that we have $\Phi \perp \Phi'$ if and only if $\exec(\Phi) \perp \Phi'$.
\end{proof}

% --------------------------------------------------
\subsection{Types as labels certified by tests (l'Usine)}\label{subsec:testing}
% --------------------------------------------------

In this section, we construct formulas by generalising the logical correctness of \Cref{subsec:correctness}.

\begin{defi}[Type label]
\label{def:typelabel}
A \emph{type} is an object (or label) $A$ associated to a finite set of constellations $\tests{A}$ called its \emph{tests}. We say that a constellation $\Phi$ is of type $A$ \wrt $\perp$ if and only if $\Phi \in \tests{A}^\bot$.
\end{defi}

A type corresponds to a \emph{specification} for a computational entity (typically a program) certified by an associated set of \emph{tests} as we do in software engineering or formal methods. For instance, in model checking \cite{baier2008principles}, given an automata $\Phi$ (or labelled transition system), we would like to know whether it satisfies a specification $S$ (often written as a formula of a logic called LTL). It is then possible to check if $\Phi$ satisfies $S$ by turning $\lnot S$ into an automaton $\Phi_{\lnot S}$ and verifying if $\mathcal{L}(\Phi) \cap \mathcal{L}(\Phi_{\lnot S}) = \emptyset$, by analysing paths of the state graph of the automaton \cite[Section 3.6.3]{huth2004logic}. This is similar to how we turn a sequent $\vdash\Gamma$ into a set of tests (defined as constellations) allowing us to label/certify a constellation as a proof of $A$. Moreover, the Danos-Regnier's tests can also be considered as proofs of $A^\bot$ as we will see in Observation~\ref{obs:testsproofdual}.

The purpose of having finite set of tests is to make type checking computable. However, this is only happens under some conditions such as the orthogonality relation between computable. Even under these conditions, it is possible to ``trick'' tests so to create infinite loops and make effective type checking impossible. It shows that we need to consider testing \wrt a specific class of objects (for instance the universe of proof-structures) so to prevent such tricks to happen.

Although similar, typing with finite tests is not quite the type checking with typing rules which appears in typed $\lambda$-calculus. Girard's Usine is meant to check \emph{cut-free proofs} only, whereas it is possible to verify the type of normalisable terms for a sequent $\vdash (\lambda x. M) N : A$ without actually doing the normalisation. This is because the transcendental syntax distinguishes between:
\begin{itemize}
    \item characterising the shape of our logical objects (cut-free proofs), which corresponds to Usine and to the logical rules of sequent calculus;
    \item defining the use of our logical objects (interaction with cuts), which corresponds to Usage and to the cut rule of sequent calculus.
\end{itemize}
These notions are often mixed in proof theory: in order to even have the right to write an elimination rule such as modus ponens, we implicitly assume that we are given an object which has the shape of a proof of implication $A \Rightarrow B$ and that its interaction with any proof of $A$ will produce a proof of $B$. In other words, we assume an \emph{adequation} between Usine and Usage or that we have primitive objects (defined by finite tests) which will behave soundly \wrt some \emph{use} (behaving like functions in the case of modus ponens).

The definition of orthogonality and interactive testing leads to a reformulation of correctness criterion, showing that MLL sequents define type labels by themselves, independently of a proof-structure. This is based on the fact that the bottom part of proof-structure corresponds to the syntax tree of a sequent which is already a sort of pre-typing constraining atomic cut-elimination. By constructing a syntax hypergraph from a sequent, Definition~\ref{def:test} can be used.

\begin{defi}[Test of a sequent]
\label{def:mlltest}
Let $\vdash\Gamma$ be a sequent of MLL where $\Gamma \subseteq \fmll$ and all variables are distinct. We define the syntax tree of an MLL formula $A$ inductively:
\begin{itemize}
    \item $ST(X_i)$ and $ST(X_i^\bot)$ are vertex labelled by $X_i$ and $X_i^\bot$ respectively;
    \item $ST(A \otimes B)$ is an hyperedge labelled by $\otimes$ linking the conclusion of $ST(A)$ and $ST(B)$ as sources and having a vertex labelled by $A \otimes B$ as target;
    \item $ST(A \parr B)$ is an hyperedge labelled by $\parr$ linking the conclusion of $ST(A)$ and $ST(B)$ as sources and having a vertex labelled by $A \parr B$ as target.
\end{itemize}
The syntax hypergraph $ST(\vdash\Gamma)$ of $\vdash\Gamma$ is defined as the hypergraph disjoint union of all $ST(A_i)$ for $A_i \in \Gamma$. A switching (\cf Definition~\ref{def:correctnessgraph}) $\varphi$ still applies on $ST(\vdash\Gamma)$ as for correction hypergraphs. We write $ST(\vdash\Gamma)^\varphi$ for the switching $\varphi$ applied on the syntax hypergraph $ST(\vdash\Gamma)$.

The \emph{test} associated to the sequent $\vdash\Gamma$ and the switching $\varphi$ is defined as the constellation $\test{\vdash\Gamma}^\varphi$ such that $I_{\test{\vdash\Gamma}^\varphi} := V^{ST(\vdash\Gamma)^\varphi}$ (it is indexed by vertices of the syntax tree) and $\test{\vdash\Gamma}^\varphi[v] := v^\bigstar$ where $\ell(v)$ is not an atomic formula and $v^\bigstar$ is the translation of Definition~\ref{def:test}. Notice that we reject the translation of atomic formulas because they depend upon a proof-structure $\mathcal{S}$. This dependency is actually not necessary.

The \emph{set of tests} associated to the sequent $\vdash\Gamma$ is defined by $\tests{\vdash\Gamma} := \{\test{\vdash\Gamma}^\varphi \mid \varphi \text{ is a switching of } ST(\vdash\Gamma)\}$.
\end{defi}

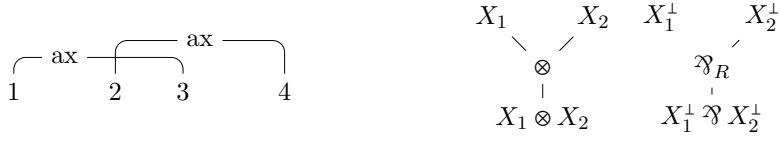
\begin{figure}
	\begin{minipage}{0.4\textwidth}
    \scalebox{0.9}{\begin{tikzpicture}
      % tensor
      \node at (-0.75, 0.75) (a) {1};
      \node at (0.75, 0.75) (b) {2};
      \node at (1.75, 0.75) (ad) {3};
      \node at (3.25, 0.75) (bd) {4};
      
      \node at (0, 1.25) (ax1) {ax};
      \node at (2, 1.5) (ax2) {ax};
      \draw[-, rounded corners=5pt] (ax1) -| (a);
      \draw[-, rounded corners=5pt] (ax1) -| (ad);
      \draw[-, rounded corners=5pt] (ax2) -| (b);
      \draw[-, rounded corners=5pt] (ax2) -| (bd);
    \end{tikzpicture}}
    \end{minipage}
    \begin{minipage}{0.4\textwidth}
    \scalebox{0.9}{\begin{tikzpicture}
      % tensor
      \node at (-0.75, 0.75) (a) {$X_1$};
      \node at (0.75, 0.75) (b) {$X_2$};
      \node at (0, 0) (tens) {$\otimes$};
      \node at (0, -0.75) (ab) {$X_1 \otimes X_2$};
      \draw[-] (a) -- (tens);
      \draw[-] (b) -- (tens);
      \draw[-] (tens) -- (ab);
      
      % par
      \node at (1.75, 0.75) (ad) {$X_1^\bot$};
      \node at (3.25, 0.75) (bd) {$X_2^\bot$};
      \node at (2.5, 0) (par) {$\parr_R$};
      \node at (2.5, -0.75) (adbd) {$X_1^\bot \parr X_2^\bot$};
      \draw[-] (bd) -- (par);
      \draw[-] (par) -- (adbd);
    \end{tikzpicture}}
    \end{minipage}
    \caption{We expect this proof-structure to be able to pass any test of $\tests{\vdash X_1 \otimes X_2, X_1^\bot \parr X_2^\bot}$. However, since the function symbols used in tests are not compatible with the ones of the proof-structure, we need a conversion of function symbols to allow interaction.}
    \label{fig:readdressing}
\end{figure}

We now defined MLL sequents as type labels in the sense of Definition~\ref{def:typelabel}. However, there is a minor technical problem: arbitrary constellations may not match with the tests we defined because of a difference of function symbols, as illustrated in \Cref{fig:readdressing}. One solution is to extended the notion of colour shifts of Definition~\ref{def:colourshift} to change rays in order to force the $\alpha$-unification

\begin{defi}[Conjugation]
\label{def:conjugation}
A \emph{conjugation} $\mu : \idrays{\mathbb{S}} \rightarrow \idrays{\mathbb{S}'}$ between two signatures $\mathbb{S}$ and $\mathbb{S}'$ is a function replacing the rays of a constellation such that it preserves its structure, \ie $\dgraph{\mu(\Phi)} \simeq \dgraph{\Phi}$.
\end{defi}

Another solution is to use a computational realisation of conjugations by using stars $[-t, +u]$ where $+t$ is a ray of the vehicle and $-u$ is a ray of a test. This corresponds to a sort of generalised cut allowing a trivial connexion between two rays. We give a more general definition of that idea.

\begin{defi}[Adapter]
\label{def:adapter}
Let $\Phi$ be a constellation. An \emph{adapter} for $\Phi$ is a star $[t, u]$ where $t'$ and $u'$ are rays in $\Phi$ which are respectively $t$ and $u$ with opposite polarity.
\end{defi}

Conjugations induce adapters. Whenever we have a conjugation $\mu$ such that $\mu(r_1) = r_2$, we can construct an adapter $[r_1', r_2']$ where $r_i'$ has a polarity opposite to $r_i$. We use adapters and conjugations indistinctly in this paper.

Notice that the translation of atomic formulas in Definition~\ref{def:mlltest} actually corresponds to adapters between a vehicle and a test, hence tests are indeed independent of vehicles. This dependency is only artificial and appears when considering proof-structures as an entity which cannot be decomposed.

\begin{prop}[Correspondence between proof-structure tests and sequent tests]
\label{prop:eqtest}
Let $\mathcal{S}$ be a cut-free proof-structure. For all switching $\varphi$ of $\mathcal{S}$, there exists an MLL sequent $\vdash\Gamma$ and a constellation of adapters $\Phi$ such that $\exec(\testof{\mathcal{S}}{\varphi}) = \exec(\test{\vdash\Gamma}^\varphi \uplus \Phi)$.
\end{prop}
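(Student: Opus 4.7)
The plan is to exhibit an MLL sequent $\vdash \Gamma$ whose syntax tree mirrors the lower ($\otimes/\parr$) part of $\mathcal{S}$, together with a family of adapter stars that simulate the axiom hyperedges. Since $\mathcal{S}$ is cut-free, deleting all axiom hyperedges from $\mathcal{S}$ yields a forest, one tree per conclusion $c \in \conclu{\mathcal{S}}$, whose internal nodes are $\otimes$/$\parr$ hyperedges and whose leaves are the atoms in $\atoms{\mathcal{S}}$. Reading each tree bottom-up, interpreting the labels as the corresponding connectives and assigning a fresh propositional variable to each leaf, produces a formula $A_c$; set $\Gamma = \{A_c \mid c \in \conclu{\mathcal{S}}\}$. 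By construction this yields a canonical bijection $\psi$ from the vertices of $ST(\vdash\Gamma)$ onto the non-cut vertices of $\mathcal{S}$, sending atoms to atoms, conclusions to conclusions, and commuting with the $\otimes/\parr$-hyperedge structure. The switching $\varphi$ transports through $\psi$ to a switching $\varphi'$ on $ST(\vdash\Gamma)$.

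Comparing \Cref{def:test} with \Cref{def:mlltest} under the identification $\psi$, the $\otimes$, $\parr_L$, $\parr_R$, and conclusion stars produced on the two sides correspond star-by-star, up to the relabelling $p_v \mapsto p_{\psi(v)}$ of unary function symbols induced by $\psi$ (this renaming can be realised as a colour shift in the sense of \Cref{def:colourshift}, so by \Cref{prop:colourshift} the underlying dependency graphs are isomorphic). The only mismatch lies at atoms: each $a \in \atoms{\mathcal{S}}$ contributes the star $\alpha_a := [\tcol{-\addr_{\mathcal{S}}(a,X)}, \qray[+]{a}]$ to $\testof{\mathcal{S}}{\varphi}$, while the corresponding atomic leaf of $ST(\vdash\Gamma)$ contributes nothing to $\test{\vdash\Gamma}^{\varphi'}$, leaving the ray $\qray[-]{a}$ that sits in the neighbouring $\otimes/\parr$ star free.

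I would therefore set $\Phi := \{\alpha_a \mid a \in \atoms{\mathcal{S}}\}$, transported through $\psi$. Each $\alpha_a$ is a two-ray star whose ray $\qray[+]{a}$ is dual to the free ray $\qray[-]{a}$ of $\test{\vdash\Gamma}^{\varphi'}$, so $\alpha_a$ plays the role of an adapter (a generalised cut) relocating this free ray to the $-t$-coloured address $\tcol{-\addr_{\mathcal{S}}(a,X)}$ expected by the proof-structure test. With this choice, $\testof{\mathcal{S}}{\varphi}$ and $\test{\vdash\Gamma}^{\varphi'} \uplus \Phi$ agree star-by-star modulo $\psi$, and \Cref{lem:structural} together with \Cref{cor:deterministicexact} ensures that both have isomorphic, deterministic and exact dependency graphs, so their executions coincide. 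The main obstacle will be the bookkeeping required to pin down $\psi$ and to check that the two naming schemes for vertex addresses can indeed be reconciled by a colour shift; once this is settled, equality of executions follows directly from the definitions.
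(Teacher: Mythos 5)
Your proposal is correct and follows essentially the same route as the paper's proof: read the sequent $\vdash\Gamma$ off the $\otimes/\parr$ forest of $\mathcal{S}$, observe that $\test{\vdash\Gamma}^\varphi$ and $\testof{\mathcal{S}}{\varphi}$ coincide on the non-atomic stars up to a structure-preserving renaming, and take $\Phi$ to be precisely the atom stars of $\testof{\mathcal{S}}{\varphi}$, reinterpreted as adapters plugging the free $\qray[-]{a}$ rays. One small correction: the relabelling $p_v \mapsto p_{\psi(v)}$ cannot be realised as a colour shift, since Definition~\ref{def:colourshift} only acts on colours while $p_v$ is an uncoloured function symbol sitting under the colours $\pm c$ and $\pm t$; the right notion is the paper's \emph{conjugation} (Definition~\ref{def:conjugation}), which is exactly what the paper invokes and which still gives you the dependency-graph isomorphism your argument needs.
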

\begin{proof}
The constellation $\testof{\mathcal{S}}{\varphi}$ corresponds to the syntax tree of a formula with exactly one premise cut for each $\parr$ vertex. Hence, it naturally induces a sequent $\vdash \Delta$ and we defined $\Gamma := \Delta$. 
The constellation $\test{\vdash\Gamma}^\varphi$ structurally corresponds to $\testof{\mathcal{S}}{\varphi}$ without the upper rays of colour $\tcol{-t}$ which allows connexion with the right vehicle. Apart from that, they both use the same translation function $(\cdot)^\bigstar$ on vertices for correctness hypergraphs. Assume we have a star for atom $[\tcol{-t.p_v(t)}, \ccol{+c.q_w(x)}]$ related to some star $[\ccol{-c.q_w(x)}, ...]$ in $\test{\vdash\Gamma}^\varphi$. During the execution, they will merge into $[\tcol{-t.p_v(t)}, ...]$. However, it is possible to construct $\Phi$ so to reproduce this step with an adapter $[\tcol{-t.p_v(t)}, \ccol{+c.q_A(x)}]$ (by definition, the star $[\ccol{-c.q_A(x)}, ...]$ which is isomorphic to $[\ccol{-c.q_w(x)}, ...]$ must be present in $\testof{\mathcal{S}}{\varphi}$). Moreover, because of the structural equivalence between the two constellations, they only differ by conjugation. It is then possible to extend $\Phi$ so that $\test{\vdash\Gamma}^\varphi$ is turned exactly into $\testof{\mathcal{S}}{\varphi}$. It follows that the two constellations must have the same normal form. 
\end{proof}

\begin{defi}[Typing]
We say that a constellation $\Phi$ is of type $\vdash\Gamma$, written $\vdash \Phi : \Gamma$ when $\Phi \in (\testof{\vdash\Gamma}{\varphi} \uplus \Phi_\mu)^\bot$ for a set of adapters $\Phi_\mu$ and all switchings $\varphi$ of $\vdash\Gamma$.
\end{defi}

\begin{prop}[Reformulation of logical correctness]
\label{prop:reformulationcorrectness}
A cut-free proof-structure $\mathcal{S}$ is MLL-certifiable if and only if there exists a sequent $\vdash\Gamma$ and a constellation of adapters $\Phi$ such that $\vdash \tcol{+t.\axof{\mathcal{S}}} : \Gamma$ with $\bot \in \{\bot^1, \bot^R\}$. The same statement holds for MLL+MIX \wrt $\perp^{\mathrm{fin}}$.
\end{prop}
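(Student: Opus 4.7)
The plan is to reduce this proposition to the pair formed by the stellar correctness criterion (Theorem~\ref{thm:correctness2}, together with its MLL+MIX counterpart via Theorem~\ref{thm:mixcorrectness} and Corollary~\ref{cor:correctness}) and Proposition~\ref{prop:eqtest}, using Lemma~\ref{lem:invortho} as the technical glue that lets us trade $\testof{\mathcal{S}}{\varphi}$ for $\test{\vdash\Gamma}^{\varphi}\uplus\Phi$ inside an orthogonality check. The key observation is that typing \smash{$\vdash\tcol{+t.\axof{\mathcal{S}}}:\Gamma$} is defined to mean $\tcol{+t.\axof{\mathcal{S}}}\perp(\test{\vdash\Gamma}^{\varphi}\uplus\Phi_{\mu})$ for every switching $\varphi$, whereas Theorem~\ref{thm:correctness2} characterises MLL-certifiability as $\tcol{+t.\axof{\mathcal{S}}}\perp^{R}\testof{\mathcal{S}}{\varphi}$ for every switching $\varphi$ of $\mathcal{S}$; so the whole proof amounts to converting between these two families of tests.

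For the forward direction, I would start from an MLL-certification of $\mathcal{S}$, which yields a labelling $\ell_V$ and hence a sequent $\vdash\Gamma:=\{\ell_V(v)\mid v\in\conclu{\mathcal{S}}\}$. Since $\mathcal{S}$ is cut-free, the switchings of $\mathcal{S}$ are in bijection with the switchings of $ST(\vdash\Gamma)$, so Proposition~\ref{prop:eqtest} furnishes a constellation of adapters $\Phi$ such that $\exec(\testof{\mathcal{S}}{\varphi})=\exec(\test{\vdash\Gamma}^{\varphi}\uplus\Phi)$ for each matched pair of switchings. Theorem~\ref{thm:correctness2} gives $\exec(\tcol{+t.\axof{\mathcal{S}}}\uplus\testof{\mathcal{S}}{\varphi})=[\ulocus{v_{1}}{X},\dots,\ulocus{v_{n}}{X}]$, and I would then apply Lemma~\ref{lem:invortho} (checking that $\mathcal{S}$ and the tests act on disjoint star-variables, which is built into the canonical renaming of Definition~\ref{def:underproblem}) to rewrite the normalisation of $\tcol{+t.\axof{\mathcal{S}}}\uplus\test{\vdash\Gamma}^{\varphi}\uplus\Phi$ as the same single star. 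This star is precisely $\roots{\tcol{+t.\axof{\mathcal{S}}}\uplus\test{\vdash\Gamma}^{\varphi}\uplus\Phi}$, so both $\perp^{1}$ and $\perp^{R}$ are witnessed.

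The converse works by reversing this chain. Assuming $\vdash\tcol{+t.\axof{\mathcal{S}}}:\Gamma$, Proposition~\ref{prop:eqtest} again supplies the structural correspondence between $\test{\vdash\Gamma}^{\varphi}\uplus\Phi$ and $\testof{\mathcal{S}}{\varphi}$ (for the switching induced on $\mathcal{S}$ by $\vdash\Gamma$ together with the adapters, which pair each atom of $ST(\vdash\Gamma)$ with a unique atom of $\mathcal{S}$), and Lemma~\ref{lem:invortho} lets me transport the orthogonality to $\tcol{+t.\axof{\mathcal{S}}}\perp\testof{\mathcal{S}}{\varphi}$. Theorem~\ref{thm:correctness2} then concludes MLL-certifiability, and the induced vertex labelling is exactly the one read off from $ST(\vdash\Gamma)$ through the adapters. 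The MLL+MIX case is mechanically identical: replace Theorem~\ref{thm:correctness2} by the acyclicity clause of Corollary~\ref{cor:correctness} (combined with Theorem~\ref{thm:mixcorrectness}), and use $\perp^{\mathrm{fin}}$ in place of $\perp^{1}/\perp^{R}$.

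The main difficulty I anticipate is not the high-level structure but the bookkeeping in the backward direction: one must argue that an arbitrary choice of sequent $\vdash\Gamma$ and adapters $\Phi$ that fit $\mathcal{S}$ does induce a \emph{coherent} labelling of every vertex of $\mathcal{S}$ (not just its conclusions), so that the output of Theorem~\ref{thm:correctness2} genuinely lifts to an MLL proof-net and not merely to a correctness-passing unlabelled proof-structure. This relies on the fact that the tests $\test{\vdash\Gamma}^{\varphi}$ reproduce exactly the sub-tree structure of $\Gamma$'s formulas, so any $\tcol{+t.\axof{\mathcal{S}}}$ orthogonal to all of them must be glued to $\mathcal{S}$'s $\otimes/\parr$-skeleton in a way that matches $\Gamma$ up to renaming of atoms, which can be checked by induction on the syntax tree.
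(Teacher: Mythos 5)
Your proposal follows essentially the same route as the paper's proof: it combines Proposition~\ref{prop:eqtest} (to trade $\testof{\mathcal{S}}{\varphi}$ for $\test{\vdash\Gamma}^{\varphi}\uplus\Phi$), Lemma~\ref{lem:invortho} (to transport orthogonality across that exchange), and Theorem~\ref{thm:correctness2}/Corollary~\ref{cor:correctness} (to identify the orthogonality condition with the Danos--Regnier criterion, respectively its acyclicity-only variant for MLL+MIX). Your added care about the disjointness precondition for Lemma~\ref{lem:invortho} and about recovering a coherent vertex labelling in the backward direction is a reasonable elaboration of details the paper leaves implicit, not a different argument.
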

\begin{proof}
By Proposition~\ref{prop:eqtest}, there exist some sequent $\vdash\Gamma$ such that $\testof{\mathcal{S}}{\varphi}$ is simulated by $\Phi \uplus \test{\vdash\Gamma}^\varphi$ for some constellation of adapters $\Phi$.
By invariance of orthogonality under execution (\cf Lemma~\ref{lem:invortho}), this connexion is equivalent to a connexion between $\tcol{+t.\axof{\mathcal{S}}}$ and $\testof{\mathcal{S}}{\varphi}$. The orthogonality $\tcol{+t.\axof{\mathcal{S}}} \in \tests{\vdash\Gamma}^\bot$ and the same statement for MLL+MIX (\wrt $\perp^{\mathrm{fin}}$) both hold by a direct consequence of Corollary~\ref{cor:correctness}.
\end{proof}

Now that we have finite tests able to certify computational entities, what remains is to be able to express the \emph{real use} of these objects (defined by the set of their potential partners in interaction), which is usually infinite. Girard's Usine is then only an effective approximation of this ideal use.

% --------------------------------------------------
\subsection{Interactive typing (l'Usage)}\label{subsec:behaviours}
% --------------------------------------------------

By using an idea of \emph{interactive typing} which was already present in ludics \cite{girard2001locus} and in the Geometry of Interaction \cite{goi5,seiller2012interaction}, it is possible to define ``semantic-free formulas". Such formulas are defined as set of constellations, not from a given semantics but from how the constellations interact with each other. We need two ingredients: a notion of \emph{interaction} (the execution of constellations) and a symmetric and binary \emph{orthogonality} relation which opposes constellations. This relation represents a \emph{point of view} on interaction and formalises what it means to ``interact correctly".

This actually extends the previous idea of type but instead of arbitrary tests, a constellation is given a meaning by all its possible interaction with other constellations, relatively to a specific point of view. Since these potential opponents still define the meaning of a constellation, we keep the term of ``test'' (although effective testing is no more possible in general because a set of tests can be infinite).

The constellations are grouped into arbitrary sets called \emph{pre-behaviours}, giving rise a notion of formula corresponding to a computational version of phase semantics \cite[Section II.5]{linearlogic}.

\begin{defi}[Pre-behaviour]
A \emph{pre-behaviour} $\mathbf{A}$ is a set of constellations.
\end{defi}

We now define the notion of \emph{behaviour} which corresponds to the formulas/types appearing in linear logic. They represent idealised logical notions that we can only approximate if we wish for an effective type checking.

\begin{defi}[Behaviour]
\label{def:behaviour}
A pre-behaviour $\mathbf{A}$ is a \emph{behaviour} when there exists a pre-behaviour $\mathbf{B}$ such that $\mathbf{A} = \mathbf{B}^\bot$.
\end{defi}

More intuitively, a behaviour is a group of computational objects which is entirely characterised by a (potentially infinite) set of tests: a pre-behaviour $\mathbf{A}$ is a behaviour when there exists a set of tests (constellations) $\mathbf{B}$ such that $\mathbf{A}$ is exactly the set of constellations passing all the tests of $\mathbf{B}$. In other words, $\mathbf{A}$ is a behaviour if and only if it is \emph{testable}.

\begin{lem}[Invariance of typing under execution]
\label{lem:typeinv}
Let $\Phi$ be a constellation and $\mathbf{A}$ a behaviour. We have $\Phi \in \mathbf{A}$ if and only if $\exec(\Phi) \in \mathbf{A}$.
\end{lem}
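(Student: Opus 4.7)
The plan is to unfold the definition of behaviour and reduce the statement to the invariance of orthogonality under execution (Lemma~\ref{lem:invortho}). Since $\mathbf{A}$ is a behaviour, by Definition~\ref{def:behaviour} there exists a pre-behaviour $\mathbf{B}$ such that $\mathbf{A} = \mathbf{B}^\bot$. By definition of orthogonal (Definition~\ref{def:ortho}), $\Phi \in \mathbf{A}$ if and only if for every $\Phi' \in \mathbf{B}$ we have $\Phi \perp \Phi'$; likewise $\exec(\Phi) \in \mathbf{A}$ if and only if $\exec(\Phi) \perp \Phi'$ for every $\Phi' \in \mathbf{B}$. So it is enough to show the pointwise equivalence $\Phi \perp \Phi' \Leftrightarrow \exec(\Phi) \perp \Phi'$ for each $\Phi' \in \mathbf{B}$.

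First I would verify the side-condition required by Lemma~\ref{lem:invortho}, namely that $\Phi$ and $\Phi'$ do not share variables in the relevant dependency graph, i.e.\ $\Phi \ucap \Phi' = \emptyset$. Since stars are taken up to $\alpha$-equivalence (variables are bound to their star), we can freely rename the variables of $\Phi$ and $\Phi'$ apart; this renaming does not change the constellations as elements of $\stars{\mathbb{C}}$ quotiented by $\alphaeq$, and it does not change membership in $\mathbf{A}$ or in $\mathbf{B}$. Hence the precondition of Lemma~\ref{lem:invortho} can always be arranged.

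With this in hand, Lemma~\ref{lem:invortho} gives directly $\Phi \perp \Phi' \Leftrightarrow \exec(\Phi) \perp \Phi'$ for each of the orthogonality relations $\perp^1, \perp^{\mathrm{fin}}, \perp^R$ considered. Taking the conjunction over all $\Phi' \in \mathbf{B}$ yields $\Phi \in \mathbf{B}^\bot \Leftrightarrow \exec(\Phi) \in \mathbf{B}^\bot$, that is, $\Phi \in \mathbf{A} \Leftrightarrow \exec(\Phi) \in \mathbf{A}$, as desired.

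The only delicate point is really the management of shared variables: one must be careful that the $\alpha$-renaming used to ensure $\Phi \ucap \Phi' = \emptyset$ is legitimate and uniform over the whole family $\mathbf{B}$. Since membership in a behaviour is invariant under renaming of bound variables (the equality in $\stars{\mathbb{C}}/\alphaeq$ is preserved, and orthogonality is defined modulo $\alphaeq$), this presents no real obstacle and the rest of the argument is immediate from Lemma~\ref{lem:invortho}.
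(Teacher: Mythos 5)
Your overall strategy is the same as the paper's: unfold the definition of behaviour (the paper uses $\mathbf{A}=\mathbf{A}^{\bot\bot}$ where you use $\mathbf{A}=\mathbf{B}^\bot$, which is equivalent), reduce membership to pointwise orthogonality against the tests, and invoke Lemma~\ref{lem:invortho}. That reduction is fine.

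The genuine problem is your discharge of the side-condition of Lemma~\ref{lem:invortho}. You claim that $\Phi \ucap \Phi' = \emptyset$ ``can always be arranged'' by renaming the variables of $\Phi$ and $\Phi'$ apart. But $\ucap$ is not about literal name clashes: by Definition~\ref{def:sharedvar}, a variable $X^i_j$ is shared by $\Phi$ and $\Phi'$ when the star carrying it is \emph{accessible via an edge path in the dependency graph} $\dgraph{\Phi \uplus \Phi'}$ from both constellations. Since variables are already bound to their star and duality of rays is defined up to renaming ($\alpha$-unifiability), the dependency graph — and hence $\Phi \ucap \Phi'$ — is completely invariant under the renaming you propose. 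The counter-example of \Cref{fig:ceppe} illustrates this: the variable $X$ of $[X,+c(X)]$ is reachable from both $[-c(\lc\cdot X)]$ and $[-c(\rc\cdot X)]$ no matter how the variables are named, and partial pre-execution genuinely fails there. So your renaming argument is vacuous and the precondition is a real structural restriction, not a bookkeeping issue. To be fair, the paper's own one-line proof applies Lemma~\ref{lem:invortho} without verifying the precondition either, so the lemma as stated implicitly assumes that the constellations in $\mathbf{A}$ and in its tests do not share variables in this dependency-graph sense; but you cannot claim to have \emph{proved} that this always holds, and your stated justification for it is incorrect.
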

\begin{proof}
If $\mathbf{A}$ is a behaviour then $\mathbf{A} = \mathbf{A}^{\bot\bot}$, meaning that $\mathbf{A}$ is characterised by some tests $\mathbf{A}^\bot$. Hence we have to show that $\Phi \perp \Phi'$ for any $\Phi' \in \mathbf{A}^\bot$ if and only if $\exec(\Phi) \perp \Phi'$. This is the consequence of the invariance of orthogonality under execution (\cf Lemma~\ref{lem:invortho}).
\end{proof}

There is an alternative (more standard) definition of behaviours which is called \emph{bi-orthogonal closure}. It states a sort of balance between tests and tested. This is actually something very important we require in linear logic and which is not true in intuitionistic logic\footnote{In intuitionistic logic, we do not have $\lnot\lnot A = A$ for any formula $A$.}: the linear negation is involutive.

\begin{prop}[Bi-orthogonal closure]
A pre-behaviour $\mathbf{A}$ is a behaviour if and only if $\mathbf{A} = \mathbf{A}^{\bot\bot}$.
\end{prop}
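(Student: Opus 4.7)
The plan is to establish the standard bi-orthogonal closure result via a Galois-connection-style argument on the operator $(\cdot)^\bot$.

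The easy direction is $(\Leftarrow)$: assuming $\mathbf{A} = \mathbf{A}^{\bot\bot}$, I can simply take $\mathbf{B} := \mathbf{A}^\bot$ (which is itself a pre-behaviour, \ie a set of constellations) and observe that $\mathbf{A} = \mathbf{B}^\bot$ by hypothesis. Hence $\mathbf{A}$ is a behaviour in the sense of Definition~\ref{def:behaviour}.

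For $(\Rightarrow)$, I would first establish two standard auxiliary facts about the operator $(\cdot)^\bot$ on pre-behaviours, which only require the symmetry of the orthogonality relation $\perp$ (valid for $\perp^\mathrm{fin}$, $\perp^1$, $\perp^R$):
\begin{enumerate}
\item \textbf{Extensivity:} for every pre-behaviour $\mathbf{X}$, we have $\mathbf{X} \subseteq \mathbf{X}^{\bot\bot}$. Indeed, if $\Phi \in \mathbf{X}$, then by symmetry of $\perp$, $\Phi$ is orthogonal to every element of $\mathbf{X}^\bot$, so $\Phi \in \mathbf{X}^{\bot\bot}$.
\item \textbf{Antitonicity:} if $\mathbf{X} \subseteq \mathbf{Y}$ then $\mathbf{Y}^\bot \subseteq \mathbf{X}^\bot$, since being orthogonal to every element of the larger set $\mathbf{Y}$ is a stronger condition than being orthogonal to every element of $\mathbf{X}$.
\end{enumerate}

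From these two facts I derive the key identity $\mathbf{X}^\bot = \mathbf{X}^{\bot\bot\bot}$ for any pre-behaviour $\mathbf{X}$: applying extensivity to $\mathbf{X}^\bot$ yields $\mathbf{X}^\bot \subseteq \mathbf{X}^{\bot\bot\bot}$, while applying antitonicity to the inclusion $\mathbf{X} \subseteq \mathbf{X}^{\bot\bot}$ (extensivity for $\mathbf{X}$) yields $\mathbf{X}^{\bot\bot\bot} \subseteq \mathbf{X}^\bot$. Now if $\mathbf{A}$ is a behaviour, pick $\mathbf{B}$ with $\mathbf{A} = \mathbf{B}^\bot$. Then $\mathbf{A}^{\bot\bot} = \mathbf{B}^{\bot\bot\bot} = \mathbf{B}^\bot = \mathbf{A}$, which concludes the proof.

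There is essentially no obstacle here: the argument is purely formal and does not depend on the specific choice of orthogonality among $\perp^\mathrm{fin}$, $\perp^1$, $\perp^R$, only on its symmetry. The only point worth checking carefully is that the definition of $(\cdot)^\bot$ given in Definition~\ref{def:ortho} extends from sets of constellations to arbitrary pre-behaviours in the expected way, which is immediate.
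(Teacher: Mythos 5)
Your proof is correct. The paper itself does not prove this proposition; it simply cites the literature (\cite[Proposition 15]{joinet2021abstraction}), and your argument is exactly the standard one that reference contains: extensivity and antitonicity of $(\cdot)^\bot$ yield $\mathbf{X}^\bot = \mathbf{X}^{\bot\bot\bot}$, from which both directions follow. The only hypothesis you use is symmetry of $\perp$, which the paper guarantees for all three orthogonalities $\perp^{\mathrm{fin}}$, $\perp^1$, $\perp^R$ since each is defined by a property of $\exec(\Phi_1 \uplus \Phi_2)$ and $\uplus$ is commutative, so there is no gap.
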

\begin{proof}
The proof can be found in the literature \cite[Proposition 15]{joinet2021abstraction}.
\end{proof}

In order to define the tensor of two behaviours (corresponding to an interactive version of the usual tensor type label), we have to exclude any interaction between them because we want the tensor to connect two independent proof-structures. Definition~\ref{def:sharedvar} of set of variables shared by two constellations makes this possible.

\begin{defi}[Disjointness of behaviours]
Let $\mathbf{A}$ and $\mathbf{B}$ be two behaviours and a set of colours $C' \subseteq C$. They are \emph{disjoint} when for all $\Phi_A \in \mathbf{A}$ and $\Phi_B \in \mathbf{B}$, we have $\Phi_A \ucap_{C'} \Phi_B = \emptyset$.
\end{defi}

When two behaviours $\mathbf{A}$ and $\mathbf{B}$ are disjoint, for any pair of constellations $\Phi_A \in \mathbf{A}$ and $\Phi_B \in \mathbf{B}$, there is no path from one constellation to the other in $\dgraph{\Phi_A \uplus \Phi_B}$: for instance, if we had a path from $\Phi_1$ to a variable $X$ in $\Phi_2$, this variable is still accessible from $\Phi_2$, hence $X$ is shared by the two constellations.

\begin{defi}[Pre-tensor]
\label{def:pretensor}
Let $\mathbf{A}$ and $\mathbf{B}$ be disjoint pre-behaviours. We define their pre-tensor by $\mathbf{A} \odot \mathbf{B} = \{ \Phi_1 \uplus \Phi_2 \mid \Phi_1 \in \mathbf{A}, \Phi_2 \in \mathbf{B} \}$.
\end{defi}

\begin{defi}[Tensor]
\label{def:tensor}
Let $\mathbf{A}$ and $\mathbf{B}$ be disjoint behaviours. We define their tensor by
\[\mathbf{A} \otimes \mathbf{B} = (\mathbf{A} \odot \mathbf{B})^{\bot\bot}.\]
\end{defi} 

The pre-tensor is the natural definition of the tensor product pairing constellations of two pre-behaviours. The real tensor product adds a bi-orthogonal closure $(\cdot)^{\bot\bot}$ in order to ensure that we get a behaviour (it is not necessarily the case without the closure, depending on the orthogonality we consider). It is indeed a generalisation of the usual tensor because depending on the orthogonality relation, its orthogonal can contain way more than what we expect from proof-structures because of the huge space of objects provided by stellar resolution. In case $\mathbf{A} \odot \mathbf{B} = \mathbf{A} \otimes \mathbf{B}$, we have what we call an \emph{internal completeness} property.

\begin{prop}[Commutativity and associativity of tensor]
\label{prop:proptensor}
Given $\mathbf{A}, \mathbf{B}, \mathbf{C}$ pairwise disjoint behaviours, we have (1) $\mathbf{A} \otimes \mathbf{B} = \mathbf{B} \otimes \mathbf{A}$ and  (2) $\mathbf{A} \otimes (\mathbf{B} \otimes \mathbf{C}) = (\mathbf{A} \otimes \mathbf{B}) \otimes \mathbf{C}$.
\end{prop}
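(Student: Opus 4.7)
Commutativity is immediate: the disjoint union $\uplus$ of constellations is commutative up to reindexing, so $\mathbf{A} \odot \mathbf{B} = \mathbf{B} \odot \mathbf{A}$ as pre-behaviours, and applying the bi-orthogonal closure to both sides yields $\mathbf{A} \otimes \mathbf{B} = \mathbf{B} \otimes \mathbf{A}$. For associativity, the pre-tensor $\odot$ is itself associative for the same reason: $(\mathbf{A} \odot \mathbf{B}) \odot \mathbf{C} = \{(\Phi_1 \uplus \Phi_2) \uplus \Phi_3 \mid \Phi_i \in \mathbf{A}, \mathbf{B}, \mathbf{C}\} = \mathbf{A} \odot (\mathbf{B} \odot \mathbf{C})$. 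The remaining difficulty is that the tensor sits behind a bi-orthogonal closure.

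The plan is to isolate the following key lemma: for disjoint pre-behaviours $X,Y$,
\[(X^{\bot\bot} \odot Y)^{\bot\bot} \;=\; (X \odot Y)^{\bot\bot}.\]
Granting this, associativity follows from the chain
\begin{align*}
(\mathbf{A} \otimes \mathbf{B}) \otimes \mathbf{C}
 &= ((\mathbf{A} \odot \mathbf{B})^{\bot\bot} \odot \mathbf{C})^{\bot\bot} \\
 &= ((\mathbf{A} \odot \mathbf{B}) \odot \mathbf{C})^{\bot\bot} \\
 &= (\mathbf{A} \odot (\mathbf{B} \odot \mathbf{C}))^{\bot\bot} \\
 &= (\mathbf{A} \odot (\mathbf{B} \odot \mathbf{C})^{\bot\bot})^{\bot\bot} \\
 &= \mathbf{A} \otimes (\mathbf{B} \otimes \mathbf{C}),
\end{align*}
where the second and fourth equalities apply the key lemma (the fourth combined with commutativity of $\odot$) and the third uses associativity of $\odot$.

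The lemma itself reduces to the dual identity $(X^{\bot\bot} \odot Y)^{\bot} = (X \odot Y)^{\bot}$. The inclusion $(X^{\bot\bot} \odot Y)^{\bot} \subseteq (X \odot Y)^{\bot}$ is immediate from $X \subseteq X^{\bot\bot}$. For the converse, the crucial observation is that orthogonality is insensitive to how rays are grouped: for each of the three relations of Definition~\ref{def:ortho}, $\Phi \perp (\Phi_X \uplus \Phi_Y)$ depends only on a property of $\exec(\Phi \uplus \Phi_X \uplus \Phi_Y)$, so (using disjointness so that partial pre-execution, Lemma~\ref{lem:ppe}, applies) it is equivalent to $(\Phi \uplus \Phi_Y) \perp \Phi_X$. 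Therefore $\Phi \in (X \odot Y)^{\bot}$ says that for every $\Phi_Y \in Y$ we have $\Phi \uplus \Phi_Y \in X^{\bot} = X^{\bot\bot\bot}$, which in turn says $(\Phi \uplus \Phi_Y) \perp \Phi_X'$ for every $\Phi_X' \in X^{\bot\bot}$; reassociating once more gives $\Phi \perp (\Phi_X' \uplus \Phi_Y)$, hence $\Phi \in (X^{\bot\bot} \odot Y)^{\bot}$.

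The main obstacle is bookkeeping around the disjointness hypothesis, which must be preserved at every invocation of the pre-tensor and every reassociation step. In particular one must check that $\mathbf{A}$ remains disjoint from $\mathbf{B} \otimes \mathbf{C}$ (and symmetrically), since the bi-orthogonal closure a priori enlarges the behaviour and might admit constellations whose free variables overlap those of $\mathbf{A}$. Provided the coloured signature carries enough fresh variables to allow a canonical $\alpha$-renaming convention on the members of each behaviour, this obstruction dissolves; with disjointness in hand, the remainder of the argument is purely algebraic, driven entirely by associativity/commutativity of $\uplus$ and by invariance of orthogonality under execution (Lemma~\ref{lem:invortho}).
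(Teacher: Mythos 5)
Your proof is correct and is in fact substantially more careful than the one the paper gives. For commutativity the two arguments coincide: $\mathbf{A}\odot\mathbf{B}$ and $\mathbf{B}\odot\mathbf{A}$ are literally the same set, so their bi-orthogonal closures agree. For associativity, however, the paper simply asserts that associativity of multiset disjoint union does the job ``in the same fashion'', which silently ignores that $\mathbf{A}\otimes(\mathbf{B}\otimes\mathbf{C}) = (\mathbf{A}\odot(\mathbf{B}\odot\mathbf{C})^{\bot\bot})^{\bot\bot}$ contains a nested closure, so associativity of $\uplus$ alone does not close the argument. You correctly identify this as the real content of the statement and supply the missing ingredient: the closure-commutation lemma $(X^{\bot\bot}\odot Y)^{\bot\bot} = (X\odot Y)^{\bot\bot}$, proved via $(X^{\bot\bot}\odot Y)^{\bot} = (X\odot Y)^{\bot}$ using $X^{\bot}=X^{\bot\bot\bot}$ and the fact that each orthogonality of Definition~\ref{def:ortho} is a property of $\exec$ applied to the total union, so regrouping $\Phi\perp(\Phi_X\uplus\Phi_Y)$ as $(\Phi\uplus\Phi_Y)\perp\Phi_X$ is immediate from associativity and commutativity of $\uplus$ (you do not even need Lemma~\ref{lem:ppe} for that step, only the definition of $\perp$). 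The one residual issue — that $X^{\bot\bot}\odot Y$ must itself be well-formed, i.e.\ that the bi-orthogonal closure does not break pairwise disjointness of the behaviours — is a genuine gap in the proposition as stated, which you flag explicitly and the paper does not address at all; your suggested fix via a canonical renaming convention is the standard remedy. In short, your route costs an extra lemma but actually proves the statement, whereas the paper's one-line associativity argument has a hole exactly where your lemma sits.
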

\begin{proof}
(1) By the definition of tensor, we have $\Phi_1 \uplus \Phi_2 \in \mathbf{A} \otimes \mathbf{B}$ when $\Phi_1 \uplus \Phi_2 \in \{ \Phi_1 \uplus \Phi_2 \mid \Phi_1 \in \mathbf{A}, \Phi_2 \in \mathbf{B} \}^{\bot\bot}$. We also have $\Phi_2 \uplus \Phi_1 \in \mathbf{B} \otimes \mathbf{A}$.
But since $\Phi_1 \uplus \Phi_2 = \Phi_2 \uplus \Phi_1$ by commutativity of multiset disjoint union, we obtain $\mathbf{A} \otimes \mathbf{B} = \mathbf{B} \otimes \mathbf{A}$.
(2) In the same fashion, by using the associativity of multiset disjoint union, we obtain $\mathbf{A} \otimes (\mathbf{B} \otimes \mathbf{C}) = (\mathbf{A} \otimes \mathbf{B}) \otimes \mathbf{C}$.
\end{proof}

The other connectives are then defined by interactive testing, \eg the elements of $\mathbf{A} \parr \mathbf{B}$ are the elements passing the tests of $\mathbf{A}^\bot \otimes \mathbf{B}^\bot$. This is why we can speak about \emph{interactive types} as we did in the introduction of this paper.

\begin{defi}[Par and linear implication]
\label{def:parimpl}
Let $\mathbf{A}, \mathbf{B}$ be disjoint behaviours. We define:
\[\mathbf{A} \parr \mathbf{B} = (\mathbf{A}^\bot \otimes \mathbf{B}^\bot)^\bot \quad\text{and}\quad \mathbf{A} \multimap \mathbf{B} = \mathbf{A}^\bot \parr \mathbf{B}.\]
\end{defi}

\begin{rem}[Implicit exchange]
\label{rem:implex}
The commutativity and associativity of $\otimes$ are preserved for the $\parr$. For instance $\mathbf{A} \parr \mathbf{B} = (\mathbf{A}^\bot \otimes \mathbf{B}^\bot)^\bot = (\mathbf{B}^\bot \otimes \mathbf{A}^\bot)^\bot = \mathbf{B} \parr \mathbf{A}$. This corresponds to the fact that the exchange rule is implicit in usual linear logic.
\end{rem}

\begin{figure}
    \begin{tikzpicture}
        \node at (0, 0) (c1) {$\Phi_1 = [X, +c(X)]$};
        \node at (6, 0.5) (c2) {$[-c(\lc \cdot X)] = \Phi_2$};
        \node at (6, -0.5) (c3) {$[-c(\rc \cdot X)] = \Phi_3$};
        \draw[dotted] (c1.0) edge[out=45, in=180] (c2.180);
        \draw[dotted] (c1.0) edge[out=-45, in=180] (c3.180);
    \end{tikzpicture}
    \caption{Counter-example of non-associativity. We have $\exec_{\{c\}}(\Phi_1 \uplus \Phi_2) = [\lc \cdot X]$ and $\exec_{\{c\}}(\exec_{\{c\}}(\Phi_1 \uplus \Phi_2) \uplus \Phi_3) = [-c(\rc \cdot X)]+[\lc \cdot X]$, but $\exec_{\{c\}}(\Phi_1 \uplus \exec_{\{c\}}(\Phi_2 \uplus \Phi_3)) = [-c(\lc \cdot X)]+[\rc \cdot X]$ which is different.}
    \label{fig:nonassociativity}
\end{figure}

In \Cref{fig:nonassociativity}, we show that associativity fails when execution is treated as a binary operator on constellations. However, this property is fundamental when speaking about (categorical \cite{mellies2009categorical, schalk2004categorical}) models of linear logic. We need a restriction on the interaction between constellations as in Seiller's works \cite[Proposition 12]{seiller2012interaction}\cite[Theorem 24]{seiller2016interaction} where the same problem exists.

A technical precondition is defined for the associativity, and \emph{trefoil property} \cite[Theorem 40]{seiller2016interaction} is stated as a corollary. In particular, the trefoil property ensures that one can define a $\ast$-autonomous category, which characterises denotational models of MLL \cite{seely1987linear}.

\begin{thm}[Associativity of execution]
\label{thm:assocexec}
Choose a set of colours $A\subseteq C$. For constellations $\Phi_1, \Phi_2$ and $\Phi_3$ such that $\Phi_1 \ucap_A \Phi_2 \ucap_A \Phi_3 = \emptyset$, we have:
\[\exec_A(\Phi_1 \uplus \exec_A(\Phi_2 \uplus \Phi_3)) =  \exec_A(\exec_A(\Phi_1 \uplus \Phi_2) \uplus \Phi_3).\]
\end{thm}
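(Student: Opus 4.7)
The goal is to reduce both sides of the equality to a single symmetric joint execution $\exec_A(\Phi_1 \uplus \Phi_2 \uplus \Phi_3)$ by two applications of Lemma~\ref{lem:ppe} (partial pre-execution). Notice that specialising Lemma~\ref{lem:ppe} to $B := A$ yields the convenient identity $\exec_A(\exec_A(\Phi) \uplus \Phi') = \exec_A(\Phi \uplus \Phi')$ whenever $\Phi \ucap_A \Phi' = \emptyset$, which is exactly the shape of each nested execution appearing in the statement.

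The first step is to exploit the hypothesis $\Phi_1 \ucap_A \Phi_2 \ucap_A \Phi_3 = \emptyset$ to derive the two pairwise preconditions needed to apply Lemma~\ref{lem:ppe}, namely $(\Phi_2 \uplus \Phi_3) \ucap_A \Phi_1 = \emptyset$ and $(\Phi_1 \uplus \Phi_2) \ucap_A \Phi_3 = \emptyset$. The key observation, which I would justify by unfolding Definition~\ref{def:sharedvar} and invoking Proposition~\ref{prop:sharedvar}, is that accessibility of a variable from a disjoint union $\Phi_i \uplus \Phi_j$ in the joint dependency graph $\dgraph[A]{\Phi_1 \uplus \Phi_2 \uplus \Phi_3}$ factors through accessibility from one of the two summands, so that the shared-variable set of a disjoint union with $\Phi_k$ decomposes into the union of the two pairwise shared-variable sets $\Phi_i \ucap_A \Phi_k$ and $\Phi_j \ucap_A \Phi_k$.

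Once the pairwise preconditions are in place, I would apply Lemma~\ref{lem:ppe} to the left-hand side with $\Phi := \Phi_2 \uplus \Phi_3$ and $\Phi' := \Phi_1$ (using the commutativity of $\uplus$ to align the roles), obtaining $\exec_A(\Phi_1 \uplus \exec_A(\Phi_2 \uplus \Phi_3)) = \exec_A(\Phi_1 \uplus \Phi_2 \uplus \Phi_3)$. Symmetrically, applying Lemma~\ref{lem:ppe} to the right-hand side with $\Phi := \Phi_1 \uplus \Phi_2$ and $\Phi' := \Phi_3$ gives $\exec_A(\exec_A(\Phi_1 \uplus \Phi_2) \uplus \Phi_3) = \exec_A(\Phi_1 \uplus \Phi_2 \uplus \Phi_3)$. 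The associativity and commutativity of the disjoint union of constellations then identify the two right-hand sides and close the argument.

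The main obstacle is the first bookkeeping step: turning the compact three-way hypothesis $\Phi_1 \ucap_A \Phi_2 \ucap_A \Phi_3 = \emptyset$ into the two pairwise preconditions that Lemma~\ref{lem:ppe} actually consumes. This is where the subtlety of the $\ucap_A$ operator and of accessibility in the enlarged dependency graph must be handled with care, since adding the third constellation could in principle open new paths between variables of the other two; once it is verified that such new paths cannot create fresh shared variables beyond those already controlled by the hypothesis, the rest of the proof is essentially a mechanical double use of Lemma~\ref{lem:ppe}.
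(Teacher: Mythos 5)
Your overall plan --- reducing both sides to the symmetric joint execution $\exec_A(\Phi_1 \uplus \Phi_2 \uplus \Phi_3)$ --- is exactly the strategy of the paper's proof. The gap is in the bookkeeping step you yourself flag as the main obstacle: the two pairwise preconditions you need in order to apply Lemma~\ref{lem:ppe} do \emph{not} follow from the theorem's hypothesis. By Definition~\ref{def:sharedvar} the three-way set $\Phi_1 \ucap_A \Phi_2 \ucap_A \Phi_3$ is the \emph{intersection} $\bigcap_{i \neq j} \Phi_i \ucap_A \Phi_j$ of the pairwise shared-variable sets (the paper's proof reads the hypothesis as ``no variable is accessible from all three constellations''), whereas your decomposition correctly identifies $(\Phi_2 \uplus \Phi_3) \ucap_A \Phi_1$ as containing the \emph{union} $(\Phi_2 \ucap_A \Phi_1) \cup (\Phi_3 \ucap_A \Phi_1)$. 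An empty intersection does not make that union empty: if some variable $X$ of a star of $\Phi_2$ is accessible from $\Phi_1$ and $\Phi_2$ but not from $\Phi_3$, then $\Phi_1 \ucap_A \Phi_2 \ucap_A \Phi_3 = \emptyset$ is satisfied, yet $(\Phi_2 \uplus \Phi_3) \ucap_A \Phi_1 \supseteq \Phi_2 \ucap_A \Phi_1 \ni X$, so Lemma~\ref{lem:ppe} cannot be invoked on the left-hand side; a symmetric configuration with $X$ shared by $\Phi_2$ and $\Phi_3$ blocks the right-hand application instead. As written, your argument therefore only establishes the theorem under the strictly stronger hypothesis that all pairwise shared-variable sets are empty.

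The paper circumvents this by not using Lemma~\ref{lem:ppe} as a black box. It re-runs the path-contraction argument from the lemma's \emph{proof} under the weaker three-way hypothesis: since no variable is accessible from all three constellations, every path in $\dgraph[A]{\Phi_1 \uplus \Phi_2 \uplus \Phi_3}$ reaching a given variable traverses at most two of the three constellations, so contracting the diagrams internal to $\Phi_2 \uplus \Phi_3$ cannot destroy a connexion that $\Phi_1$ still needs (any variable contended between $\Phi_2$ and $\Phi_3$ is, by hypothesis, not also required by $\Phi_1$), and symmetrically for the other side. To repair your proof you must either strengthen the hypothesis to pairwise disjointness (thereby proving a weaker statement) or replace the black-box invocation of Lemma~\ref{lem:ppe} by this finer analysis of which constellations each accessibility path actually traverses.
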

\begin{proof}
Assume we have $\Phi_1 \ucap_A \Phi_2 \ucap_A \Phi_3 = \emptyset$. Hence, by definition, no variable is shared by the three constellations. Let $P(x^i_j)$ with $x^i_j$ a variable using the notations of Definition~\ref{def:sharedvar}, be the set of paths reaching $x^i_j$ in $\dgraph[A]{\Phi_1 \uplus \Phi_2 \uplus \Phi_3}$. By the previous statement, these paths traverse at most two constellations in $\{\Phi_1, \Phi_2, \Phi_3\}$. By using the reasoning of the proof of partial pre-execution (\cf Lemma~\ref{lem:ppe}), the paths $P(x^i_j)$ traversing $\Phi_2$ and $\Phi_3$ can be reduced with no effect on other connexions (since no variables are shared). Hence, the stars of $\Phi_1$ can connect to the stars of $\exec_A(\Phi_2 \uplus \Phi_3)$ in the same way as in $\Phi_2 \uplus \Phi_3$. It follows that $\exec_A(\Phi_1 \uplus \exec_A(\Phi_2 \uplus \Phi_3)) = \exec_A(\Phi_1 \uplus \Phi_2 \uplus \Phi_3)$. By the same reasoning, we also have $\exec_A(\exec_A(\Phi_1 \uplus \Phi_2) \uplus \Phi_3) = \exec_A(\Phi_1 \uplus \Phi_2 \uplus \Phi_3)$, hence execution is associative.
\end{proof}

\begin{thm}[Trefoil Property for execution-based orthogonality]
\label{thm:trefoil}
Choose a set of colours $A\subseteq C$. For constellations $\Phi_1, \Phi_2, \Phi_3$ and for $i, j, k \in \{1, 2, 3\}$ such that $\Phi_1 \ucap_A \Phi_2 \ucap_A \Phi_3 = \emptyset$, we have:
\[\Phi_1 \perp_A \exec_A(\Phi_2 \uplus \Phi_3) \quad\text{if and only if}\quad \exec_A(\Phi_1 \uplus \Phi_2) \perp_A \Phi_3.\]
\end{thm}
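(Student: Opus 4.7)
The plan is to reduce the trefoil property to Theorem~\ref{thm:assocexec} (Associativity of execution), exploiting the fact that each of the three orthogonality relations $\perp^\mathrm{fin}_A$, $\perp^1_A$, $\perp^R_A$ is entirely characterised by a property $P(\cdot)$ applied to the normal form $\exec_A(\Phi \uplus \Phi')$ of the union. The hypothesis $\Phi_1 \ucap_A \Phi_2 \ucap_A \Phi_3 = \emptyset$ is precisely the precondition required by Theorem~\ref{thm:assocexec}.

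First I would unfold the orthogonality on each side. The left-hand condition $\Phi_1 \perp_A \exec_A(\Phi_2 \uplus \Phi_3)$ becomes $P(\exec_A(\Phi_1 \uplus \exec_A(\Phi_2 \uplus \Phi_3)))$, while $\exec_A(\Phi_1 \uplus \Phi_2) \perp_A \Phi_3$ becomes $P(\exec_A(\exec_A(\Phi_1 \uplus \Phi_2) \uplus \Phi_3))$. By Theorem~\ref{thm:assocexec}, these two constellations coincide, both being equal to $\exec_A(\Phi_1 \uplus \Phi_2 \uplus \Phi_3)$ as exhibited in its proof, so the property $P$ holds of one iff it holds of the other.

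For $\perp^\mathrm{fin}_A$ and $\perp^1_A$ the property $P$ is purely cardinal and the argument is complete. For $\perp^R_A$, $P$ additionally demands that the normal form equals $\roots{\Phi \uplus \Phi'}$, the star of uncoloured rays. Here I would observe that an uncoloured ray carries no colour, hence cannot appear as an endpoint of any edge in the dependency graph, so it survives every fusion step as a free ray, possibly instantiated by a substitution propagated from a neighbouring fusion (cf.\ Lemma~\ref{lem:simfusion}). Consequently, $\roots{\Phi_1 \uplus \exec_A(\Phi_2 \uplus \Phi_3)}$ and $\roots{\exec_A(\Phi_1 \uplus \Phi_2) \uplus \Phi_3}$ both describe the star obtained by applying the global unifier of the fully associated execution to $\roots{\Phi_1 \uplus \Phi_2 \uplus \Phi_3}$, which closes this case.

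The main obstacle will be the bookkeeping for the $\perp^R_A$ case: one must verify, using the equivalence between fusion and actualisation (Theorem~\ref{thm:eqfusionactu}) together with Lemma~\ref{lem:ppe} on partial pre-execution, that performing the inner $\exec_A$ on either pair does not destroy the uncoloured rays of the third constellation nor create new ones on the already-executed side, so that the root stars compared on both sides are genuinely the same. Once that invariance is secured, the trefoil property reduces to a direct citation of Theorem~\ref{thm:assocexec}.
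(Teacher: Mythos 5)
Your proposal matches the paper's proof: both reduce the statement to Theorem~\ref{thm:assocexec} by expressing each orthogonality as a property $P$ of the normal form of the union and observing that associativity makes the two normal forms coincide. Your extra check for $\perp^R_A$ --- that the root stars compared on both sides agree, since $P$ there refers to $\roots{\cdot}$ of the input and not just to the normal form --- is a point the paper glosses over, so it is a welcome refinement rather than a deviation.
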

\begin{proof}
Assume that $P$ is a property corresponding to the orthogonality relation $\bot$ based on execution, \ie we have $\Phi_1 \perp_A \Phi_2$ if and only if $P(\exec(\Phi_1 \uplus \Phi_2))$.
The statement can be rewritten as follows: $P(\exec_A(\Phi_1 \uplus \exec_A(\Phi_2 \uplus \Phi_3))) \text{ if and only if } P(\exec_A(\exec_A(\Phi_1 \uplus \Phi_2) \uplus \Phi_3))$. This is a direct consequence of the associativity (\cf Theorem~\ref{thm:assocexec}).
\end{proof}

The trefoil property leads to the \emph{adjunction}\footnote{Corresponding to the categorical adjunction in cartesian closed categories.} which has been stated in previous models of GoI \cite[Theorem 3]{goi5} or in ludics.

\begin{cor}[Adjunction]
\label{cor:adjunction}
Choose a set of colours $A \subseteq C$. For all constellations $\Phi_f$, $\Phi_a$ and $\Phi_b$ such that $\Phi_a \ucap_A \Phi_b = \emptyset$, we have:
\[
\Phi_f \perp_A \Phi_a \uplus \Phi_b \quad\text{if and only if}\quad \exec_A(\Phi_f \uplus \Phi_a) \perp_A \Phi_b.
\]
\end{cor}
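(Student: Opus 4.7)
The plan is to derive the adjunction as a direct repackaging of the Trefoil Property (Theorem~\ref{thm:trefoil}) applied with $\Phi_1 := \Phi_f$, $\Phi_2 := \Phi_a$, $\Phi_3 := \Phi_b$. The first step is to verify the precondition of Theorem~\ref{thm:trefoil}, namely the triple sharing condition $\Phi_f \ucap_A \Phi_a \ucap_A \Phi_b = \emptyset$. Unfolding Definition~\ref{def:sharedvar}, the triple intersection is $\bigcap_{1 \leq i, j \leq 3} \Phi_i \ucap_A \Phi_j$, which is in particular contained in the pair $\Phi_a \ucap_A \Phi_b$; the hypothesis $\Phi_a \ucap_A \Phi_b = \emptyset$ then forces this triple intersection to vanish. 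Applying Theorem~\ref{thm:trefoil} yields
\[\Phi_f \perp_A \exec_A(\Phi_a \uplus \Phi_b) \quad\Longleftrightarrow\quad \exec_A(\Phi_f \uplus \Phi_a) \perp_A \Phi_b,\]
whose right-hand side is exactly the right-hand side of the adjunction.

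What remains is to bridge the left-hand side of this equivalence with $\Phi_f \perp_A \Phi_a \uplus \Phi_b$. Each relation of Definition~\ref{def:ortho} is of the form ``$\Phi \perp_A \Phi' \Leftrightarrow P(\exec_A(\Phi \uplus \Phi'))$'' for some predicate $P$ (in the $\perp^R$ case $P$ also references $\roots{\Phi \uplus \Phi'}$, but since uncoloured rays are preserved under fusion and actualisation, $\roots{\Phi_f \uplus \exec_A(\Phi_a \uplus \Phi_b)} = \roots{\Phi_f \uplus \Phi_a \uplus \Phi_b}$, so $P$ evaluates identically on both sides). It therefore suffices to observe that
\[\exec_A(\Phi_f \uplus \exec_A(\Phi_a \uplus \Phi_b)) = \exec_A(\Phi_f \uplus \Phi_a \uplus \Phi_b),\]
which is precisely the content of Theorem~\ref{thm:assocexec}, whose triple-intersection precondition we have just verified. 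Chaining the two equivalences produces the corollary.

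I do not anticipate a significant obstacle: the adjunction is essentially bookkeeping on top of the trefoil property, and the only point requiring care is ensuring that the weaker pairwise hypothesis $\Phi_a \ucap_A \Phi_b = \emptyset$ genuinely suffices to trigger the triple-intersection hypothesis needed by both Theorem~\ref{thm:trefoil} and Theorem~\ref{thm:assocexec}, together with the small verification that the root star is stable under partial execution so the argument uniformly covers $\perp^{\mathrm{fin}}$, $\perp^1$, and $\perp^R$.
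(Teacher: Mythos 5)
Your proposal is correct and follows essentially the same route as the paper: check that the pairwise hypothesis $\Phi_a \ucap_A \Phi_b = \emptyset$ forces the triple intersection to vanish, then apply the trefoil property, and bridge $\Phi_f \perp_A \Phi_a \uplus \Phi_b$ with $\Phi_f \perp_A \exec_A(\Phi_a \uplus \Phi_b)$ via invariance of orthogonality under execution. The only (harmless) divergence is that the paper cites Lemma~\ref{lem:invortho} for this bridge whereas you re-derive it from Theorem~\ref{thm:assocexec}; your extra remark that $\roots{\cdot}$ is stable under partial execution in the $\perp^R$ case is a welcome detail the paper leaves implicit.
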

\begin{proof}
By symmetry of orthogonality relations and invariance of orthogonality under execution (\cf Lemma~\ref{lem:invortho}), we have $\Phi_f \perp_A \Phi_a \uplus \Phi_b$ if and only if $\Phi_f \perp_A \exec_A(\Phi_a \uplus \Phi_b)$. In order to conclude with the trefoil property, it remains to show the precondition, \ie that we have $\Phi_f \ucap_A \Phi_a \ucap_A \Phi_b = \emptyset$. We assumed $\Phi_a \ucap_A \Phi_b = \emptyset$, meaning that no variable were shared by both $\Phi_a$ and $\Phi_b$. It follows that a variable cannot be shared by $\Phi_f, \Phi_a$ and $\Phi_b$ at the same time because otherwise, it would be shared by $\Phi_a$ and $\Phi_b$ as well.
\end{proof}

Thanks to the adjunction, it is possible to define a more intuitive linear implication seeing a constellation $\Phi_f$ as a function interacting with a constellation $\Phi_a$ as argument.

\begin{prop}[Alternative linear implication]
\label{prop:altlin}
Let $\mathbf{A}, \mathbf{B}$ be two disjoint behaviours. We have $\mathbf{A} \multimap \mathbf{B} = \{\Phi_f \mid \forall\  \Phi_a \in \mathbf{A}, \Phi_f \bot \Phi_a \text{ and } \exec(\Phi_f \uplus \Phi_a) \in \mathbf{B}\}$.
\end{prop}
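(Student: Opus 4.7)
The plan is to unfold the definition of $\multimap$ and to reduce the resulting orthogonality statement by means of the adjunction (Corollary~\ref{cor:adjunction}). First, by Definition~\ref{def:parimpl} we have $\mathbf{A} \multimap \mathbf{B} = \mathbf{A}^\bot \parr \mathbf{B} = ((\mathbf{A}^\bot)^\bot \otimes \mathbf{B}^\bot)^\bot$; since $\mathbf{A}$ is a behaviour, the bi-orthogonal closure property gives $(\mathbf{A}^\bot)^\bot = \mathbf{A}$, so this reduces to $(\mathbf{A} \otimes \mathbf{B}^\bot)^\bot$. Then, unfolding the tensor via Definition~\ref{def:tensor} yields $(\mathbf{A} \odot \mathbf{B}^\bot)^{\bot\bot\bot}$, and the standard identity $X^{\bot\bot\bot} = X^\bot$ (valid for any pre-behaviour and following directly from the monotonicity of $(\cdot)^\bot$) collapses this to $(\mathbf{A} \odot \mathbf{B}^\bot)^\bot$.

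Next, I would expand membership in this last set: $\Phi_f \in (\mathbf{A} \odot \mathbf{B}^\bot)^\bot$ means that for every $\Phi_a \in \mathbf{A}$ and every $\Phi_b' \in \mathbf{B}^\bot$, one has $\Phi_f \bot (\Phi_a \uplus \Phi_b')$. Here I use the disjointness hypothesis on $\mathbf{A}$ and $\mathbf{B}$ (extended to $\mathbf{A}$ and $\mathbf{B}^\bot$, which I would verify as a short lemma in the proof: behaviours of this form live in different address spaces, hence $\Phi_a \ucap \Phi_b' = \emptyset$) in order to invoke Corollary~\ref{cor:adjunction}, rewriting each such orthogonality as $\exec(\Phi_f \uplus \Phi_a) \bot \Phi_b'$. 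Fixing $\Phi_a \in \mathbf{A}$ and quantifying universally over $\Phi_b' \in \mathbf{B}^\bot$ then gives $\exec(\Phi_f \uplus \Phi_a) \in (\mathbf{B}^\bot)^\bot = \mathbf{B}$, again using that $\mathbf{B}$ is a behaviour.

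The remaining point is the side condition $\Phi_f \bot \Phi_a$ appearing in the target set. I would justify it as a consequence of the adjunction itself together with the non-emptiness of $\mathbf{B}^\bot$: given any $\Phi_b' \in \mathbf{B}^\bot$, the equivalence $\Phi_f \bot (\Phi_a \uplus \Phi_b') \iff \exec(\Phi_f \uplus \Phi_a) \bot \Phi_b'$ only makes sense when the partial pre-execution required by the adjunction is legitimate, which is precisely the content of $\Phi_f \bot \Phi_a$ being compatible with the orthogonality relation under consideration (\eg finiteness for $\bot^{\mathrm{fin}}$, uniqueness of the normal form for $\bot^1$). Symmetrically, from $\exec(\Phi_f \uplus \Phi_a) \in \mathbf{B}$ and the invariance of typing under execution (Lemma~\ref{lem:typeinv}) one recovers $\Phi_f \uplus \Phi_a \in \mathbf{B}$-compatible interaction, which expressed in terms of $\bot$ is $\Phi_f \bot \Phi_a$.

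The main obstacle I expect is precisely this last bookkeeping around the $\Phi_f \bot \Phi_a$ condition: the algebraic collapse of $(\cdot)^{\bot\bot\bot}$ and the application of the adjunction are essentially formal, but one has to check carefully that the disjointness hypotheses propagate through all the quantifications and that the chosen orthogonality relation is compatible with the invariance under partial execution used implicitly when applying Corollary~\ref{cor:adjunction}.
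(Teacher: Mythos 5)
Your main chain is the same as the paper's: unfold $\mathbf{A} \multimap \mathbf{B}$ to $(\mathbf{A} \otimes \mathbf{B}^\bot)^\bot$ using bi-orthogonal closure of $\mathbf{A}$, reduce to orthogonality against $\mathbf{A} \odot \mathbf{B}^\bot$, and convert each $\Phi_f \perp (\Phi_a \uplus \Phi_{b'})$ into $\exec(\Phi_f \uplus \Phi_a) \perp \Phi_{b'}$ via the adjunction (Corollary~\ref{cor:adjunction}), concluding $\exec(\Phi_f \uplus \Phi_a) \in \mathbf{B}^{\bot\bot} = \mathbf{B}$. You are in fact more explicit than the published proof on two points it silently assumes: the collapse $(\mathbf{A} \odot \mathbf{B}^\bot)^{\bot\bot\bot} = (\mathbf{A} \odot \mathbf{B}^\bot)^\bot$, and the propagation of disjointness to $\mathbf{A}$ and $\mathbf{B}^\bot$ needed to invoke the adjunction.

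The step that does not work as written is your justification of the conjunct $\Phi_f \perp \Phi_a$ in the inclusion $\mathbf{A} \multimap \mathbf{B} \subseteq \{\Phi_f \mid \dots\}$. The adjunction holds under the hypothesis $\Phi_a \ucap_A \Phi_{b'} = \emptyset$ alone; it is not conditioned on $\Phi_f \perp \Phi_a$, so arguing that the equivalence ``only makes sense when'' that orthogonality holds is circular. Likewise, Lemma~\ref{lem:typeinv} does give $\Phi_f \uplus \Phi_a \in \mathbf{B}$ from $\exec(\Phi_f \uplus \Phi_a) \in \mathbf{B}$, but membership of the union in a behaviour is not the statement $\Phi_f \perp \Phi_a$, which asserts a specific property of $\exec(\Phi_f \uplus \Phi_a)$ (finiteness, being a singleton, or being the star of roots). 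The repair for $\perp^{\mathrm{fin}}$ is to chain your Lemma~\ref{lem:typeinv} step with Lemma~\ref{lem:bhterminates}: $\Phi_f \uplus \Phi_a \in \mathbf{B}$ with $\mathbf{B}$ a behaviour yields $|\exec(\Phi_f \uplus \Phi_a)| < \infty$, which is exactly $\Phi_f \perp^{\mathrm{fin}} \Phi_a$. For $\perp^1$ and $\perp^R$ no such one-line argument is available and the conjunct genuinely requires separate justification; to be fair, the paper's own proof asserts the full equivalence, including this conjunct, without argument, so your proposal is no less complete than the published one here --- but the circular justification should be replaced by the Lemma~\ref{lem:bhterminates} argument where it applies rather than left as is.
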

\begin{proof}
By Definition~\ref{def:parimpl}, we have $\mathbf{A} \multimap \mathbf{B} = (\mathbf{A} \otimes \mathbf{B}^\bot)^\bot$. We have $\Phi_f \in \mathbf{A} \multimap \mathbf{B}$ if and only if for all $\Phi_a \in \mathbf{A}$, $\Phi_f \bot \Phi_a$ and $\exec(\Phi_f \uplus \Phi_a) \in \mathbf{B}$.
Since $\mathbf{B}$ is a behaviour, by Definition~\ref{def:behaviour}, there exists $\Phi_{b'} \in \mathbf{B}^\bot$ such that $\exec(\Phi_f \uplus \Phi_a) \bot \Phi_{b'}$.
By the adjunction (\cf Corollary~\ref{cor:adjunction}), $\Phi_f \bot (\Phi_a \uplus \Phi_{b'})$, hence $\Phi_f \in (\mathbf{A} \otimes \mathbf{B}^\bot)^\bot$. The proof only relies on equivalences hence a bi-inclusion is proved.
\end{proof}

% --------------------------------------------------
\subsection{The case of multiplicative units}\label{subsec:units}
% --------------------------------------------------

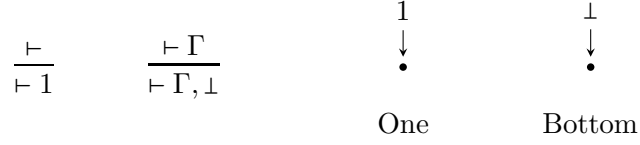
\begin{figure}
\centering
\begin{minipage}{0.3\textwidth}
    \begin{prooftree}
    \hypo{\vdash}
    \infer1{\vdash 1}
    \end{prooftree}
    \hspace{1cm}
    \begin{prooftree}
    \hypo{\vdash \Gamma}
    \infer1{\vdash \Gamma, \bot}
    \end{prooftree}
\end{minipage}
\begin{minipage}{0.3\textwidth}
\begin{tikzpicture}
    \node at (0, 0) (o) {$1$};
    \node[dot] at (0, -0.75) (c) {};
    \draw[-stealth] (o) -- (c);
    \node at (0, -1.5) (label) {One};
\end{tikzpicture}
\hspace{1cm}
\begin{tikzpicture}
    \node at (0, 0) (b) {$\bot$};
    \node[dot] at (0, -0.75) (c) {};
    \draw[-stealth] (b) -- (c);
    \node at (0, -1.5) (label) {Bottom};
\end{tikzpicture}
\end{minipage}
\caption{Rules for the units of MLL. Two hyperedges with no input and a single output are added in the construction of proof-structures. }
\label{fig:units}
\end{figure}

In this paper, we only mentioned MLL without units but linear logic is often presented with two formulas $1$ and $\bot$ corresponding to neutral elements for the $\otimes$ and $\parr$ connectives respectively. New rules and links for MLL units are presented in \Cref{fig:units}.

Now, we look for behaviours corresponding to neutral elements for $\otimes$ and $\parr$ respectively. It is possible to define a pre-behaviour $\Perp$ called a \emph{pole} \cite[Definition 3.5]{miquey2017classical} such that $\Phi \perp \Phi'$ if and only if $\exec(\Phi \uplus \Phi') \in \Perp$ for an execution-based orthogonality $\perp$ and $\Perp$ must be closed under \emph{anti-evaluation}, \ie if $\Phi \in \Perp$ and $\exec(\Phi') = \Phi$, then $\Phi' \in \Perp$. For instance, if we consider $\perp^R$, then $\Perp$ is the set of all constellations normalising into a single uncoloured star. The pole will be useful for a definition of neutral elements.

A natural choice of behaviour for the neutral element of $\otimes$ \wrt $\perp^R$ is the pre-behaviour $\{\emptyset\}$ only containing the empty constellation since $\Phi \uplus \emptyset = \Phi$ for any constellation $\Phi$. Fortunately, it is a behaviour.

\begin{prop}
The pre-behaviour $\{\emptyset\}$ is a behaviour.
\end{prop}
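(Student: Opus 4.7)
The plan is to invoke the bi-orthogonal closure characterisation recalled just above and establish the identity $\{\emptyset\} = \{\emptyset\}^{\bot\bot}$. The forward inclusion $\{\emptyset\} \subseteq \{\emptyset\}^{\bot\bot}$ is the routine direction that holds for any set by symmetry of the orthogonality relation.

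For the converse, I would first observe that $\{\emptyset\}^{\bot}$ coincides with the pole $\Perp$: since $\Phi \uplus \emptyset = \Phi$ up to reindexing, the condition $\Phi \perp^R \emptyset$ unfolds to $\exec(\Phi) = \{\roots{\Phi}\}$, which is exactly the defining property of membership in $\Perp$. The task thus reduces to showing $\Perp^\bot \subseteq \{\emptyset\}$: given a non-empty $\Psi$, exhibit a probe $\Phi \in \Perp$ with $\Psi \not\perp^R \Phi$.

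The canonical probe is $\Phi = \emptystar$, the constellation consisting of a single empty star, which lies in $\Perp$ because its unique saturated diagram is the isolated vertex actualising to $\emptystar = \roots{\emptystar}$. Since $\emptystar$ carries no ray, it constitutes its own connected component of $\dgraph{\Psi \uplus \emptystar}$, and Lemma~\ref{lem:independence} yields
\[
\exec(\Psi \uplus \emptystar) = \exec(\Psi) \cup \{\emptystar\} \quad \text{and} \quad \roots{\Psi \uplus \emptystar} = \roots{\Psi},
\]
so the requirement $\Psi \perp^R \emptystar$ collapses to the twin conditions $\exec(\Psi) = \emptyset$ and $\roots{\Psi} = \emptystar$. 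This immediately excludes every $\Psi$ carrying an uncoloured ray, an empty star, or any star that saturates on its own through a self-loop or balanced cycle, which already covers the bulk of non-empty constellations.

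The main obstacle will be the residual pathological case in which $\Psi$ consists only of coloured stars whose rays cannot combine into any finite correct saturated diagram internally (a prototype being $\Psi = [+c(f(a)), -c(X), -c(Y)]$, whose polarity imbalance prevents closure). For such $\Psi$ I would use a secondary probe of the form $\Phi' = [+c(X), r] + [-c(X)] \in \Perp$, with $r$ a fresh uncoloured ray and $c$ a colour already occurring in $\Psi$: adjoining $\Phi'$ supplies the missing $+c$-rays that let the problematic stars of $\Psi$ saturate, so $\exec(\Psi \uplus \Phi')$ contains the isolated $\Phi'$-diagram actualising to $[r] = \roots{\Psi \uplus \Phi'}$ together with at least one further actualisation involving stars of $\Psi$, whence $|\exec(\Psi \uplus \Phi')| \geq 2$ against the singleton $\{\roots{\Psi \uplus \Phi'}\}$, contradicting $\Psi \perp^R \Phi'$. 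Establishing that such a completion probe is always available uniformly in the combinatorial shape of $\Psi$ is the delicate technical step, where a short case analysis on the coloured rays of $\Psi$ will likely be required.
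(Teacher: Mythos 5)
Your skeleton is the same as the paper's: identify $\{\emptyset\}^{\bot}$ with the pole $\Perp$ and then argue $\Perp^{\bot} \subseteq \{\emptyset\}$ by confronting a non-empty $\Psi$ with well-chosen probes. The first probe is handled correctly: the single-empty-star constellation lies in $\Perp$, Lemma~\ref{lem:independence} gives $\exec(\Psi \uplus \emptystar) = \exec(\Psi) \uplus \{\emptystar\}$ and $\roots{\Psi \uplus \emptystar} = \roots{\Psi}$, so orthogonality against it forces $\roots{\Psi} = \emptystar$ and $\exec(\Psi) = \emptyset$. Up to that point your argument is fine, and in fact sharper than the paper's own proof, which simply asserts that a non-empty constellation ``would give more than the star of roots'' --- an assertion that fails precisely on the residual class you isolate, namely constellations with no uncoloured ray and no correct saturated diagram, such as $[+c(f(a)), -c(X), -c(Y)]$ or $[-a(X), +a(f(X))]$.

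The gap is that you stop exactly where the mathematical content of the hard inclusion lies. For that residual class you must produce, for \emph{every} such $\Psi$, some $\Phi' \in \Perp$ with $\Psi \not\perp^R \Phi'$, and you only sketch the candidate $[+c(X), r] + [-c(X)]$ together with the admission that the uniformity is ``the delicate technical step''. Two things are missing. First, you must show that adjoining the probe actually creates at least one \emph{correct saturated} diagram involving a star of $\Psi$, not merely new connexions: a priori the new diagrams could all remain extendable forever (so they never saturate) or all be incorrect by unification failure, in which case $\exec(\Psi \uplus \Phi') = \{[r]\} = \{\roots{\Psi \uplus \Phi'}\}$ and the probe is passed. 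Second, the probe must be tailored to $\Psi$ --- which colour, which polarity, how many capping stars, and with which term shapes so that the induced equations are solvable --- and you must argue such a choice always exists; your single template with a $+c(X)$ ray only caps $-c$-rays of the most generic shape. Until that case analysis is carried out, the inclusion $\Perp^{\bot} \subseteq \{\emptyset\}$, and hence the proposition, is not established. (To be fair, the paper's own proof does not carry this step out either; your version at least makes visible where the real work is.)
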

\begin{proof}
A constellation of $\{\emptyset\}^\bot$ must self-normalise into the set of its roots since $\emptyset$ has not effect when in interaction with another constellation. We have $\{\emptyset\}^\bot = \Perp$. Now, a constellation $\Phi \in \Perp^\bot$ is a constellation such that when it interacts with a constellation $\Phi' \in \Perp^\bot$, we have $\exec(\Phi \uplus \Phi') \in \Perp$. We can theoretically imagine that $\Phi$ has rays linked to $\Phi$ but this is impossible because $\Phi'$ is self-normalising into an element of $\Perp$ by constructing a saturated diagram which cannot be extended and which must be present in the normal form. Actually, $\Phi$ must be the empty constellation because otherwise we would get more than the star of roots. Therefore, $\Perp^\bot = \{\emptyset\}^{\bot\bot} = \{\emptyset\}$.
\end{proof}

\begin{defi}[One]
We define the behaviour $\mathbf{1} := \{\emptyset\} = \Perp^\bot$.
\end{defi}

\begin{prop}
We have $\mathbf{A} \otimes \mathbf{1} = \mathbf{A}$ for any behaviour $\mathbf{A}$.
\end{prop}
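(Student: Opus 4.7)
The plan is to unfold the definition of $\otimes$ and observe that the empty constellation is a strict identity for $\uplus$, which makes the pre-tensor on the nose equal to $\mathbf{A}$. The bi-orthogonal closure then collapses because $\mathbf{A}$ is already a behaviour.

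First I would verify the disjointness hypothesis required by Definition~\ref{def:tensor}. Since the empty constellation $\emptyset$ has index set $I_\emptyset = \emptyset$, it contains no variables at all, so for any $\Phi_A \in \mathbf{A}$ we have $\Phi_A \ucap_C \emptyset = \emptyset$. Hence $\mathbf{A}$ and $\mathbf{1} = \{\emptyset\}$ are always disjoint, and $\mathbf{A} \otimes \mathbf{1}$ is well-defined.

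Next I would compute the pre-tensor explicitly. By Definition~\ref{def:pretensor},
\[
\mathbf{A} \odot \mathbf{1} = \{\Phi \uplus \emptyset \mid \Phi \in \mathbf{A}\}.
\]
From the definition of disjoint union of constellations, $I_{\Phi \uplus \emptyset} = I_\Phi \uplus \emptyset = I_\Phi$ and the copairing reduces to $\Phi[\cdot]$, so $\Phi \uplus \emptyset = \Phi$. Therefore $\mathbf{A} \odot \mathbf{1} = \mathbf{A}$.

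Finally, applying Definition~\ref{def:tensor} and the bi-orthogonal closure proposition (since $\mathbf{A}$ is a behaviour, $\mathbf{A} = \mathbf{A}^{\bot\bot}$), we conclude
\[
\mathbf{A} \otimes \mathbf{1} = (\mathbf{A} \odot \mathbf{1})^{\bot\bot} = \mathbf{A}^{\bot\bot} = \mathbf{A}.
\]
There is no real obstacle here: the whole argument is bookkeeping around the fact that $\emptyset$ is a two-sided identity for $\uplus$. The only subtle point worth stating is the disjointness check, which is why one cannot omit the emptiness of $\vars{\emptyset}$ from the argument; everything else is a direct unfolding of definitions.
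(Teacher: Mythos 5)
Your proof is correct and follows essentially the same route as the paper's, which computes $\mathbf{A} \otimes \mathbf{1} = \{\Phi_A \uplus \emptyset \mid \Phi_A \in \mathbf{A}\}^{\bot\bot} = \mathbf{A}^{\bot\bot} = \mathbf{A}$ directly. Your explicit verification of the disjointness hypothesis (via $\vars{\emptyset} = \emptyset$) is a small addition the paper leaves implicit, but it does not change the argument.
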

\begin{proof}
By definition, we have $\mathbf{A} \otimes \mathbf{1} = \{\Phi_A \uplus \emptyset \mid \Phi_A \in \mathbf{A}\}^{\bot\bot} = \{\Phi_A \mid \Phi_A \in \mathbf{A}\}^{\bot\bot} = \mathbf{A}^{\bot\bot} = \mathbf{A}$.
\end{proof}

As for bottom, as usual in linear logic, we define it as $\mathbf{1}^\bot = \Perp$.

\begin{prop}
The pre-behaviour $\mathbf{1}^\bot = \{\emptyset\}^\bot = \Perp$ is a behaviour.
\end{prop}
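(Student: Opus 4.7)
The plan is to apply Definition~\ref{def:behaviour} directly: a pre-behaviour $\mathbf{A}$ is a behaviour whenever there exists a pre-behaviour $\mathbf{B}$ with $\mathbf{A} = \mathbf{B}^\bot$. Here, by the very way $\Perp$ was introduced in the statement, we have $\Perp = \{\emptyset\}^\bot$, and $\{\emptyset\}$ is trivially a pre-behaviour (being an arbitrary set of constellations). So taking $\mathbf{B} := \{\emptyset\}$ witnesses that $\Perp$ is a behaviour, with no further argument needed.

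Alternatively, and perhaps more satisfyingly, I would invoke the bi-orthogonal closure characterisation (Proposition on bi-orthogonal closure stated just after Definition~\ref{def:behaviour}): it suffices to check $\Perp = \Perp^{\bot\bot}$. From the preceding proposition we already know $\Perp^\bot = \{\emptyset\}^{\bot\bot} = \{\emptyset\} = \mathbf{1}$. Orthogonalising once more yields $\Perp^{\bot\bot} = \mathbf{1}^\bot = \{\emptyset\}^\bot = \Perp$, so $\Perp$ is its own bi-orthogonal and therefore a behaviour.

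There is no real obstacle here, as the result is essentially bookkeeping around the definition of orthogonal; all the actual content sits in the previous proposition, which identified $\Perp^\bot$ with $\{\emptyset\}$. The only subtlety worth mentioning is that we rely on the standard identity $X^{\bot\bot\bot} = X^\bot$, which is an immediate consequence of the monotonicity inclusion $X \subseteq X^{\bot\bot}$ applied to both $X$ and $X^\bot$; this is not explicitly stated in the excerpt but is built into the notion of orthogonal of Definition~\ref{def:ortho} and is the same identity that underlies the cited bi-orthogonal closure proposition.
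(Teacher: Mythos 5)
Your proposal is correct and matches the paper's argument in substance: the paper's own proof simply invokes the identity $\mathbf{A}^\bot = \mathbf{A}^{\bot\bot\bot}$ (citing Joinet--Seiller) to conclude that $\mathbf{1}^\bot$ is an orthogonal and hence a behaviour, which is exactly the content of your two observations. Your first, fully direct argument --- $\Perp = \{\emptyset\}^\bot$ is the orthogonal of the pre-behaviour $\{\emptyset\}$, so Definition~\ref{def:behaviour} applies immediately --- is if anything slightly cleaner than the paper's detour through the triple-orthogonal identity.
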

\begin{proof}
Since it is known that $\mathbf{A}^\bot = \mathbf{A}^{\bot\bot\bot}$ for any behaviour $\mathbf{A}$ \cite[Corollary 9]{joinet2021abstraction}, it follows that $\mathbf{1}^\bot$ (and thus $\{\emptyset\}^\bot$) is a behaviour.
\end{proof}

\begin{defi}[Bottom]
We define the behaviour $\simbot := \mathbf{1}^\bot$.
\end{defi}

\begin{prop}
We have $\mathbf{A} \parr \simbot = \mathbf{A}$ for any behaviour $\mathbf{A}$ when considering $\perp^R$.
\end{prop}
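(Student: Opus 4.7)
The proof will be a short chain of definitional rewrites resting on the previously established fact $\mathbf{A}\otimes\mathbf{1}=\mathbf{A}$ and on the bi-orthogonal closure characterisation of behaviours. Concretely, unfold the par connective via Definition~\ref{def:parimpl}, namely
\[
\mathbf{A}\parr\simbot \;=\; (\mathbf{A}^\bot\otimes\simbot^\bot)^\bot.
\]
Since $\simbot=\mathbf{1}^\bot$ and $\mathbf{1}$ is a behaviour, one has $\simbot^\bot=\mathbf{1}^{\bot\bot}=\mathbf{1}$, so the expression reduces to $(\mathbf{A}^\bot\otimes\mathbf{1})^\bot$. Applying the previous proposition to the behaviour $\mathbf{A}^\bot$ (which is a behaviour as the orthogonal of the behaviour $\mathbf{A}$) yields $\mathbf{A}^\bot\otimes\mathbf{1}=\mathbf{A}^\bot$, and finally $(\mathbf{A}^\bot)^\bot=\mathbf{A}^{\bot\bot}=\mathbf{A}$ because $\mathbf{A}$ is itself a behaviour.

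A small detail to verify before invoking the previous proposition is the disjointness precondition required by Definition~\ref{def:tensor}: we need $\mathbf{A}^\bot$ and $\mathbf{1}$ to be disjoint. This is vacuous because $\mathbf{1}=\{\emptyset\}$, so for every $\Phi_A\in\mathbf{A}^\bot$ the condition $\Phi_A\ucap_C\emptyset=\emptyset$ holds trivially (the empty constellation has an empty dependency graph, so no variable can be simultaneously accessible from both sides). Once disjointness is secured, the application of $\mathbf{A}^\bot\otimes\mathbf{1}=\mathbf{A}^\bot$ is immediate.

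The only step that uses the specific choice of orthogonality $\perp^R$ is the inheritance of $\mathbf{A}\otimes\mathbf{1}=\mathbf{A}$ from the preceding proposition, where the definition of the pole $\Perp$ relies on $\perp^R$. Since everything else is a formal manipulation of the bi-orthogonal closure, I do not anticipate a technical obstacle: the main (and really the only) content of the proof is to chain these four equalities. The argument can therefore be written as a single displayed calculation together with a one-line remark about disjointness.
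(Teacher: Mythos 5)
Your proof is correct and follows essentially the same chain of equalities as the paper's own proof: unfold $\parr$, use $\simbot^\bot = \mathbf{1}^{\bot\bot} = \mathbf{1}$, apply $\mathbf{A}^\bot \otimes \mathbf{1} = \mathbf{A}^\bot$, and close with $\mathbf{A}^{\bot\bot} = \mathbf{A}$. Your explicit check of the disjointness precondition for the tensor is a small addition the paper leaves implicit, but it does not change the argument.
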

\begin{proof}
We have $\mathbf{A} \parr \simbot = (\mathbf{A}^\bot \otimes \simbot^\bot)^\bot = (\mathbf{A}^\bot \otimes \{\emptyset\}^{\bot\bot})^\bot = (\mathbf{A}^\bot \otimes \{\emptyset\})^\bot = \mathbf{A}^{\bot\bot} = \mathbf{A}$ (since $\mathbf{A}$ is a behaviour).
\end{proof}

\begin{prop}
We have $\mathbf{A}^\bot = \mathbf{A} \multimap \simbot$ for any behaviour $\mathbf{A}$ when considering $\perp^R$.
\end{prop}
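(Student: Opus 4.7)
The plan is to reduce the statement to the previous proposition via the definition of linear implication. By Definition~\ref{def:parimpl}, we have $\mathbf{A} \multimap \simbot = \mathbf{A}^\bot \parr \simbot$. The orthogonal $\mathbf{A}^\bot$ is itself a behaviour: indeed, $\mathbf{A}^\bot = (\mathbf{A}^\bot)^{\bot\bot}$ since, as already used in the $\simbot$ proposition, the triple-orthogonal collapses to the single orthogonal (\cite[Corollary 9]{joinet2021abstraction}). Hence the previous proposition applies with $\mathbf{B} := \mathbf{A}^\bot$, giving $\mathbf{A}^\bot \parr \simbot = \mathbf{A}^\bot$, which is what we want.

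The only genuinely non-routine ingredient is the disjointness side-condition implicit in the $\parr$ connective (Definition~\ref{def:parimpl} requires the two arguments to be disjoint behaviours). For this one would check that $\mathbf{A}^\bot$ and $\simbot$ can be assumed disjoint: since $\simbot = \{\emptyset\}^\bot$ is characterised by constellations whose interaction with the empty constellation normalises to the star of uncoloured roots, a standard renaming of the variables appearing in elements of $\mathbf{A}^\bot$ ensures $\Phi \ucap_C \Phi' = \emptyset$ for any $\Phi \in \mathbf{A}^\bot$ and $\Phi' \in \simbot$. This is the same renaming hygiene that was implicit in the proof of $\mathbf{A} \parr \simbot = \mathbf{A}$, so no new work is needed here.

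I do not expect any serious obstacle: everything reduces to a syntactic unfolding of the definitions together with the two facts (a) $\mathbf{A}^\bot$ is a behaviour and (b) $\simbot$ is neutral for $\parr$. The proof should thus be a single chain of equalities:
\[
\mathbf{A} \multimap \simbot \;=\; \mathbf{A}^\bot \parr \simbot \;=\; \mathbf{A}^\bot,
\]
where the first equality is Definition~\ref{def:parimpl} and the second is the previous proposition applied to the behaviour $\mathbf{A}^\bot$.
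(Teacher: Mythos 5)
Your proof is correct and follows essentially the same route as the paper: unfold $\mathbf{A} \multimap \simbot$ as $\mathbf{A}^\bot \parr \simbot$ via Definition~\ref{def:parimpl}, then apply the neutrality of $\simbot$ for $\parr$ to the behaviour $\mathbf{A}^\bot$. Your extra checks (that $\mathbf{A}^\bot$ is itself a behaviour, and the disjointness side-condition) are reasonable bits of diligence that the paper leaves implicit, but they do not change the argument.
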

\begin{proof}
We have $\mathbf{A} \multimap \simbot = \mathbf{A}^\bot \parr \simbot$. Since $\simbot$ is a neutral element for $\parr$, it follows that $\mathbf{A}^\bot \parr \simbot = \mathbf{A}^\bot$.
\end{proof}

We defined interactive types for units which correspond to idealised neutral elements (Girard's \emph{Usage}). Now, considering a constellation $\Phi$ in the wild, are we able to effectively tell whether it is in $\mathbf{1}$ (respectively $\simbot$) or not ? (Girard's \emph{Usine}).

We consider $\perp^R$. In order to tell if $\Phi \in \simbot$, we can use the fact that $\simbot = \{\emptyset\}^\bot$. Hence, it is sufficient to consider the set of tests $\{\emptyset\}$. When testing $\Phi$ against the empty constellation $\emptyset$, if we have $\Phi \perp \emptyset$ then $\Phi \in \simbot$.
As for $\mathbf{1}$, we just need to be able to tell if $\Phi = \emptyset$. This can be done with any constellation of $\mathbf{1}^\bot = \simbot$.

This provides a notion of correct constellations for multiplicative units. However, although they fulfil their role as constellations having the behaviour of neutral elements for multiplicative connectives, it is not quite the \emph{real thing} as they do not exactly correspond to the units of proof-nets. In particular, if we look at the rule for $\bot$, the constant $\bot$ is introduced in a given context $\Gamma$ to which it is dependent. Hence, it will either be disconnected when considering a switching in a correct proof-structure. This breaks the connectedness condition of the Danos-Regnier correctness criterion. The usual hack is usually to consider links (called ``jumps'' \cite[Section A.2]{girard1996proof}) between $\bot$ nodes and either axioms or $1$ nodes to represent the dependency between $\bot$ and its context. Girard's idea \cite[Section 2.1.1]{girard2018logique} is to encode multiplicative units in second order linear logic because of this non-local dependency but we do not discuss it in this paper.

% ==================================================
\section{Conservativity and Adequacy}\label{sec:conservativityadequacy}
% ==================================================

In this section, we propose two links:
\begin{itemize}
    \item a proof of conservativity \wrt the original model of proof-nets for $\perp^\mathrm{fin}$ in order to capture MLL+MIX provability and for both $\perp^R$ and $\perp^1$ in order to capture MLL provability. For that purpose, we state \emph{soundness} and \emph{completeness} theorems;
    \item a link between Usine and Usage (called \emph{adequacy} by Girard) showing that the correctness criterion is sufficient to guarantee a sound use of proofs (interaction by cuts).
\end{itemize}

We interpret formula labels by behaviours where distinct behaviours are associated to occurrences of variables by a function called \emph{basis of interpretation}.
Following previous works of Seiller \cite[Definition 46]{seiller2012interaction}, the behaviours corresponding to formula labels are \emph{localised} formulas: they are defined using the same grammar as MLL formulas, except that variables are of the form $X_i(t)$, where $t$ is a term (here representing the path address described in Definition~\ref{def:addr}) used to distinguish occurrences of atomic formulas.
Two behaviours $X_i(t)$ and $X_i(u)$ with $t \neq u$ represent the same atom at different locations and should correspond to the same behaviour modulo conjugation.

\begin{defi}
\label{def:basisinterp}
A \emph{basis of interpretation} is a function $\Omega$ producing a behaviour $\Omega(A, i, t)$ when given a formula $A \in \fmll$, a natural number $i$ (index of occurrence) and a term $t \in \setofaddresses{S}$ (\cf Definition~\ref{def:addr}). A basis of interpretation has to satisfy the condition that $\Omega(A, i, t) + [\tlocus{A}{t}, \tlocus{B}{u}] = \Omega(B, j, u)$ when $i=j$ and otherwise $\Omega(A, i, t)$ and $\Omega(B, j, u)$ are disjoint, such that $\mathbf{A}+\phi = \{\Phi+\phi \mid \Phi \in \mathbf{A}\}$ for a behaviour $\mathbf{A}$ and a star $\phi$.
\end{defi}

\begin{defi}[Interpretation of MLL formulas]
Given a basis of interpretation $\Omega$, a formula $C$ representing the conclusion of a sequent, and an MLL formula occurrence $A$ identified by a unique unary function symbol $\ulocus{A}{x}$ (\cf Definition~\ref{def:addr}). We define the \emph{interpretation} $\interp{A, t}_\Omega$ along $\Omega$ and a term $t$ (encoding the address of $A$ \wrt a conclusion $C$) inductively:
\begin{itemize}
    \item $\interp{C, X_i, t}_\Omega = \Omega(C, i, t)$;
    \item $\interp{C, X_i^\bot, t}_\Omega = \Omega(C, i, t)^\bot$;
    \item $\interp{C, A\otimes B, t}_\Omega = \interp{C, A, \lc \cdot t} _\Omega \otimes \interp{C, B, \rc \cdot t}_\Omega$;
    \item $\interp{C, A\parr B, t}_\Omega = \interp{C, A, \lc \cdot t}_\Omega \parr \interp{C, B, \rc \cdot t}_\Omega$.
\end{itemize}
We write $\interp{C}$ for $\interp{C, C, X}$ and extend the interpretation to sequents with $\interp{\vdash C_1, ..., C_n}_\Omega := \interp{C_1}_\Omega \parr ... \parr \interp{C_n}_\Omega$.
\end{defi}

\begin{rem}
\label{rem:basisinterpax}
The interpretation of an axiom under an basis of interpretation $\Omega$ is defined by $\interp{\vdash X_1, X_1^\bot}_\Omega = \interp{X_1}_\Omega \parr \interp{X_1^\bot} = \Omega(X_1, 1, X) \parr \Omega(X_1^\bot, 1, X)^\bot$.
\end{rem}

% -------------------------------------------
\subsection{A complete model of MLL+MIX}\label{subsec:mllmixmodel}
% -------------------------------------------

We prove soundness and completeness for MLL+MIX. Theorem~\ref{thm:correctness2} shows that asking for a strongly normalising union between vehicle and test corresponds to MLL+MIX correctness. This is the key ingredient in the proof of completeness. In this section, we consider the orthogonality $\perp^\mathrm{fin}$ exclusively.

Instead of the usual soundness property, we prove an extension called \emph{full soundness} \cite[Theorem 55]{seiller2012interaction} which takes cut-elimination into account. In terms of the adequacy used in realisability interpretations, proving the soundness property corresponds to showing that $\Phi : \Gamma$ implies $\interp{\vdash \Gamma}_\Omega$ for some basis of interpretation $\Omega$, except that for $\Phi : \Gamma$ we only consider constellations coming from proof-nets.

\begin{lem}
\label{lem:tensorbehaviour}
Let $A, B$ be MLL formulas, $\Gamma = C_1, ..., C_n, \Delta = D_1, ..., D_m$ be sets of MLL formulas and $\Omega$ be a basis of interpretation. We have $(\interp{\vdash\Gamma}_\Omega \parr \interp{A}_\Omega) \otimes (\interp{\vdash\Delta}_\Omega \parr \interp{B}_\Omega) \subseteq \interp{\vdash\Gamma}_\Omega \parr \interp{\vdash\Delta}_\Omega \parr \interp{A \otimes B}_\Omega$.
\end{lem}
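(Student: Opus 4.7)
The plan is to reduce the inclusion to a simpler statement about pre-tensors, and then use the adjunction (Corollary~\ref{cor:adjunction}) twice to route interactions through each side of the par. Setting $\mathbf{U}=\interp{\vdash\Gamma}_\Omega$, $\mathbf{V}=\interp{A}_\Omega$, $\mathbf{W}=\interp{\vdash\Delta}_\Omega$, $\mathbf{X}=\interp{B}_\Omega$, the goal becomes $(\mathbf{U}\parr\mathbf{V})\otimes(\mathbf{W}\parr\mathbf{X}) \subseteq \mathbf{U}\parr\mathbf{W}\parr(\mathbf{V}\otimes\mathbf{X})$. Since the right-hand side is a behaviour (bi-orthogonally closed), it suffices to prove the pre-tensor inclusion $(\mathbf{U}\parr\mathbf{V})\odot(\mathbf{W}\parr\mathbf{X}) \subseteq \mathbf{U}\parr\mathbf{W}\parr(\mathbf{V}\otimes\mathbf{X})$ and then take bi-orthogonal closures. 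The basis of interpretation from Definition~\ref{def:basisinterp} guarantees that the behaviours $\mathbf{U},\mathbf{V},\mathbf{W},\mathbf{X}$ are pairwise disjoint (they sit at distinct addresses in distinct conclusions), so all the disjointness preconditions needed for associativity, adjunction and the trefoil property will be available.

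Taking $\Phi_1\in\mathbf{U}\parr\mathbf{V}$ and $\Phi_2\in\mathbf{W}\parr\mathbf{X}$, I would then show $\Phi_1\uplus\Phi_2 \perp \Psi$ for every $\Psi$ in $(\mathbf{U}\parr\mathbf{W}\parr(\mathbf{V}\otimes\mathbf{X}))^{\bot} = \mathbf{U}^\bot\otimes\mathbf{W}^\bot\otimes(\mathbf{V}\otimes\mathbf{X})^\bot$. Using the identity $\mathbf{X}^{\bot\bot\bot}=\mathbf{X}^\bot$ (from the bi-orthogonal closure proposition), $(\mathbf{A}\otimes\mathbf{B})^\bot = (\mathbf{A}\odot\mathbf{B})^\bot$, so this check reduces to orthogonality against every pre-tensor decomposition $\Psi = \Psi_U\uplus\Psi_W\uplus\Psi_{VX}$ with $\Psi_U\in\mathbf{U}^\bot$, $\Psi_W\in\mathbf{W}^\bot$ and $\Psi_{VX}\in(\mathbf{V}\otimes\mathbf{X})^\bot$.

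The core step is then two applications of the adjunction. From $\Phi_1\in(\mathbf{U}^\bot\otimes\mathbf{V}^\bot)^\bot = (\mathbf{U}^\bot\odot\mathbf{V}^\bot)^\bot$ we have $\Phi_1 \perp \Psi_U\uplus\Psi_V$ for all $\Psi_V\in\mathbf{V}^\bot$; by Corollary~\ref{cor:adjunction} this gives $\exec(\Phi_1\uplus\Psi_U) \perp \Psi_V$ for every such $\Psi_V$, hence $\exec(\Phi_1\uplus\Psi_U)\in\mathbf{V}^{\bot\bot}=\mathbf{V}$. Symmetrically, $\exec(\Phi_2\uplus\Psi_W)\in\mathbf{X}$. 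By disjointness their union lies in $\mathbf{V}\odot\mathbf{X}\subseteq\mathbf{V}\otimes\mathbf{X}$, so it is orthogonal to $\Psi_{VX}$, yielding $|\exec(\exec(\Phi_1\uplus\Psi_U)\uplus\exec(\Phi_2\uplus\Psi_W)\uplus\Psi_{VX})|<\infty$. Repeated applications of the associativity of execution (Theorem~\ref{thm:assocexec}) collapse this to $|\exec(\Phi_1\uplus\Phi_2\uplus\Psi_U\uplus\Psi_W\uplus\Psi_{VX})|<\infty$, which is exactly $\Phi_1\uplus\Phi_2\perp^{\mathrm{fin}}\Psi$.

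The main obstacle will not be the logical skeleton, which is the standard linear-logic calculation, but rather verifying that the disjointness hypothesis $\Phi_1\ucap_A\Phi_2\ucap_A\Phi_3=\emptyset$ required by Theorem~\ref{thm:assocexec} and Corollary~\ref{cor:adjunction} actually holds at every step. This must be tracked carefully because each application involves different pairings of the five constellations $\Phi_1,\Phi_2,\Psi_U,\Psi_W,\Psi_{VX}$; the disjointness of the underlying behaviours provided by $\Omega$ is what ultimately makes each application legitimate, and I would factor this bookkeeping out as a preliminary remark so that it does not clutter the main argument.
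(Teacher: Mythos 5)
Your argument is correct, and it reaches the same underlying calculation as the paper --- the standard linear-logic fact that a tensor of ``functions'' maps a tensor of arguments to a tensor of results, ultimately resting on the adjunction --- but it executes it along a noticeably different and, frankly, more careful route. The paper rewrites each par as a linear implication $\interp{C}_\Omega \multimap \interp{A}_\Omega$ with $C := C_1^\bot \otimes \dots \otimes C_n^\bot$ and then argues informally, via the functional reading of Proposition~\ref{prop:altlin}, that the disjoint union of two such functions applied to an argument of $\interp{C}_\Omega \otimes \interp{D}_\Omega$ lands in $\interp{A \otimes B}_\Omega$; it does not address the fact that elements of a tensor are only in the \emph{bi-orthogonal closure} of the pre-tensor, so ``providing an argument'' of the form $\Phi_C \uplus \Phi_D$ does not literally cover all of $\interp{C}_\Omega \otimes \interp{D}_\Omega$. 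Your version sidesteps this by never decomposing elements of a tensor at all: you only ever test against elements of pre-tensors, which is legitimate because $X^{\bot\bot\bot} = X^\bot$ lets you replace $(\mathbf{A}\otimes\mathbf{B})^\bot$ by $(\mathbf{A}\odot\mathbf{B})^\bot$, and those elements really are disjoint unions. The two applications of Corollary~\ref{cor:adjunction} followed by the collapse via Theorem~\ref{thm:assocexec} make explicit what the paper compresses into ``each function will take their argument separately.'' The one point you should not leave entirely to a ``preliminary remark'' is the disjointness bookkeeping you flag at the end: the adjunction step with $\Phi_a = \Psi_U$, $\Phi_b = \Psi_V$ needs $\Psi_U \ucap \Psi_V = \emptyset$, i.e.\ disjointness of the \emph{orthogonals} $\mathbf{U}^\bot$ and $\mathbf{V}^\bot$, which Definition~\ref{def:basisinterp} only gives you for the behaviours themselves; you should state explicitly (as the paper implicitly assumes throughout) that disjointness is an addressing property preserved under taking orthogonals. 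With that caveat recorded, your proof is sound and in fact patches a small gap in the paper's own argument.
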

\begin{proof}
The idea is to show $(\interp{C_1^\bot \otimes ... \otimes C_n^\bot}_\Omega \multimap \interp{A}_\Omega) \otimes (\interp{D_1^\bot \otimes ... \otimes D_m^\bot}_\Omega \multimap \interp{B}_\Omega) \subseteq (\interp{C_1^\bot \otimes ... \otimes C_n^\bot}_\Omega \otimes \interp{D_1^\bot \otimes ... \otimes D_m^\bot}_\Omega) \multimap \interp{A \otimes B}_\Omega$ which is equivalent to $(\interp{C}_\Omega \multimap \interp{A}_\Omega) \otimes (\interp{D}_\Omega \multimap \interp{B}_\Omega) \subseteq (\interp{C}_\Omega \otimes \interp{D}_\Omega) \multimap \interp{A \otimes B}_\Omega$ for $C := C_1^\bot \otimes ... \otimes C_n^\bot$ and $D := D_1^\bot \otimes ... \otimes D_m^\bot$.
Assume we have two functions $\Phi_{C,A} \in \interp{C}_\Omega \multimap \interp{A}_\Omega$ and $\Phi_{D,B} \in \interp{D}_\Omega \multimap \interp{B}_\Omega$. We can construct their disjoint union $\Phi_{C,A} \uplus \Phi_{D,B} \in (\interp{C}_\Omega \multimap \interp{A}_\Omega) \otimes (\interp{D}_\Omega \multimap \interp{B}_\Omega)$. If we provide to $\Phi_{C,A} \uplus \Phi_{D,B}$ an argument $\Phi \in \interp{C}_\Omega \otimes \interp{D}_\Omega$, then since $C$ and $D$ are disjoint, each function $\Phi_{C,A}$ and $\Phi_{D,B}$ will take their argument separately and produce $\Phi' \in \interp{A \otimes B}_\Omega$. Therefore, $\Phi_{C,A} \uplus \Phi_{D,B} \in (\interp{C}_\Omega \otimes \interp{D}_\Omega) \multimap \interp{A \otimes B}_\Omega$.
\end{proof}

\begin{lem}
\label{lem:noempty}
If $\mathbf{A}$ is a pre-behaviour then $\mathbf{A}^\bot \neq \emptyset$.
\end{lem}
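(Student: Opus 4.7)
The plan is to exhibit an explicit witness lying in $\mathbf{A}^\bot$, independently of how $\mathbf{A}$ is chosen. The most natural candidate is the empty constellation $\emptyset$: since $\emptyset \uplus \Phi = \Phi$ for every constellation $\Phi$, the orthogonality condition $\emptyset \perp \Phi$ reduces to a property of $\exec(\Phi)$ alone — for the $\perp^{\mathrm{fin}}$ used exclusively in this section, this is precisely the strong-normalisation predicate $|\exec(\Phi)|<\infty$.

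First I would case-split on whether $\mathbf{A}$ is empty. If $\mathbf{A}=\emptyset$, the universal quantifier in the definition of $\mathbf{A}^\bot$ is vacuous, so $\mathbf{A}^\bot$ is the whole universe of constellations and is evidently non-empty. This handles one half of the argument with essentially no content.

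For the case $\mathbf{A}\neq\emptyset$, I would check that $\emptyset$ still lies in $\mathbf{A}^\bot$: for every $\Phi\in\mathbf{A}$, using that no ray of $\emptyset$ appears in $\dgraph{\emptyset\uplus\Phi}$, the dependency graph of $\emptyset\uplus\Phi$ coincides with $\dgraph{\Phi}$, so by \Cref{lem:independence} $\exec(\emptyset\uplus\Phi)=\exec(\Phi)$, and the orthogonality criterion therefore reduces to a property intrinsic to $\Phi$.

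The hard part will be precisely this reduction: for an unrestricted pre-behaviour, $\mathbf{A}$ could a priori contain pathological constellations — for instance a self-interacting star such as $[-c(X),+c(X)]$, whose saturated correct diagrams are cycles of every length and whose execution is therefore infinite — against which $\emptyset$ is not $\perp^{\mathrm{fin}}$-orthogonal. The cleanest way around this obstacle is to read the statement within the ambient universe of constellations actually relevant to the section (those arising from, or tested against, translations of MLL proof-structures as in \Cref{sec:proofs}), which are strongly normalising by the analysis of \Cref{thm:dynamics} and \Cref{cor:correctness}; if one wishes to cover the fully general case, a refined witness using fresh colours with respect to $\colours{\Phi}$ for all $\Phi\in\mathbf{A}$ and only uncoloured rays (so that no new edges appear in $\dgraph{\Psi\uplus\Phi}$) would be required, but the essential argument remains the same reduction via $\Psi\uplus\Phi=\Phi$ up to disjoint components.
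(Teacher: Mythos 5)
Your witness is essentially the paper's: the paper's entire proof is the observation that a constellation with only uncoloured rays is orthogonal to every constellation, and your empty constellation (as well as the ``refined witness'' of your last sentence) plays exactly the same role, since neither contributes any edge to the dependency graph of the union. Up to that point the two arguments coincide, and your explicit reduction of $\emptyset\perp^{\mathrm{fin}}\Phi$ to the strong normalisation of $\Phi$ alone is the right way to unpack the paper's one-line claim (the case split on $\mathbf{A}=\emptyset$ is harmless but unnecessary, since the witness argument covers it).

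The difficulty you flag, however, is genuine, and neither of your two escape routes closes it. Reading the statement inside the universe of constellations coming from proof-structures changes the lemma: it is invoked in Lemma~\ref{lem:bhterminates} for an \emph{arbitrary} behaviour precisely in order to derive strong normalisation of its elements, so assuming that property of the ambient universe would be circular. And the refined witness with fresh colours and only uncoloured rays cannot help either, because the obstruction is internal to $\Phi$: if some $\Phi\in\mathbf{A}$ by itself admits infinitely many \emph{closed} correct saturated diagrams, these have no free rays, survive unchanged in $\exec(\Psi\uplus\Phi)$ for every $\Psi$, and force $|\exec(\Psi\uplus\Phi)|=\infty$ regardless of the witness. (Whether your particular example $[-c(X),+c(X)]$ exhibits this is debatable, since the requirement that the ray linking graph be simple collapses the long cycles you describe; but an infinite constellation made of infinitely many pairwise disjoint closed two-star components does.) So what you have identified is a real gap --- one that is equally present in the paper's own proof, which simply asserts that an uncoloured constellation ``strongly normalises with any constellation'' without explaining why $\exec(\Phi)$ itself must be finite. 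As written, your proposal (like the paper's) establishes the lemma only under the additional hypothesis that every element of $\mathbf{A}$ is itself strongly normalising, not for arbitrary pre-behaviours.
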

\begin{proof}
Any constellation with only uncoloured rays strongly normalise with any constellation so it is always part of the orthogonal of a pre-behaviour.
\end{proof}

\begin{lem}
\label{lem:bhterminates}
If $\mathbf{A}$ is a behaviour and $\Phi \in \mathbf{A}$ then $|\exec(\Phi)| < \infty$.
\end{lem}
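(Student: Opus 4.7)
The plan is to exploit the characterisation of the orthogonality $\perp^\mathrm{fin}$ (which is the relevant notion in this section) together with the independence lemma for disconnected dependency graphs. Since $\mathbf{A}$ is a behaviour, we have $\mathbf{A}=\mathbf{A}^{\bot\bot}$, so $\Phi\in\mathbf{A}$ means $\Phi\perp^\mathrm{fin}\Psi$ for every $\Psi\in\mathbf{A}^\bot$; equivalently, $|\exec(\Phi\uplus\Psi)|<\infty$ for every such $\Psi$. It therefore suffices to exhibit a single witness $\Psi\in\mathbf{A}^\bot$ from which the finiteness of $\exec(\Phi)$ can be read off.

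The idea is to pick $\Psi$ uncoloured, e.g.\ $\Psi:=[X]$. By the proof of Lemma~\ref{lem:noempty}, any constellation whose rays are all uncoloured lies in the orthogonal of every pre-behaviour, so in particular $\Psi\in\mathbf{A}^\bot$. Because $\Psi$ carries no colour, no ray of $\Psi$ can be dual to a ray of $\Phi$, hence $\dgraph{\Phi\uplus\Psi}$ is the disjoint union of $\dgraph{\Phi}$ and the isolated vertex corresponding to $\Psi$. Invoking Lemma~\ref{lem:independence} on these two connected components, we obtain $\exec(\Phi\uplus\Psi)=\exec(\Phi)\cup\exec(\Psi)$, where $\exec(\Psi)=\{[X]\}$.

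From $\Phi\perp^\mathrm{fin}\Psi$ we get $|\exec(\Phi\uplus\Psi)|<\infty$, and the decomposition above gives $|\exec(\Phi)|\le|\exec(\Phi\uplus\Psi)|<\infty$, which is the desired conclusion.

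The proof is essentially routine once one has both the separation trick (finding a ``neutral'' tester that does not interact with $\Phi$) and Lemma~\ref{lem:independence} at hand; the only mild subtlety is that one should not try to take $\Psi=\emptyset$ as a witness, since $\exec(\Phi\uplus\emptyset)=\exec(\Phi)$ and the orthogonality with $\emptyset$ would already presuppose the finiteness we want to establish. Choosing an uncoloured non-empty constellation avoids this circularity while still ensuring membership in $\mathbf{A}^\bot$.
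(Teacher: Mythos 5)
There is a genuine gap, and it is a circularity at the step where you place $\Psi=[X]$ in $\mathbf{A}^\bot$. By your own decomposition via Lemma~\ref{lem:independence}, $\exec(\Phi'\uplus[X])=\exec(\Phi')\cup\exec([X])$ for \emph{every} constellation $\Phi'$, so $|\exec(\Phi'\uplus[X])|<\infty$ holds if and only if $|\exec(\Phi')|<\infty$. Consequently the assertion ``$[X]\in\mathbf{A}^\bot$'' is literally equivalent to the statement that every element of $\mathbf{A}$ is strongly normalising, which is (the universally quantified form of) the very lemma you are proving. The witness $[X]$ is therefore exactly as circular as the witness $\emptyset$ that you rightly reject at the end of your argument: adding an inert uncoloured star changes $|\exec(\cdot)|$ by one and provides no leverage. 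Nor can the proof of Lemma~\ref{lem:noempty} be invoked as a black box here: the claim made there, that a constellation with only uncoloured rays is $\perp^{\mathrm{fin}}$-orthogonal to \emph{any} constellation, would entail that every constellation whatsoever is strongly normalising, which is false (the constellation of Example~\ref{ex:mllwrong} yields infinitely many correct saturated diagrams, and so does its disjoint union with $[X]$).

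The paper's proof avoids this by not asking the witness to be non-interacting. It takes an \emph{arbitrary} $\Phi'\in\mathbf{A}^\bot$ --- only the non-emptiness of $\mathbf{A}^\bot$ is imported from Lemma~\ref{lem:noempty} --- and argues by contraposition: if $\Phi$ admitted infinitely many correct saturated diagrams, these would persist as (or extend to) saturated diagrams of $\Phi\uplus\Phi'$, so $\Phi\uplus\Phi'$ could not be strongly normalising, contradicting $\Phi\perp^{\mathrm{fin}}\Phi'$. The reduction is thus to the strictly weaker statement $\mathbf{A}^\bot\neq\emptyset$, rather than to ``a specific constellation that does not interact with any element of $\mathbf{A}$ lies in $\mathbf{A}^\bot$'', which, as explained above, is not weaker than the conclusion. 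To repair your proof you would either need a witness whose membership in $\mathbf{A}^\bot$ can be established independently of the strong normalisation of the elements of $\mathbf{A}$, or you should switch to the contrapositive argument with an arbitrary orthogonal element.
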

\begin{proof}
By definition of behaviour, we have $\mathbf{A} = \mathbf{A}^{\bot\bot}$. By Lemma~\ref{lem:noempty} there must be some $\Phi' \in \mathbf{A}^\bot$ such that $\Phi \uplus \Phi'$ is strongly normalising. Assume $\Phi$ is not strongly normalising. Then, $\Phi$ can produce infinitely many \emph{saturated} correct diagrams. Such diagrams cannot be extended with stars of $\Phi'$, hence these infinitely many saturated diagrams are preserved and $\Phi \uplus \Phi'$ cannot be strongly normalising, which is contradictory. Therefore, $\Phi$ must be strongly normalising.
\end{proof}

\begin{thm}[Full soundness for MLL+MIX]
\label{thm:soundness}
Let $\vdash \mathcal{S} : \Gamma$ be an MLL+MIX proof-net and $\Omega$ a basis of interpretation. We have $\exec(\axcutof{\mathcal{S}}) \in \interp{\vdash \Gamma}_\Omega$.
\end{thm}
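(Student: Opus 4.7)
The plan is to proceed in two stages. First, I will reduce to the cut-free case. By Theorem~\ref{thm:dynamics} (understood with the convention that the cut stars of $\axcutof{\mathcal{S}}$ carry the fresh colour $\ccol{c}$ that triggers their reduction, so that $\exec(\axcutof{\mathcal{S}})=\axof{\mathcal{S}'}$ for $\mathcal{S}'$ the cut-free normal form of $\mathcal{S}$), together with the invariance of typing under execution (Lemma~\ref{lem:typeinv}) applied to the behaviour $\interp{\vdash\Gamma}_\Omega$, it suffices to show $\axof{\mathcal{S}'}\in\interp{\vdash\Gamma}_\Omega$. Since the conclusions of $\mathcal{S}$ are preserved by cut-elimination (subject reduction for MLL+MIX proof-nets), I may from now on assume $\mathcal{S}$ is cut-free and prove the statement by induction on an MLL+MIX sequent-calculus derivation $\pi$ with $\interp{\pi}=\mathcal{S}$.

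The $\parr$ case is structural: introducing $A \parr B$ leaves the set of axiom stars unchanged apart from prefixing the path addresses of the axioms above $A$ (resp.\ $B$) by $\lc$ (resp.\ $\rc$), and this is mirrored exactly in the inductive clause $\interp{C, A\parr B, t}_\Omega = \interp{C, A, \lc\cdot t}_\Omega \parr \interp{C, B, \rc\cdot t}_\Omega$, so the induction hypothesis transports directly. The mix case is similarly immediate: the disjoint union of the two translations belongs to the pre-tensor of the two interpretations, hence (using $\perp^{\mathrm{fin}}$) to their tensor, and thereby to $\interp{\vdash\Gamma}_\Omega \parr \interp{\vdash\Delta}_\Omega$. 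The tensor case is a direct application of Lemma~\ref{lem:tensorbehaviour}, which was set up precisely for this purpose.

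The main obstacle I expect is the axiom case. There, I must show that the single binary star $[p_v(\lc\cdot X), p_v(\rc\cdot X)]$ translating an axiom link with conclusions $A$ and $A^\bot$ belongs to $\interp{A}_\Omega \parr \interp{A^\bot}_\Omega = (\interp{A^\bot}_\Omega \otimes \interp{A}_\Omega)^\bot$. My strategy will be to use Proposition~\ref{prop:altlin} to recast this as the statement that the axiom realises the identity $\interp{A}_\Omega \multimap \interp{A}_\Omega$: given any $\Phi \in \interp{A}_\Omega$ presented against one of the rays, the execution must land, up to the conjugation-like identification supplied by Definition~\ref{def:basisinterp} between $\Omega(A,i,\lc\cdot X)$ and $\Omega(A,i,\rc\cdot X)$, in $\interp{A}_\Omega$. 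Finiteness of the normal form, which is what $\perp^{\mathrm{fin}}$ demands, will follow from Lemma~\ref{lem:bhterminates} applied to $\Phi$, using that strongly normalising constellations cannot be made non-strongly-normalising by plugging a two-ray identity in front of them.

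The delicate point to manage carefully is the interplay between the colourless presentation of $\axcutof{\mathcal{S}}$ and the coloured interactions implicit in the orthogonality relation: the axiom case really hinges on mediating the two through adapters (Definition~\ref{def:adapter}) in the spirit of the reformulation of logical correctness in Proposition~\ref{prop:reformulationcorrectness}, and on verifying that the adjunction of Corollary~\ref{cor:adjunction} can indeed be applied at each reduction step — which requires checking the disjointness conditions $\ucap$ between the translations of subproofs, a disjointness that is automatic here because distinct occurrences of axiom stars use variables bound to their own star.
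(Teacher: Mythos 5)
Your proposal follows essentially the same route as the paper's proof: reduce to the cut-free case via the simulation of cut-elimination together with invariance of typing under execution, then induct on the derivation, handling the tensor case with Lemma~\ref{lem:tensorbehaviour}, the axiom case via the defining condition of the basis of interpretation (your detour through Proposition~\ref{prop:altlin} is just a repackaging of the paper's direct computation that the axiom star sends $\Omega(X_i^\bot,i,X)$ into $\Omega(X_i,i,X)$), and the MIX case via the inclusion $\mathbf{A}\otimes\mathbf{B}\subseteq\mathbf{A}\parr\mathbf{B}$. The one step you under-estimate is that last inclusion, which is not ``immediate'' for $\perp^{\mathrm{fin}}$ — the paper proves it by ruling out crossed divergent interactions between the two components using Lemma~\ref{lem:bhterminates} — but this is a matter of detail rather than of approach.
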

\begin{proof}
We start with the case of cut-free proofs normalising into themselves. The proof is done by induction on the proof-net structure of $\mathcal{S}$.
\begin{itemize}
	\item Assume we have $\vdash \mathcal{S} : X_i, X_i^\bot$. We would like to show that $\axof{\mathcal{S}} \in \interp{X_i}_\Omega \parr \interp{X_i^\bot}_\Omega
    = \interp{X_i, X_i, X}_\Omega \parr \interp{X_i^\bot, X_i, X}_\Omega^\bot
    = \Omega(X_i, i, X) \parr \Omega(X_i^\bot, i, X)^\bot
    = \big(\Omega(X_i, i, X)^\bot \otimes \Omega(X_i^\bot, i, X)\big)^\bot$. Let $\Phi_1 \uplus \Phi_2 \in \Omega(X_i, i, X)^\bot \otimes \Omega(X_i^\bot, i, X)$ with $\Phi_1 \in \Omega(X_i, i, X)^\bot$ and $\Phi_2 \in \Omega(X_i^\bot, i, X)$. It is sufficient to show that $|\exec(\Phi_1 \uplus \Phi_2 \uplus \axof{\mathcal{S}})| < \infty$, \ie that the axiom strongly normalises with its tests. By Definition~\ref{def:basisinterp} since we have $\axof{\mathcal{S}} = [\tlocus{X_i}{X}, \tlocus{X_i^\bot}{X}]$, we have $\axof{\mathcal{S}} \uplus \Phi_2 \in \Omega(X_i, i, X)$ which is orthogonal to $\Phi_1$. It follows that $|\exec(\Phi_1 \uplus \Phi_2 \uplus \axof{\mathcal{S}})| < \infty$.
	
    \item Assume we have $\vdash \mathcal{S} : \Gamma, \Delta, A \otimes B$ coming from $\vdash \mathcal{S}_1 : \Gamma, A$ and $\vdash \mathcal{S}_2 : \Delta, B$. We have to show $\axof{\mathcal{S}} \in \interp{\vdash \Gamma}_\Omega \parr \interp{\vdash\Delta}_\Omega \parr \interp{A \otimes B}_\Omega$.
    By induction hypothesis, we have $\axof{\mathcal{S}_1} \in \interp{\vdash \Gamma, A}_\Omega = \interp{\vdash\Gamma}_\Omega \parr \interp{A}_\Omega$ and $\axof{\mathcal{S}_2} \in \interp{\vdash \Delta, B}_\Omega = \interp{\vdash\Delta}_\Omega \parr \interp{B}_\Omega$.
    By a conjugation $\mu$ such that $\axof{\mathcal{S}_1}$ and $\axof{\mathcal{S}_2}$ are made distinct, we can relocale the atoms and obtain a constellation $\Phi_\mu \in (\interp{\vdash\Gamma}_\Omega \parr \interp{A}_\Omega) \otimes (\interp{\vdash\Delta}_\Omega \parr \interp{B}_\Omega)$ such that $\Phi_\mu = \mu(\axof{\mathcal{S}_1}) \uplus \axof{\mathcal{S}_2}$. Now, by the definition of tensor for proof-structures, we have a preservation of axioms and $\axof{\mathcal{S}}$ equivalent to $\Phi_\mu$ up to conjugation (and this conjugation could be chosen for $\mu$).
    By Lemma~\ref{lem:tensorbehaviour}, we have $(\interp{\vdash\Gamma}_\Omega \parr \interp{A}_\Omega) \otimes (\interp{\vdash\Delta}_\Omega \parr \interp{B}_\Omega) \subseteq \interp{\vdash\Gamma}_\Omega \parr \interp{\vdash\Delta}_\Omega \parr \interp{A \otimes B}_\Omega$, hence $\axof{\mathcal{S}} \in \interp{\vdash \Gamma}_\Omega \parr \interp{\vdash\Delta}_\Omega \parr \interp{A \otimes B}_\Omega$.
	
    \item Assume we have $\vdash \mathcal{S} : \Gamma, A \parr B$ coming from $\vdash \mathcal{S}' : \Gamma, A, B$. We would like to show that $\axof{\mathcal{S}} \in \interp{\vdash \Gamma}_\Omega \parr \interp{A}_\Omega \parr \interp{B}_\Omega$. This directly follows from the induction hypothesis and the fact that we have $\interp{\vdash \Gamma, A, B}_\Omega = \interp{\vdash \Gamma}_\Omega \parr \interp{A}_\Omega \parr \interp{B}_\Omega$ by definition.
    
    \item Assume we have $\vdash \mathcal{S} : \Gamma, \Delta$ coming from $\vdash \mathcal{S}_1 : \Gamma$ and $\vdash \mathcal{S}_2 : \Delta$ (by using the MIX rule). We have to show $\axof{\mathcal{S}} \in \interp{\vdash\Gamma}_\Omega \parr \interp{\vdash\Delta}_\Omega$ knowing that the induction hypothesis states that $\axof{\mathcal{S}_1} \in \interp{\vdash\Gamma}_\Omega$ and $\axof{\mathcal{S}_2} \in \interp{\vdash\Delta}_\Omega$. Since the MIX rule only places two proofs next to each other, we have $\axof{\mathcal{S}} = \axof{\mathcal{S}_1} \uplus \axof{\mathcal{S}_2}$ by definition. By definition of tensor, we have $\axof{\mathcal{S}} \in \interp{\vdash\Gamma}_\Omega \otimes \interp{\vdash\Delta}_\Omega$. It remains to show that $\mathbf{A} \otimes \mathbf{B} \subseteq \mathbf{A} \parr \mathbf{B}$ in general, which would imply $\axof{\mathcal{S}} \in \interp{\vdash\Gamma}_\Omega \parr \interp{\vdash\Delta}_\Omega$. We have $\mathbf{A} \parr \mathbf{B} \subseteq (\mathbf{A}^\bot \otimes \mathbf{B}^\bot)^\bot$, hence we have to show that $\mathbf{A} \otimes \mathbf{B} \subseteq (\mathbf{A}^\bot \otimes \mathbf{B}^\bot)^\bot$. Let $\Phi_1 \uplus \Phi_2 \in \mathbf{A} \otimes \mathbf{B}$ and $\Phi_1' \uplus \Phi_2' \in \mathbf{A}^\bot \otimes \mathbf{B}^\bot$. We know that $\Phi_1 \perp \Phi_1'$ and $\Phi_2 \perp \Phi_2'$. We have $\Phi_1 \uplus \Phi_2 \perp \Phi_1' \uplus \Phi_2'$, \ie that $\Phi_1 \uplus \Phi_2 \uplus \Phi_1' \uplus \Phi_2'$ is strongly normalising. In particular, we cannot have a crossed infinite interaction between $\Phi_1$ and $\Phi_2'$ or between $\Phi_2$ and $\Phi_1'$ because otherwise one constellation would have to not be strongly normalising (because a strongly normalising constellation produces finitely many saturated diagrams which cannot be extended so to make an infinite execution) but this would contradict Lemma~\ref{lem:bhterminates}.
\end{itemize}
If the proof has cuts, then by Theorem~\ref{thm:dynamics}, we can execute its translation (a constellation) so that the normal form corresponds to the normal form of the proof. This proof is necessarily cut-free, hence the case of cut-free proofs also applies to this case.
\end{proof}

\begin{lem}
\label{lem:tests}
Let $\Omega$ be a basis of interpretation and $\vdash\Gamma$ an MLL sequent. Then, we have $\tests{\vdash\Gamma} \subseteq \interp{\vdash\Gamma}_\Omega^{\bot^\mathrm{fin}}$.
\end{lem}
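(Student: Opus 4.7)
The plan is to fix a switching $\varphi$ and a constellation $\Psi\in\interp{\vdash\Gamma}_\Omega$, and to prove that $|\exec(\Psi\uplus\test{\vdash\Gamma}^\varphi)|<\infty$. The natural strategy is an induction on the formulas of $\Gamma$ paralleling the inductive definition of $\interp{\vdash\Gamma}_\Omega$, after first decomposing the test of the sequent into a disjoint union of formula-local tests.

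First I would record the basic decomposition: if $\Gamma=A_1,\dots,A_n$, then by Definition~\ref{def:mlltest} the syntax hypergraph $ST(\vdash\Gamma)$ is the disjoint union of the $ST(A_i)$, and a switching $\varphi$ of the whole sequent restricts to a switching $\varphi_i$ of each $ST(A_i)$. Consequently $\test{\vdash\Gamma}^\varphi = \biguplus_{i=1}^{n}\test{A_i}^{\varphi_i}$, where $\test{A_i}^{\varphi_i}$ is the obvious local test built from the non-atomic vertices of $ST(A_i)^{\varphi_i}$ together with the conclusion star of the root of $A_i$. Since $\interp{\vdash\Gamma}_\Omega=(\interp{A_1}_\Omega^\bot\otimes\cdots\otimes\interp{A_n}_\Omega^\bot)^\bot$, it is enough to show that $\test{\vdash\Gamma}^\varphi$ lies in $\interp{A_1}_\Omega^\bot\otimes\cdots\otimes\interp{A_n}_\Omega^\bot$, and by the definition of pre-tensor (Definition~\ref{def:pretensor}) followed by the bi-orthogonal closure that gives the tensor (Definition~\ref{def:tensor}), this in turn reduces to showing that each local test $\test{A_i}^{\varphi_i}$ belongs to $\interp{A_i}_\Omega^\bot$.

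Next I would prove, by induction on the formula $A$, the statement: $\test{A}^\varphi\in\interp{A}_\Omega^\bot$. For the base case $A=X_i$ or $A=X_i^\bot$, Definition~\ref{def:mlltest} explicitly excludes translating atomic vertices, so the local test is empty. Orthogonality in $\perp^{\mathrm{fin}}$ then follows from Lemma~\ref{lem:bhterminates}: any $\Psi\in\interp{A}_\Omega$ is strongly normalising, and $\exec(\Psi\uplus\emptyset)=\exec(\Psi)$ is finite. For the tensor case $A=B\otimes C$, the local test adds the star $\gtensc{B}{C}{A}$ to $\test{B}^\varphi\uplus\test{C}^\varphi$ (together with a $\gconc{A}$ star if $A$ is a top-level conclusion). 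Given any $\Phi_B\uplus\Phi_C$ with $\Phi_B\in\interp{B}_\Omega$ and $\Phi_C\in\interp{C}_\Omega$ (which are disjoint by Definition~\ref{def:basisinterp}), the dependency graph decomposes cleanly: the tensor star routes its $-B$ ray only through $\test{B}^\varphi\uplus\Phi_B$ and its $-C$ ray only through $\test{C}^\varphi\uplus\Phi_C$. Using Theorem~\ref{thm:assocexec} and Lemma~\ref{lem:ppe} to partially pre-execute each side, the induction hypotheses then give finiteness of the global execution, so $\test{A}^\varphi\in(\interp{B}_\Omega\odot\interp{C}_\Omega)^\bot\subseteq\interp{B\otimes C}_\Omega^\bot$.

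The main obstacle is the par case $A=B\parr C$, since the interpretation is defined by double orthogonal and the switching is non-trivial. Suppose $\varphi(A)=\parr_L$ (the other case is symmetric). The $\parr$ vertex contributes $\gparlc{B}{C}{A}=[\qray[-]{B},\qray[+]{A}]+[\qray[-]{C}]$, which, crucially, splits into two pieces that do not share any ray: one piece is connected (through $+A$) with the conclusion star of $A$ and with $\test{B}^\varphi$, while the isolated star $[\qray[-]{C}]$ joins $\test{C}^\varphi$. This produces a disjoint union $\test{A}^\varphi = \Phi_B^\star\uplus\Phi_C^\star$ where $\Phi_B^\star$ plays the role of a local test for $B$ at the root position (with the conclusion star providing the free ray at $p_A(X)$ that replaces the missing $p_B$ conclusion) and $\Phi_C^\star$ plays the role of a local test for $C$. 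By the induction hypothesis $\Phi_B^\star\in\interp{B}_\Omega^\bot$ and $\Phi_C^\star\in\interp{C}_\Omega^\bot$, hence $\test{A}^\varphi\in\interp{B}_\Omega^\bot\odot\interp{C}_\Omega^\bot\subseteq(\interp{B}_\Omega^\bot\otimes\interp{C}_\Omega^\bot)$, which is $\interp{B\parr C}_\Omega^\bot=\interp{A}_\Omega^\bot$ by Definition~\ref{def:parimpl}. The delicate point to handle carefully is the accounting for conclusion stars and for the adapters implicit in the atomic base case (Proposition~\ref{prop:eqtest}): one has to check that the free uncoloured rays produced by the conclusion stars do not destroy finiteness of the combined execution, which they cannot, since uncoloured rays never participate in fusion. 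Combining all local orthogonalities via the tensor decomposition of $\interp{\vdash\Gamma}_\Omega^\bot$ then concludes the proof.
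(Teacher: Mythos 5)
Your proof is essentially correct but organised differently from the paper's. The paper proves the lemma by induction on the \emph{sequent}: the base case is a sequent consisting only of atoms (handled via Lemma~\ref{lem:bhterminates} and the fact that adding a binary star with an uncoloured ray to a strongly normalising constellation cannot cause divergence), and the inductive steps pass from $\vdash\Delta,A,B$ to $\vdash\Delta,A\parr B$ (arguing the two tests differ only by a conjugation, which cannot affect strong normalisation) and to $\vdash\Delta,A\otimes B$ (arguing via a generating pre-behaviour and the identity $X^\bot=X^{\bot\bot\bot}$). You instead decompose $\test{\vdash\Gamma}^\varphi$ up front into per-formula local tests and run the induction on \emph{formulas}, with an explicit splitting of the switched $\parr$ star $\gparlc{u}{w}{v}$ into two disconnected pieces. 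What your route buys is a cleaner localisation of the argument — the orthogonality is established formula by formula and then recombined through the pre-tensor, and your $\otimes$ case is more explicit than the paper's appeal to a generating pre-behaviour $E$. What the paper's route buys is that the objects appearing in the induction hypothesis are literally the tests $\test{\vdash\Delta,A,B}^{\bar\varphi}$ of a smaller sequent, so no mismatch between the IH and the object at hand ever arises.

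That mismatch is the one point in your argument that needs to be made explicit. In the $\parr_L$ case the two pieces $\Phi_B^\star$ and $\Phi_C^\star$ are \emph{not} the canonical local tests $\test{B}^{\varphi_B}$ and $\test{C}^{\varphi_C}$: the $B$-piece has its conclusion star relocated to the address $p_{B\parr C}$ and reached through an extra intermediate star, and the $C$-piece has its uncoloured conclusion ray deleted (it ends in the bare star $[\qray[-]{w}]$). As stated, your induction hypothesis does not apply to them. The fix is cheap for $\perp^{\mathrm{fin}}$ — strengthen the IH to "every constellation obtained from $\test{A}^{\varphi}$ by a conjugation and by adding or removing uncoloured rays lies in $\interp{A}_\Omega^{\bot^{\mathrm{fin}}}$", which is stable under both operations since neither can create new fusions — but it must be stated, otherwise the par case does not formally go through. (Note also that this robustness is exactly what would break if one tried to run the same induction for $\perp^R$, where uncoloured rays do matter; this is consistent with the paper restricting the lemma to $\perp^{\mathrm{fin}}$.)
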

\begin{proof}
Assume we have $\test{\vdash\Gamma}^\varphi \in \tests{\vdash\Gamma}$ for a switching $\varphi$ of $\vdash\Gamma$. The proof is done by induction on $\vdash\Gamma$.
\begin{itemize}
    \item If $\Gamma = \{A_1, ..., A_n\}$ where the $A_i$ are formulas $X_i$ or $X_i^\bot$, then there is a single switching $\varphi$. Because typing is invariant under execution, we can consider a simplification of tests by fusion $\exec(\test{\vdash\Gamma}^\varphi) = \sum_{i=1}^n [\tlocus[-]{A_i}{t_i},\ulocus{A_i}{X}]$ where $t_i$ is the expected encoding of the address of the atom $A_i$.
    We would like to show that $\test{\vdash\Gamma}^\varphi \in \interp{\vdash A_1, ..., A_n}_\Omega^\bot = \interp{A_1, A_1, t_1}_\Omega^\bot \otimes ... \otimes \interp{A_n, A_n, t_n}_\Omega^\bot$.
    We show that $[\tlocus[-]{A_i}{t_i},\ulocus{A_i}{X}] \in \interp{A_i, A_i, t_i}_\Omega^\bot$.
	Let $\Phi_i \in \interp{A_i, A_i, t_i}_\Omega$. Because $\interp{A_i, A_i, t_i}_\Omega$ is a behaviour, we can use Lemma~\ref{lem:bhterminates} and infer that $|\exec(\Phi_i)| < \infty$. Adding $[\tlocus[-]{A_i}{t_i},\ulocus{A_i}{X}]$ to a strongly normalising constellation cannot cause divergence, hence we must have $[\tlocus[-]{A_i}{t_i},\ulocus{A_i}{X}] \perp \Phi_i$ and $[\tlocus[-]{A_i}{t_i},\ulocus{A_i}{X}] \in \interp{A_i, A_i, t_i}_\Omega^\bot$. Now, since $\test{\vdash\Gamma}^\varphi$ is made of a disjoint union of constellations of $\interp{A_i, A_i, t_i}_\Omega^\bot$, it follows that $\test{\vdash\Gamma}^\varphi \in \interp{\vdash A_1, ..., A_n}_\Omega^\bot$.
	
    \item If $\vdash\Gamma$ is $\vdash\Delta, A\parr B$, then a switching $\varphi$ of $\vdash\Delta, A\parr B$ is a switching $\bar{\varphi}$ of $\vdash\Delta, A, B$ extended with a left or right selection of premise between $A$ and $B$, both linked by a $\parr$ connective. By the induction hypothesis, we have $\test{\vdash \Delta,A,B}^{\bar{\varphi}} \in \interp{\vdash \Delta, A, B}_\Omega = \interp{\vdash\Delta}_\Omega \parr \interp{A}_\Omega \parr \interp{B}_\Omega$ and we would like to show that $\test{\vdash \Delta,A\parr B}^\varphi \in \interp{\vdash \Delta, A \parr B}_\Omega = \interp{\vdash\Delta}_\Omega \parr \interp{A \parr B}_\Omega = \interp{\vdash\Delta}_\Omega \parr \interp{A \parr B, A, \lc\cdot X} \parr \interp{A \parr B, B, \rc\cdot X}_\Omega$.
    The constellation $\test{\vdash \Delta,A,B}^{\bar{\varphi}}$ uses terms $p_A(t)$ and $p_B(u)$ but when we add a $\parr$ link between $A$ and $B$, these terms are relocated relatively to the conclusion $A \parr B$ and we obtain $p_{A \parr B}(\lc\cdot t)$ and $p_{A \parr B}(\rc\cdot u)$. Since they only differ by a conjugation, the two tests will react in the same way with respects to strong normalisation, \ie $\test{\vdash \Delta,A,B}^{\bar{\varphi}} \in (\interp{\vdash\Delta}_\Omega^\bot \otimes \interp{A}_\Omega^\bot \otimes \interp{B}_\Omega^\bot)^\bot$ implies $\test{\vdash \Delta,A\parr B}^\varphi \in (\interp{\vdash\Delta}_\Omega^\bot \otimes \interp{A \parr B}_\Omega^\bot)^\bot$.

    \item If $\vdash \Gamma$ is $\vdash \Delta, A\otimes B$, a switching of $\vdash \Gamma$ is a switching of $\vdash \Delta, A, B$ extended to the additional $\otimes$ connective linking $A$ and $B$, and $\test{\vdash \Delta,A\otimes B}^\varphi$ can be defined from $\test{\vdash \Delta,A, B}^\varphi$ by removing the uncoloured rays $p_A(x)$ and $p_B(x)$, and adding new stars $[\qray[-]{A}, \qray[-]{B}, \qray[+]{A\otimes B}]+[\qray[-]{A\otimes B}, p_{A\otimes B}(X)]$. One can show that $\interp{\vdash \Delta,A\otimes B}_\Omega$ is generated (in the sense of bi-orthogonal closure) by a pre-behaviour $E$, \ie that $\interp{\vdash \Delta,A\otimes B}_\Omega = E^{\bot\bot}$ for some $E$, similarly to how $\mathbf{A} \otimes \mathbf{B}$ is generated by a bi-orthogonal closure on the pre-tensor $\mathbf{A} \odot \mathbf{B}$ (\cf Definition~\ref{def:pretensor}). In this pre-behaviour $E$, the rays coming from $A$ are disjoint from the rays coming from $B$ (because of the requirement of exclusion of interaction). By using the induction hypothesis $\tests{\vdash\Delta, A, B} \subseteq \interp{\vdash\Delta, A, B}_\Omega^{\bot^\mathrm{fin}}$, this shows the result since this implies that $\test{\vdash \Delta,A\otimes B}^\varphi \in E^{\bot^\mathrm{fin}}$ and $\test{\vdash \Delta,A\otimes B}^\varphi \in E^{\bot\bot\bot} = \interp{\vdash \Delta,A\otimes B}_\Omega^\bot$ since it is known that $X^\bot = X^{\bot\bot\bot}$ in general for any pre-behaviour $X$ \cite[Corollary 9]{joinet2021abstraction}.
\end{itemize}
\end{proof}

\begin{defi}[Proof-like constellation]
The syntax tree $ST(\vdash\Gamma)$ of a sequent induces a set of rays by Definition~\ref{def:addr} by computing the address of each atom in $ST(\vdash\Gamma)$. We note this set $\sharp\Gamma$. A constellation $\Phi$ is \emph{proof-like} \wrt an MLL sequent $\vdash\Gamma$ if it is made of binary stars only and $\idrays{\Phi} = \sharp\Gamma$, \ie it is a binary linking of atoms in $\Gamma$.
\end{defi}

\begin{exa}
A constellation which is proof-like \wrt $\vdash X_1^\bot \parr X_2^\bot, X_1 \otimes X_2$ is
\[[\clocus{X_1^\bot \parr X_2^\bot}{\lc\cdot X}, \clocus{X_1 \otimes X_2}{\lc\cdot X}]+[\clocus{X_1^\bot \parr X_2^\bot}{\rc\cdot X}, \clocus{X_1 \otimes X_2}{\rc\cdot X}].\]
However, even the wrong linking
\[[\clocus{X_1^\bot \parr X_2^\bot}{\lc\cdot X}, \clocus{X_1^\bot \parr X_2^\bot}{\rc\cdot X}]+[\clocus{X_1 \otimes X_2}{\lc\cdot X}, \clocus{X_1 \otimes X_2}{\rc\cdot X}]\]
is proof-like as well.
\end{exa}

\begin{thm}[Completeness for MLL+MIX]
\label{thm:completeness}
If a constellation $\Phi \in \interp{\vdash\Gamma}_\Omega$ is proof-like \wrt $\vdash\Gamma$, then there exists an MLL+MIX proof-net $\vdash \mathcal{S} : \Gamma$ such that $\Phi = \axof{\mathcal{S}}$.
\end{thm}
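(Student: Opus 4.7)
The plan is to construct a candidate proof-structure $\mathcal{S}$ directly from $\Phi$ and $\vdash\Gamma$, verify that its vehicle coincides with $\Phi$, and then leverage the orthogonality hypothesis together with the interactive characterisation of correctness (Corollary~\ref{cor:correctness}) to show that $\mathcal{S}$ is MLL+MIX-certifiable.

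First, I will define $\mathcal{S}$: take the syntax hypergraph $ST(\vdash\Gamma)$ (Definition~\ref{def:mlltest}) as the lower part of $\mathcal{S}$, which fixes the $\otimes$ and $\parr$ hyperedges, gives the conclusions $\Gamma$, and determines each atom together with its address $\addr_\mathcal{S}(v,X)$ as in Definition~\ref{def:addr}. Because $\Phi$ is proof-like with respect to $\vdash\Gamma$, we have $\idrays{\Phi} = \sharp\Gamma$ and every star of $\Phi$ is binary, so each star has the form $[\addr_\mathcal{S}(u,X),\addr_\mathcal{S}(v,X)]$ for a unique pair of atoms $u,v$ in $ST(\vdash\Gamma)$. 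For each such star, introduce an axiom hyperedge linking $u$ and $v$. Vertex labels are supplied by the formulas occurring in $ST(\vdash\Gamma)$. By construction, $\vdash \mathcal{S} : \Gamma$ as a labelled proof-structure and $\axof{\mathcal{S}} = \Phi$ by Definition~\ref{def:vehicle}.

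Next, I will prove that $\mathcal{S}$ is MLL+MIX-certifiable by checking the criterion of Theorem~\ref{thm:mixcorrectness}, i.e.\ that $\mathcal{S}^\varphi$ is acyclic for every switching $\varphi$. Fix $\varphi$. From $\Phi \in \interp{\vdash\Gamma}_\Omega$ and the inclusion $\tests{\vdash\Gamma}\subseteq\interp{\vdash\Gamma}_\Omega^{\bot^{\mathrm{fin}}}$ provided by Lemma~\ref{lem:tests}, we obtain $\Phi \perp^{\mathrm{fin}} \test{\vdash\Gamma}^\varphi$, i.e.\ $|\exec(\Phi \uplus \test{\vdash\Gamma}^\varphi)| < \infty$. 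After the colour shift $\tcol{+t.}(\cdot)$ on the axiom side and the insertion of the (finitely many) adapters required by Proposition~\ref{prop:eqtest} to reconcile the $\ccol{c}$/$\tcol{t}$ conventions, this strong normalisation is preserved (adapters are binary exact stars and add no cycles in $\dgraph{\cdot}$). By Lemma~\ref{lem:structural}, the dependency graph of the resulting constellation is isomorphic to the correctness hypergraph $\mathcal{S}^\varphi$. Applying the first equivalence of Corollary~\ref{cor:correctness}, finiteness of $\exec$ implies acyclicity of $\mathcal{S}^\varphi$.

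Since $\varphi$ was arbitrary, every correctness hypergraph of $\mathcal{S}$ is acyclic, so by Theorem~\ref{thm:mixcorrectness} $\mathcal{S}$ is an MLL+MIX proof-net with conclusions $\Gamma$, and $\axof{\mathcal{S}} = \Phi$. The main subtlety I anticipate is purely bookkeeping: ensuring that the colour shift from $\ccol{+c.}$ (used in $\axof{\mathcal{S}}$) to $\tcol{+t.}$ (required to interact with $\test{\vdash\Gamma}^\varphi$) together with the required adapters preserves both membership in $\interp{\vdash\Gamma}_\Omega$ (which is stable under conjugation by the design of the basis of interpretation, Definition~\ref{def:basisinterp}) and the structural isomorphism of Lemma~\ref{lem:structural}, so that the chain "orthogonal to every test $\Rightarrow$ strongly normalising with every test $\Rightarrow$ acyclic correctness hypergraph" goes through without loss.
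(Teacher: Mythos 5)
Your proposal is correct and follows essentially the same route as the paper's proof: build $\mathcal{S}$ by attaching the binary stars of the proof-like constellation as axioms on top of $ST(\vdash\Gamma)$, invoke Lemma~\ref{lem:tests} to obtain orthogonality of $\Phi$ with every test $\test{\vdash\Gamma}^\varphi$, and conclude acyclicity of every $\mathcal{S}^\varphi$ (hence MLL+MIX-certifiability) via Corollary~\ref{cor:correctness}. You spell out more of the bookkeeping (the explicit construction, the colour shift and adapters, and the appeal to Lemma~\ref{lem:structural} and Theorem~\ref{thm:mixcorrectness}) than the paper does, but the decomposition and the key lemmas are the same.
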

\begin{proof}
A proof-like constellation $\Phi \in \interp{\vdash\Gamma}_\Omega$ can always be considered as the interpretation of a proof-structure with only axioms; we can then construct a proof-structure $\mathcal{S}$ by considering the union of the latter with $ST(\vdash\Gamma)$ by placing the axioms on the right places in $ST(\vdash\Gamma)$ (at this point, the linking can still be wrong). Since $\Phi \in \interp{\vdash\Gamma}_\Omega$ we can use Lemma~\ref{lem:tests} and infer that for all switchings $\varphi$ of $\vdash\Gamma$ (equivalently, of $\mathcal{S}$), $\test{\vdash\Gamma}^\varphi = \testof{\mathcal{S}}{\varphi} \perp \Phi$, excluding ``wrong linking''. By Corollary~\ref{cor:correctness}, it follows that $\mathcal{S}$ is acyclic, \ie satisfies the correctness criterion for MLL+MIX. Therefore, $\mathcal{S}$ must be a proof-net of vehicle $\Phi$.
\end{proof} 

% -------------------------------------------
\subsection{A complete model of MLL}\label{subsec:mllmodel}
% -------------------------------------------

The soundness property actually holds for MLL with the same arguments as for MLL+MIX whether we use $\perp^1$ or $\perp^R$ as orthogonality relation. In this section, we only mean $\perp^1$ or $\perp^R$ whenever $\perp$ is written.

\begin{thm}[Full soundness for MLL]
\label{thm:soundness2}
Let $\vdash \mathcal{S} : \Gamma$ be an MLL proof-net and $\Omega$ a basis of interpretation. We have $\exec(\axcutof{\mathcal{S}}) \in \interp{\vdash \Gamma}_\Omega$.
\end{thm}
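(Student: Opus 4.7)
The plan is to mirror the proof of Theorem~\ref{thm:soundness} (soundness for MLL+MIX), noting that since MLL sequent calculus lacks the MIX rule, we never need the inclusion $\mathbf{A} \otimes \mathbf{B} \subseteq \mathbf{A} \parr \mathbf{B}$ that crucially required $\perp^{\mathrm{fin}}$ and that in fact fails for the stricter orthogonalities $\perp^1$ and $\perp^R$. First, by Theorem~\ref{thm:dynamics}, executing $\axcutof{\mathcal{S}}$ yields $\axof{\mathcal{S}'}$ for the cut-free normal form $\mathcal{S}'$ which is still an MLL proof-net with the same conclusions $\Gamma$; combined with the invariance of typing under execution (Lemma~\ref{lem:typeinv}), it suffices to prove $\axof{\mathcal{S}'} \in \interp{\vdash\Gamma}_\Omega$ for cut-free MLL proof-nets.

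Next, I would proceed by induction on the sequent calculus derivation of the cut-free proof-net, treating the three cases: axiom, $\otimes$, and $\parr$. For the axiom $\vdash X_i, X_i^\bot$, the vehicle is a single binary star of the form $[\tlocus{X_i}{X}, \tlocus{X_i^\bot}{X}]$; by Definition~\ref{def:basisinterp}, this star acts as the identification star linking $\Omega(X_i, i, X)$ and $\Omega(X_i^\bot, i, X)$, so for any $\Phi_1 \uplus \Phi_2 \in \Omega(X_i, i, X)^\bot \otimes \Omega(X_i^\bot, i, X)$ the execution with the axiom lies in the relevant atomic behaviour and inherits the required orthogonality. The $\parr$ case is immediate from the equality $\interp{\vdash\Gamma, A\parr B}_\Omega = \interp{\vdash\Gamma}_\Omega \parr \interp{A}_\Omega \parr \interp{B}_\Omega$ given by the commutativity and associativity of $\parr$ (Remark~\ref{rem:implex}). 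The $\otimes$ case follows exactly as in Theorem~\ref{thm:soundness} once we re-establish the inclusion of Lemma~\ref{lem:tensorbehaviour} for $\perp^1$ or $\perp^R$; its proof relies only on the adjunction (Corollary~\ref{cor:adjunction}) and the disjointness built into the $\otimes$ construction, both of which hold for any execution-based orthogonality.

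The main obstacle is showing that the axiom and $\otimes$ cases work for the strictly finer orthogonalities. Unlike $\perp^{\mathrm{fin}}$, the relations $\perp^1$ and $\perp^R$ demand the normal form to be exactly one star (respectively the star of uncoloured rays); per Corollary~\ref{cor:correctness}, this amounts to the dependency graph of the combined vehicle--test constellation being a deterministic tree. The inductive MLL constructions preserve this property: each $\otimes$- or $\parr$-introduction extends the proof-net by a fresh vertex grafted onto an existing arborescence, matching the precondition under which partial pre-execution (Lemma~\ref{lem:ppe}) lets us transport orthogonality witnesses through the $\otimes/\parr$ constructors without creating branching or cycles. Consequently the vehicles produced by MLL derivations stay within the tighter regime governed by $\perp^1$ and $\perp^R$, which is precisely why the MIX-free fragment of the argument carries over while the would-be MIX step is correctly blocked.
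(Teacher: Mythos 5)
Your proposal is correct and follows essentially the same route as the paper: the paper's proof of Theorem~\ref{thm:soundness2} simply declares that the argument of Theorem~\ref{thm:soundness} carries over verbatim, with the only change being the axiom case, where the defining property of the basis of interpretation (Definition~\ref{def:basisinterp}) yields the required orthogonality for $\perp^1$ and $\perp^R$, and the MIX case is dropped. Your additional care in checking that Lemma~\ref{lem:tensorbehaviour} survives the change of orthogonality and in isolating the MIX step as the one place where $\perp^{\mathrm{fin}}$ was genuinely needed is more explicit than the paper, but it is the same proof.
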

\begin{proof}
The idea of the proof is exactly the same as for Theorem~\ref{thm:soundness}. The only difference is in the axiom case. We need to show that $\exec(\Phi_1 \uplus \Phi_2 \uplus \axof{\mathcal{S}}) = \roots{\Phi_1 \uplus \Phi_2 \uplus \axof{\mathcal{S}}}$ (respectively, $|\exec(\Phi_1 \uplus \Phi_2 \uplus \axof{\mathcal{S}})| = 1$). However, the properties of the basis of interpretation ensures that $\Phi_2 \uplus \axof{\mathcal{S}}$ will be orthogonal to $\Phi_1$. Hence $\exec(\Phi_1 \uplus \Phi_2 \uplus \axof{\mathcal{S}}) = \roots{\Phi_1 \uplus \Phi_2 \uplus \axof{\mathcal{S}}}$ (respectively, $|\exec(\Phi_1 \uplus \Phi_2 \uplus \axof{\mathcal{S}})| = 1$).
\end{proof}

The proof of Lemma~\ref{lem:tests} which is essential for the completeness property does not hold anymore because of a minor technical problem. This is because a general sequent $\vdash A_1, ..., A_n$ for $A_i$ being atomic formulas is used for the base case. This is valid for MLL+MIX proof-nets since we only require acyclicity when testing with the switchings. However, this is not a correct base case for MLL proof-nets which are more demanding by requiring connectedness. We need to start from a single axiom and therefore, induction could be done on the MLL sequent calculus instead by considering provable formulas in MLL. This would be sufficient to get a completeness result. However, instead of restricting to correct formulas, which would identify $\interp{\vdash\Gamma}_\Omega$ and a subset of $\tests{\vdash\Gamma}^\bot$ corresponding to proof-structures, it is sufficient to identify $\interp{\vdash\Gamma}_\Omega$ and $\tests{\vdash\Gamma}^\bot$ directly. We would then have to prove $\tests{\vdash\Gamma} \subseteq \tests{\vdash\Gamma}^{\bot\bot}$ which is always true in general \cite[Proposition 7]{joinet2021abstraction}. We do so by considering a notion of \emph{strict interpretation}.

\begin{defi}[Strict interpretations]
We define the two strict interpretations for a given basis of interpretation $\Omega$ and an MLL sequent $\vdash\Gamma$:
\[\sinterp{\vdash\Gamma}_\Omega^1 = \tests{\vdash\Gamma}^{\bot^1} \text{ and } \sinterp{\vdash\Gamma}_\Omega^R = \tests{\vdash\Gamma}^{\bot^R}.\]
\end{defi}

\begin{thm}[Completeness for MLL]
\label{thm:completeness2}
If a constellation $\Phi \in \sinterp{\vdash\Gamma}_\Omega^R$ (respectively $\Phi \in \sinterp{\vdash\Gamma}_\Omega^1$) is proof-like \wrt a provable sequent $\vdash\Gamma$ of MLL, then there exists an MLL proof-net $\vdash \mathcal{S} : \Gamma$ such that $\Phi = \axof{\mathcal{S}}$.
\end{thm}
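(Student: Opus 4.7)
The plan is to mimic the strategy of Theorem~\ref{thm:completeness} (the MLL+MIX case), but using the third item of Corollary~\ref{cor:correctness} (resp.\ the second) in place of its first item, so that the Danos--Regnier criterion (Theorem~\ref{thm:correctness}) replaces the MLL+MIX criterion (Theorem~\ref{thm:mixcorrectness}).

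First, from a proof-like $\Phi$ with $\idrays{\Phi} = \sharp\Gamma$ I would construct a candidate proof-structure $\mathcal{S}$ by gluing the syntax hypergraph $ST(\vdash\Gamma)$ of Definition~\ref{def:mlltest} with one axiom hyperedge per binary star of $\Phi$, attached to the two atomic conclusions of $ST(\vdash\Gamma)$ whose addresses match the two rays of that star. By Definition~\ref{def:vehicle} and the definition of addressing (Definition~\ref{def:addr}), this $\mathcal{S}$ is cut-free, satisfies $\conclu{\mathcal{S}} = \Gamma$ up to the labelling induced by $\vdash\Gamma$, and satisfies $\axof{\mathcal{S}} = \Phi$. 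It remains to certify that $\mathcal{S}$ is an MLL proof-net.

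Second, I would transport the hypothesis $\Phi \in \sinterp{\vdash\Gamma}_\Omega^R = \tests{\vdash\Gamma}^{\bot^R}$ (resp.\ $\bot^1$) from the abstract sequent tests into the proof-structure tests. By definition, for every switching $\varphi$ of $\vdash\Gamma$ (equivalently, of $\mathcal{S}$), $\Phi \perp^R \test{\vdash\Gamma}^\varphi$. Proposition~\ref{prop:eqtest} provides a constellation of adapters $\Phi_\mu$ such that $\exec(\testof{\mathcal{S}}{\varphi}) = \exec(\test{\vdash\Gamma}^\varphi \uplus \Phi_\mu)$; by Proposition~\ref{prop:colourshift} these adapters only perform renamings that preserve $\alpha$-unifiability and the structure of the dependency graph. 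Combining this with the invariance of orthogonality under execution (Lemma~\ref{lem:invortho}) and a harmless colour change placing the colour $\tcol{+t}$ on the axioms, we obtain
\[
\exec(\tcol{+t.\axof{\mathcal{S}}} \uplus \testof{\mathcal{S}}{\varphi}) = \roots{\tcol{+t.\axof{\mathcal{S}}} \uplus \testof{\mathcal{S}}{\varphi}}
\]
for every $\varphi$ (respectively, the same execution has cardinality $1$ in the $\bot^1$ case).

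Third, I invoke the third item of Corollary~\ref{cor:correctness} (resp.\ the second item, together with the fact that a proof-like $\Phi$ has exactly the expected set of uncoloured rays, so a single-star normal form of the expected shape is forced) to conclude that $\mathcal{S}^\varphi$ is connected and acyclic for every switching $\varphi$. By Theorem~\ref{thm:correctness} (Danos--Regnier), this is exactly MLL-certifiability; hence $\mathcal{S}$ is an MLL proof-net with $\vdash \mathcal{S} : \Gamma$ and $\axof{\mathcal{S}} = \Phi$.

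The main obstacle is the second step: one must check that the gap between the sequent-level tests of Definition~\ref{def:mlltest} (which deliberately omit the atomic adapters because they should not depend on a proof-structure) and the proof-structure-level tests of Definition~\ref{def:test} is closed \emph{uniformly in $\varphi$} by adapters that do not alter the termination/shape of the normal form. Proposition~\ref{prop:eqtest} handles this at the level of normal forms, but one must combine it with Lemma~\ref{lem:invortho} to lift it to orthogonality, and verify that the invariance of dependency graphs under colour shifts (Proposition~\ref{prop:colourshift}) also preserves the properties ``single star of uncoloured rays'' and ``cardinality $1$'' that characterise $\bot^R$ and $\bot^1$. The provability assumption on $\vdash\Gamma$ is used only to ensure the sequent is well-formed and that the atoms come in dual pairs amenable to being linked by axioms, so that the candidate $\mathcal{S}$ is syntactically a proof-structure at all.
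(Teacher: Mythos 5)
Your proposal is correct and follows essentially the same route as the paper: build the candidate proof-structure by gluing the proof-like axioms onto $ST(\vdash\Gamma)$, use membership in $\sinterp{\vdash\Gamma}_\Omega^R$ (resp.\ $\sinterp{\vdash\Gamma}_\Omega^1$) to get orthogonality to every switching test, and conclude MLL-certifiability via Corollary~\ref{cor:correctness} and the Danos--Regnier criterion. You are merely more explicit than the paper about bridging sequent-level and proof-structure-level tests through adapters (Proposition~\ref{prop:eqtest} plus Lemma~\ref{lem:invortho}), a step the paper's proof leaves implicit by reusing the MLL+MIX construction.
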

\begin{proof}
The proof begins like the proof of completeness for MLL+MIX (\cf Theorem~\ref{thm:completeness}) and reach the construction of a proof-structure with axioms translated into $\Phi$. Now, $\Phi \in \sinterp{\vdash\Gamma}_\Omega^R$ (respectively $\Phi \in \sinterp{\vdash\Gamma}_\Omega^1$) implies that, in particular, $\Phi$ passes the Danos-Regnier correctness test for MLL (by Corollary~\ref{cor:correctness}). Therefore, the proof-structure we constructed must be correct.
\end{proof}

\begin{obs}
\label{obs:testsproofdual}
Notice that if we have a constellation $\Phi \in \sinterp{\vdash\Gamma}_\Omega^X$ for some $\Omega$, MLL sequent $\vdash\Gamma$ and $X \in \{1, R\}$, its Danos-Regnier tests $\Phi_1, ..., \Phi_n$ are constellations of $(\sinterp{\vdash\Gamma}_\Omega^X)^\bot$. This formalises the intuition in proof-nets that tests are sort of proofs of the dual.
\end{obs}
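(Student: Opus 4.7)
The plan is to unfold the definitions and appeal to the standard bi-orthogonal inclusion $\mathbf{A} \subseteq (\mathbf{A}^{\bot})^{\bot}$, which holds for every pre-behaviour and every symmetric orthogonality. By the definition of the strict interpretations, $\sinterp{\vdash\Gamma}_\Omega^X = \tests{\vdash\Gamma}^{\bot^X}$ for $X \in \{1,R\}$, so $(\sinterp{\vdash\Gamma}_\Omega^X)^{\bot^X}$ is precisely the bi-orthogonal closure of $\tests{\vdash\Gamma}$ under $\perp^X$. The observation will then reduce to membership of the relevant tests in $\tests{\vdash\Gamma}$, followed by the inclusion into the closure.

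The ``Danos--Regnier tests $\Phi_1,\ldots,\Phi_n$ of $\Phi$'' I would read as the sequent-level tests $\test{\vdash\Gamma}^{\varphi_1},\ldots,\test{\vdash\Gamma}^{\varphi_n}$ enumerated by the $n$ switchings of $\vdash\Gamma$, or equivalently (via Proposition~\ref{prop:eqtest} together with Lemma~\ref{lem:invortho}) as the proof-structure-level tests $\testof{\mathcal{S}}{\varphi_i}$ of the proof-net $\mathcal{S}$ exhibited by Theorem~\ref{thm:completeness2} when $\Phi$ is additionally proof-like. Under either reading each $\Phi_i$ lies in $\tests{\vdash\Gamma}$, and therefore in its bi-orthogonal closure $(\tests{\vdash\Gamma}^{\bot^X})^{\bot^X} = (\sinterp{\vdash\Gamma}_\Omega^X)^{\bot^X}$, which is the content of the observation.

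The mildly delicate step is the passage between $\test{\vdash\Gamma}^\varphi$ and $\testof{\mathcal{S}}{\varphi}$ when the structure-level reading is intended. The two differ only by a constellation of adapters realising conjugations of atomic rays, and I would handle this by invoking invariance of orthogonality under execution (Lemma~\ref{lem:invortho}): executing the adapters with the test amounts to a deterministic and exact renaming (Corollary~\ref{cor:deterministicexact}) of rays, which preserves membership in any bi-orthogonal closure with respect to both $\perp^R$ and $\perp^1$. Apart from this bookkeeping, the observation is a pure restatement of the involutivity of $\bot$ built into the definition of a behaviour and requires no structural property of $\Phi$ itself beyond its existence in the interpretation.
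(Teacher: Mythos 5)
Your argument is correct and is essentially the argument the paper intends: the observation follows immediately from the definition $\sinterp{\vdash\Gamma}_\Omega^X = \tests{\vdash\Gamma}^{\bot^X}$ together with the general inclusion $\tests{\vdash\Gamma} \subseteq \tests{\vdash\Gamma}^{\bot\bot}$, which the paper explicitly invokes (citing \cite[Proposition 7]{joinet2021abstraction}) in the paragraph preceding the definition of strict interpretations. Your additional bookkeeping on the passage between sequent-level and proof-structure-level tests via adapters and Lemma~\ref{lem:invortho} is sound but not needed for the observation as stated, since the Danos-Regnier tests in question are precisely the elements of $\tests{\vdash\Gamma}$.
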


% -------------------------------------------
\subsection{Adequacy}\label{subsec:adequacy}
% -------------------------------------------

Girard's adequacy property \cite[Section 4.4]{transyn1} is a way to relate type labels/formulas of \Cref{subsec:testing} and behaviours of \Cref{subsec:behaviours}. In realisability interpretations \cite{rieg2014forcing, miquel2009formalisation}, this relation usually take the form of an \emph{adequacy lemma} showing that type labels guarantee membership in some behaviour. The idea is that type labels have the same role as program specification and what we usually want is that passing some tests for a specification ensures that the program has the expected behaviour.

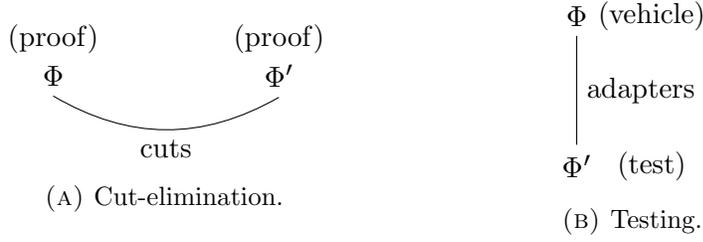
\begin{figure}
    \centering
    \begin{subfigure}{0.4\textwidth}
        \centering
        \begin{tikzpicture}
            \node at (0, 0) (p1) {$\Phi$};
            \node at (0, 0.5) (l1) {(proof)};

            \node at (3, 0) (p2) {$\Phi'$};
            \node at (3, 0.5) (l2) {(proof)};

            \draw (p1.south) edge[bend right] node[below] {cuts} (p2.south);
        \end{tikzpicture}
        \caption{Cut-elimination.}
    \end{subfigure}
    \begin{subfigure}{0.4\textwidth}
        \centering
        \begin{tikzpicture}
            \node at (0, 0) (p1) {$\Phi$};
            \node at (1, 0) (l1) {(vehicle)};
    
            \node at (0, -2) (p2) {$\Phi'$};
            \node at (1, -2) (l2) {(test)};
    
            \draw (p1) -- node[right] {adapters} (p2);
        \end{tikzpicture}
        \caption{Testing.}
    \end{subfigure}
    \caption{Cut-elimination seen as testing. The difference is only the point of view. }
    \label{fig:proofpov}
\end{figure}

This adequacy actually corresponds to a cut-elimination theorem. This is because connecting two constellations with cuts can be seen as making two constellations interact with adapters, \ie as testing a constellation against another constellation (\cf\Cref{fig:proofpov}).

It follows that the cut-elimination theorem (the fact we can eliminates all occurrences of cuts) states all possible interactions between our objects are sound. This is not always the case in general: a cut on an axiom with proof-structures is ill-behaving. But if our objects have the \emph{correct shaope} (passes the right tests) then all interaction must be sound.

Adequacy is therefore a direct consequence of full soundness (\cf Theorem~\ref{thm:soundness2}) since:
\begin{itemize}
    \item we are able to simulate cut-elimination for proof-structures (\cf Theorem~\ref{lem:cutelim});
    \item we can simulate proof-nets with constellations and the cut-elimination is known to hold for proof-nets.
\end{itemize}

% ==================================================
\section{Perspectives and Future works}
% ==================================================

\paragraph{Alternative definitions of execution} In the simulation of Turing machines \Cref{subsec:turing}, we simulate runs by generating all the linear saturated and correct diagrams, which is an implicit reference to actual infinite. However, this is not a natural way to compute. It is actually possible to follow the usual computation of automata by traversing a dependency graph $\dgraph{M^\bigstar + w^\bigstar}$ itself, seen as the state graph of a Turing machine reading an input word $w$. We then have to handle a unification problem when during the traversal. In case of error, the run is cancelled. In case of non-deterministic choice, we have parallel runs and only runs without unification errors survive. Although we do not explore this idea in this paper, any constellation seems to define a sort of generalised non-deterministic hypergraph automata with hyperedge transitions and graph-like runs where the fusion of stars triggers a propagation of information. This generalises various classes of automata (pushdown, alternating, Turing machines etc) and provide a machine-like execution of constellations. This would also generalise token machines of the geometry of interaction similarly to some existing works of the literature \cite{dal2017geometry}.

\paragraph{Categorical model of constellations} To ensure that we indeed have a model of MLL, it is possible to show that behaviours as categorical objects (providing we choose an orthogonality relation capturing MLL provability) and linear implications as arrows define a $\ast$-autonomous category by following Seiller's categorical model of interaction graphs \cite[Section 3]{seiller2012interaction}. In particular, re-addressing of constellations has to be correctly treated in order to make composition of arrows possible. The associativity of execution (Theorem~\ref{thm:assocexec}) is essential for the associativity of the composition and the property of adjunction (Corollary~\ref{cor:adjunction}) ensures a monoidal closure. As for categorical interpretations of GoI with monoidal traced categories \cite{haghverdi2006categorical, abramsky2002geometry}, the execution of constellation should define a trace. 

\paragraph{Extensions of other fragments of linear logic} We defined a model for both MLL and MLL+MIX but our (re)construction can be extended to other fragments of linear logic. In the first paper of Transcendental Syntax \cite[Section 5]{transyn1}, a reconstruction of intuitionistic exponentials for linear logic is sketched with a new correctness criterion.

The transcendental syntax also claims great improvements going beyond linear logic by suggesting in particular a new computational interpretation for second order logic and predicate calculus \cite[Section 5]{transyn3} (which is seen as part of second order logic) but also Peano arithmetic \cite[Section 5]{transyn4}. Second order logic uses more powerful constellations using internal colours within rays (\eg $+c(+d(x))$) which adds a more complex combinatorics to constellations. This defines two classes of constellations: the ones of this paper, called \emph{objective}, containing no internal colours and the ones with internal colours, called \emph{subjective}. These two classes of constellations will allow for the definition of a non-empty behaviour $\mathbf{0}$ \cite[Section 4.3]{transyn3} from which we can establish non-classical notions of truth \cite[Section 3]{transyn4} but also state coherence, \ie that $\mathbf{0}$ has no correct constellations.

\paragraph{Applications to implicit computational complexity} Several authors \cite{aubert2016logarithmic, aubert2016unary, baillot2001elementary} used flows (which can be seen as binary stars) for an implicit complexity analysis using encodings of automata. Since the stellar resolution is designed to be an extension of this model, we can expect to capture other complexity classes. It would then be possible to extend Seiller's idea of computational complexity \cite{seiller2018interaction} where the concepts of orthogonality and test are central.

Moreover, a reconstruction of predicate calculus may provide a new understanding to descriptive complexity \cite{immerman2012descriptive} results, such as Immerman-Vardi theorem \cite{vardi1982complexity,immermann1986relational} where predicate calculus is essential in the definition of complexity classes such as P and NP. The idea is that if a complexity class $C$ is captured by a logic, then in our framework, formulas would correspond to set of programs bounded by a certain complexity and correctness criteria could also be used to check the complexity of a program (constellation). Formulas would then represent specification certifying complexity for a constellation seen as a program.

\paragraph{Formulas as specification for a computational behaviour} In model checking, the representation of a system (usually an automata or a transition system) is verified automatically in order to check if it satisfies a formula (typically in the temporal logic LTL \cite{pnueli1977temporal}). Since the stellar resolution can provide natural encoding of state machines or labelled transition systems, and that it is possible to design logics and formulas with the transcendental syntax (by using both interactive typing and tests of type labels), we can imagine extensions of model checking to other models of computation or to other (existing or designed) logics capturing more fine-grained properties. The correctness criterion of linear logic can itself be seen as a way to certify a computational object as a proof of some formula $A$ corresponding to a specification. Since the $\lambda$-calculus can be nicely encoded with proof-nets, higher order model checking for $\lambda$-calculus may be investigated as well. 

\paragraph{Towards a ``Logic of Interaction?"} Inspired by computation and linear logic, authors such as Curien \cite{curien2003symmetry} and Abramsky \cite[Section 5]{abramsky2002geometry} exposed a paradigm of interaction where the notion of interaction would be central in logic before anything else. Similarly to how complex behaviours arise from a system of interacting agents as in biology or more generally, the theory of complex systems (see chemical reaction networks for instance \cite{jost2019hypergraph}), can a notion of logic emerge from any system of interactions?

Our stellar resolution is indeed an instance of a system of interacting agents (stars) from which emerge complex concepts (proofs and formulas). But even beyond logic, constellations are able to represent automata and other models of computation as interacting agents transmitting information in a graph-like structure and, by interactive typing, it is possible to design formulas describing their computational behaviour. Logic then appears as a way to describe computation or put constraint on it. This opens the materialistic idea of a ``logic of things'' analysing computational interactions.

\bibliographystyle{plainurl}
\bibliography{references}

% ==================================================
% APPENDIX
% ==================================================
\appendix

% --------------------------------------------------
\section{Term unification and first-order resolution}\label{sec:unification}
% --------------------------------------------------

We recall elementary definitions of term unification \cite{herbrand1930recherches} and first-order resolution \cite{robinson1965machine}. We refer the reader to the article of Lassez et al. \cite{lassez1988unification} for more details which are often omitted in the literature or Baader et al. \cite{baader1999term} for a broader view.

We will use uppercase letters such as $X, Y, Z$ for variables and lowercase letters $a, b, c, f, g$ and $h$ for function symbols.

\begin{itemize}[label=$\triangleright$]
    \item A \emph{signature} $\mathbb{S} = (V, F, \ar)$ consists of a countable set $V$ of variables, a countable set $F$ of function symbols whose arities are given by $\ar : F \rightarrow \nat$. We set a signature for this section.

    \item The set of \emph{terms} $\terms{\mathbb{S}}$ is inductively defined by the following grammar:
    \begin{equation*}\tag{Terms}
    t, u ::= X \mid f(t_1, \dots, t_n)
    \qquad X \in V, f \in F, \ar(f) = n
    \end{equation*}

    \item A \emph{substitution} is a function $\theta: V \rightarrow \terms{\mathbb{S}}$. Substitutions are extended from variables to terms by $\theta(f(u_1, ..., u_k)) = f(\theta(u_1), ..., \theta(u_k))$. The substitution $\theta t$ is often written $\theta(t)$ or explictly as a set of associations $\{X_1 \mapsto t_1, ..., X_n \mapsto t_n\}$ (often written $\{x:=t\}$ when there is only one association).

    From two substitutions $\theta_1, \theta_2$, we can construct their composition $\theta_1 \circ \theta_2$ such that $(\theta_1 \circ \theta_2)t = \theta_1 (\theta_2 t)$. The composition is associative \cite[Corollary 6]{lassez1988unification}.

    \item A \emph{renaming} is a substitution $\alpha$ such that $\alpha(X) \in V$ for all $X \in V$.

    \item An \emph{equation} is an unordered pair $t \eqq u$ of terms in $\terms{\mathbb{S}}$.

    \item A \emph{unification problem} or simply \emph{problem} is a set of equations $P = \{t_1 \eqq u_1, ..., t_n \eqq u_n\}$.

    \item A \emph{solution} for a problem $P$ is a substitution $\theta$ such that for all $t \eqq u \in P$, $\theta t = \theta u$. In this case, we say that the terms $t$ and $u$ are \emph{unifiable} and that $\theta$ is a \emph{unifier} for them.

    \item Two terms $t$ and $u$ are $\alpha$-unifiable if there exists a renaming $\alpha$ such that $\{\alpha t \eqq u\}$ has a solution which is called $\alpha$-unifier. An $\alpha$-unification between two terms is \emph{exact} when it is a renaming.

    \item Two terms $t, u$ are \emph{equivalent up to renaming}, written $t \alphaeq u$, if there exists an exact $\alpha$-unifier between them.
\end{itemize}

These definitions define a preorder on terms. A term $t$ is lesser than another term $u$ when it is more specialised or less general. In terms of substitutions, it means that $t$ can be obtained by instantiating the variables of $u$ with other terms.

\begin{defi}[Order on terms]
\label{def:orderterms}
We define the following partial order: given $t, u$ two terms, $t \preceq u$ if and only if there exists a substitution $\theta$ such that $t = \theta u$. We consider the order up to renaming, \ie $t = u$ when $t \alphaeq u$.
\end{defi}

\begin{prop}
\label{prop:preorderterms}
The relation $\preceq$ defines a preorder.
\end{prop}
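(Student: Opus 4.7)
The plan is to verify the two axioms of a preorder: reflexivity and transitivity. Both follow directly from elementary properties of substitutions that are either explicitly recalled in the appendix (the composition of substitutions is well-defined and associative, by Lassez et al.) or immediate from the definitions.

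For reflexivity, I would observe that the identity substitution $\mathrm{id}: V \rightarrow \terms{\mathbb{S}}$ defined by $\mathrm{id}(X) = X$ for every $X \in V$ extends to satisfy $\mathrm{id}(t) = t$ for every term $t$ (by a trivial induction on $t$). Hence $t = \mathrm{id}(t)$, which witnesses $t \preceq t$. Equivalently, since the order is considered up to $\alphaeq$, reflexivity is already guaranteed by $t \alphaeq t$ via the identity renaming.

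For transitivity, suppose $t \preceq u$ and $u \preceq v$. Unfolding the definition, there exist substitutions $\theta_1, \theta_2$ such that $t = \theta_1 u$ and $u = \theta_2 v$. Using the fact recalled in the appendix that $(\theta_1 \circ \theta_2)(w) = \theta_1(\theta_2(w))$ for every term $w$, I would set $\theta := \theta_1 \circ \theta_2$ and compute $\theta v = \theta_1(\theta_2 v) = \theta_1 u = t$, which witnesses $t \preceq v$. Since the relation is taken modulo $\alphaeq$, one should also check well-definedness: if $t \alphaeq t'$ and $u \alphaeq u'$ with $t = \theta u$, then there is a substitution $\theta'$ with $t' = \theta' u'$; this is obtained by pre- and post-composing $\theta$ with the renamings realising the $\alphaeq$-equivalences, which are themselves substitutions.

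I do not expect any genuine obstacle here; this is a bookkeeping proposition meant to fix vocabulary. The only mild subtlety is the compatibility with the quotient by $\alphaeq$, which is why I would mention it explicitly rather than just write the two one-line arguments. Note that the statement deliberately says preorder and not partial order: antisymmetry up to $\alphaeq$ would require a separate argument (controlling the sizes of terms under mutual substitution), and is not claimed.
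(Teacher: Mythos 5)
Your proof is correct and follows essentially the same route as the paper: reflexivity via the identity substitution and transitivity via composition of substitutions, relying on the associativity/compositionality recalled from Lassez et al. The extra remark on compatibility with the quotient by $\alphaeq$ is a reasonable addition that the paper's own proof omits, but it does not change the argument.
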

\begin{proof}
Let $t$ be a term. If $\theta$ is the identity substitution, we have $t = \theta t$. Let $t_1, t_2, t_3$ be terms. Assume $t_1 = \theta_a t_2$ and $t_2 = \theta_b t_3$. We can compose the two substitutions and obtain $\theta_a \circ \theta_b$. We have $(\theta_a \circ \theta_b) t_3 = \theta_a (\theta_b t_3) = \theta_a t_2 = t_1$.
\end{proof}

Our definition of $\alpha$-unification comes from a simplification of Aubert and Bagnol's definition of \emph{matching} \cite[Definition 6]{aubert2014unification} itself appearing in Girard's definitions \cite[Section 1.1.2]{girard2013three}. However, since \emph{matching} already exists with a different definition in the literature we chose a different name. We show that our simplification is equivalent to Aubert and Bagnol's definition definition.

\begin{prop}
Two terms $t_1$ and $t_2$ are $\alpha$-unifiable if and only if there exists two renamings $\alpha_1$ and $\alpha_2$ such that $\alpha_1 t_1$ and $\alpha_2 t_2$ are unifiable and that $\vars{\alpha_1 t_1} \cap \vars{\alpha_2 t_2} = \emptyset$.
\end{prop}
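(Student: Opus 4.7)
The plan is to prove both directions by explicitly constructing the required renamings and a compatible substitution, using freshness of variables in the countable set $V$ to avoid variable capture. Since renamings compose to renamings, and any term has finitely many variables, we can always find injective renamings sending a finite set of variables to fresh ones.

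For the forward direction, suppose there is a renaming $\alpha$ and a substitution $\theta$ with $\theta(\alpha t_1) = \theta(t_2)$. I would pick two injective renamings $\rho_1, \rho_2$ whose images are fresh and mutually disjoint, with $\rho_1$ defined on $\vars{\alpha t_1}$ and $\rho_2$ defined on $\vars{t_2}$. Setting $\alpha_1 := \rho_1 \circ \alpha$ and $\alpha_2 := \rho_2$ (both renamings), the condition $\vars{\alpha_1 t_1} \cap \vars{\alpha_2 t_2} = \emptyset$ holds by construction. I would then define $\theta'$ by $\theta'(\rho_1(Y)) := \theta(Y)$ for $Y \in \vars{\alpha t_1}$ and $\theta'(\rho_2(X)) := \theta(X)$ for $X \in \vars{t_2}$ (extended arbitrarily elsewhere). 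Injectivity of each $\rho_i$ together with the disjointness of their images makes $\theta'$ well-defined, and one computes $\theta'(\alpha_1 t_1) = \theta(\alpha t_1) = \theta(t_2) = \theta'(\alpha_2 t_2)$.

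For the backward direction, assume $\alpha_1, \alpha_2$ are renamings with $\vars{\alpha_1 t_1} \cap \vars{\alpha_2 t_2} = \emptyset$, unified by some $\theta$. I would choose an injective renaming $\rho$ on $\vars{\alpha_1 t_1}$ whose image consists of fresh variables disjoint from $\vars{t_2}$, and set $\alpha := \rho \circ \alpha_1$. To exhibit a unifier of $\alpha t_1$ and $t_2$, define $\theta'$ by $\theta'(\rho(Y)) := \theta(Y)$ for $Y \in \vars{\alpha_1 t_1}$ and $\theta'(X) := \theta(\alpha_2(X))$ for $X \in \vars{t_2}$; these clauses do not conflict because the image of $\rho$ is disjoint from $\vars{t_2}$. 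Then $\theta'(\alpha t_1) = \theta(\alpha_1 t_1) = \theta(\alpha_2 t_2) = \theta'(t_2)$, as required.

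There is no genuinely hard step: the whole argument is bookkeeping around variable freshness and the interaction of substitutions with renamings. The only point that requires mild care is ensuring that the auxiliary renamings are injective on the relevant finite set of variables so that the partial definitions of $\theta'$ extend to a coherent substitution; this is immediate because $V$ is countable and the terms involved contain only finitely many variables.
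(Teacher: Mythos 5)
Your proof is correct, and it is worth noting that it is actually \emph{more} careful than the argument the paper gives. Both proofs are direct constructions, but the key technical devices differ. In the forward direction the paper simply takes $\alpha_1 := \alpha$ and $\alpha_2 := \mathrm{id}$ and asserts the conclusion; this does not actually establish the required condition $\vars{\alpha_1 t_1} \cap \vars{\alpha_2 t_2} = \emptyset$ (take $t_1 = t_2 = X$ with $\alpha = \mathrm{id}$), whereas your composition with injective renamings onto fresh, mutually disjoint sets of variables, followed by transporting $\theta$ along those renamings, genuinely secures disjointness and the equality $\theta'(\alpha_1 t_1) = \theta'(\alpha_2 t_2)$. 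In the backward direction the paper sets $\alpha := \alpha_2^{-1} \circ \alpha_1$ and $\psi := \theta \circ \alpha_2$, which presupposes that $\alpha_2$ is invertible; the paper's definition of renaming does not require injectivity, so this step needs at least a ``without loss of generality'' justification. Your construction avoids inverting $\alpha_2$ altogether by defining $\theta'$ piecewise on the (disjoint) image of $\rho$ and on $\vars{t_2}$, which works for arbitrary renamings. The only cost of your route is the extra bookkeeping about freshness and injectivity on finite variable sets, which you correctly observe is available since $V$ is countable and terms are finite. In short: same overall strategy, but your explicit fresh-renaming mechanism closes two small gaps present in the paper's version.
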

\begin{proof}
($\Rightarrow$) Assume that $t_1$ and $t_2$ are $\alpha$-unifiable. Hence, there exists $\alpha$ such that $\theta \alpha t_1 = \theta t_2$ for some substitution $\theta$. For $\alpha_1 := \alpha$ and $\alpha_2 := \emptyset$, the empty renaming which is indeed disjoint from $\alpha$, we have $\theta \alpha_1 t_1 = \theta \alpha_2 t_2$.
$(\Leftarrow)$ Now assume that there exists two renamings $\alpha_1$ and $\alpha_2$ such that $\alpha_1 t_1$ and $\alpha_2 t_2$ are unifiable and that $\vars{\alpha_1 t_1} \cap \vars{\alpha_2 t_2} = \emptyset$. We have $\theta\alpha_1 t_1 = \theta\alpha_2 t_2$ for some $\theta$. We can define the substitution $\psi := \theta \circ \alpha_2$ and the renaming $\alpha := \alpha_2^{-1} \circ \alpha_1$ such that $\psi \alpha t_1 = \psi t_2$ since we have $\theta\alpha_1 t_1 = \theta\alpha_2 t_2$. This shows that $t_1$ and $t_2$ are $\alpha$-unifiable.
\end{proof}

The problem of deciding if a solution to a given problem $P$ exists is known to be decidable in the literature. Moreover, there exists a maximal solution $\solution{P}$ \wrt the preorder $\preceq$, which is unique up to renaming. Several algorithms were designed to compute the unique solution when it exists, such that the Martelli-Montanari unification algorithm \cite{unifalgo}.

\begin{defi}[Solved form]
A unification problem $P = \{X_1 \eqq t_1, ..., X_n \eqq t_n\}$ with $X_1, ..., X_n \in V$ is in \emph{solved form} if no variable $X_1, ..., X_2$ appears on the right of an equation, \ie $\{X_1, ..., X_n\} \cap \bigcup_{j=1}^{n} \mathtt{fv}(t_j) = \emptyset$. Its \emph{underlying substitution} is defined by $\overset{\rightarrow}{P} := \substseq{X_1 \mapsto t_1, ..., X_n \mapsto t_n}$.
\end{defi}

\begin{defi}[Martelli-Montanari unification algorithm]
We define a non-deterministic algorithm with inference rules read from top to bottom:

\[
\begin{prooftree}
\hypo{P \cup \{ t \eqq t \}}
\infer1[clear]{P}
\end{prooftree}
\qquad
\begin{prooftree}
\hypo{P \text{ (in solved form)}}
\infer1[success]{}
\end{prooftree}
\qquad
\begin{prooftree}
\hypo{P \text{ (not in solved form)}}
\infer1[fail]{\bot}
\end{prooftree}
\]

\[
\begin{prooftree}
\hypo{P \cup \{ f(t_1, ..., t_n) \eqq f(u_1, ..., u_n) \}}
\infer1[open]{P \cup \{ t_1 \eqq u_1, ..., t_n \eqq u_n \}}
\end{prooftree}
\qquad
\begin{prooftree}
\hypo{P \cup \{ t \eqq X \} \text{ with } t \not\in \vars{t}}
\infer1[orient]{P \cup \{ X \eqq t \}}
\end{prooftree}
\]

\[
\begin{prooftree}
\hypo{P \cup \{ X \eqq t \} \text{ with } X \in \vars{P} \text{ and } X \not\in \vars{t}}
\infer1[replace]{\substseq{X \mapsto t} P \cup \{ X \eqq t \}}
\end{prooftree}
\]

where $\vars{P}$ and $\vars{t}$ are the sets of variables occurring in $P$ and $t$. A sequence constructed by these rules and ending with a success or fail rule when no other rule can be applied is called an \emph{execution} of the unification algorithm. The last step of the sequence is written $\solution{P}$ for a problem $P$. If we can apply more rules, it is called a \emph{partial execution}.
\end{defi}

\begin{thm}[Confluence of the unification algorithm]
Any solvable problem $P$ has a unique solution modulo $\alphaeq$.
\end{thm}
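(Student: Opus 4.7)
The plan is to establish confluence from three intertwined properties of the algorithm: termination on every input, preservation of the set of unifiers by each transformation rule, and uniqueness of most general unifiers up to renaming. Once these are in hand, confluence follows because any two executions starting from a solvable problem $P$ terminate in success states whose underlying substitutions are both most general unifiers of $P$, hence equal modulo $\alphaeq$.

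First I would prove termination by exhibiting a lexicographic well-founded measure on problems. A standard choice is a triple $(n_1,n_2,n_3)$, where $n_1$ counts variables of $P$ that are not yet \emph{isolated} (not appearing on the left of an equation $X \eqq t$ with $X \notin \vars{t}$ and nowhere else in $P$), $n_2$ is the total size of all terms, and $n_3$ is the number of equations of the form $t \eqq X$ with $t$ not a variable. A case analysis on the four transformation rules then shows that \textbf{clear} and \textbf{open} leave $n_1$ unchanged and strictly decrease $n_2$; \textbf{orient} preserves $(n_1,n_2)$ and decreases $n_3$; and \textbf{replace} strictly decreases $n_1$, because its side-conditions $X\in\vars{P}$ and $X\notin\vars{t}$ guarantee that $X$ is definitively removed from the rest of the problem after the substitution.

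Next I would verify that each rule preserves the set of unifiers of the problem. For \textbf{clear}, \textbf{open}, and \textbf{orient} this is elementary. For \textbf{replace}, if $\theta$ solves $P\cup\{X\eqq t\}$ with $X\notin\vars{t}$, then $\theta X=\theta t$, so $\theta$ agrees with $\theta\circ\{X\mapsto t\}$ on every variable of $P$, and is therefore also a unifier of $\{X\mapsto t\}P\cup\{X\eqq t\}$; the converse is immediate. Combining termination with preservation, any execution on a solvable input $P$ terminates in success on some solved form $P^{\star}$, and $\overset{\rightarrow}{P^{\star}}$ is a unifier of $P$. It is moreover a \emph{most general} one: any other unifier $\theta$ of $P$ must also solve $P^{\star}$, so $\theta X=\theta t_X$ for each $X\eqq t_X$ in $P^{\star}$, whence $\theta=\theta\circ\overset{\rightarrow}{P^{\star}}$ on all relevant variables, giving $\overset{\rightarrow}{P^{\star}}\preceq\theta$ in the preorder of Proposition~\ref{prop:preorderterms}.

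Finally, I would conclude uniqueness up to renaming. If $\theta_1$ and $\theta_2$ are most general unifiers produced by two different executions, mutual maximality $\theta_1\preceq\theta_2$ and $\theta_2\preceq\theta_1$ yields substitutions $\sigma,\sigma'$ with $\theta_1=\sigma\circ\theta_2$ and $\theta_2=\sigma'\circ\theta_1$. A standard size-of-terms argument then forces $\sigma$ and $\sigma'$ to map variables to variables without identifying distinct ones, i.e., to be renamings, so $\theta_1\alphaeq\theta_2$. The main obstacle is calibrating the termination measure: because \textbf{replace} can enlarge $n_2$ by duplicating subterms of $t$ throughout $P$, the measure must be strictly lexicographic with $n_1$ dominating, and the argument that $n_1$ strictly decreases relies essentially on the side-condition $X\notin\vars{t}$ preventing reintroduction of $X$ by the substitution itself.
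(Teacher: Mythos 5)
The paper gives no proof of this theorem at all: it delegates it to Lassez et al.\ (Theorem 3.17). Your argument is the standard self-contained proof of that classical result --- termination via a lexicographic measure dominated by the number of non-isolated variables, rule-by-rule preservation of the set of unifiers, extraction of a most general unifier from the solved form, and the observation that two mutually comparable most general unifiers differ by a renaming --- and it is essentially sound. What it buys over the paper's citation is that it makes explicit the only delicate point, namely why \textbf{replace} terminates even though it can blow up term size.

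Two details deserve attention. First, termination plus preservation of unifiers do not by themselves yield termination \emph{in success}: you also need a progress lemma stating that a solvable problem not in solved form always admits a rule other than \textbf{fail}. This is a routine case analysis (the only stuck non-solved configurations contain either a clash $f(\ldots)\eqq g(\ldots)$ with $f\neq g$ or an occurs-check violation $X\eqq t$ with $X\in\vars{t}$ and $t\neq X$, both unsolvable), but it is a genuine step that your sentence ``combining termination with preservation, any execution terminates in success'' silently assumes. Second, in the termination argument for \textbf{replace} you should also observe that no \emph{other} isolated variable loses its isolated status under $\{X\mapsto t\}$: this holds precisely because an isolated variable occurs nowhere else in the problem, hence not in $t$; your definition of ``isolated'' already secures this, but it is exactly where naive choices of $n_1$ (e.g.\ ``appears as the left member of some $Y\eqq s$ with $Y\notin\vars{s}$'' without the ``nowhere else'' clause) break down, so it is worth stating. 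Finally, with the paper's convention that $t\preceq u$ means $t$ is an \emph{instance} of $u$, the conclusion of your most-generality argument should read $\theta\preceq\overset{\rightarrow}{P^{\star}}$ (the computed unifier is the maximum); your inequality is written in the opposite direction, though the underlying factorisation $\theta=\theta\circ\overset{\rightarrow}{P^{\star}}$ is the correct one.
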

\begin{proof}
Proven in \cite[Theorem 3.17]{lassez1988unification}.
\end{proof}

We complete the section with the resolution rule for first-order logic for which the unification is central.

\begin{defi}[Resolution rule]
\label{def:resolution}
Let $C, D$ be two set of first-order atoms and $P$ a predicate of arity $n$. The \emph{resolution rule} is defined by the following inference rule:
\begin{center}
\begin{prooftree}
    \hypo{C \cup \{P(t_1, ..., t_n)\}}
    \hypo{D \cup \{\lnot P(u_1, ..., u_n)\}}
    \infer2[Res]{\theta (C \cup D)}
\end{prooftree}
\end{center}
where $\theta := \solution{\mathcal{P}(\bigcup_{k=1}^n \{t_i \eqq u_i\})}$.
\end{defi}

% --------------------------------------------------
\section{Hypergraph theory}\label{sec:hypergraphs}
% --------------------------------------------------

Hypergraphs are graphs with edges potentially linking several vertices and thus generalise graphs. We use a definition of directed hypergraphs with several targets (we usually consider a single target in the literature \cite{bretto2013hypergraph}).

\begin{defi}[Directed hypergraph]
An \emph{directed hypergraph} is defined by a tuple $H = (V, E, \ein{}, \eout{})$ where:
\begin{itemize}
    \item $V$ and $E$ are respectively the set of vertices and hyperedges;
    \item $\ein{} : E \rightarrow P(V)$ defines the sources of a hyperedge;
    \item $\eout{} : E \rightarrow P(V)$ defines the targets of a hyperdge
\end{itemize}
where $P(V)$ is the powerset of $V$. The hypergraph is \emph{undirected} when we forget $\eout{}$ and $\ein{}$ defines the vertices connected by an hyperedge.
\end{defi}

\begin{defi}[Disjoint union of hypergraphs]
The \emph{disjoint union} of two hypergraphs $H = (V, E, \ein{}, \eout{})$ and $H' = (V', E', \ein{}', \eout{}')$ is defined by $H \uplus H' := (V \uplus V', E \uplus E', \ein{} \uplus \ein{}', \eout{}' \uplus \eout{}')$ where $\uplus$ is the disjoint union of sets and $f \uplus f'$ for two functions $f$ and $f'$ is the function corresponds to the disjoint union of the function graph of $f$ and $f'$. 
\end{defi}

\begin{defi}[(Multi)Graph homomorphism]
A \emph{(multi)graph homomorphism} is a function between graphs $f : G \rightarrow G'$ preserving the structure of graph, \ie for all $(u, v) \in E^G$, we have $(f(u), f(v)) \in E^{G'}$.
\end{defi}

\begin{defi}[(Multi)Graph isomorphism]
A (multi)graph homomorphism $f$ is an \emph{isomorphism} if it is a bijection and that its inverse $f^{-1}$ is also a (multi)graph homomorphism. 
\end{defi}

\begin{defi}[Path]
A \emph{(directed) path} is an ordered alternated sequence of vertices and hyperedges $v_1 e_1 ... v_n, e_n, v_{n+1}$ for $v_i \in V$ and $e_i \in E$.

For an undirected graph, the path has an unordered sequence.
\end{defi}

\begin{defi}[Accessibility relation]
\label{def:accrel}
The notion of path defines an \emph{accessibility relation} on vertices for a hypergraph $H$. For two vertices $x$ and $y$, we have $x \equiv_\mathrm{acc}^H y$ when there is a path from $x$ to $y$ and, if the edges are oriented/directed, from $y$ to $x$ as well. 
\end{defi}

\begin{defi}[Connected components and connectedness]
The set of \emph{connected components} of a hypergraph is the set of equivalence classes defined by the equivalence relation induced by the accessibility of vertices with paths. A hypergraph is connected when it has a unique connected component.
\end{defi}

\begin{defi}[Cycles and acyclicity]
A \emph{cycle} is a path $v_1 e_1 ... v_n, e_n, v_{n+1}$ such that $v_1 = v_{n+1}$. A hypergraph is \emph{acyclic} when it has no cycle.
\end{defi}

\begin{nota}
For a (multi)graph or hypergraph $G$ of vertices $V$ and (hyper)edges $E$, we write $V^G$ for $V$ and $E^G$ for $E$.
\end{nota}

\end{document}